\makeatletter\@ifpackageloaded{mathpazo}\@tempswatrue\@tempswafalse
  \DeclareFontFamily{OT1}{pzc}{}
  \DeclareFontShape{OT1}{pzc}{m}{it}{<-> s * [1.15] pzcmi7t}{}
  \DeclareMathAlphabet{\mathpzc}{OT1}{pzc}{m}{it}
\makeatletter\@ifpackageloaded{biblatex}{%
  \usepackage{csquotes} 
  \bibliography{../../references}
  \renewbibmacro{in:}{%
    \ifentrytype{incollection}{\printtext{\bibstring{in}\intitlepunct}}{}}
  \renewbibmacro{publisher+location+date}{%
    \iflistundef{publisher}
      {\setunit*{\addcomma\space}}
      {\setunit*{\addcomma\space}}%
    \printlist{publisher}%
    \setunit*{\addcomma\space}%
    \printlist{location}%
    \setunit*{\addcomma\space}%
    \usebibmacro{date}%
    \newunit}
  \DeclareFieldFormat[article]{pages}{#1\isdot}
  \DeclareFieldFormat[article,incollection,inproceedings,unpublished]{title}{#1\isdot}
  \DeclareFieldFormat[thesis]{title}{\mkbibemph{#1\isdot}}
  \DeclareFieldFormat[unpublished]{date}{(#1)\isdot}
  \DeclareFieldFormat[unpublished]{note}{#1\nopunct} 
  \DeclareFieldFormat[article]{journaltitle}{\mkbibemph{#1\isdot}}
  
  \AtEveryBibitem{%
    \ifentrytype{book}{}{
      \clearname{editor}
    }
  }
  \newbibmacro*{bbx:parunit}{%
    \ifbibliography
      {\setunit{\bibpagerefpunct}\newblock
       \usebibmacro{pageref}%
       \clearlist{pageref}%
       \setunit{\adddot\par\nobreak}}
      {}
  }
  \renewbibmacro*{doi+eprint+url}{%
    \usebibmacro{bbx:parunit}
    \iftoggle{bbx:doi}
      {\printfield{doi}}
      {}%
    \iftoggle{bbx:eprint}
      {\usebibmacro{eprint}}
      {}%
    \iftoggle{bbx:url}
      {\usebibmacro{url+urldate}}
      {}
  }
  \renewbibmacro*{eprint}{%
    \usebibmacro{bbx:parunit}
    \iffieldundef{eprinttype}
      {\printfield{eprint}}
      {\printfield[eprint:\strfield{eprinttype}]{eprint}}
  }
  \renewbibmacro*{url+urldate}{%
    \usebibmacro{bbx:parunit}
    \printfield{url}%
    \iffieldundef{urlyear}
      {}
      {\setunit*{\addspace}%
       \printtext[urldate]{\printurldate}}
  }
}{}\makeatother
\declaretheorem[numberwithin=section,refname={theorem,theorems},Refname={Theorem,Theorems}]{theorem}
\declaretheorem[sibling=theorem,style=definition]{definition}
\declaretheorem[sibling=theorem,style=definition,name=Example]{example}
\declaretheorem[sibling=theorem,name=Lemma]{lemma}
\declaretheorem[sibling=theorem,name=Proposition]{proposition}
\declaretheorem[sibling=theorem,name=Corollary]{corollary}
\declaretheorem[sibling=theorem,name=Conjecture]{conjecture}
\declaretheorem[sibling=theorem,name=Question]{question}
\declaretheorem[sibling=theorem,name=Problem]{problem}
\makeatletter\@ifpackageloaded{hyperref}{%
  \usepackage{xcolor}
  \definecolor{dark-red}{rgb}{0.4,0.15,0.15}
  \definecolor{dark-blue}{rgb}{0.15,0.15,0.4}
  \definecolor{medium-blue}{rgb}{0,0,0.5}
  \hypersetup{
    colorlinks,
    linkcolor={dark-red},
    citecolor={dark-blue},
    urlcolor={medium-blue}%
  }

}{}\makeatother
\newcommand{\dio}[1]{\mathrm{dio}(#1)}
\newcommand{\ice}[1]{\mathrm{ice}(#1)}
\newcommand{\ind}[1]{\mathrm{ind}(#1)}
\newcommand{\s}[2][*]{#2_{#1}}
\newcommand{\rep}[2][]{\mathrm{rep}_{#1}(#2)}
\newcommand{\val}[2][]{\mathrm{val}_{#1}(#2)}
\newcommand{\irep}[2]{\mathrm{inrc}(#1, #2)}
\newcommand{\prep}[2]{\mathrm{pnrc}(#1, #2)}
\newcommand{\interval}[1]{\mathcal{I}_{#1}}
\newcommand{\infw}[1]{%
  \ifcat\noexpand#1\relax\bm{#1}
  \else\mathbf{#1}\fi}          
\providecommand{\abs}[1]{\lvert#1\rvert}
\providecommand{\Abs}[1]{\left\lvert#1\right\rvert}
\providecommand{\norm}[1]{\lVert#1\rVert}
\newcommand{\Lang}[2][]{\mathcal{L}_{#2}\ifthenelse{\isempty{#1}}{}{(#1)}}
\newcommand{\N}{\mathbb{N}}
\newcommand{\keywords}[1]{\par\noindent{\footnotesize{\em Keywords\/}: #1}}
\begin{document}
  \title{Initial nonrepetitive complexity of regular episturmian words and their Diophantine exponents}
  \author[,1,2,3]{Jarkko Peltomäki\footnote{Corresponding author.\\E-mail address: \href{mailto:r@turambar.org}{r@turambar.org} (J. Peltomäki).}}
  \affil[1]{The Turku Collegium for Science, Medicine and Technology TCSMT, University of Turku, Turku, Finland}
  \affil[2]{Turku Centre for Computer Science TUCS, Turku, Finland}
  \affil[3]{Department of Mathematics and Statistics, University of Turku, Turku, Finland}
  \date{}
  \maketitle
  \begin{centering}
  In memory of my beloved brother Sauli Peltomäki (1996--2021).\\
  \end{centering}
  \vspace{1em}
  \noindent
  \hrulefill
  \begin{abstract}
    \vspace{-1em}
    \noindent
    Regular episturmian words are episturmian words whose directive words have a regular and restricted form making
    them behave more like Sturmian words than general episturmian words. We present a method to evaluate the initial
    nonrepetitive complexity of regular episturmian words extending the work of Wojcik on Sturmian words. For this, we
    develop a theory of generalized Ostrowski numeration systems and show how to associate with each episturmian word a
    unique sequence of numbers written in this numeration system.

    The description of the initial nonrepetitive complexity allows us to obtain novel results on the Diophantine
    exponents of regular episturmian words. We prove that the Diophantine exponent of a regular episturmian word is
    finite if and only if its directive word has bounded partial quotients. Moreover, we prove that the Diophantine
    exponent of a regular episturmian word is strictly greater than $2$ if the sequence of partial quotients is
    eventually at least $3$.

    Given an infinite word $x$ over an integer alphabet, we may consider a real number $\xi_x$ having $x$ as a
    fractional part. The Diophantine exponent of $x$ is a lower bound for the irrationality exponent of $\xi_x$. Our
    results thus yield nontrivial lower bounds for the irrationality exponents of real numbers whose fractional parts
    are regular episturmian words. As a consequence, we identify a new uncountable class of transcendental numbers
    whose irrationality exponents are strictly greater than $2$. This class contains an uncountable subclass of
    Liouville numbers.
    \vspace{1em}
    \keywords{sturmian word, episturmian word, diophantine exponent, initial nonrepetitive complexity, irrationality exponent}
    \vspace{-1em}
  \end{abstract}
  \hrulefill


  \section{Introduction}
  The fractional part of the expansion of a real number in some base can be interpreted as a right-infinite word. Major
  open problems in number theory concern the expansions of well-known numbers such as $\sqrt{2}$, $\pi$, or $e$. The
  inverse problem of inferring properties of a number $\xi_\infw{x}$ whose fractional part matches a prescribed
  infinite word $\infw{x}$ has attracted much attention especially in the last two decades. One of the most significant
  results is that of Adamczewski and Bugeaud \cite{2007:on_the_complexity_of_algebraic_numbers_I_expansions} from
  $2007$ stating that if the factor complexity function $p(\infw{x}, n)$ of an aperiodic infinite word $\infw{x}$ (a
  purely combinatorial notion) is sublinear, then $\xi_\infw{x}$ is transcendental. More recently, Bugeaud and Kim
  \cite{2019:a_new_complexity_function_repetitions_in_sturmian} introduced the notion of the exponent of repetition of
  an infinite word and studied it in relation to Sturmian words proving, among other results, that if
  $\lim_{n\to\infty} (p(\infw{x}, n) - n) < \infty$, then the irrationality exponent of $\xi_\infw{x}$ is at least
  $5/3 + 4\sqrt{10}/15$. The exponent of repetition is closely linked to the notion of the Diophantine exponent of an
  infinite word. The significance of this notion is that the Diophantine exponent of $\infw{x}$ is a lower bound to the
  irrationality exponent of $\xi_\infw{x}$. In this paper, we consider the class of regular episturmian words and prove
  results on their Diophantine exponents by characterizing their initial nonrepetitive complexity function. This
  provides novel results on the irrationality exponents of numbers whose fractional parts match a regular episturmian
  word.

  \subsection{Episturmian Words}
  An infinite word is Sturmian if it has exactly $n+1$ distinct factors (subwords) of length $n$ for all $n$. Sturmian
  words have many equivalent definitions and they can be generalized in various ways depending on the definition being
  used. Generalizing the work of de Luca on iterated palindromic closure, Droubay, Justin, and Pirillo introduced in
  \cite{2001:episturmian_words_and_some_constructions_of_de} episturmian words that further generalize the so-called
  Arnoux-Rauzy words \cite{1991:representation_geometrique_de_suites_de_complexite_2n+1}. This purely combinatorial
  generalization is defined as follows. For a finite word $w$, let $w^{(+)}$ be the shortest palindrome having $w$ as a
  prefix. Let $\Delta = y_1 y_2 \dotsm$ be an infinite word and define a sequence $(u_k)$ of finite words as follows:
  \begin{align*}
    u_1 &= \varepsilon, \\
    u_{k+1} &= (u_k y_k)^{(+)}.
  \end{align*}
  The limit $\infw{c}_\Delta$ of the words $(u_k)$ is the standard episturmian word with directive word $\Delta$. An
  episturmian word with directive word $\Delta$ is an infinite word sharing the set of factors with $\infw{c}_\Delta$.
  Sturmian words correspond to directive words that are binary and not ultimately constant. The most famous standard
  episturmian word that is not Sturmian is the Tribonacci word
  \begin{equation*}
    01020100102010102010010201020100102010102010010201\dotsm
  \end{equation*}
  having directive word $012012012 \dotsm$. The main results of this paper are stated for regular episturmian words
  whose directive words are of special form. Write $\Delta$ in the form $x_1^{a_1} x_2^{a_2} \dotsm$ with
  $x_k \neq x_{k+1}$ and $a_k > 0$ for all $k$. If the sequence $x_1 x_2 \dotsm$ equals the periodic sequence with
  period $0 1 2 \dotsm (d-1)$ for some $d$, then we say that the episturmian words with directive word $\Delta$ are
  \emph{regular} with period $d$. This class of regular episturmian words contains Sturmian words (case $d = 2$) and
  $d$-bonacci words, generalizations of the Fibonacci and Tribonacci words.

  Episturmian words enjoy many of the good properties of Sturmian words as described in the foundational papers
  \cite{2001:episturmian_words_and_some_constructions_of_de,2002:episturmian_words_and_episturmian_morphisms,2004:episturmian_words_shifts_morphisms_and_numeration}
  and the survey \cite{2009:episturmian_words_a_survey}, but some properties such as interpretation as codings of
  irrational rotations are lost. Standard references for Sturmian words are
  \cite[Ch.~2]{2002:algebraic_combinatorics_on_words},
  \cite[Ch.~6]{2002:substitutions_in_dynamics_arithmetics_and_combinatorics}. We refer the reader to
  \cite[Ch.~4]{diss:jarkko_peltomaki} for an introduction to Sturmian words as codings of irrational rotations.

  \subsection{Initial Nonrepetitive Complexity}
  We define the \emph{initial nonrepetitive complexity function} $\irep{\infw{x}}{n}$ of an infinite word $\infw{x}$ by
  \begin{equation*}
    \irep{\infw{x}}{n} = \max\{m \colon \text{$\infw{x}[i,i+n-1] \neq \infw{x}[j,j+n-1]$ for all $i, j$ with $1 \leq i < j \leq m$}\}.
  \end{equation*}
  The number $\irep{\infw{x}}{n}$ is the maximum number of factors of length $n$ seen when $\infw{x}$ is read from left
  to right prior to the first repeated factor of length $n$. In other words, the prefix of $\infw{x}$ of length
  $\irep{\infw{x}}{n} + n$ is the shortest prefix of $\infw{x}$ containing two occurrences of some factor of length
  $n$.

  The notion of initial nonrepetitive complexity was introduced independently by Moothatu
  \cite{2012:eulerian_entropy_and_non-repetitive_subword_complexity} and Bugeaud and Kim
  \cite{2019:a_new_complexity_function_repetitions_in_sturmian}. Nicholson and Rampersad
  \cite{2016:initial_non-repetitive_complexity_of_infinite_words} examine the general properties of this function and
  determine it explicitly for certain words such as the Thue-Morse word, the Fibonacci word, and the Tribonacci word.
  Their results were generalized for all standard Arnoux-Rauzy words in
  \cite{2020:on_non-repetitive_complexity_of_arnoux-rauzy_words} by Medkov\'a et al.

  In this paper, we set up a framework that allows us in principle to determine the initial nonrepetitive complexity of
  any episturmian word. In order to do this, we generalize the $S$-adic expansion of Sturmian words derived in
  \cite{2006:initial_powers_of_sturmian_sequences} to all episturmian words and prove in
  \autoref{thm:ostrowski_val_limit} that an episturmian word $\infw{t}$ can be expressed in the form
  \begin{equation*}
    \infw{t} = \lim_{k\to\infty} T^{\rho_k}(\infw{c})
  \end{equation*}
  where $\infw{c}$ is the corresponding standard episturmian word and $(\rho_k)$ is an integer sequence expressed in a
  generalized Ostrowski numeration system. In other words, we show that each episturmian word is a limit of appropriate
  shifts of the corresponding standard word $\infw{c}$. This means that studying a property of episturmian words can be
  reduced to studying the property on shifts of standard episturmian words. Here we apply this principle and determine
  the initial nonrepetitive complexity of \emph{regular} episturmian words generalizing a result of Wojcik
  \cite{diss:caius_wojcik} on Sturmian words. This results in \autoref{thm:main} which is too complicated to be stated
  here. We leave the characterization of this function for all episturmian words open.

  \subsection{Diophantine Exponents and Main Results}\label{ssec:dio}
  The Diophantine exponent is a combinatorial exponent of infinite words. It is introduced in
  \cite{2007:dynamics_for_beta_shifts_and_diophantine_approximation}, but is used implicitly in the earlier works by
  the same authors.

  \begin{definition}
    Let $\infw{x}$ be an infinite word. We let its \emph{Diophantine exponent}, denoted by $\dio{\infw{x}}$, to be the
    supremum of all real numbers $\rho$ for which there exist arbitrarily long prefixes of $\infw{x}$ of the form
    $UV^e$, where $U$ and $V$ are finite words and $e$ is a real number, such that
    \begin{equation*}
      \frac{\abs{UV^e}}{\abs{UV}} \geq \rho.
    \end{equation*}
  \end{definition}

  The concept of Diophantine exponent has inherent interest to a word-combinatorist, and the concept has connections to
  other combinatorial exponents widely studied in combinatorics on words. What makes it special is the ingenious and
  simple result that $\dio{\infw{x}}$ is a lower bound for the irrationality exponent $\mu(\xi_{\infw{x},b})$ of the
  real number $\xi_{\infw{x},b}$ having $\infw{x}$ as the fractional part of its base-$b$ expansion.

  The crucial result here is that of Bugeaud and Kim \cite{2019:a_new_complexity_function_repetitions_in_sturmian}
  stating that
  \begin{equation*}
      \dio{\infw{x}} = 1 + \limsup_{n\to\infty} \frac{n}{\irep{\infw{x}}{n}}.
  \end{equation*}
  This, together with the characterization of the function $\irep{\infw{x}}{n}$ for regular episturmian words
  (\autoref{thm:main}), provides means to find lower bounds for $\mu(\xi_{\infw{x},b})$ when $\infw{x}$ is a regular
  episturmian word.

  Our two main results are as follows. The results were previously proved for Sturmian words by
  Adamczewski (based on the results of \cite{2006:initial_powers_of_sturmian_sequences}) and Komatsu
  \cite{1996:a_certain_power_series_and_the_inhomogeneous_continued} respectively.

  \begin{theorem}
    Let $\infw{t}$ be a regular episturmian word of period $d$ with directive word
    $\Delta = x_1^{a_1} x_2^{a_2} \dotsm$. If $d = 2$ or $\limsup_k a_k \geq 3$, then $\mu(\xi_{\infw{t},b}) > 2$.
  \end{theorem}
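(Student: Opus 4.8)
The plan is to use the key identity $\dio{\infw{t}} = 1 + \limsup_{n\to\infty} n/\irep{\infw{t}}{n}$ from Bugeaud and Kim, together with the fact that $\dio{\infw{t}}$ lower-bounds the irrationality exponent $\mu(\xi_{\infw{t},b})$. Thus it suffices to show that under either hypothesis ($d=2$ or $\limsup_k a_k \geq 3$) one has $\dio{\infw{t}} > 2$, equivalently $\limsup_{n\to\infty} n/\irep{\infw{t}}{n} > 1$, equivalently that there are infinitely many $n$ with $\irep{\infw{t}}{n} < (1-\varepsilon)n$ for some fixed $\varepsilon > 0$. Concretely, I would exhibit arbitrarily long prefixes of $\infw{t}$ that are near-powers, i.e.\ prefixes of the form $UV^e$ with $\abs{UV^e}/\abs{UV}$ bounded away from $2$ from above. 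This is the Diophantine-exponent reformulation and is typically the most transparent route.

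First I would treat the two cases separately but by the same mechanism, namely the $S$-adic/Ostrowski machinery established earlier in the paper (\autoref{thm:ostrowski_val_limit} and \autoref{thm:main}). The standard episturmian word $\infw{c}$ with directive word $\Delta = x_1^{a_1}x_2^{a_2}\dotsm$ has a natural sequence of prefixes $u_k$ built by iterated palindromic closure, and these prefixes carry large initial powers whenever the partial quotients $a_k$ are large: intuitively, a block is repeated roughly $a_k$ times before the word is forced to introduce a new letter. In the case $\limsup_k a_k \geq 3$, I would select a subsequence $k_j$ with $a_{k_j} \geq 3$ and show that the corresponding prefix $u_{k_j}$ (or a shift thereof, controlled via $T^{\rho_k}$) contains a cube-like structure $UV^e$ with $e$ large enough that $\abs{UV^e}/\abs{UV}$ stays above $2$; passing $\infw{t}$ through the limit $\lim_k T^{\rho_k}(\infw{c})$ transfers this to $\infw{t}$ itself. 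The case $d = 2$ is exactly the Sturmian case, where this strict inequality $\dio{} > 2$ is already known (Adamczewski, via \cite{2006:initial_powers_of_sturmian_sequences}), so there I would either cite that result directly or rederive it as the $d=2$ specialization of \autoref{thm:main}.

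The technical heart is the quantitative estimate on the repetition structure of the prefixes. Using the recurrence $u_{k+1} = (u_k y_k)^{(+)}$ together with known length relations for episturmian standard words (the $\abs{u_k}$ satisfy a linear recurrence governed by the $a_k$ and $d$), I would compute the relevant ratio $\abs{UV^e}/\abs{UV}$ explicitly in terms of $a_{k}$ and the lengths $\abs{u_{k-1}}, \abs{u_k}$. The goal is a clean lower bound of the form $1 + a_{k}\abs{u_{k-1}}/(\text{total length})$ or similar, and one then checks that when $a_k \geq 3$ (respectively in the $d=2$ regime) this quantity exceeds $2$ along the chosen subsequence. I expect the main obstacle to be precisely this bookkeeping: correctly identifying the repeated block $V$ and the ``fractional power'' exponent $e$ inside the palindromic-closure prefixes, and verifying that the shift $\rho_k$ used to realize $\infw{t}$ as $\lim_k T^{\rho_k}(\infw{c})$ does not destroy the near-power prefix. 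Once the lengths are pinned down, the arithmetic showing the ratio stays strictly above $2$ should follow, though care is needed to get \emph{strict} inequality rather than merely $\geq 2$, which is what distinguishes this statement from the weaker finiteness results.
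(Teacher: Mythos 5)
Your reduction is exactly the paper's: \autoref{prp:ie_bounds} gives $\mu(\xi_{\infw{t},b}) \geq \dio{\infw{t}}$ for aperiodic words, and the paper proves $\dio{\infw{t}} > 2$ (\autoref{thm:dio_lower_bound}) via the Bugeaud--Kim identity of \autoref{prp:dio_inrc} and the characterization of $\irep{\infw{t}}{n}$ in \autoref{thm:main}; citing Adamczewski for $d=2$ would also be legitimate, since the paper itself attributes that case to him. The gap is in the step you dismiss as bookkeeping: ``exhibit a cube-like structure $UV^e$ in $u_{k_j}$ and transfer it to $\infw{t}$ through the limit $\lim_k T^{\rho_k}(\infw{c})$.'' This transfer is not a perturbation argument; it is intercept-sensitive, and the paper's entire case analysis exists precisely because of that. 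What repetition survives at the front of $T^{\val{c_1 \dotsm c_k}}(\infw{c}_\Delta)$ at scale $q_k$ depends on the Ostrowski digits of the intercept: for instance when $c_{k+1} = a_{k+1}$ the shift consumes essentially the whole power $s_k^{a_{k+1}}$, and the quantity $\irep{\infw{t}}{n}$ is then governed by a completely different clause of \autoref{thm:main} (case (ii)) than when $c_{k+1} = 0$ (cases (iv)--(v)). Moreover, the length of the common prefix of $\infw{t}$ and $T^{\val{c_1 \dotsm c_k}}(\infw{c}_\Delta)$ is controlled by the \emph{later} digits of the intercept (\autoref{lem:common_prefix_length}, \autoref{prp:sufficient_shift}), so ``passing through the limit'' requires a quantitative lower bound on that common prefix which your sketch does not supply.

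That this cannot be waved away is shown by the paper itself: \autoref{prp:3_ce_dio} exhibits a regular episturmian word with $d = 3$, $\limsup_k a_k = 2$, and intercept $1^\omega$ for which $\dio{\infw{t}} \approx 1.9156 < 2$, so for some intercepts the standard word's prefix powers genuinely fail to transfer, and a correct proof must deploy the hypothesis $a_{k_j} \geq 3$ (or $d = 2$) separately against each admissible digit pattern. Concretely, the paper's proof of \autoref{thm:dio_lower_bound} splits into six cases according to how $(c_{k+1}, c_k)$ compares with $a_{k+1}$, applies in each case the matching implication of \autoref{thm:main} at the right endpoint of a subinterval of $\interval{k}$, and, using \autoref{lem:c_bound} ($q_{k+1} \leq C q_k$ with $C = (d-1)\limsup_k a_k + 1$), obtains uniform bounds of the shape $\dio{\infw{t}} \geq 2 + C^{-(d+2)}$; for $d = 2$ the cases that needed $a_{k+1} \geq 2$ or $3$ are instead closed unconditionally through the Sturmian form \autoref{thm:main_sturmian}. (The unbounded-partial-quotient case, which your outline omits, is also not automatic: it occupies a separate case analysis in \autoref{thm:dio_upq}.) You name the right tools, but the decisive idea --- the digit-by-digit analysis of the intercept replacing the transfer heuristic --- is missing, and without it the step you flag as the ``main obstacle'' is exactly the step that fails.
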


  \begin{theorem}
    Let $\infw{t}$ be a regular episturmian word with directive word $\Delta = x_1^{a_1} x_2^{a_2} \dotsm$. Then
    $\xi_{\infw{t},b}$ is a Liouville number if and only if the sequence $(a_k)$ is unbounded.
  \end{theorem}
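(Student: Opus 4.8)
The plan is to prove the two implications separately: when $(a_k)$ is unbounded I will deduce the Liouville property from the infinitude of the Diophantine exponent, and when $(a_k)$ is bounded I will establish a finite upper bound for the irrationality exponent. The first direction is the short one. By \autoref{thm:main} and the Bugeaud--Kim formula $\dio{\infw{t}} = 1 + \limsup_{n\to\infty} n/\irep{\infw{t}}{n}$, an unbounded sequence of partial quotients forces $\irep{\infw{t}}{n}$ to be small relative to $n$ along a subsequence of lengths: a large $a_k$ produces a prefix $UV^e$ with $e$ large, so some factor of the associated length already repeats very early. Hence $\dio{\infw{t}} = \infty$, and since the Diophantine exponent is always a lower bound for the irrationality exponent, $\mu(\xi_{\infw{t},b}) \geq \dio{\infw{t}} = \infty$, so $\xi_{\infw{t},b}$ is Liouville.

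The substantial direction is the converse. Assuming $(a_k)$ bounded I must show $\mu(\xi_{\infw{t},b}) < \infty$; this does \emph{not} follow from $\dio{\infw{t}} < \infty$ alone, since the Diophantine exponent only bounds $\mu$ from below. I would first use that boundedness of $(a_k)$ makes $\infw{t}$ linearly recurrent, so that its powers are bounded: there is an $E$ such that no $W^E$ is a factor. Consequently every prefix $UV^e$ satisfies $e \leq E$, and a rational $p/q$ that approximates $\xi_{\infw{t},b}$ to within $q^{-w}$ forces the base-$b$ expansion to begin with such a bounded power, whose period equals the eventual period $\ell$ of the expansion of $p/q$. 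This yields $\abs{\xi_{\infw{t},b} - p/q} \gtrsim b^{-(\abs{U} + E\ell)}$, so $w$ stays bounded as long as the reduced denominator $q$ is comparable to $b^{\ell}$.

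The crux, and the step I expect to be the main obstacle, is exactly this last proviso. A priori the reduced denominator of the periodic rational built from a recurring block $V$ can be far smaller than $b^{\abs{V}}$, and if arbitrarily long recurring blocks admitted such anomalously small denominators, then $\mu(\xi_{\infw{t},b})$ could be infinite even with $\dio{\infw{t}}$ finite. I therefore need to show that the blocks actually recurring in a regular episturmian word give rationals whose reduced denominators grow like $b^{c\abs{V}}$ for a fixed $c > 0$ --- equivalently, that the best approximations to $\xi_{\infw{t},b}$ are precisely the repetition-based ones, with denominators of order $b^{\,\mathrm{period}}$. Here the rotation and continued-fraction machinery available to Komatsu in the Sturmian case must be replaced by the Ostrowski representation of \autoref{thm:ostrowski_val_limit} and the explicit description of the standard prefixes $u_k$: the denominators of the relevant approximations are governed by the lengths $\abs{u_k}$, and controlling the reductions amounts to a number-theoretic estimate on the values of these prefixes read in base $b$. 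Granting this control, the bound on powers turns the estimate of the previous paragraph into a uniform bound $\mu(\xi_{\infw{t},b}) < \infty$, so $\xi_{\infw{t},b}$ is not Liouville.
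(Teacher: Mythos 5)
Your forward direction is essentially the paper's: unbounded $(a_k)$ gives $\dio{\infw{t}} = \infty$, and the lower bound $\mu(\xi_{\infw{t},b}) \geq \dio{\infw{t}}$ does the rest. Note, though, that your one-line justification (a large $a_k$ ``produces a prefix $UV^e$ with $e$ large'') is only immediate for the standard word; for an arbitrary intercept this is exactly the multi-case analysis carried out in the proof of \autoref{thm:dio_upq}, so you should cite that theorem rather than re-derive it in a sentence.

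The genuine gap is in the converse, and it is the step you yourself flag and then grant: you never prove that the recurring blocks give rationals whose reduced denominators grow like $b^{c\abs{V}}$, so the proof is incomplete precisely at its crux. The paper avoids this entirely: by \autoref{prp:ie_bounds} (the Adamczewski--Bugeaud result), an aperiodic word with $p(\infw{x},n) \leq Kn$ satisfies $\mu(\xi_{\infw{x},b}) \leq (2K+1)^3(\dio{\infw{x}}+1)$, and regular episturmian words over $d$ letters have $p(n) = (d-1)n+1$, so bounded partial quotients plus \autoref{thm:dio_upq} immediately give $\mu(\xi_{\infw{t},b}) < \infty$. You correctly observed that $\dio{\infw{t}} < \infty$ alone does not bound $\mu$ from above --- the extra input needed is the sublinear-complexity hypothesis, which is available and which you did not use. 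Moreover, your intended lemma is not how the known argument defuses the small-denominator worry: the cited proof does not require the repetition-based rationals $p/q$ with $q = b^{\abs{U}}(b^{\abs{V}}-1)$ to be reduced. Instead it uses that these rationals form a sufficiently dense sequence of approximations that are good (from the repetitions forced by low complexity) but not too good (this is exactly where $\dio{\infw{t}} < \infty$ enters), and a standard transference argument then bounds how well any other rational can approximate $\xi_{\infw{t},b}$. So the fix is not to establish denominator growth via the Ostrowski machinery --- a claim that may well be delicate or false as stated --- but to invoke \autoref{prp:ie_bounds}; without one or the other, the bounded case remains unproved.
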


  We thus identify a new uncountable class of numbers with irrationality exponent strictly greater than $2$ and a new
  uncountable class of Liouville numbers.

  In addition, we show in \autoref{sec:ie} that it is possible that $\mu(\xi_{\infw{t},b}) > \dio{\infw{t}}$ for an
  episturmian word over a $3$-letter alphabet. For Sturmian words, we have the equality
  $\mu(\xi_{\infw{t},b}) = \dio{\infw{t}}$ by a result of Bugeaud and Kim
  \cite{2019:a_new_complexity_function_repetitions_in_sturmian}. Additional results on the Diophantine exponents of
  $d$-bonacci words are provided in \autoref{sec:diophantine}.

  \section{Preliminaries from Combinatorics on Words}
  We use standard notions and notations from combinatorics on words. For a general reference, see, e.g.,
  \cite{2002:algebraic_combinatorics_on_words}.

  A \emph{word} is a finite sequence of symbols from some finite set of \emph{letters} called an \emph{alphabet}. If
  $w = w_1 \dotsm w_n$ with $w_i \in \mathcal{A}$, then we say that $w$ is a word of length $n$ over $\mathcal{A}$, and
  we set $\abs{w} = n$. The \emph{empty word}, the unique word of length $0$, is denoted by $\varepsilon$. The set of
  words over $\mathcal{A}$ is denoted by $\mathcal{A}^*$. If $u$ and $v$ are words such that $u = u_1 \dotsm u_n$ and
  $v = v_1 \dotsm v_m$ with $u_i, v_i \in \mathcal{A}$, then their \emph{concatenation} $uv$ is the word
  $u_1 \dotsm u_n v_1 \dotsm v_m$. If $w = uzv$, then $u$ is a \emph{prefix} of $w$, $v$ is a \emph{suffix} of $w$, and
  $z$ is a \emph{factor} of $w$. If $u \neq w$, then $u$ is a \emph{proper} prefix of $w$; proper suffix is defined
  analogously. If $w = uv$, then by $u^{-1} w$ and $w v^{-1}$ we respectively refer to the words $v$ and $u$. An
  \emph{occurrence} of $u$ in $w_1 \dotsm w_n$ is an index $i$ such that $u$ is a prefix of $w_i w_{i+1} \dotsm w_n$.
  By $w[i, j]$ we mean the factor $w_i \dotsm w_j$ whenever the indices $i$ and $j$ make sense.

  Let $w$ be a word over $\mathcal{A}$. By $w^n$ we refer to the concatenation $w \dotsm w$ where $w$ is repeated $n$
  times. This is an integer power, and by a fractional power $w^e$, $e \geq 1$, we mean the word $(uv)^n u$ with
  $uv = w$ and $e = n + \abs{u}/\abs{w}$. If $w = u^n$ only if $n = 1$, then we say that $w$ is \emph{primitive}. The
  word $w$ is primitive if and only if $w$ occurs exactly twice in $w^2$. If $w = uv$, then the word $vu$ is a
  \emph{conjugate} of $w$. If $w = w_1 \dotsm w_n$, $w_i \in \mathcal{A}$, then the \emph{reversal} of $w$ is the word
  $w_n \dotsm w_1$. If a word equals its reversal, then we say that it is a \emph{palindrome} (we count the empty word
  as a palindrome). If $w_i = w_{i+p}$ for all $i$ such that $0 \leq i < \abs{w} - p$, then $w$ has \emph{period} $p$.
  A mapping $\tau\colon \mathcal{A}^* \to \mathcal{A}^*$ is a \emph{morphism} if $\tau(uv) = \tau(u)\tau(v)$ for all
  $u, v \in \mathcal{A}^*$.

  An infinite sequence of letters $\infw{x}$ over $\mathcal{A}$ is called an \emph{infinite word}. The set of infinite
  words over $\mathcal{A}$ is denoted by $\mathcal{A}^\omega$. This set is naturally equipped with the product
  topology. If $\infw{x} = x_1 x_2 x_3 \dotsm$ with $x_i \in \mathcal{A}$, then its \emph{shift} $T(\infw{x})$ is the
  word $x_2 x_3 \dotsm$. The map $T$ is continuous with respect to the product topology. The \emph{language}
  $\Lang{\infw{x}}$ of $\infw{x}$ is its set of factors (the preceding notions of prefix, suffix, factor, and
  occurrence directly generalize to infinite words), and by $\Lang[n]{\infw{x}}$ we refer to the set of factors of
  $\infw{x}$ of length $n$. We say that the language $\Lang{\infw{x}}$ is \emph{closed under reversal} if the reversal
  of each $w$ in $\Lang{\infw{x}}$ is also in $\Lang{\infw{x}}$. If $wa, wb \in \Lang{\infw{x}}$ for distinct letters
  $a$ and $b$, then we say that the factor $w$ of $\infw{x}$ is \emph{right special}; \emph{left special} factors are
  defined analogously. If a factor is both right and left special, we say it is \emph{bispecial}. The set
  $\{\infw{w} \in \mathcal{A}^\omega : \Lang{\infw{w}} \subseteq \Lang{\infw{x}}\}$ is called the \emph{subshift}
  generated by $\infw{x}$. The subshift is a $T$-invariant and closed set. An infinite word $\infw{x}$ is
  \emph{ultimately periodic} if $\infw{x} = u v^\omega$ where $u$ and $v$ are finite words, $v \neq \varepsilon$, and
  $v^\omega = v v v \dotsm$. If $\infw{x}$ is not ultimately periodic, then it is \emph{aperiodic}. If
  $\infw{x} = x_1 x_2 \dotsm$ with $x_i \in \mathcal{A}$ and $\tau\colon A^* \to A^*$ is a morphism, then we define
  $\tau(\infw{x})$ to be the infinite word $\tau(x_1) \tau(x_2) \dotsm$.

  \section{Episturmian Words and Generalized Ostrowski Numeration Systems}
  Episturmian words were introduced in \cite{2001:episturmian_words_and_some_constructions_of_de} as generalizations of
  Sturmian words based on palindromic closure. 

  Let $\mathcal{A}$ be the integer alphabet $\{0, 1, \ldots, d-1\}$ of $d$ letters. Let $w^{(+)}$ be the shortest
  palindrome having the word $w$ as a prefix. Let $\Delta = y_1 y_2 \dotsm$ be an infinite word over $\mathcal{A}$ and
  define a sequence $(u_k)$ of finite words as follows:
  \begin{align*}
    u_1 &= \varepsilon, \\
    u_{k+1} &= (u_k y_k)^{(+)},
  \end{align*}
  and set $\infw{c}_\Delta = \lim_{k\to\infty} u_k$. We say that $\infw{c}_\Delta$ is a \emph{standard episturmian
  word} with directive word $\Delta$. In what follows, we often use the name \emph{epistandard} for a standard
  episturmian word. Each epistandard word has a unique directive word. The words $u_k$ are called \emph{central words}
  and they are exactly the palindromic prefixes of $\infw{c}_\Delta$. In fact, the words $u_k$ are exactly the
  bispecial factors of $\infw{c}_\Delta$. This means in particular that every prefix of $\infw{c}_\Delta$ is left
  special.

  An infinite word $\infw{t}$ is \emph{episturmian} with directive word $\Delta$ if
  $\Lang{\infw{t}} = \Lang{\infw{c}_\Delta}$. Equivalently, a word $\infw{t}$ is episturmian if $\Lang{\infw{t}}$ is
  closed under reversal and $\infw{t}$ has at most one right special factor of length $n$ for all $n$
  \cite[Thm.~5]{2001:episturmian_words_and_some_constructions_of_de}. If $\infw{t}$ is binary and aperiodic, then we
  call $\infw{t}$ \emph{Sturmian}. It is equivalent to require that $\Delta$ is binary and contains both $0$ and $1$
  infinitely often.

  An episturmian word is ultimately periodic if and only if the directive word $\Delta$ is eventually constant, that
  is, $y_n = a$ for some $a \in \mathcal{A}$ for all $n$ large enough
  \cite[Thm.~3]{2001:episturmian_words_and_some_constructions_of_de}. In this paper, we consider only aperiodic
  episturmian words, so we assume that $(y_n)$ is not eventually constant.

  \subsection{The Intercept of an Episturmian Word}
  Our next aim is to desubstitute an episturmian word with episturmian morphisms in a certain way that gives rise to
  the concept of the intercept of an episturmian word. This generalizes the arguments of
  \cite{2006:initial_powers_of_sturmian_sequences} for Sturmian words.

  Let $\mathcal{A}$ be the integer alphabet $\{0, 1, \ldots, d-1\}$ as before. For each $y \in \mathcal{A}$, define the
  morphisms $L_y$ as follows:
  \begin{equation*}
    L_y(x) = \begin{cases}
               y, &\text{if $x = y$}, \\
               yx, &\text{if $x \neq y$}.
             \end{cases}
  \end{equation*}
  These morphisms belong to the class of episturmian morphisms; see \cite[Sect.~3]{2009:episturmian_words_a_survey}.
  Let $\infw{t}$ be an episturmian word over $\mathcal{A}$ with directive word $y_1 y_2 \dotsm$. Then, depending on if
  the first letter of $\infw{t}$ is $y_1$, we have $\infw{t} = L_{y_1}(\infw{t}_1)$ or
  $\infw{t} = T(L_{y_1}(\infw{t}_1))$ for some unique infinite word $\infw{t}_1$. It is well-known that $\infw{t}_1$ is
  also an episturmian word over $\mathcal{A}$ (possibly over a strict subalphabet of $\mathcal{A}$)
  \cite[Thm.~3.10]{2002:episturmian_words_and_episturmian_morphisms}. Thus there exist an integer $b_1$ in $\{0, 1\}$
  and a unique episturmian word $\infw{t}_1$ such that $\infw{t} = T^{b_1} \circ L_{y_1}(\infw{t}_1)$. By repeating
  this decoding, we see that there exists a unique integer sequence $(b_k)$ and a unique sequence $(\infw{t}_k)$ of
  episturmian words such that
  \begin{equation}\label{eq:desubst}
    \infw{t} = (T^{b_1} \circ L_{y_1}) \circ \dotsm \circ (T^{b_k} \circ L_{y_k})(\infw{t}_k)
  \end{equation}
  for all $k \geq 1$. It is easy to see that if $y_k = y_{k+1}$ and $b_{k+1} = 0$, then $b_k = 0$. Indeed, if
  $b_k = 1$, then $\infw{t}_{k-1}$ does not begin with $y_k$ so, by the form of the morphism $L_{y_{k+1}}$, neither
  does $\infw{t}_k$ begin with $y_k$. Since $y_k = y_{k+1}$, this implies that $b_{k+1} = 1$.
  
  Let us write the directive word $\Delta$ more compactly as follows:
  \begin{equation}\label{eq:dw_multiplicative}
    \Delta = y_1 y_2 \dotsm = x_1^{a_1} x_2^{a_2} \dotsm
  \end{equation}
  with $a_k \geq 1$, and $x_k \neq x_{k+1}$ for all $k$. We call the sequence $(a_k)$ the sequence of \emph{partial
  quotients} of $\infw{t}$ (the choice of the name will become apparent below). Let $r_0 = 0$ and
  $r_k = a_1 + \dotsm + a_k$ for $k \geq 1$. By the property given at the end of the previous paragraph, we see that
  $b_{r_k+1} \dotsm b_{r_{k+1}}$ (viewed as a word over $\{0, 1\}$) is of the form $0^* 1^*$, so we may write
  \begin{equation*}
    b_1 b_2 \dotsm = 0^{a_1 - c_1} 1^{c_1} 0^{a_2 - c_2} 1^{c_2} \dotsm
  \end{equation*}
  for some integers $c_k$ such that $0 \leq c_k \leq a_k$ for all $k$. Therefore
  \begin{equation*}
    \infw{t} = \Big( L_{x_1}^{a_1 - c_1} \circ (T \circ L_{x_1})^{c_1} \Big) \circ \dotsm \circ \Big( L_{x_k}^{a_k-c_k}
    \circ (T \circ L_{x_k})^{c_k} \Big) (\infw{t}_{r_k})
  \end{equation*}
  for all $k$. It is easy to verify that $L_y \circ T \circ L_y = T \circ L_y \circ L_y$ for all letters $y$, so we
  find that
  \begin{equation}\label{eq:episturmian_s_adic}
    \infw{t} = T^{c_1} L_{x_1}^{a_1} \circ \dotsm \circ T^{c_k} L_{x_k}^{a_k}(\infw{t}_{r_k})
  \end{equation}
  for all $k$. We call the sequence $c_1 c_2 \dotsm$ the \emph{intercept} of $\infw{t}$. The choice of the name will
  become apparent after we discuss Sturmian words below. Notice that the intercept of $\infw{t}$ is unique. Notice also
  the important fact that the above derivation guarantees that
  $T^{c_1} L_{x_1}^{a_1} \circ \dotsm \circ T^{c_k} L_{x_k}^{a_k}(z)$ is nonempty when $z$ is the first letter of
  $\infw{t}_{r_k}$.

  \begin{lemma}\label{lem:standard_intercept}
    Let $\Delta$ be a directive word as in \eqref{eq:dw_multiplicative}. The intercept of the epistandard word
    $\infw{c}_\Delta$ is $0^\omega$.
  \end{lemma}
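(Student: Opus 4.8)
The plan is to show that every $b_k$ appearing in the desubstitution \eqref{eq:desubst} of $\infw{c}_\Delta$ equals $0$. Since the paper has already shown that $b_1 b_2 \dotsm = 0^{a_1 - c_1} 1^{c_1} 0^{a_2 - c_2} 1^{c_2} \dotsm$, establishing $b_k = 0$ for all $k$ forces $c_k = 0$ for every $k$ (each block $1^{c_k}$ must be empty), and hence the intercept $c_1 c_2 \dotsm$ equals $0^\omega$, as claimed.

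I would first record two facts. (i) The standard word $\infw{c}_\Delta$ begins with $y_1$; indeed $u_2 = (u_1 y_1)^{(+)} = y_1$ is a prefix of $\infw{c}_\Delta$. (ii) The desubstitution identity $\infw{c}_\Delta = L_{y_1}(\infw{c}_{T\Delta})$, where $T\Delta = y_2 y_3 \dotsm$ is the shifted directive word. From (i), the decomposition $\infw{c}_\Delta = T^{b_1} \circ L_{y_1}(\infw{t}_1)$ must have $b_1 = 0$: no shift is needed because $L_{y_1}$ already outputs a word beginning with $y_1$, matching $\infw{c}_\Delta$, whereas the alternative $T(L_{y_1}(\infw{t}_1))$ would begin with a different letter. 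Combining $b_1 = 0$ with (ii) and the uniqueness of $\infw{t}_1$ asserted after \eqref{eq:desubst}, injectivity of $L_{y_1}$ on infinite words yields $\infw{t}_1 = \infw{c}_{T\Delta}$.

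Next I would iterate. The word $\infw{t}_1 = \infw{c}_{T\Delta}$ is again a standard episturmian word, now with directive word $y_2 y_3 \dotsm$, so by (i) it begins with $y_2$ and the same argument gives $b_2 = 0$ and $\infw{t}_2 = \infw{c}_{T^2\Delta}$. An induction on $k$ with hypothesis $\infw{t}_{k-1} = \infw{c}_{T^{k-1}\Delta}$ then shows $b_k = 0$ and $\infw{t}_k = \infw{c}_{T^k\Delta}$ for every $k$. Hence all $b_k = 0$, all $c_k = 0$, and the intercept is $0^\omega$.

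The main obstacle is fact (ii). I would derive it from the behaviour of iterated palindromic closure: writing $\operatorname{Pal}$ for the closure operator so that $u_{k+1} = \operatorname{Pal}(y_1 \dotsm y_k)$, Justin's formula \cite{2009:episturmian_words_a_survey} gives $\operatorname{Pal}(y_1 y_2 \dotsm y_k) = L_{y_1}\bigl(\operatorname{Pal}(y_2 \dotsm y_k)\bigr)\, y_1$. Letting $k \to \infty$ and using that $L_{y_1}$ is continuous and that the trailing letter $y_1$ is pushed off to infinity, the right-hand side converges to $L_{y_1}(\infw{c}_{T\Delta})$ while the left-hand side converges to $\infw{c}_\Delta$, yielding (ii). The delicate points are the correct bookkeeping of the trailing $y_1$ in the limit and the injectivity of $L_{y_1}$, which is what lets me pass from $\infw{c}_\Delta = L_{y_1}(\infw{c}_{T\Delta}) = L_{y_1}(\infw{t}_1)$ to $\infw{t}_1 = \infw{c}_{T\Delta}$.
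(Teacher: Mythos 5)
Your proof is correct and follows essentially the same route as the paper: the paper's proof likewise observes that $\infw{c}_\Delta$ begins with its first directive letter (forcing the first shift exponent $b_1$, hence $c_1$, to vanish) and then inducts, using the fact that the desubstituted word is again epistandard with the shifted directive word. The only difference is that where the paper simply cites Theorem~9 of \cite{2001:episturmian_words_and_some_constructions_of_de} for this desubstitution fact, you rederive it from Justin's formula $\operatorname{Pal}(y_1 w) = L_{y_1}(\operatorname{Pal}(w))\,y_1$ together with the injectivity of $L_{y_1}$ on infinite words, which is a sound, self-contained substitute.
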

  \begin{proof}
    The word $\infw{c}_\Delta$ begins with $x_1$, so $b_1 = 0$. The claim follows by induction because $\infw{t}_1$ is
    epistandard by \cite[Thm.~9]{2001:episturmian_words_and_some_constructions_of_de}.
  \end{proof}

  Next we introduce several sequences of morphisms and words in order to define the important generalized standard
  words and show their connection to the central words $u_k$. See
  \cite[Sect.~2]{2002:episturmian_words_and_episturmian_morphisms} for a slightly more elaborate presentation.

  Let $\Delta$ be a directive word as in \eqref{eq:dw_multiplicative}. Set
  \begin{equation*}
    \mu_k = L_{y_1} \circ \dotsm \circ L_{y_k} \quad \text{and} \quad \tau_k = \mu_{r_k} = L_{x_1}^{a_1} \circ \dotsm \circ L_{x_k}^{a_k}
  \end{equation*}
  with $\mu_0$ and $\tau_0$ being the identity map, and define the sequences $(h_k)_{k \geq 0}$, $(s_k)_{k \geq 0}$,
  and $(q_k)_{k \geq 0}$ by setting
  \begin{equation*}
    h_k = \mu_k(y_{k+1}), \quad s_k = h_{r_k} = \tau_k(x_{k+1}), \quad \text{and} \quad q_k = \abs{s_k}.
  \end{equation*}
  The words $s_k$ are the (finite) \emph{standard words} associated with the directive word $\Delta$. By definition,
  the epistandard word $\infw{c}_\Delta$ is the limit of both $(h_k)$ and $(s_k)$. The words $s_k$ are primitive
  \cite[Prop.~2.8]{2002:episturmian_words_and_episturmian_morphisms}. Notice that if $i$ is such that
  $r_{k-1} \leq i < r_k$, then $h_k = s_{k-1}$.

  Let $(\infw{c}_n)$ be the sequence of epistandard words such that $\infw{c}_\Delta = \mu_n(\infw{c}_n)$ as in
  \eqref{eq:desubst} (see \autoref{lem:standard_intercept}) and $(\mu_{n,k})$, $(\tau_{n,k})$, $(u_{n,k})$,
  $(h_{n,k})$, and $(s_{n,k})$ be the respective sequences for $\infw{c}_n$. A simple induction argument (see
  \cite[Eq.~3]{2002:episturmian_words_and_episturmian_morphisms}) shows that
  \begin{equation}\label{eq:u_1}
    u_k = \mu_p(u_{p,k-p}) u_{p+1}
  \end{equation}
  for all $p$ such that $0 \leq p < k$. Replacing $k$ by $k+1$ and $p$ by $k-1$ in \eqref{eq:u_1} yields
  \begin{equation}\label{eq:u_2}
    u_{k+1} = h_{k-1} u_k
  \end{equation}
  for all $k \geq 1$. Using \eqref{eq:u_2} repeatedly, we obtain
  \begin{equation}\label{eq:u_3}
    u_k = h_{k-2} h_{k-3} \dotsm h_{p-1} u_p = h_{k-2} h_{k-3} \dotsm h_1 h_0
  \end{equation}
  for $k \geq 2$. The equation \eqref{eq:u_3} directly implies the following important formula for $k \geq 1$:
  \begin{equation}\label{eq:u_4}
    u_{r_k+1} = s_{k-1}^{a_k} s_{k-2}^{a_{k-1}} \dotsm s_0^{a_1}.
  \end{equation}


  \begin{definition}
    Let $\Delta$ be as in \eqref{eq:dw_multiplicative}, and let $P(k) = \max\{p < k : y_p = y_k\}$ if this integer
    exists, and leave $P(k)$ undefined otherwise. Define $j(k)$ as the largest $j$ such that $j \leq k$ and
    $x_j = x_{k+1}$ when $P(r_k+1)$ exists and leave $j(k)$ undefined otherwise.
  \end{definition}

  For the next lemma, we make the following convention which makes our formulas less ``noisy''. We often have formulas
  involving $s_k^{a_{k+1}}$ where the subscript $k+1$ in the superscript $a_{k+1}$ is one greater than in the subscript
  $k$. This conveys no essential information, so we will write $\smash[b]{s_k^{\s{a}}}$ instead whenever there is no
  risk of confusion. We also take $\smash[t]{s_k^{\s{a}-1}}$ to mean $\smash[t]{s_k^{a_{k+1}-1}}$ and
  $\smash[t]{s_k^{\s{a} - \s{c}}}$ to mean $\smash[t]{s_k^{a_{k+1}-c_{k+1}}}$ etc.

  \begin{lemma}\label{lem:s_recursion}
    Let $\Delta$ be a directive word as in \eqref{eq:dw_multiplicative} and $k \geq 0$. If $P(r_k+1)$ exists, then
    \begin{equation}\label{eq:s_recursion_1}
      s_k = s_{k-1}^{\s{a}} \dotsm s_{j(k)}^{\s{a}} s_{j(k)-1}.
    \end{equation}
    If $P(r_k+1)$ does not exist, then
    \begin{equation}\label{eq:s_recursion_2}
      s_k = s_{k-1}^{\s{a}} \dotsm s_0^{\s{a}} x_{k+1}.
    \end{equation}
  \end{lemma}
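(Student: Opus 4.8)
The plan is to unwind the definition $s_k = \tau_k(x_{k+1})$ one outer factor $L_{x_m}^{a_m}$ at a time, reading off a block $s_{m-1}^{\s{a}}$ at each step, until the letter $x_{k+1}$ is first absorbed. Two elementary facts drive the whole computation: the factorization $\tau_m = \tau_{m-1} \circ L_{x_m}^{a_m}$, together with $L_a^n(a) = a$ and $L_a^n(b) = a^n b$ for $b \neq a$; and the identity $\tau_{m-1}(x_m) = s_{m-1}$, which is simply the definition of $s_{m-1}$.

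First I would record the single peeling step. Fix the target letter $x_{k+1}$ and suppose $x_{k+1} \neq x_m$. Then $L_{x_m}^{a_m}(x_{k+1}) = x_m^{a_m} x_{k+1}$, so applying the morphism $\tau_{m-1}$ gives
\begin{equation*}
  \tau_m(x_{k+1}) = \tau_{m-1}(x_m)^{a_m}\, \tau_{m-1}(x_{k+1}) = s_{m-1}^{\s{a}}\, \tau_{m-1}(x_{k+1}),
\end{equation*}
where $s_{m-1}^{\s{a}} = s_{m-1}^{a_m}$ under the stated convention. This is the recursion to be iterated.

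Next I would iterate this identity from $m = k$ downward. Since $x_{k+1} \neq x_m$ holds for every $m$ strictly larger than $j(k)$ (the last index $j \leq k$ with $x_j = x_{k+1}$, whenever such an index exists), the peeling step applies successively for $m = k, k-1, \dotsc, j(k)+1$, and telescoping yields
\begin{equation*}
  s_k = s_{k-1}^{\s{a}} s_{k-2}^{\s{a}} \dotsm s_{j(k)}^{\s{a}}\, \tau_{j(k)}(x_{k+1}).
\end{equation*}
For the terminal factor, $x_{k+1} = x_{j(k)}$ means $L_{x_{j(k)}}^{a_{j(k)}}$ fixes $x_{k+1}$, so $\tau_{j(k)}(x_{k+1}) = \tau_{j(k)-1}(x_{j(k)}) = s_{j(k)-1}$, which is exactly \eqref{eq:s_recursion_1}. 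If instead no such index exists, i.e.\ $P(r_k+1)$ does not exist and $x_{k+1} \notin \{x_1, \dotsc, x_k\}$, then the peeling runs all the way down to $\tau_0$, which is the identity, and $\tau_0(x_{k+1}) = x_{k+1}$, giving \eqref{eq:s_recursion_2}.

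The only genuinely delicate point is the index bookkeeping: one must check that $P(r_k+1)$ exists exactly when $x_{k+1}$ reappears among $x_1, \dotsc, x_k$, that $j(k)$ is then precisely the index at which the peeling halts, and that $x_{k+1} \neq x_m$ throughout the range $j(k) < m \leq k$ so that every intermediate step is legitimately of the $b \neq a$ type. Aligning the shifted superscripts of the convention $s_m^{\s{a}} = s_m^{a_{m+1}}$ against the blocks $s_{m-1}^{a_m}$ produced by the peeling also requires a moment's care but is otherwise routine.
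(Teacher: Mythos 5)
Your proposal is correct and follows essentially the same route as the paper's own proof: both peel one block $L_{x_m}^{a_m}$ at a time via $\tau_m = \tau_{m-1} \circ L_{x_m}^{a_m}$, using $L_a(b) = ab$ for $b \neq a$ to extract $s_{m-1}^{\s{a}}$ at each step, and terminate either with $\tau_{j(k)}(x_{k+1}) = \tau_{j(k)-1}(x_{j(k)}) = s_{j(k)-1}$ when $x_{j(k)} = x_{k+1}$, or with $\tau_0(x_{k+1}) = x_{k+1}$ when no earlier occurrence exists. The index bookkeeping you flag is handled correctly, so there is nothing to add.
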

  \begin{proof}
    Suppose first that $P(r_k+1)$ exists. Set $j = j(k)$. First of all, as $x_k \neq x_{k+1}$, we have
    $s_k = \tau_{k-1}(L_{x_k}^{a_k}(x_{k+1})) = \tau_{k-1}(x_k^{a_k} x_{k+1}) = s_{k-1}^{a_k} \tau_{k-1}(x_{k+1})$. As
    long as $k-t > j$, we have $x_{k-t} \neq x_{k+1}$ and
    $\smash[t]{\tau_{k-t}(x_{k+1}) = s_{k-t-1}^{a_{k-t}} \tau_{k-t-1}(x_{k+1})}$ by a similar computation. Hence we
    find that
    \begin{equation*}
      s_k = s_{k-1}^{\s{a}} \dotsm s_{j}^{\s{a}} \tau_j(x_{k+1}) = s_{k-1}^{\s{a}} \dotsm s_j^{\s{a}} \tau_{j-1}(x_j) = s_{k-1}^{\s{a}} \dotsm s_j^{\s{a}} s_{j-1}
    \end{equation*}
    because $x_j = x_{k+1}$. Say $P(r_k+1)$ does not exist. Then $x_{k-t} \neq x_{k+1}$ for all $t$ such that
    $0 \leq t < k$ and the above arguments yield
    \begin{equation*}
      s_k = s_{k-1}^{\s{a}} \tau_{k-1}(x_{k+1}) = \ldots = s_{k-1}^{\s{a}} \dotsm s_0^{\s{a}} \tau_0(x_{k+1}).
    \end{equation*}
    The claim follows as $\tau_0(x_{k+1}) = x_{k+1}$.
  \end{proof}

  The equations \eqref{eq:s_recursion_1} and \eqref{eq:s_recursion_2} imply that
  \begin{equation}\label{eq:q_recursion_1}
    q_k = a_k q_{k-1} + \dotsm + a_{j(k)+1} q_{j(k)} + q_{j(k)-1}
  \end{equation}
  when $P(r_k+1)$ exists and
  \begin{equation}\label{eq:q_recursion_2}
    q_k = a_k q_{k-1} + \dotsm + a_1 q_0 + 1
  \end{equation}
  when $P(r_k+1)$ does not exist. If $\Delta$ is periodic with period $d$, then we see that $(q_k)$ satisfies a linear
  recurrence of order $d$.

%

  \subsection{Regular Episturmian Words}
  In this section, we define regular episturmian words which have directive words of special form. This subclass of
  episturmian words has not had specific attention except for the paper of Glen
  \cite{2007:powers_in_a_class_of_A_strict_standard_episturmian} where the powers occurring in these words are studied.

  \begin{definition}
    Let $\Delta$ be a directive word as in \eqref{eq:dw_multiplicative}. If there exists an integer $d$ such that
    $d \geq 2$, the letters $x_1$, $\dotsm$, $x_d$ are pairwise distinct, and
    $x_1 x_2 \dotsm = (x_1 \dotsm x_d)^\omega$, then we say that the directive word $\Delta$ is \emph{regular} with
    period $d$. An episturmian word is regular if its directive word is regular.
  \end{definition}

  In what follows, we often assume that $x_1 x_2 \dotsm = (012 \dotsm (d-1))^\omega$ for some integer $d$ such that $d
  \geq 2$. Notice that regular episturmian words are exactly the Sturmian words when $d = 2$. This class includes the
  $d$-bonacci words $\infw{f}_d$ which are the epistandard words having directive words $(012 \dotsm (d-1))^\omega$.
  The $2$-bonacci word is called the \emph{Fibonacci word}, and the $3$-bonacci word is called the \emph{Tribonacci
  word}.

  The main advantage in studying regular episturmian words is that the function $j(k)$ is simple: $j(k) = k - (d - 1)$
  when $j(k)$ is defined, i.e., when $k \geq d$. This simplifies many properties. For example, from
  \eqref{eq:s_recursion_2} and \eqref{eq:s_recursion_1}, we have
  \begin{equation}\label{eq:reg_s_1}
    s_k = s_{k-1}^{\s{a}} \dotsm s_0^{\s{a}} x_{k+1}
  \end{equation}
  for $1 \leq k < d$, and
  \begin{equation}\label{eq:reg_s_2}
    s_k = s_{k-1}^{\s{a}} \dotsm s_{k-(d-1)}^{\s{a}} s_{k-d}
  \end{equation}
  for $k \geq d$. Two consecutive applications of \eqref{eq:reg_s_2} show that
  \begin{equation*}
    (a_{k+1} + 1)q_k = q_{k+1} + (a_{k-(d-2)} - 1)q_{k-(d-1)} + q_{k-d}
  \end{equation*}
  for $k \geq d$. In particular, we see that
  \begin{equation}\label{eq:reg_q}
    q_{k+1} < (a_{k+1} + 1)q_k
  \end{equation}
  for $k \geq d$. In a similar fashion, combining \eqref{eq:reg_s_2} and \eqref{eq:u_4} yields that
  \begin{equation}\label{eq:u_5}
    \abs{u_{r_k}} \geq a_k q_{k-1} + q_{k-(d+2)}
  \end{equation}
  for $k \geq d + 2$.

  We believe that most of the results of this paper can be carried out for general episturmian words. However, this
  leads to very complicated arguments; the arguments are already tedious and complicated in the regular case.

  \subsection{Generalized Ostrowski Numeration Systems}
  Let us now define a representation for a nonnegative integer $n$ in terms of the shift $T^n(\infw{c}_\Delta)$ of the
  epistandard word $\infw{c}_\Delta$. First, we prove a generalization of a famous result of Brown
  \cite[Thm.~2]{1993:descriptions_of_the_characteristic_sequence_of_an_irrational}.

  \begin{proposition}\label{prp:ostrowski_prefix}
    Let $\Delta$ be a directive word as in \eqref{eq:dw_multiplicative}, and let $n$ be a positive integer. Let
    $c_1 c_2 \dotsm$ be the intercept of $T^n(\infw{c}_\Delta)$. Then there exists an integer $k$ such that
    $c_k \neq 0$ and $c_i = 0$ for all $i > k$. Moreover the prefix of $\infw{c}_\Delta$ of length $n$ equals
    $s_{k-1}^{c_k} s_{k-2}^{c_{k-1}} \dotsm s_0^{c_1}$.
  \end{proposition}
  \begin{proof}
    We prove the claim by induction on $n$. From \autoref{lem:standard_intercept}, it follows that
    $\infw{c}_\Delta = \tau_1(\infw{c}_{\Delta'})$ where $\Delta' = T^{a_1}(\Delta)$. Thus if $n \leq a_1$, then
    $n 0^\omega$ is a valid intercept for $T^n(\infw{c}_\Delta)$, and the uniqueness of the intercept implies that
    $c_1 = n$. By definition, the word $s_1$ is a prefix of $\infw{c}_\Delta$ and $s_1 = \tau_1(x_2) = x_1^{a_1} x_2$,
    so $s_0^{c_1}$ is a prefix of $\infw{c}_\Delta$. This establishes the base case.

    Suppose that $n > a_1$. Then it follows from \eqref{eq:episturmian_s_adic} and the arguments preceding
    it that $T^{n-c_1 q_0}(\infw{c}_\Delta)$ is a $\tau_1$-image of an episturmian word $\infw{t}$ with intercept
    $c_2 c_3 \dotsm$ (recall from above that $q_0 = 1$). In fact, as in the proof of \autoref{lem:standard_intercept},
    the word $\infw{t}$ is a suffix of the epistandard word $\infw{c}_{\Delta'}$. Let $w$ be the prefix of
    $\infw{c}_{\Delta'}$ such that $\abs{\tau_1(w)} = n - c_1 q_0$, that is, say
    $\infw{t} = T^{\abs{w}}(\infw{c}_{\Delta'})$. The word $w$ must be nonempty as $n > a_1$ and $c_1 \leq a_1$. Now
    $0 < \abs{w} < n$, so the induction hypothesis implies that there exists an integer $k$ such that $c_k \neq 0$,
    $c_i = 0$ for all $i > k$, and
    \begin{equation*}
      w = s_{\Delta',k-2}^{c_k} \dotsm s_{\Delta',0}^{c_2}
    \end{equation*}
    where $s_{\Delta',j}$ is the $j$th standard word for the directive word $\Delta'$. By definition, we have
    $\tau_1(s_{\Delta',j}) = s_{j+1}$ for all $j$, so
    \begin{equation*}
      \tau_1(w) = s_{k-1}^{c_k} s_{k-2}^{c_{k-1}} \dotsm s_1^{c_2}.
    \end{equation*}
    Since $T^{n-c_1 q_0}(\infw{c}_\Delta) = \tau_1(\infw{t})$, the word $T^{n-c_1 q_0}(\infw{c}_\Delta)$ must begin
    with $x_1^{a_1}$. It follows that $\tau_1(w) s_0^{c_1}$ is a prefix of $T^n(\infw{c}_\Delta)$. Since
    $\abs{\tau_1(w) s_0^{c_1}} = n$, the claim follows.
  \end{proof}

  The connection to Brown's result becomes clearer as we study greedy expansions below. Thanks to
  \autoref{prp:ostrowski_prefix}, we can give the following definitions.

  \begin{definition}
    Let $\Delta$ be a directive word as in \eqref{eq:dw_multiplicative}, and let $n$ be a positive integer. We let the
    \emph{representation}, or the Ostrowski expansion, $\rep[\Delta]{n}$ of $n$ to be the word $c_1 \dotsm c_k$ where
    $c_k \neq 0$ and $c_1 \dotsm c_k 0^\omega$ is the intercept of the word $T^n(\infw{c}_\Delta)$. In addition, we set
    $\rep[\Delta]{0} = \varepsilon$.
  \end{definition}

  \begin{definition}
    Let $\Delta$ be a directive word as in \eqref{eq:dw_multiplicative}. If $c_1 \dotsm c_k$ is a sequence of
    nonnegative integers, then we set the \emph{value} $\val[\Delta]{c_1 \dotsm c_k}$ of $c_1 \dotsm c_k$ to be the
    number
    \begin{equation*}
      \sum_{i=0}^k c_i q_{i-1}.
    \end{equation*}
  \end{definition}

  We often omit the subscript $\Delta$ in $\rep[\Delta]{n}$ and $\val[\Delta]{w}$ if the directive word $\Delta$ is
  clear from context.

  It follows from \autoref{prp:ostrowski_prefix} that if $\rep[\Delta]{n} = c_1 \dotsm c_k$, then
  $\val[\Delta]{c_1 \dotsm c_k} = n$. Therefore the Ostrowski expansion of an integer can be viewed as an expansion
  with respect to the numeration system associated with the sequence $(q_k)$ (for a gentle introduction to numeration
  systems, see the book \cite{2014:formal_languages_automata_and_numeration_systems_2}). However, we emphasize that the
  Ostrowski expansion of a number $n$ does not necessarily coincide with the greedy expansion of $n$ with respect to
  $(q_k)$ as indicated by the following example.

  \begin{example}\label{ex:greedy_fail}
    Consider the nonregular directive word $\Delta = 010(201)^\omega$. Then
    $(q_k) = (1, 2, 3, 7, \ldots)$, so the greedy expansion of the number $6$ is $002$ since
    $6 = 0 \cdot 1 + 0 \cdot 2 + 2 \cdot 3$. Now $\infw{c}_\Delta = 0100102 \dotsm$, so $T^6(\infw{c}_\Delta) = 2 \dotsm$
    and $c_1 = 1$. As in the proof of \autoref{prp:ostrowski_prefix}, the prefix $01001$ of $\infw{c}_\Delta$ is an
    $L_0$-image and $T^6(\infw{c}_\Delta) = T(L_0(T^3(\infw{c}_{\Delta'})))$ where $\Delta' = 10(201)^\omega$. Next
    $\infw{c}_{\Delta'} = 1012 \dots$, so $c_2 = 1$ and $T^3(\infw{c}_{\Delta'}) = T(L_1(T(\infw{c}_{\Delta''})))$ with
    $\Delta'' = 0(201)^\omega$. As the intercept of $\infw{c}_{\Delta''}$ is $0^\omega$, we find that
    $\rep[\Delta]{6} = 111$, so the Ostrowski expansion of $6$ is different from its greedy expansion.
  \end{example}

  Our next aim is to prove that the Ostrowski expansion of an integer coincides with the greedy expansion in the
  important special case of regular episturmian words. We leave the characterization open in the case of nonregular
  directive words. Whenever we discuss greedy expansions below, we assume that the greedy expansion is written the
  least significant digit first and without trailing zeros.

  \begin{definition}\label{def:ostrowski_conditions}
    Given a directive word $\Delta$ as in \eqref{eq:dw_multiplicative}, an infinite word $c_1 c_2 \dotsm$ over the
    alphabet $\{0, 1, 2, \ldots\}$ satisfies the \emph{Ostrowski conditions} if $0 \leq c_k \leq a_k$ for all $k$ and
    for all $k \geq 1$ the following implication holds:
    \begin{equation*}
      \text{$P(r_k + 1)$ exists and $c_i = a_i$ for all $i$ such that $j(k) < i \leq k$} \Longrightarrow c_{j(k)} = 0.
    \end{equation*}
    A finite word $c_1 \dotsm c_k$ satisfies the Ostrowski conditions if it is a prefix of an infinite word satisfying
    the Ostrowski conditions.
  \end{definition}

  If $\Delta$ is regular with period $d$, then the Ostrowski conditions state that $0 \leq c_k \leq a_k$ for all $k$
  and that if $c_i = a_i$ for all $i$ such that $k-(d-1) < i \leq k$ and $k \geq d$, then $c_{k-(d-1)} = 0$.

  \begin{lemma}\label{lem:regular_ostrowski_conditions}
    The intercept of a regular episturmian word satisfies the Ostrowski conditions.
  \end{lemma}
  \begin{proof}
    Let $c_1 c_2 \dotsm$ be the intercept of a regular episturmian word $\infw{t}$ with directive word $\Delta$ as in
    \eqref{eq:dw_multiplicative}. The property that $0 \leq c_k \leq a_k$ for all $k$ follows directly from the
    derivation of the intercept preceding \autoref{lem:standard_intercept}. Suppose that $P(r_k+1)$ exists, that is,
    say $k \geq d$, and assume that $c_i = a_i$ for all $i$ such that $k - (d - 1) < i \leq k$. Let $z$ be the first
    letter of $\infw{t}_{r_k}$. From \eqref{eq:episturmian_s_adic} and the discussion following it, we see that
    \begin{equation}\label{eq:prefix}
      T^{c_1} L_{x_1}^{a_1} \circ \dotsm \circ T^{c_k} L_{x_k}^{a_k}(z)
    \end{equation}
    is a nonempty prefix of $\infw{t}$. Suppose for a contradiction that there exists a largest $\ell$ such that
    $z = x_\ell$ with $k-(d-1) < \ell \leq k$. Thus $T^{c_i} L_{x_i}^{a_i}(z) = T^{a_i}(x_i^{a_i} z) = z$ when
    $i > \ell$. It follows that
    \begin{equation*}
      T^{c_\ell} L_{x_\ell}^{a_\ell} \circ \dotsm \circ T^{c_k} L_{x_k}^{a_k}(z) = T^{c_\ell} L_{x_\ell}^{a_\ell}(z) = \varepsilon
    \end{equation*}
    because $z = x_\ell$ and $c_\ell = a_\ell > 0$. This contradicts that \eqref{eq:prefix} is nonempty. Since the
    directive word $\Delta$ contains $d$ distinct letters, we deduce that $z = x_{k-(d-1)}$. Thus $z = x_{k+1}$ for
    $x_1 x_2 \dotsm$ is $d$-periodic. Clearly
    $\smash[t]{T^{c_i} L_{x_i}^{a_i}(x_{k+1}) = T^{a_i}(x_i^{a_i} x_{k+1}) = x_{k+1}}$ for all $i$ such that
    $k-(d-1) < i \leq k$. Since the prefix \eqref{eq:prefix} must be nonempty, we deduce that
    $\smash[t]{T^{c_{k-(d-1)}} L_{x_{k-(d-1)}}^{a_{k-(d-1)}}(x_{k+1}) \neq \varepsilon}$. As $x_{k-(d-1)} = x_{k+1}$,
    the only option is that $c_{k-(d-1)} = 0$. Therefore $c_1 c_2 \dotsm$ satisfies the Ostrowski conditions.
  \end{proof}

  The arguments presented so far generalize those of \cite{2006:initial_powers_of_sturmian_sequences}. In the case of
  Sturmian words, the sequence of partial quotients $(a_k)$ can be viewed as the continued fraction expansion
  $[0; a_1, a_2, \ldots]$ of a number $\alpha$ which equals the frequency of the letter $x_2$ in $\infw{t}$. A Sturmian
  word $\infw{t}$ can be seen as a coding of the rotation $x \mapsto x + \alpha$ of a point $\rho$ in the torus
  $[-\alpha, 1-\alpha)$. The point $\rho$, often called the intercept of $\infw{t}$ due to the connection to so-called
  mechanical words, can be expressed as a sum
  \begin{equation*}
    \sum_{k=1}^\infty c_k(q_{k-1} \alpha - p_{k-1})
  \end{equation*}
  where $p_{k-1}/q_{k-1}$ are the convergents of $\alpha$ and $(c_k)$ is an integer sequence satisfying the conditions
  $0 \leq c_k \leq a_k$ for all $k$ and $c_{k+1} = a_{k+1} \Longrightarrow c_k = 0$ for all $k$. The sequence $(c_k)$
  is exactly the intercept of $\infw{t}$ in the sense we defined above and the conditions match the Ostrowski
  conditions. Moreover, the denominators of the convergents match the lengths of the associated standard words. The sum
  representation of $\rho$ often goes by the name of Ostrowski expansion of a real number. For the proofs of these
  facts and more details, see \cite{2006:initial_powers_of_sturmian_sequences}. Thus the concepts we defined have nice
  number-theoretic interpretations in the context of Sturmian words. Unfortunately no such interpretations are known
  for general episturmian words. See \cite[Sect.~5]{2001:episturmian_words_and_some_constructions_of_de} for an
  intercept defined for episturmian words that are fixed points of morphisms.

  Equalities like \eqref{eq:episturmian_s_adic} make sense even when the word $c_1 c_2 \dotsm$ does not satisfy the
  Ostrowski conditions. A whole theory of so-called spinned directive words has been developed for studying alternative
  representations of episturmian words; see \cite[Sect.~4]{2009:episturmian_words_a_survey}. Our notion of intercept
  coincides with the normalized directive word of \cite{2008:quasiperiodic_and_lyndon_episturmian_words}.

  \begin{lemma}\label{lem:ostrowski_ub}
    Let $\Delta$ be a directive word as in \eqref{eq:dw_multiplicative}. If $c_1 \dotsm c_k$ satisfies the Ostrowski
    conditions, then $\val[\Delta]{c_1 \dotsm c_k} < q_k$.
  \end{lemma}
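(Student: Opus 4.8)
The plan is to argue by strong induction on $k$, combining the length recursions \eqref{eq:q_recursion_1} and \eqref{eq:q_recursion_2} with the Ostrowski conditions. Recall that $\val{c_1 \dotsm c_k} = \sum_{i=1}^{k} c_i q_{i-1}$, so the claim for $k = 0$ is immediate: $\val{\varepsilon} = 0 < 1 = q_0$. For $k \geq 1$, every prefix $c_1 \dotsm c_m$ with $m < k$ also satisfies the Ostrowski conditions, so the induction hypothesis $\val{c_1 \dotsm c_m} < q_m$ is available for all such $m$. I would then split the inductive step according to whether $P(r_k + 1)$ exists. If $P(r_k + 1)$ does not exist, the bound is a one-line estimate: by \eqref{eq:q_recursion_2} we have $q_k = a_k q_{k-1} + \dotsm + a_1 q_0 + 1$, and since $0 \leq c_i \leq a_i$ for every $i$, it follows that $\val{c_1 \dotsm c_k} \leq \sum_{i=1}^{k} a_i q_{i-1} = q_k - 1 < q_k$.

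The substantive case is when $P(r_k + 1)$ exists. Set $j = j(k)$; since $x_j = x_{k+1} \neq x_k$ we have $j \leq k - 1$, so the induction hypothesis applies at indices $j$ and $j-1$. Using \eqref{eq:q_recursion_1}, I would write $q_k = \sum_{i=j+1}^{k} a_i q_{i-1} + q_{j-1}$ and split the value as $\val{c_1 \dotsm c_k} = \sum_{i=j+1}^{k} c_i q_{i-1} + \val{c_1 \dotsm c_j}$. If $c_\ell < a_\ell$ for some $\ell$ with $j < \ell \leq k$, then $\sum_{i=j+1}^{k} c_i q_{i-1} \leq \sum_{i=j+1}^{k} a_i q_{i-1} - q_{\ell-1} \leq \sum_{i=j+1}^{k} a_i q_{i-1} - q_j$, while $\val{c_1 \dotsm c_j} < q_j$ by the induction hypothesis; adding these gives $\val{c_1 \dotsm c_k} < \sum_{i=j+1}^{k} a_i q_{i-1} \leq q_k$, where I use that $(q_k)$ is strictly increasing so that $q_{\ell-1} \geq q_j$.

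The remaining subcase, in which $c_i = a_i$ for every $i$ with $j < i \leq k$, is the crux and is exactly where the Ostrowski conditions enter. Here the hypothesis of the implication in \autoref{def:ostrowski_conditions} is met (with $j = j(k)$), so the conditions force $c_{j(k)} = 0$. Consequently $\val{c_1 \dotsm c_j} = \val{c_1 \dotsm c_{j-1}} < q_{j-1}$ by the induction hypothesis, and since the high part now equals $\sum_{i=j+1}^{k} a_i q_{i-1}$ exactly, I obtain $\val{c_1 \dotsm c_k} = \sum_{i=j+1}^{k} a_i q_{i-1} + \val{c_1 \dotsm c_{j-1}} < \sum_{i=j+1}^{k} a_i q_{i-1} + q_{j-1} = q_k$. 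The main obstacle is precisely this maximal subcase: without the Ostrowski condition the digit $c_{j(k)}$ could contribute up to $a_{j(k)} q_{j(k)-1}$ and push the value to or past $q_k$, so the whole estimate hinges on the condition forcing the low part below $q_{j-1}$ rather than merely below $q_j$. The bookkeeping that makes the induction legitimate is the observation $j(k) \leq k-1$, which guarantees that both invocations of the induction hypothesis occur at strictly smaller indices.
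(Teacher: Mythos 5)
Your proof is correct and takes essentially the same approach as the paper's: induction on $k$, a case split on whether $P(r_k+1)$ exists, and the Ostrowski condition forcing $c_{j(k)} = 0$ in the all-maximal subcase so that the low part is bounded by $q_{j(k)-1}$ and \eqref{eq:q_recursion_1} closes the estimate. The only cosmetic difference is in the non-maximal subcase, where you invoke the induction hypothesis at index $j(k)$ and extract the deficit $q_{\ell-1} \geq q_{j(k)}$ from an arbitrary deficient digit, while the paper takes the largest deficient index $i$ and applies the hypothesis at $i-1$; the two bookkeepings are interchangeable.
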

  \begin{proof}
    Let us prove the claim by induction on $k$. The base case is established by observing that
    $\val{\varepsilon} = 0 < q_0 = 1$. Say $P(r_k+1)$ exists, and set $j = j(k)$. Assume first that there exists a
    largest $i$ such that $j < i \leq k$ and $c_i < a_i$. Then
    \begin{align*}
      \val{c_1 \dotsm c_k} &= \val{c_1 \dotsm c_{i-1}} + c_i q_{i-1} + a_{i+1} q_i + \dotsm + a_k q_{k-1} \\
      &< q_{i-1} + (a_i - 1)q_{i-1} + a_{i+1} q_i + \dotsm + a_k q_{k-1} \\
      &\leq a_{j+1} q_j + \dotsm + a_k q_{k-1} \\
      &< q_k
    \end{align*}
    where the first inequality follows from the induction hypothesis and the final inequality follows from
    \eqref{eq:q_recursion_1}. Suppose then that no $i$ like above exists. The Ostrowski conditions now imply that
    $c_j = 0$, so
    \begin{align*}
      \val{c_1 \dotsm c_k} &= \val{c_1 \dotsm c_{j-1}} + a_{j+1} q_j + \dotsm + a_k q_{k-1} \\
      &< q_{j-1} + a_{j+1} q_j + \dotsm + a_k q_{k-1} \\
      &= q_k
    \end{align*}
    by the induction hypothesis and \eqref{eq:q_recursion_1}.

    Suppose then that $P(r_k+1)$ does not exist. Then
    \begin{equation*}
      \val{c_1 \dotsm c_k} \leq \val{a_1 \dotsm a_k} = a_1 q_0 + \dotsm a_k q_{k-1} < q_k
    \end{equation*}
    by \eqref{eq:q_recursion_2}. The claim follows.
  \end{proof}

  \begin{proposition}\label{prp:greedy_ostrowski}
    Let $\Delta$ be a regular directive word. Let $c_1 \dotsm c_k$ be the greedy expansion of a nonnegative integer $n$
    with respect to the numeration system associated with $(q_k)$. Then $\rep[\Delta]{n} = c_1 \dotsm c_k$.
  \end{proposition}
  \begin{proof}
    If $n = 0$, then the claim is clear as the greedy expansion of $0$ is $\varepsilon$ by convention. Suppose that
    $n > 0$. Let $\ell$ be the largest integer such that $q_{\ell-1} \leq n$. Then there exists a unique nonnegative
    integers $b_\ell$ and $r_{\ell-1}$ such that $n = b_\ell q_{\ell-1} + r_{\ell-1}$ with $r_{\ell-1} < q_{\ell-1}$.
    Writing similarly $r_i = b_i q_{i-1} + r_{i-1}$ for $i = \ell-1, \ldots, 1$ yields the greedy expansion
    $b_1 \dotsm b_\ell$ of $n$ with respect to $(q_k)$. It is evident that $\val[\Delta]{b_1 \dotsm b_i} < q_i$ for
    $i = 1, \ldots, \ell$. Let then $\rep[\Delta]{n} = c_1 \dotsm c_k$. It follows from \autoref{lem:ostrowski_ub} that
    $\val[\Delta]{c_1 \dotsm c_i} < q_i$ for $i = 1, \ldots, k$. For the claim it suffices to prove that
    $b_1 \dotsm b_\ell = c_1 \dotsm c_k$.

    Without loss of generality, we may assume that $k \leq \ell$. If $k < \ell$, then
    $\val{b_1 \dotsm b_\ell} \geq \val{0^{\ell-1}1} = q_{\ell-1}$. Then $\val{c_1 \dotsm c_k} < q_k \leq q_{\ell-1}$
    which contradicts that $\val[\Delta]{b_1 \dotsm b_\ell} = \val[\Delta]{c_1 \dotsm c_k} = n$. Therefore $k = \ell$.
    Suppose by symmetry that $c_k \leq b_\ell$. If $c_k < b_\ell$, then $b_1 \dotsm b_{\ell-1} (b_\ell - c_k)$ and
    $c_1 \dotsm c_{k-1} 0$ represent the same number which is impossible by what we just argued. Thus $b_\ell = c_k$.
    By repeating the argument for the words $c_1 \dotsm c_{k-1}$ and $b_1 \dotsm b_{\ell-1}$, which represent the same
    number, we see that $c_1 \dotsm c_k = b_1 \dotsm b_\ell$. The claim follows.
  \end{proof}

  \autoref{prp:greedy_ostrowski} allows easy computation of Ostrowski expansions in the case of regular directive
  words. For example, if $\Delta = (012)^\omega$, then $(q_k) = (1, 2, 4, 7, 13, 24, 44, \ldots)$ and
  $\rep[\Delta]{7} = 0001$ and $\rep[\Delta]{10} = 1101$ as $10 = 1 + 2 + 7$.

  Observe how the proof of \autoref{prp:greedy_ostrowski} fails in \autoref{ex:greedy_fail}. We have $\rep{6} = 111$,
  but it is not true that $\val{11} = q_0 + q_1 = 1 + 2 = 3 < q_2 = 3$. Thus a result akin to
  \autoref{lem:ostrowski_ub} is not valid for general directive words.

  \subsection{Auxiliary Results on Generalized Standard Words}
  In this section, we prove further results on generalized standard words needed in this paper. Most of the presented
  results appear in some form in \cite[Sect.~3.3]{2002:episturmian_words_and_episturmian_morphisms}; our proofs follow
  a somewhat different philosophy being based purely on properties of the numeration system. See also
  \cite{2004:episturmian_words_shifts_morphisms_and_numeration} on additional results on the numeration system.

  \autoref{prp:ostrowski_prefix} states that if in an episturmian word has intercept $c_1 \dotsm c_k 0^\omega$, then it
  equals $T^{\val{c_1 \dotsm c_k}}(\infw{c}_\Delta)$. We generalize this to arbitrary intercepts as follows. This
  result is found as \cite[Thm.~3.20]{2001:episturmian_words_and_some_constructions_of_de}.

  \begin{theorem}\label{thm:ostrowski_val_limit}
    If $\infw{t}$ is an episturmian word with directive word $\Delta$ as in \eqref{eq:dw_multiplicative} and intercept
    $c_1 c_2 \dotsm$, then
    \begin{equation*}
      \infw{t} = \lim_{k\to\infty} T^{\val{c_1 \dotsm c_k}}(\infw{c}_\Delta).
    \end{equation*}
  \end{theorem}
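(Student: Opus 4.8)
The plan is to recognize the shifted standard words $T^{\val{c_1 \dotsm c_k}}(\infw{c}_\Delta)$ as genuine episturmian words and then show that they approximate $\infw{t}$ on longer and longer prefixes. By \autoref{lem:ostrowski_finite}, the word $\infw{t}^{(k)} := T^{\val{c_1 \dotsm c_k}}(\infw{c}_\Delta)$ is exactly the episturmian word with directive word $\Delta$ and intercept $c_1 \dotsm c_k 0^\omega$. Since the product topology on $\mathcal{A}^\omega$ is metrized by the length of the longest common prefix, the theorem amounts to the assertion that the longest common prefix of $\infw{t}$ and $\infw{t}^{(k)}$ has length tending to infinity with $k$. (If the intercept is eventually $0$, then $\infw{t} = \infw{t}^{(k)}$ for all large $k$ by \autoref{lem:ostrowski_finite}, so we may assume $c_k > 0$ for infinitely many $k$.)

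First I would put both words into a common form. Writing $N_k = \val{c_1 \dotsm c_k}$ and $\tau_k = L_{x_1}^{a_1} \circ \dotsm \circ L_{x_k}^{a_k}$, I would rewrite the $S$-adic expansion \eqref{eq:episturmian_s_adic} using the commutation relations underlying it: an induction on $k$ together with the identity $\tau_{k-1} \circ T^{c} = T^{c\,q_{k-1}} \circ \tau_{k-1}$ (valid on words beginning with $x_k^{a_k}$, since $\abs{\tau_{k-1}(x_k)} = q_{k-1}$ and $c \leq a_k$) transforms \eqref{eq:episturmian_s_adic} into
\begin{equation*}
  \infw{t} = T^{N_k} \tau_k(\infw{t}_{r_k}).
\end{equation*}
Applying the same computation to $\infw{t}^{(k)}$, whose tail word is the epistandard word $\infw{c}_{r_k}$ by \autoref{lem:standard_intercept}, and using $\tau_k(\infw{c}_{r_k}) = \infw{c}_\Delta$, gives $\infw{t}^{(k)} = T^{N_k}\tau_k(\infw{c}_{r_k})$, consistent with \autoref{lem:ostrowski_finite}. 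Crucially, $\infw{t}_{r_k}$ and $\infw{c}_{r_k}$ are episturmian words with the same directive word $\Delta^{(k)} = x_{k+1}^{a_{k+1}} x_{k+2}^{a_{k+2}} \dotsm$, and $\tau_k$ is a nonerasing morphism, hence prefix-preserving.

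The comparison then reduces to the tails. If $P_k$ denotes the longest common prefix of $\infw{t}_{r_k}$ and $\infw{c}_{r_k}$, then $\tau_k(P_k)$ is a common prefix of $\tau_k(\infw{t}_{r_k})$ and $\tau_k(\infw{c}_{r_k}) = \infw{c}_\Delta$; after deleting the $N_k$ leading letters, $\infw{t}$ and $\infw{t}^{(k)}$ agree on a prefix of length at least $\abs{\tau_k(P_k)} - N_k$, which is meaningful because $N_k < q_k = \abs{\tau_k(x_{k+1})}$ by \autoref{lem:ostrowski_ub}. A first estimate comes from the observation that $\infw{c}_{r_k}$ begins with $x_{k+1}^{a_{k+1}}$ while $\infw{t}_{r_k}$ begins with $x_{k+1}^{a_{k+1}-c_{k+1}}$ (the leading shift $T^{c_{k+1}}$ removes $c_{k+1}$ of the initial letters $x_{k+1}$), so $P_k$ contains $x_{k+1}^{a_{k+1}-c_{k+1}}$ and
\begin{equation*}
  \abs{\tau_k(P_k)} - N_k \geq (a_{k+1} - c_{k+1}) q_k - N_k.
\end{equation*}

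The hard part is upgrading this into divergence of the common prefix for all $k$. The crude bound above is strong when $c_{k+1}$ is small but becomes vacuous precisely when $c_{k+1} = a_{k+1}$, so I expect the main obstacle to be a recursive control of $P_k$ governed by the Ostrowski conditions of \autoref{def:ostrowski_conditions}. The mechanism I would exploit is that when $c_{k+1} = 0$ the tail decomposition gives $\infw{t}_{r_k} = L_{x_{k+1}}^{a_{k+1}}(\infw{t}_{r_{k+1}})$ and $\infw{c}_{r_k} = L_{x_{k+1}}^{a_{k+1}}(\infw{c}_{r_{k+1}})$, so that $P_k \supseteq L_{x_{k+1}}^{a_{k+1}}(P_{k+1})$ while $N_k = N_{k+1}$; thus a long common prefix deep in the expansion is transported outward and amplified by the morphisms, whereas the indices with $c_{k+1} = a_{k+1}$ are exactly those where the Ostrowski conditions force an earlier digit to vanish, so that $\infw{t}^{(k)}$ coincides with a neighbouring term whose bound is nonvacuous. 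Carrying out this bookkeeping — showing that $\abs{\tau_k(P_k)} - N_k \to \infty$ by combining the amplification at the $c_{k+1} = 0$ steps with the constraints the Ostrowski conditions impose at the $c_{k+1} = a_{k+1}$ steps — is where the real work lies; once it is done, the longest common prefix of $\infw{t}$ and $\infw{t}^{(k)}$ grows without bound and the stated limit follows.
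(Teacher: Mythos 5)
Your normal form is sound: the commutation $\tau_{k-1}\circ T^{c_k}=T^{c_kq_{k-1}}\circ\tau_{k-1}$ is valid on words beginning with $x_k^{c_k}$, your induction correctly gives $\infw{t}=T^{N_k}\tau_k(\infw{t}_{r_k})$ and $\infw{t}^{(k)}=T^{N_k}\tau_k(\infw{c}_{r_k})$ (in fact this cleanly re-derives \autoref{lem:ostrowski_finite}), and the reduction to divergence of the common prefix is exactly right. The gap is the step you defer as ``bookkeeping'': the quantity you propose to drive to infinity, $\abs{\tau_k(P_k)}-N_k$, does \emph{not} tend to infinity in general, so no bookkeeping can finish the proof along these lines. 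Concretely, take the Fibonacci directive word $\Delta=(01)^\omega$ (all $a_k=1$) and $\infw{t}=0\,\infw{c}_\Delta$, whose intercept is $(01)^\omega$ by \autoref{lem:left_shift_intercept}. For even $k$ we have $c_{k+1}=0$ and $c_{k+2}=a_{k+2}=1$; since $c_{k+2}\geq 1$ forces $\infw{t}_{r_{k+2}}$ not to begin with $x_{k+2}$, and $c_{k+3}=0$ makes its first letter $x_{k+3}=x_{k+1}$, the tail $\infw{t}_{r_{k+1}}$ begins with $x_{k+1}$, so $\infw{t}_{r_k}=L_{x_{k+1}}(\infw{t}_{r_{k+1}})$ begins with $x_{k+1}x_{k+1}$ while $\infw{c}_{r_k}$ begins with $x_{k+1}x_{k+2}$. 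Hence $P_k=x_{k+1}$ and $\abs{\tau_k(P_k)}-N_k=q_k-(q_1+q_3+\dotsb+q_{k-1})=1$ for every even $k$, while for odd $k$ one gets $P_k=\varepsilon$ and the quantity is nonpositive. So your lower bound is bounded along the whole sequence even though the theorem holds and the true common prefixes diverge (at $k=2,4$ they are $4$ and $12$). Your amplification at $c_{k+1}=0$ steps cannot rescue this: it yields $\abs{\tau_k(P_k)}-N_k\geq\abs{\tau_{k+1}(P_{k+1})}-N_{k+1}$, transporting estimates from deeper levels to \emph{earlier} terms, the wrong direction for a limit in $k$; and the ``neighbouring term'' at a $c_{k+1}=a_{k+1}$ index has bound exactly $1$ in the example.

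The missing idea is that bounding by the image of the tails' common prefix is intrinsically lossy: $\tau_k(u)$ and $\tau_k(v)$ agree far beyond $\tau_k$ applied to the longest common prefix of $u$ and $v$, because the $\tau_k$-images of any two distinct letters themselves share a long prefix (every $\tau_k(y)$ has $s_{k-1}$ as a prefix, of length $q_{k-1}\to\infty$). This re-synchronization, not $\tau_k(P_k)$, is what makes the common prefix diverge in the example above. The paper's proof captures it by stopping the expansion one level earlier and keeping the next directive letter \emph{inside} the image: at indices $k_n$ with $c_{k_n}<a_{k_n}$ the common prefix contains $\tau_{k_n-1}(x_{k_n}^{a_{k_n}-c_{k_n}-1}x_{k_n+1})$, whose length is at least $\abs{\tau_{k_n-1}(x_{k_n+1})}\geq q_{k_n-2}\to\infty$. (Even there some care is needed: in the example above the first letter of $\infw{t}_{r_{k_n}}$ is $x_{k_n}$ rather than $x_{k_n+1}$, and the paper's estimate survives precisely because images of distinct letters share the prefix $s_{k_n-2}$.) The sharp version of the needed estimate is \autoref{lem:common_prefix_length}, where the common prefix length involves the whole word $s_{k+n-1}^{a_{k+n}-c_{k+n}}s_{k+n-2}^{a_{k+n-1}}\dotsm s_0^{a_1}$ and not merely the image of the literal common prefix of the tails; any repair of your argument must import an ingredient of this kind.
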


  For the proof, we need the following lemma.

  \begin{lemma}\label{lem:ostrowski_not_all_up}
    Let $\Delta$ be a directive word as in \eqref{eq:dw_multiplicative} and $c_1 c_2 \dotsm$ be an intercept. Then
    there exists an integer $k$ such that $c_k < a_k$.
  \end{lemma}
  \begin{proof}
    This proof is similar to that of \autoref{lem:regular_ostrowski_conditions}. Let $z_k$ be the first letter of
    $\infw{t}_{r_k}$ where $\infw{t}_{r_k}$ is as in \eqref{eq:episturmian_s_adic}. Since $\Delta$ contains finitely
    many distinct letters and is not eventually constant, there must exist integers $j$ and $k$ such that $j < k$,
    $x_j = z_k$, and $x_i \neq z_k$ when $j < i \leq k$. If $c_i < a_i$ for some $i$ such that $j < i \leq k$, then the
    claim is clear, so assume that $c_i = a_i$ when $j < i \leq k$. It follows that
    $T^{c_i} L_{x_i}^{a_i}(z_k) = T^{a_i}(x_i^{a_i} z_k) = z_k$ for all $i$ such that $j < i \leq k$. Therefore
    \begin{equation*}
      T^{c_j} L_{x_j}^{a_j} \circ \dotsm T^{c_k} L_{x_k}^{a_k}(z_k) = T^{c_j} L_{x_j}^{a_j}(z_k) = T^{c_j}(z_k)
    \end{equation*}
    since $x_j = z_k$. It follows from \eqref{eq:episturmian_s_adic} that $T^{c_j}(z_k)$ must be nonempty, so we
    conclude that $c_j = 0$. The claim follows.
  \end{proof}

  \begin{proof}[Proof of \autoref{thm:ostrowski_val_limit}]
    Since the suffix of an intercept is a valid intercept, \autoref{lem:ostrowski_not_all_up} implies that there exists
    an increasing integer sequence $(k_n)$ such that $c_{k_n} < a_{k_n}$ for all $n$. From
    \eqref{eq:episturmian_s_adic}, we have
    \begin{equation*}
      \infw{t} = T^{c_1} L_{x_1}^{a_1} \dotsm T^{c_{k_n-1}} L_{x_{k_n-1}}^{a_{k_n-1}}(x_{k_n}^{a_{k_n}-c_{k_n}} z_{k_n} \dotsm)
    \end{equation*}
    for a letter $z_{k_n}$. Thus $\infw{t}$ and the episturmian word with intercept $c_1 \dotsm c_{k_n} 0^\omega$
    have common prefix
    \begin{equation*}
      T^{c_1} L_{x_1}^{a_1} \dotsm T^{c_{k_n-1}} L_{x_{k_n-1}}^{a_{k_n-1}}(x_{k_n}^{a_{k_n}-c_{k_n}} z_{k_n}).
    \end{equation*}
    By definition, we have
    $\abs{T^{c_1} L_{x_1}^{a_1} \dotsm T^{c_{k_n-1}} L_{x_{k_n-1}}^{a_{k_n-1}}(x_{k_n})} \geq 1$, so
    this common prefix equals
    \begin{equation*}
      T^{c_1} L_{x_1}^{a_1} \dotsm T^{c_{k_n-1}} L_{x_{k_n-1}}^{a_{k_n-1}}(x_{k_n}) \cdot
      \tau_{k_n-1}(x_{k_n}^{a_{k_n}-c_{k_n}-1} z_{k_n}).
    \end{equation*}
    The length of this common prefix is thus has at least $\abs{\tau_{k_n-1}(z_{k_n})}$,
    and this length tends to infinity as $n \to \infty$ provided that the directive sequence $\Delta$ is not ultimately
    constant (we always assume this). If we denote the episturmian word with intercept $c_1 \dotsm c_k 0^\omega$ by
    $\infw{t}_k$, we have $\infw{t} = \lim_{k\to\infty} \infw{t}_k$. Since
    $\infw{t}_k = T^{\val{c_1 \dotsm c_k}}(\infw{c}_\Delta)$ by \autoref{prp:ostrowski_prefix}, the claim follows.
  \end{proof}

  The significance of \autoref{thm:ostrowski_val_limit} is that, in principle, the properties of a general
  episturmian word reduce to those of shifts of an epistandard word. This result suggests to consider the longest
  common prefixes of the words in the sequence $(T^{\val{c_1 \dotsm c_k}}(\infw{c}_\Delta))_k$. This is found in
  \autoref{lem:common_prefix_length}, but we need several auxiliary lemmas for the proof.

  \begin{lemma}\label{lem:u_prefix_s}
    The word $u_{r_k}$ is a proper prefix of $s_k$ for all $k \geq 1$.
  \end{lemma}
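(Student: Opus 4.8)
The plan is to reduce the statement to a comparison of lengths and then settle that comparison by a short induction. First I would observe that both $u_{r_k}$ and $s_k$ are prefixes of $\infw{c}_\Delta$: the central word $u_{r_k}$ is a palindromic prefix of $\infw{c}_\Delta$ by definition, while $s_k$ is the prefix of $\infw{c}_\Delta$ of length $q_k$ (apply \autoref{prp:ostrowski_prefix} to $n = q_k$, noting that $\rep{q_k} = 0^k 1$). Since two prefixes of the same infinite word are comparable, with the shorter a prefix of the longer, it suffices to prove the strict inequality $\abs{u_{r_k}} < q_k$; this automatically exhibits $u_{r_k}$ as a \emph{proper} prefix of $s_k$.

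Next I would pin down $\abs{u_{r_k}}$. Writing \eqref{eq:u_2} with index $r_k$ gives $u_{r_k+1} = h_{r_k-1} u_{r_k}$, and a direct computation shows
\begin{equation*}
  h_{r_k-1} = \mu_{r_k-1}(x_k) = \tau_{k-1} L_{x_k}^{a_k-1}(x_k) = \tau_{k-1}(x_k) = s_{k-1},
\end{equation*}
so that $\abs{u_{r_k}} = \abs{u_{r_k+1}} - q_{k-1}$. Combined with the explicit factorization \eqref{eq:u_4}, this yields the closed form $\abs{u_{r_k}} = \sum_{i=1}^k a_i q_{i-1} - q_{k-1}$.

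Finally I would prove $\abs{u_{r_k}} < q_k$ by induction on $k$, the base case $k=1$ being a one-line check (there $\abs{u_{r_1}} = a_1 - 1 < a_1 + 1 = q_1$). For the inductive step I would split on whether $P(r_k+1)$ exists. If it does not, \eqref{eq:q_recursion_2} gives $q_k = \sum_{i=1}^k a_i q_{i-1} + 1$, whence $q_k - \abs{u_{r_k}} = q_{k-1} + 1 > 0$. If $P(r_k+1)$ exists, set $j = j(k)$; feeding \eqref{eq:q_recursion_1} and the closed form above into $q_k - \abs{u_{r_k}}$, and rewriting $\sum_{i=1}^j a_i q_{i-1} = \abs{u_{r_j}} + q_{j-1}$ by the closed form at index $j$, makes the long sums telescope to the clean identity $q_k - \abs{u_{r_k}} = q_{k-1} - \abs{u_{r_j}}$.

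I expect the main obstacle to be exactly this telescoping identity together with the structural observation that $x_j = x_{k+1} \neq x_k$ forces $j \leq k-1$. This is what lets the induction close: the hypothesis applies to the strictly smaller index $j$ and, with the monotonicity $q_j \leq q_{k-1}$ coming from \eqref{eq:q_recursion_1}--\eqref{eq:q_recursion_2}, gives $\abs{u_{r_j}} < q_j \leq q_{k-1}$, so that $q_k - \abs{u_{r_k}} > 0$. The length reduction of the first paragraph is what makes everything after it routine; the only genuinely delicate point is verifying that $j(k)$ is always available as a proper earlier index and that the two recursions for $q_k$ combine with the closed form to telescope exactly.
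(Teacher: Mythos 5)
Your proof is correct and follows essentially the same route as the paper: reduce the statement to the length inequality $\abs{u_{r_k}} < q_k$ (both words being prefixes of $\infw{c}_\Delta$), combine the closed form $\abs{u_{r_k}} = \sum_{i=1}^k a_i q_{i-1} - q_{k-1}$ with the recursions \eqref{eq:q_recursion_1}--\eqref{eq:q_recursion_2}, and close the induction via the identity $q_k - \abs{u_{r_k}} = q_{k-1} - \abs{u_{r_{j(k)}}}$, which is exactly the paper's key estimate $\abs{s_k} - \abs{u_{r_k}} \geq q_{k-1} - \abs{u_{r_{j(k)}}}$. The only cosmetic differences are that you invoke \autoref{prp:ostrowski_prefix} for $s_k$ being a prefix (the paper has this by definition, as $\infw{c}_\Delta = \lim_k s_k$) and you apply the induction hypothesis directly at the index $j(k)$ with monotonicity of $(q_i)$, where the paper uses monotonicity of $\abs{u_{r_i}}$ to pass to the index $k-1$.
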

  \begin{proof}
    If $k = 1$, then $u_{r_k} = s_0^{a_1 - 1}$ by \eqref{eq:u_2} and \eqref{eq:u_4} and $s_1 = s_0^{a_1} x_2$, so the
    claim holds. Assume that $k > 1$. Say $P(r_k+1)$ exists. Since both $u_{r_k}$ and $s_k$ are prefixes of
    $\infw{c}_\Delta$, it suffices to show that $\abs{u_{r_k}} < \abs{s_k}$. By \eqref{eq:u_2} and \eqref{eq:u_4}, we
    have $\abs{u_{r_k}} = (a_k-1)q_{k-1} + a_{k-1} q_{k-2} + \dotsm + a_1 q_0$. Moreover, we have
    $q_k = a_k q_{k-1} + \dotsm + a_{j(k)+1} q_{j(k)} + q_{j(k)-1}$ by \eqref{eq:s_recursion_1}. Thus
    $\abs{s_k} - \abs{u_{r_k}} \geq q_{k-1} - \abs{u_{r_{j(k)}}}$. Since $j(k) < k$, we have
    $\abs{u_{r_{j(k)}}} \leq \abs{u_{r_{k-1}}}$. Thus by the induction hypothesis, we have
    $\abs{s_k} - \abs{u_{r_k}} \geq q_{k-1} - \abs{u_{r_{k-1}}} > 0$. When $P(r_k+1)$ does not exist, the claim follows
    by similar arguments.
  \end{proof}

  \begin{lemma}\label{lem:prefix_power_complete}
    We have the following implications for all $k \geq 0$.
    \begin{enumerate}[(i)]
      \item If $P(r_{k+1}+1)$ and $P(r_k+1)$ exist, then
            $s_{k+1} s_k = s_k^{\s{a}+1} u_{r_{j(k)}} u_{r_{j(k+1)}}^{-1} s_k$.
      \item If $P(r_{k+1}+1)$ exists and $P(r_k+1)$ does not exist, then
            $s_{k+1} s_k = s_k^{\s{a}+1} x_{k+1}^{-1} u_{r_{j(k+1)}}^{-1} s_k$.
      \item If $P(r_{k+1}+1)$ does not exist and $P(r_k+1)$ exists, then
            $s_{k+1} s_k = s_k^{\s{a}+1} u_{r_{j(k)}} x_{k+2} s_k$.
      \item If neither $P(r_{k+1}+1)$ nor $P(r_k+1)$ exists, then
            $s_{k+1} s_k = s_k^{\s{a}+1} x_{k+1}^{-1} x_{k+2} s_k$.
    \end{enumerate}
  \end{lemma}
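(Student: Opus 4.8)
The plan is to peel off the leading power $s_k^{\s{a}}$ from $s_{k+1}$ using \autoref{lem:s_recursion} and then to absorb the trailing copy of $s_k$ into a short palindromic correction. Writing $s_{k+1} = s_k^{\s{a}} w_{k+1}$, where $w_{k+1}$ is the tail given by \eqref{eq:s_recursion_1} or \eqref{eq:s_recursion_2}, the product becomes $s_{k+1} s_k = s_k^{\s{a}} w_{k+1} s_k$. Hence each of the four identities reduces to proving a word equality of the shape $w_{k+1} = s_k\, C$, with $C$ read off from the right-hand side (for instance $C = u_{r_{j(k)}} u_{r_{j(k+1)}}^{-1}$ in case~(i)); the claim is then immediate upon reinserting the factor $s_k$ on the right. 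The four cases are forced by two independent dichotomies: whether $P(r_{k+1}+1)$ exists, which fixes the shape of the tail $w_{k+1}$, and whether $P(r_k+1)$ exists, which fixes how $w_{k+1}$ factors through $s_k$.

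The computation rests on two auxiliary identities, both descending from \eqref{eq:u_2} and \eqref{eq:u_4}. First, iterating \eqref{eq:u_2} across the $m$-th block of the directive word, where the relevant images $h_{r_{m-1}}, \dotsc, h_{r_m-1}$ are all equal to $s_{m-1}$, gives $u_{r_m} = s_{m-1}^{\,a_m-1} u_{r_{m-1}+1}$ and $u_{r_m+1} = s_{m-1}^{\,a_m} u_{r_{m-1}+1}$, and hence the prepend identity $u_{r_m+1} = s_{m-1} u_{r_m}$. Second, comparing the tail with \eqref{eq:u_4} applied at index $k+1$, the portion of $u_{r_k+1}$ lying beyond $w_{k+1}$ is exactly $s_{j(k+1)-1}^{-1} u_{r_{j(k+1)}+1}$, which collapses to $u_{r_{j(k+1)}}$ by the prepend identity; this yields
\[ w_{k+1} = u_{r_k+1}\, u_{r_{j(k+1)}}^{-1} \]
when $P(r_{k+1}+1)$ exists, and $w_{k+1} = u_{r_k+1}\, x_{k+2}$ when it does not. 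Dually, feeding the recursion for $s_k$ into \eqref{eq:u_4} gives $u_{r_k+1} = s_k\, u_{r_{j(k)}}$ when $P(r_k+1)$ exists and $u_{r_k+1} = s_k\, x_{k+1}^{-1}$ when it does not. Substituting the second pair of identities into the first expresses $w_{k+1}$ as $s_k$ times one of the four corrections, and the four cases fall out by recording which combination of dichotomies occurs.

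The main obstacle is establishing the tail identity $w_{k+1} = u_{r_k+1}\, u_{r_{j(k+1)}}^{-1}$ cleanly: one must recognize the partial product $s_{k-1}^{\s{a}} \dotsm s_{j(k+1)}^{\s{a}} s_{j(k+1)-1}$ as the central word $u_{r_k+1}$ with its palindromic suffix $u_{r_{j(k+1)}}$ deleted, which needs both \eqref{eq:u_4} and the prepend identity, together with the fact that $u_{r_{j(k+1)}}$ genuinely is a suffix of $u_{r_k+1}$. The latter holds because $x_{k+1} \neq x_{k+2}$ forces $j(k+1) \le k$, so $u_{r_{j(k+1)}}$ is a palindromic prefix, and thus a suffix, of $u_{r_k+1}$. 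Once the formal right-quotients are seen to denote well-defined words, the remaining work — matching the four substituted expressions against the stated right-hand sides, and reading the identities at the appropriate endpoints for small $k$ or for undefined $j(k)$ — is routine bookkeeping.
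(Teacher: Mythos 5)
Your proposal is correct and follows essentially the same route as the paper's proof: both peel off $s_k^{\s{a}}$ from $s_{k+1}$ via \autoref{lem:s_recursion} and then identify the remaining tail using \eqref{eq:u_4} together with the prepend identity $u_{r_m+1} = s_{m-1} u_{r_m}$ from \eqref{eq:u_2}, which is exactly how the paper collapses $s_{k-1}^{\s{a}} \dotsm s_{j(k+1)}^{\s{a}} s_{j(k+1)-1} u_{r_{j(k+1)}}$ to $s_{k-1}^{\s{a}} \dotsm s_0^{\s{a}} = u_{r_k+1}$ (the paper phrases this by inserting $u_{r_{j(k+1)}} u_{r_{j(k+1)}}^{-1}$ rather than naming the tail $w_{k+1}$), and both then split on $P(r_k+1)$ via $u_{r_k+1} = s_k u_{r_{j(k)}}$ or $u_{r_k+1} = s_k x_{k+1}^{-1}$. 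Your additional remarks on the well-definedness of the right quotients and the suffix property of $u_{r_{j(k+1)}}$ are sound (indeed they follow automatically from the factorization $u_{r_k+1} = w_{k+1} u_{r_{j(k+1)}}$ you establish).
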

  \begin{proof}
    Suppose that $P(r_{k+1}+1)$ exists. As $j(k+1) < k+1$, we see that $u_{r_{j(k+1)}}$ is a prefix of $s_k$. Applying
    \eqref{eq:s_recursion_1} and \eqref{eq:u_4}, we have
    \begin{equation*}
      (s_k^{\s{a}})^{-1} s_{k+1} s_k
      = s_{k-1}^{\s{a}} \dotsm s_{j(k+1)}^{\s{a}} s_{j(k+1)-1} \cdot u_{r_{j(k+1)}} u_{r_{j(k+1)}}^{-1} s_k
      = s_{k-1}^{\s{a}} \dotsm s_0^{\s{a}} \cdot u_{r_{j(k+1)}}^{-1} s_k.
    \end{equation*}
    When $P(r_k+1)$ exists, then $s_{k-1}^{\s{a}} \dotsm s_0^{\s{a}} = s_k u_{r_{j(k)}}$ and (i) holds. If $P(r_k+1)$
    does not exist, then we deduce from \eqref{eq:s_recursion_2} that
    \begin{equation*}
      (s_k^{\s{a}})^{-1} s_{k+1} s_k = s_k x_{k+1}^{-1} u_{r_{j(k+1)}}^{-1} s_k,
    \end{equation*}
    so (ii) holds.

    Suppose then that $P(r_{k+1}+1)$ does not exist. Then $s_{k+1} = s_k^{\s{a}} \dotsm s_0^{\s{a}} x_{k+2}$ by
    \eqref{eq:s_recursion_2}. If $P(r_k+1)$ exists, then $s_{k-1}^{\s{a}} \dotsm s_0^{\s{a}} = s_k u_{r_{j(k)}}$ like
    above, and we have
    \begin{equation*}
      (s_k^{\s{a}})^{-1} s_{k+1} s_k = s_k u_{r_{j(k)}} x_{k+2} s_k,
    \end{equation*}
    so (iii) holds. If $P(r_k+1)$ does not exist, then
    \begin{equation*}
      (s_k^{\s{a}})^{-1} s_{k+1} s_k = s_k x_{k+1}^{-1} x_{k+2} s_k,
    \end{equation*}
    and we see that (iv) holds.
  \end{proof}

  \begin{lemma}\label{lem:prefix_powers}
    For all $k \geq 0$, the word $s_k^{\s{a}+1}$ is a prefix of $\infw{c}_\Delta$ if and only if $P(r_k+1)$ exists.
    If $P(r_k+1)$ does not exist, then $s_k^{\s{a}+1} x_{k+1}^{-1} x_{k+2}$ is a prefix of $\infw{c}_\Delta$. The
    word $s_k^{\s{a}+2}$ is not a prefix of $\infw{c}_\Delta$.
  \end{lemma}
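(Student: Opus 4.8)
The plan is to reduce everything to the single known prefix $s_{k+1}s_k$ of $\infw{c}_\Delta$, together with the four explicit factorizations supplied by \autoref{lem:prefix_power_complete}. So the first step is to verify that $s_{k+1}s_k$ really is a prefix of $\infw{c}_\Delta$. Writing $\infw{c}_\Delta = \tau_{k+1}(\infw{w})$ with $\infw{w}$ the epistandard word directed by $x_{k+2}^{a_{k+2}}\dotsm$, the word $\infw{w}$ begins with $x_{k+2}$, so $\infw{c}_\Delta$ begins with $\tau_{k+1}(x_{k+2}) = s_{k+1}$; moreover every $\tau_{k+1}$-image of a nonempty word begins with $s_k$, since $\tau_{k+1} = \tau_k \circ L_{x_{k+1}}^{a_{k+1}}$ and $L_{x_{k+1}}^{a_{k+1}}$ prepends $x_{k+1}$, whence $\tau_{k+1}(z)$ begins with $\tau_k(x_{k+1}) = s_k$. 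Thus $\infw{c}_\Delta$ begins with $s_{k+1}s_k$.

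For the forward implication of the equivalence I would read off \autoref{lem:prefix_power_complete}. In cases (i) and (iii), where $P(r_k+1)$ exists, the factorization has the shape $s_{k+1}s_k = s_k^{\s{a}+1}\,Z\,s_k$ with $Z$ equal to $u_{r_{j(k)}}u_{r_{j(k+1)}}^{-1}$ or $u_{r_{j(k)}}x_{k+2}$; since $u_{r_{j(k+1)}}$ is a prefix of $s_k$ (as in the proof of \autoref{lem:prefix_power_complete}), the suffix $Z s_k$ is a genuine nonempty word, so $s_k^{\s{a}+1}$ is a prefix of $s_{k+1}s_k$ and hence of $\infw{c}_\Delta$. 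For the reverse implication and the second assertion I turn to cases (ii) and (iv), where $P(r_k+1)$ does not exist. Here \eqref{eq:s_recursion_2} gives $s_k = s_{k-1}^{\s{a}}\dotsm s_0^{\s{a}}x_{k+1}$, so $s_k$ ends in $x_{k+1}$ and $s_k^{\s{a}+1}x_{k+1}^{-1}$ is exactly $s_k^{\s{a}+1}$ with its last letter deleted. The factorizations read $s_{k+1}s_k = (s_k^{\s{a}+1}x_{k+1}^{-1})\,x_{k+2}\,s_k$ in case (iv) and $s_{k+1}s_k = (s_k^{\s{a}+1}x_{k+1}^{-1})\,u_{r_{j(k+1)}}^{-1}s_k$ in case (ii), so everything reduces to identifying the single letter following the truncated power $s_k^{\s{a}+1}x_{k+1}^{-1}$.

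I would pin down that letter with the elementary fact that the letter following a central word $u_m$ in $\infw{c}_\Delta$ is $y_m$, since $u_m y_m$ is a prefix of $u_{m+1}$ and hence of $\infw{c}_\Delta$. In case (ii) the letter after the truncation is the first letter of $u_{r_{j(k+1)}}^{-1}s_k$, i.e. the symbol at position $\abs{u_{r_{j(k+1)}}}$ of $s_k$; because $\abs{u_{r_{j(k+1)}}} < q_{j(k+1)} \leq q_k$ by \autoref{lem:u_prefix_s}, this position lies inside $s_k$, and it is the letter following $u_{r_{j(k+1)}}$, namely $y_{r_{j(k+1)}} = x_{j(k+1)} = x_{k+2}$. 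In case (iv) this letter is the explicit $x_{k+2}$. In both cases the truncated power is followed by $x_{k+2}\neq x_{k+1}$, which simultaneously shows that $s_k^{\s{a}+1}$ is not a prefix of $\infw{c}_\Delta$ and that $s_k^{\s{a}+1}x_{k+1}^{-1}x_{k+2}$ is, giving the equivalence and the second assertion.

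For the non-prefix claim about $s_k^{\s{a}+2}$ I would again remain inside $s_{k+1}s_k$. When $P(r_k+1)$ does not exist it is immediate, since already $s_k^{\s{a}+1}$ fails. When $P(r_k+1)$ exists, I compare $s_k$ with the factor $Z s_k$ following $s_k^{\s{a}+1}$ in $s_{k+1}s_k$: at position $\abs{u_{r_{j(k)}}}$ the word $s_k$ carries the letter following $u_{r_{j(k)}}$, which is $y_{r_{j(k)}} = x_{j(k)} = x_{k+1}$, whereas $Z s_k$ carries $x_{k+2}$ there (explicitly in case (iii), and via the central-word fact applied to $u_{r_{j(k+1)}}$ in case (i)). The main obstacle is exactly here: a priori $s_k^{\s{a}+2}$ may be \emph{longer} than the known prefix $s_{k+1}s_k$, so the crux is to check—using $\abs{u_{r_{j(k)}}},\abs{u_{r_{j(k+1)}}} < q_k$ from \autoref{lem:u_prefix_s}—that this first disagreement occurs at a position below $(\s{a}+2)q_k$ that is already covered by $s_{k+1}s_k$. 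Once that is confirmed, $s_k^{\s{a}+2}$ cannot be a prefix of $\infw{c}_\Delta$, and no information about $\infw{c}_\Delta$ beyond $s_{k+1}s_k$ is ever needed.
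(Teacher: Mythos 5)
Your proof is correct. For the first two claims you follow essentially the paper's route: you read the four factorizations of \autoref{lem:prefix_power_complete} against the prefix $s_{k+1}s_k$, and in case (ii) you identify the letter after the truncated power via the fact that the central prefix $u_m$ of $\infw{c}_\Delta$ is followed by $y_m$ --- exactly the paper's observation that $s_k$ and $u_{r_{j(k+1)}+1}$ share the prefix $u_{r_{j(k+1)}}x_{j(k+1)}$ with $x_{j(k+1)} = x_{k+2}$. (You additionally verify that $s_{k+1}s_k$ is a prefix of $\infw{c}_\Delta$, which the paper takes as known; your desubstitution argument for this is sound.) Where you genuinely diverge is the third claim. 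The paper splits on whether $P(r_{k+1}+1)$ exists and, when it does, invokes the primitivity of $s_k$: it first shows $s_{k+1}^2$ is a prefix (the first claim applied at index $k+1$) and then derives a contradiction from $s_k$ occurring three times in $s_k^2$. You instead run a uniform single-position mismatch in both subcases: at offset $\abs{u_{r_{j(k)}}}$ after $s_k^{\s{a}+1}$, the word $\infw{c}_\Delta$ carries $x_{k+2}$ (explicitly in case (iii); via the central-word fact applied to $u_{r_{j(k+1)}}$ in case (i)), while $s_k^{\s{a}+2}$ carries the letter of $s_k$ at position $\abs{u_{r_{j(k)}}}$, namely $y_{r_{j(k)}} = x_{j(k)} = x_{k+1}$. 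This buys you an argument that never needs primitivity or the prefix $s_{k+1}^2$, staying entirely inside the already-certified prefix $s_{k+1}s_k$. The length check you defer as ``the crux'' is in fact immediate and you should close it explicitly: in case (i) the suffix following $s_k^{\s{a}+1}$ parses as $u_{r_{j(k)}}\,(u_{r_{j(k+1)}}^{-1}s_k)$, whose second factor is nonempty because $u_{r_{j(k+1)}}$ is a proper prefix of $s_k$ (\autoref{lem:u_prefix_s}), so the mismatch position $(a_{k+1}+1)q_k + \abs{u_{r_{j(k)}}}$ lies strictly inside $s_{k+1}s_k$; in case (iii) the $x_{k+2}$ is explicit; and $\abs{u_{r_{j(k)}}} < q_k$ places the same position inside $s_k^{\s{a}+2}$. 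With that two-line verification supplied, your argument is complete and arguably more uniform than the paper's.
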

  \begin{proof}
    By \autoref{lem:u_prefix_s}, the word $\smash[b]{u_{r_{j(k+1)}}}$ is a proper prefix of $s_k$. Say $P(r_k+1)$
    exists. \autoref{lem:prefix_power_complete} implies that the word $s_k^{\s{a}+1}$ is a prefix of $s_{k+1} s_k$.
    Since $s_{k+1} s_k$ is a prefix of $\infw{c}_\Delta$, we see that $\smash[b]{s_k^{\s{a}+1}}$ is a prefix of
    $\infw{c}_\Delta$. Suppose that $P(r_k+1)$ does not exist. Then (ii) or (iv) of \autoref{lem:prefix_power_complete}
    holds. In the latter case, the word $\smash[b]{s_k^{\s{a}+1}}$ cannot be a prefix of $s_{k+1} s_k$ because
    $x_{k+1} \neq x_{k+2}$ by definition. In the former case, we have
    $\smash[b]{s_{k+1} s_k = s_k^{\s{a}+1} x_{k+1}^{-1} u_{r_{j(k+1)}}^{-1} s_k}$. Since $s_k$ is a prefix of
    $\infw{c}_\Delta$, we see that the words $s_k$ and $\smash[b]{u_{r_{j(k+1)}+1}}$ share the prefix
    $\smash[b]{u_{r_{j(k+1)}} x_{j(k+1)}}$. Therefore $s_{k+1} s_k$ has prefix
    $\smash[b]{s_k^{\s{a}+1} x_{k+1}^{-1} x_{j(k+1)}}$. By definition, we have $x_{j(k+1)} = x_{k+2}$, so we see that
    $\smash[b]{s_k^{\s{a}+1} x_{k+1}^{-1} x_{k+2}}$ is a prefix of $\infw{c}_\Delta$. Like above, the word
    $\smash[t]{s_k^{\s{a}+1}}$ is not a prefix of $\infw{c}_\Delta$. Thus we have proved the first and second claims.

    Let us then prove the final claim. If $P(r_k+1)$ does not exist, then the second claim shows that
    $s_k^{\s{a}+1}$ is not a prefix of $\infw{c}_\Delta$, so assume that $P(r_k+1)$ exists. Suppose first that
    $P(r_{k+1}+1)$ also exists. The first claim shows that $\smash[b]{s_{k+1}^2}$ is a prefix of $\infw{c}_\Delta$.
    From \autoref{lem:prefix_power_complete}, we see that
    \begin{equation*}
      s_{k+1}^2 = s_k^{\s{a}+1} u_{r_{j(k)}} u_{r_{j(k+1)}}^{-1} s_k^{\s{a}+1} u_{r_{j(k)}} u_{r_{j(k+1)}}^{-1}.
    \end{equation*}
    Because $s_k$ is a prefix of $\infw{c}_\Delta$ and $u_{r_{j(k+1)}}$ is a proper prefix of $s_k$ by
    \autoref{lem:u_prefix_s}, we see that the prefix $\smash[t]{s_k^{\s{a}+1}}$ of $\smash[t]{s_{k+1}^2}$ is followed
    by $u_{r_{j(k)}} x_{k+2}$. Similarly, the word $s_k$ has the word $u_{r_{j(k)}} x_{k+1}$ as a prefix. Since
    $x_{k+1} \neq x_{k+2}$, we conclude that $\smash[t]{s_k^{a_{k+1}+2}}$ is not a prefix of $\infw{c}_\Delta$.

    Assume then that $P(r_{k+1}+1)$ does not exist. Then $\smash[t]{s_{k+1} = s_k^{\s{a}+1} u_{r_{j(k)}} x_{k+2}}$.
    Since $s_{k+1}^2$ is a prefix of $\infw{c}_\Delta$ and $u_{r_{j(k)}}$ is a proper prefix of $s_k$, we see that if
    $\smash[t]{s_k^{\s{a}+2}}$ is a prefix of $\infw{c}_\Delta$, then $x_{k+2} = x_{j(k)}$. Because
    $x_{j(k)} = x_{k+1}$ and $x_{k+1} \neq x_{k+2}$, we conclude that $\smash[t]{s_k^{\s{a}+2}}$ is not a prefix of
    $\infw{c}_\Delta$.
  \end{proof}

  Based on \autoref{lem:prefix_powers}, we give the following definition.

  \begin{definition}
    If $P(r_k+1)$ exists, then we let $t_k$ to be the longest word such that $s_k^{\s{a}+1} t_k$ is a prefix of
    $\infw{c}_\Delta$ with period $q_k$. If $P(r_k+1)$ does not exist, then we set $t_k = x_{k+1}^{-1}$.
  \end{definition}

  Notice that the word $t_k$ is a proper prefix of $s_k$ when $P(r_k+1)$ exists, but $t_k$ can be empty.

  \begin{lemma}\label{lem:st_which_u}
    For all $k \geq 0$, we have $s_k^i t_k = u_{r_k+i}$ for all $i$ such that $1 \leq i \leq a_{k+1}$.
  \end{lemma}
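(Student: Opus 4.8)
The plan is to reduce the statement to the single case $i=1$ and then pin $t_k$ down exactly. First I would record that the $h$-factors are constant across each block of the directive word: since $y_{r_k+j+1}=x_{k+1}$ and $\mu_{r_k+j}=\tau_k\circ L_{x_{k+1}}^{j}$ for $0\le j<a_{k+1}$, and $L_{x_{k+1}}$ fixes $x_{k+1}$, one gets $h_{r_k+j}=\tau_k(x_{k+1})=s_k$ for all such $j$. Feeding this into \eqref{eq:u_2} (or \eqref{eq:u_3}) and iterating yields $u_{r_k+i}=s_k^{\,i-1}u_{r_k+1}$ for $1\le i\le a_{k+1}$. Hence it suffices to prove the base relation $s_k t_k=u_{r_k+1}$; the general case then follows by multiplying on the left by $s_k^{\,i-1}$.

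When $P(r_k+1)$ does not exist (this includes $k=0$), the definition gives $t_k=x_{k+1}^{-1}$, while \eqref{eq:s_recursion_2} together with \eqref{eq:u_4} gives $s_k=s_{k-1}^{a_k}\dotsm s_0^{a_1}x_{k+1}=u_{r_k+1}x_{k+1}$. Thus $s_k t_k=u_{r_k+1}x_{k+1}x_{k+1}^{-1}=u_{r_k+1}$, as wanted.

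Now suppose $P(r_k+1)$ exists and write $j=j(k)$. The key algebraic identity is $u_{r_k+1}=s_k u_{r_j}$. I would obtain it by comparing \eqref{eq:u_4} with \eqref{eq:s_recursion_1}: both $u_{r_k+1}$ and $s_k$ begin with $s_{k-1}^{a_k}\dotsm s_j^{a_{j+1}}$, after which $u_{r_k+1}$ continues with $s_{j-1}^{a_j}s_{j-2}^{a_{j-1}}\dotsm s_0^{a_1}=u_{r_j+1}$ while $s_k$ continues with $s_{j-1}$, so that $u_{r_k+1}=s_{k-1}^{a_k}\dotsm s_j^{a_{j+1}}u_{r_j+1}=s_{k-1}^{a_k}\dotsm s_j^{a_{j+1}}s_{j-1}u_{r_j}=s_k u_{r_j}$, where I used $u_{r_j+1}=s_{j-1}u_{r_j}$ (again \eqref{eq:u_2}, as $h_{r_j-1}=s_{j-1}$). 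It therefore remains to show that the longest word $t_k$ with $s_k^{a_{k+1}+1}t_k$ a prefix of $\infw{c}_\Delta$ of period $q_k$ is precisely $u_{r_j}$; then $s_k t_k=s_k u_{r_j}=u_{r_k+1}$ closes the argument.

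The main obstacle is this maximality statement, and I would handle it by locating the exact position where the period $q_k$ breaks. From $u_{r_k+1}=s_k u_{r_j}$ and $u_{r_k+i}=s_k^{\,i-1}u_{r_k+1}$ I compute $s_k^{a_{k+1}+1}u_{r_j}=s_k u_{r_{k+1}}=u_{r_{k+1}+1}$, a central word and hence a prefix of $\infw{c}_\Delta$; it has period $q_k$ because $u_{r_j}$ is a proper prefix of $s_k$ (here $j<k$ and \autoref{lem:u_prefix_s}). The decisive tool is that the letter following any central-word prefix $u_m$ of $\infw{c}_\Delta$ is $y_m$, since $u_{m+1}=(u_m y_m)^{(+)}$. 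Applied at $m=r_{k+1}+1$, the letter right after $u_{r_{k+1}+1}$ is $y_{r_{k+1}+1}=x_{k+2}$, whereas period $q_k$ would predict the letter following $u_{r_{k+1}}$, namely $y_{r_{k+1}}=x_{k+1}$; as $x_{k+1}\neq x_{k+2}$, the period fails at exactly this position. Thus $u_{r_{k+1}+1}$ is the maximal prefix of period $q_k$, forcing $t_k=u_{r_j}$ and completing the proof.
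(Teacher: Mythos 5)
Your proof is correct, and its core step is genuinely different from the paper's. The reduction to the base case via $h_{r_k+j}=\tau_k(x_{k+1})=s_k$ and \eqref{eq:u_2}, as well as the case where $P(r_k+1)$ does not exist, match the paper exactly. For the main case, however, the paper argues indirectly: it observes that $s_k t_k$ is a right special prefix of $\infw{c}_\Delta$, hence a central word $u_{\ell+2}$, and then pins down $\ell+2 = r_k+1$ by a desubstitution count (writing $s_k t_k = \mu_{r_k}(x_{k+1})t_k$, so it decodes at least $r_k$ times, giving $r_k+1 \le \ell+2$) combined with a length contradiction (if $r_k \le \ell$, comparing the $h$-products from \eqref{eq:u_3} and \eqref{eq:s_recursion_1} forces $\abs{t_k} \ge q_k$). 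You instead prove the sharper, fully explicit statement $t_k = u_{r_{j(k)}}$: the identity $u_{r_k+1} = s_k u_{r_{j(k)}}$ — which, incidentally, the paper does use elsewhere, inside the proof of \autoref{lem:prefix_power_complete}, in the form $s_{k-1}^{a_k}\dotsm s_0^{a_1} = s_k u_{r_{j(k)}}$, but not in its proof of \autoref{lem:st_which_u} — yields $s_k^{a_{k+1}+1}u_{r_{j(k)}} = u_{r_{k+1}+1}$, a $q_k$-periodic prefix, and the palindromic-closure fact that the letter following the prefix $u_m$ of $\infw{c}_\Delta$ is $y_m$ locates the period break precisely: the letter after $u_{r_{k+1}+1}$ is $x_{k+2}$, while period $q_k$ would force the letter after $u_{r_{k+1}}$, namely $x_{k+1}$. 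I checked the offsets ($\abs{u_{r_{k+1}+1}}-q_k=\abs{u_{r_{k+1}}}$ via $u_{r_{k+1}+1}=s_k u_{r_{k+1}}$) and the range of indices; everything is sound, and all ingredients you invoke (\eqref{eq:u_2}--\eqref{eq:u_4}, \eqref{eq:s_recursion_1}, \eqref{eq:s_recursion_2}, \autoref{lem:u_prefix_s}) precede the lemma, so there is no circularity — in particular you do not need \autoref{lem:prefix_powers}, and in fact your period-break computation reproves the relevant part of it. The only micro-step left implicit is that $s_{j(k)}$ is a prefix of $s_k$ (immediate since each $s_m$ is a prefix of $s_{m+1}$), needed to pass from \autoref{lem:u_prefix_s} to $u_{r_{j(k)}}$ being a proper prefix of $s_k$. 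What your route buys is an explicit value for $t_k$, which the paper never states (it records only that $t_k$ is a proper prefix of $s_k$), and avoidance of the right-special/decoding machinery; the paper's route avoids the explicit product identity and leans on the standing characterization of central words as the bispecial prefixes.
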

  \begin{proof}
    Suppose that $P(r_k+1)$ exists. By \autoref{lem:prefix_powers}, the word $s_k^2$ is a prefix of $\infw{c}_\Delta$,
    so $s_k t_k$ is a right special prefix of $\infw{c}_\Delta$. Therefore $s_k t_k = u_{\ell+2}$ for some integer
    $\ell$ (recall that the central words $u_k$ are as bispecial factors exactly the right special prefixes of
    $\infw{c}_\Delta$). It follows from \cite[Eq.~2]{2002:episturmian_words_and_episturmian_morphisms} that
    $\smash[t]{u_{\ell+2} = L_{x_1}(u_{\ell+1}^{(1)}) x_1}$ where $\smash[t]{u_{\ell+1}^{(1)}}$ is the $(\ell+1)$th
    central word associated with the directive word $T(\Delta)$. The word $u_{\ell+2}$ can be repeatedly decoded
    like this for a total of $\ell+1$ times to obtain the empty word $\smash[t]{u_1^{(\ell+1)}}$. On the other hand, we
    have $s_k t_k = \mu_{r_k}(x_{k+1}) t_k$, so it must be possible to decode $s_k t_k$ at least $r_k$ times before
    obtaining an empty word. Therefore $r_k + 1 \leq \ell + 2$.

    Assume for a contradiction that $r_k + 1 < \ell + 2$, that is, $r_k \leq \ell$. By \eqref{eq:s_recursion_1}, we may
    write
    \begin{equation*}
      s_k t_k = s_{k-1}^{\s{a}} \dotsm s_{j(k)}^{\s{a}} s_{j(k)-1} t_k = h_{r_k-1} \dotsm h_{r_{j(k)}-1} t_k.
    \end{equation*}
    On the other hand, by \eqref{eq:u_3}, we have $s_k t_k = h_{\ell} \dotsm h_0$. As $r_k \leq \ell$, the word
    $h_{\ell} \dotsm h_0$ contains the product $h_{r_k-1} \dotsm h_{r_{j(k)}-1}$, and we see that
    $\abs{t_k} \geq \abs{h_\ell} \geq \abs{h_{r_k}} = q_k$. This contradicts the definition of $t_k$. We have thus
    proved that $s_k t_k = u_{r_k+1}$. The rest of the claim follows by observing from \eqref{eq:u_2} that the
    differences $\abs{u_{r_k+i+1}} - \abs{u_{r_k+i}}$ equal $q_k$ for $i = 1, \ldots, a_{k+1} - 1$. These correspond to
    the differences $\abs{s_k^{i+1} t_k} - \abs{s_k^i t_k}$ for $i = 1, \ldots, a_{k+1} - 1$.

    Let us then assume that $P(r_k+1)$ does not exist. \autoref{lem:prefix_powers} again implies that $s_k t_k$ is a
    central word. By \eqref{eq:s_recursion_2}, we have $s_k = s_{k-1}^{\s{a}} \dotsm s_0^{\s{a}} x_{k+1}$ so
    $s_k t_k = s_{k-1}^{\s{a}} \dotsm s_0^{\s{a}} = h_{r_k-1} \dotsm h_0 = u_{r_k+1}$ because
    $\smash[t]{t_k = x_{k+1}^{-1}}$. The rest of the claim follows as above.
  \end{proof}

  The following lemma was proved for Sturmian words in \cite[Prop.~5.2..2]{diss:caius_wojcik}.

  \begin{lemma}\label{lem:common_prefix_length}
    Let $\Delta$ be a regular directive word and $c_1 c_2 \dotsm$ be an intercept. Let $k \geq 1$, and assume that
    there exists a least positive integer $n$ such that $c_{k+n} \neq 0$. The length of the longest common prefix of
    $T^{\val{c_1 \dotsm c_k}}(\infw{c}_\Delta)$ and $T^{\val{c_1 \dotsm c_{k+n}}}(\infw{c}_\Delta)$ equals
    $\smash[t]{\abs{s_{k+n-1}^{\s{a}-\s{c}} s_{k+n-2}^{\s{a}} \dotsm s_0^{\s{a}}} - \val{c_1 \dotsm c_k}}$.
  \end{lemma}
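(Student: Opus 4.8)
The plan is to work directly inside $\infw{c}_\Delta$ and reduce the statement to a single period computation. Write $m = \val{c_1 \dotsm c_k}$ and $M = \val{c_1 \dotsm c_{k+n}}$. Since $c_{k+1} = \dotsm = c_{k+n-1} = 0$ and $c_{k+n} \neq 0$, the definition of the value gives $M = m + c_{k+n} q_{k+n-1}$; put $p = M - m = c_{k+n} q_{k+n-1}$. By definition the length of the longest common prefix of $T^m(\infw{c}_\Delta)$ and $T^M(\infw{c}_\Delta)$ is the largest $L$ for which the letters of $\infw{c}_\Delta$ in positions $m+1, \dotsc, m+L$ agree with those in positions $M+1, \dotsc, M+L$; equivalently, it is the largest $L$ such that $\infw{c}_\Delta$ admits the period $p$ on the block of positions $m+1, \dotsc, M+L$. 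Thus I only need to understand how far the period $p$ reaches in an initial segment of $\infw{c}_\Delta$.

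The key is the structural identity $u_{r_{k+n}+1} = s_{k+n-1}^{\s{a}+1} t_{k+n-1}$. Indeed, \eqref{eq:u_4} gives $u_{r_{k+n}+1} = s_{k+n-1}^{\s{a}} u_{r_{k+n-1}+1}$, while \autoref{lem:st_which_u} (with $i = 1$) gives $u_{r_{k+n-1}+1} = s_{k+n-1} t_{k+n-1}$, and the two combine to the claimed form. When $P(r_{k+n-1}+1)$ exists, this word is, by the very definition of $t_{k+n-1}$, the longest prefix of $\infw{c}_\Delta$ having period $q_{k+n-1}$. When $P(r_{k+n-1}+1)$ does not exist the identity reads $u_{r_{k+n}+1} = s_{k+n-1}^{\s{a}+1} x_{k+n}^{-1}$, and \autoref{lem:prefix_powers} shows that this prefix has period $q_{k+n-1}$, whereas the following letter of $\infw{c}_\Delta$ equals $x_{k+n+1} \neq x_{k+n}$ and so destroys the period. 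In either case I record, writing $N = \abs{u_{r_{k+n}+1}}$, that the prefix of $\infw{c}_\Delta$ of length $N$ has period $q_{k+n-1}$ but the prefix of length $N+1$ does not. By \eqref{eq:u_4}, moreover, $N = \sum_{i=1}^{k+n} a_i q_{i-1}$.

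With these facts the computation closes. Since $p = c_{k+n} q_{k+n-1}$ is a multiple of $q_{k+n-1}$, the length-$N$ prefix also has period $p$, so $\infw{c}_\Delta$ admits the period $p$ on positions $m+1, \dotsc, N$; as $M \leq N$ (which follows from $c_i \leq a_i$), this yields a common prefix of length at least $N - M$. For maximality I examine the letter in position $N+1$: because the length-$N$ prefix has period $q_{k+n-1}$ and $1 \leq c_{k+n} \leq a_{k+n}$, the letters in positions $N+1 - q_{k+n-1}, N+1 - 2q_{k+n-1}, \dotsc, N+1 - c_{k+n} q_{k+n-1}$ all coincide, whereas the letter in position $N+1$ differs from that in position $N+1 - q_{k+n-1}$. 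Hence the letter in position $N+1$ differs from the one in position $N+1 - p$, which means $T^m(\infw{c}_\Delta)$ and $T^M(\infw{c}_\Delta)$ disagree exactly after $N - M$ letters. Finally, expanding $N - M = \sum_{i=1}^{k+n} a_i q_{i-1} - m - c_{k+n} q_{k+n-1}$ and collecting terms gives precisely $\abs{s_{k+n-1}^{\s{a}-\s{c}} s_{k+n-2}^{\s{a}} \dotsm s_0^{\s{a}}} - \val{c_1 \dotsm c_k}$, the claimed value.

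The main obstacle I anticipate is the split according to whether $P(r_{k+n-1}+1)$ exists: the existing case is immediate from the definition of $t_{k+n-1}$, but in the non-existing case one must invoke the explicit prefix description of \autoref{lem:prefix_powers} to pin down both the period $q_{k+n-1}$ and the letter that breaks it. A secondary, purely bookkeeping, point is checking that the telescoped quantity $N - M$ is nonnegative and matches the stated expression.
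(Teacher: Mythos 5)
Your proposal is correct and takes essentially the same route as the paper: both proofs rest on the fact that $u_{r_{k+n}+1} = s_{k+n-1}^{\s{a}+1} t_{k+n-1}$ is the longest prefix of $\infw{c}_\Delta$ with period $q_{k+n-1}$ (via \autoref{lem:prefix_powers}, the definition of $t_{k+n-1}$, and \autoref{lem:st_which_u} together with \eqref{eq:u_4}), and both locate the first disagreement immediately after this maximal periodic prefix, with the length formula following by the same telescoping. Your recasting of the longest common prefix of the two shifts as the extent of the period $c_{k+n} q_{k+n-1}$, including the chaining argument at position $N+1$, is a notational variant of the paper's alignment of the two shifted words via the suffix $v$ of $s_{k+n-1}$, not a different argument.
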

  \begin{proof}
    As $\Delta$ is regular, it follows from Lemmas \ref{lem:regular_ostrowski_conditions} and \ref{lem:ostrowski_ub}
    that $q_{k+n-1} - \val{c_1 \dotsm c_k} \geq 0$. Thus we can let $v$ to be the suffix of $s_{k+n-1}$ of length
    $q_{k+n-1} - \val{c_1 \dotsm c_k}$. Suppose that $P(r_{k+n-1}+1)$ exists. Then \autoref{lem:prefix_powers} implies
    that $\smash[tb]{s_{k+n-1}^{\s{a}+1} t_{k+n-1}}$ is the longest prefix of $\infw{c}_\Delta$ with period
    $q_{k+n-1}$. Therefore the word $T^{\val{c_1 \dotsm c_k}}(\infw{c}_\Delta)$ has prefix
    $\smash[tb]{v s_{k+n-1}^{a_{k+n}} t_{k+n-1}}$. Since $c_{k+1} = \ldots = c_{k+n-1} = 0$, the word
    $T^{\val{c_1 \dotsm c_{k+n}}}(\infw{c}_\Delta)$ has prefix $\smash[t]{v s_{k+n-1}^{\s{a} - \s{c}} t_{k+n-1}}$.
    Since $c_{k+n} > 0$, we have by the definition of the word $t_{k+n-1}$ that the longest common prefix of the words
    $\smash[t]{T^{\val{c_1 \dotsm c_k}}(\infw{c}_\Delta)}$ and
    $\smash[t]{T^{\val{c_1 \dotsm c_{k+n}}}(\infw{c}_\Delta)}$ equals
    $\smash[t]{v s_{k+n-1}^{\s{a} - \s{c}} t_{k+n-1}}$. The claim follows by a short computation applying
    \autoref{lem:st_which_u}.

    The same arguments apply when $P(r_{k+n-1}+1)$ does not exist as then the longest prefix of $\infw{c}_\Delta$ with
    period $q_{k+n-1}$ equals $\smash[t]{s_{k+n-1}^{\s{a}+1} t_{k+n-1}}$ and $t_{k+n-1} = x_{k+n}^{-1}$. Notice that
    $\abs{x_{k+n}^{-1}} = -1$.
  \end{proof}

  Let us then find out when the length of the longest common prefix increases. Let $k \geq 1$ and $n_1$ and $n_2$ be
  the two least positive integers such that $n_1 < n_2$, $c_{k+n_1} \neq 0$, and $c_{k+n_2} \neq 0$. Let $v_1$ be the
  longest common prefix of $\smash[t]{T^{\val{c_1 \dotsm c_k}}(\infw{c}_\Delta)}$ and
  $\smash[t]{T^{\val{c_1 \dotsm c_{k+n_1}}}(\infw{c}_\Delta)}$ and $v_2$ be the longest common prefix of
  $\smash[t]{T^{\val{c_1 \dotsm c_{k+n_1}}}(\infw{c}_\Delta)}$ and
  $\smash[t]{T^{\val{c_1 \dotsm c_{k+n_2}}}(\infw{c}_\Delta)}$. By \autoref{lem:common_prefix_length}, we have
  \begin{equation*}
    \abs{v_2} - \abs{v_1} = \abs{s_{k+n_2-1}^{\s{a}-\s{c}} s_{k+n_2-2}^{\s{a}} \dotsm s_0^{\s{a}}} -
    \abs{s_{k+n_1-1}^{\s{a}} s_{k+n_1-2}^{\s{a}} \dotsm s_0^{\s{a}}},
  \end{equation*}
  so $\abs{v_2} - \abs{v_1} > 0$ unless $n_2 = n_1 + 1$ and $a_{k+n_2} = c_{k+n_2}$.

  \begin{lemma}\label{lem:common_prefix_length_increasing}
    Let $\Delta$ be a regular word and $c_1 c_2 \dotsm$ be an intercept. Define
    \begin{equation*}
      \eta_k = \abs{s_k^{\s{a}-\s{c}} s_{k-1}^{\s{a}} \dotsm s_0^{\s{a}}} - \val{c_1 \dotsm c_k}
    \end{equation*}
    for all $k \geq 0$. The sequence $(\eta_k)$ is nondecreasing and $\lim_{k\to\infty} \eta_k = \infty$.
  \end{lemma}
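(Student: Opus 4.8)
The plan is to reduce both assertions to an explicit computation of $\eta_k$ in the numeration system $(q_k)$, after which the only external input I need is the fact---already recorded in the proof of \autoref{thm:ostrowski_val_limit}---that the Ostrowski conditions force $c_i < a_i$ for infinitely many indices $i$.

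First I would expand the definition of $\eta_k$ using $\abs{s_i} = q_i$ together with the definition of the value. Writing out the superscripts according to the convention for $\s{a}$ and $\s{c}$, the block $s_k^{\s{a}-\s{c}} s_{k-1}^{\s{a}} \dotsm s_0^{\s{a}}$ has length $(a_{k+1}-c_{k+1})q_k + \sum_{i=1}^{k} a_i q_{i-1}$, while $\val{c_1 \dotsm c_k} = \sum_{i=1}^{k} c_i q_{i-1}$. Hence
\[
  \eta_k = (a_{k+1}-c_{k+1})q_k + \sum_{i=1}^{k}(a_i-c_i)q_{i-1},
\]
and every summand is nonnegative since $0 \le c_i \le a_i$ by the Ostrowski conditions; in particular each $\eta_k$ is a nonnegative integer.

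Next I would subtract consecutive terms. The tail sums telescope and I expect to be left with the clean identity
\[
  \eta_{k+1} - \eta_k = (a_{k+2}-c_{k+2})q_{k+1}.
\]
Because $c_{k+2} \le a_{k+2}$ and $q_{k+1} \ge 1$, the right-hand side is nonnegative, which already gives that $(\eta_k)$ is nondecreasing. For the divergence I would note that this increment is positive precisely when $c_{k+2} < a_{k+2}$, in which case it is at least $q_{k+1} \ge 1$. The cited consequence of the Ostrowski conditions supplies infinitely many indices $i$ with $c_i < a_i$, and all but finitely many of them exceed $1$; taking $i = k+2$ along such indices produces infinitely many increments of size at least $1$. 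Since $(\eta_k)$ is a nondecreasing integer sequence with infinitely many increments bounded below by $1$, its increments sum to infinity and therefore $\lim_{k\to\infty}\eta_k = \infty$.

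I do not anticipate a real obstacle here: the argument is essentially a bookkeeping exercise. The only place that needs care is aligning the shifted superscripts, so that the sums $\sum_i a_i q_{i-1}$ and $\sum_i c_i q_{i-1}$ combine correctly and the difference $\eta_{k+1}-\eta_k$ collapses to the single term above. The genuine content of the statement lies entirely in the borrowed inputs $c_i \le a_i$ and the infinitude of indices with $c_i < a_i$, both consequences of the Ostrowski conditions.
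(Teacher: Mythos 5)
Your proposal is correct and takes essentially the same route as the paper: the paper's proof also reduces both claims to the increment $\eta_{k+1}-\eta_k=(a_{k+2}-c_{k+2})q_{k+1}$ (derived there by manipulating word lengths and citing the discussion before the lemma, rather than by your explicit telescoping sum) and invokes the same consequence of the Ostrowski conditions, namely that $c_i<a_i$ for infinitely many $i$, to get divergence from integer increments of size at least $q_{k+1}\geq 1$.
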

  \begin{proof}
    Let $k \geq 0$. Then
    \begin{align*}
      \eta_{k+1} &= \abs{s_{k+1}^{a_{k+2}-c_{k+2}} s_k^{a_{k+1}} \dotsm s_0^{a_1}} - \val{c_1 \dotsm c_{k+1}} \\
                 &\geq \abs{s_k^{a_{k+1}} \dotsm s_0^{a_1}} - \val{c_1 \dotsm c_{k+1}} \\
                 &= \abs{s_k^{a_{k+1} - c_{k+1}} s_{k-1}^{a_k} \dotsm s_0^{a_1}} - \val{c_1 \dotsm c_k} \\
                 &= \eta_k,
    \end{align*}
    so $(\eta_k)$ is nondecreasing. From the discussion preceding this lemma, we see that $\eta_{k+1} = \eta_k$ if and
    only if $a_{k+2} = c_{k+2}$. By \autoref{lem:ostrowski_not_all_up}, there exist infinitely many $k$ such that
    $c_{k+2} < a_{k+2}$. Since $(\eta_k)$ is nondecreasing, this shows that $\lim_{k\to\infty} \eta_k = \infty$.
  \end{proof}

  Many additional properties of the words $s_k$ could be easily derived, but we do not need them in this paper, so we
  will stop after the following required result.

  \begin{lemma}\label{lem:left_shift_intercept}\cite[Thm.~3.17]{2002:episturmian_words_and_episturmian_morphisms}
    Let $\Delta$ be a directive word as in \eqref{eq:dw_multiplicative} and $y$ be a letter occurring infinitely many
    times in $\Delta$. Define an intercept $c_1 c_2 \dotsm$ as follows: $c_k = 0$ if $x_k = y$ and $c_k = a_k$ if
    $x_k \neq y$. Then $c_1 c_2 \dotsm$ is the intercept of the word $y \infw{c}_\Delta$.
  \end{lemma}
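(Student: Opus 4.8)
The plan is to carry out the canonical letter-by-letter desubstitution \eqref{eq:desubst} of the word $\infw{t} = a\infw{c}_\Delta$ explicitly and to read off the shifts $b_i$, from which the intercept digits $c_k$ follow. The engine of the argument is a single invariant: after peeling the first $i$ factors, the remaining episturmian word is exactly $a\infw{c}_{T^i(\Delta)}$, that is, the letter $a$ prepended to the epistandard word directed by the shifted directive word $T^i(\Delta)$. The letter $a$ is thus simply carried along unchanged through the whole decoding, which is what makes the computation transparent.

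To set this up I would first record two elementary facts. By \autoref{lem:standard_intercept} the epistandard word $\infw{c}_{T^{i-1}(\Delta)}$ has intercept $0^\omega$ and begins with $y_i$, so $\infw{c}_{T^{i-1}(\Delta)} = L_{y_i}(\infw{c}_{T^i(\Delta)})$. At the level of the morphism $L_{y_i}$ one has $L_{y_i}(y_i w) = y_i L_{y_i}(w)$ and, for any letter $b \neq y_i$, the identity $L_{y_i}(b w) = y_i b\, L_{y_i}(w)$. These are the only computations the proof needs.

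Next I would prove the invariant by induction on $i$. Suppose the residual before peeling the $i$-th factor is $a\infw{c}_{T^{i-1}(\Delta)} = a\,L_{y_i}(\infw{c}_{T^i(\Delta)})$. If $y_i = a$, the residual begins with $a = y_i$, so $b_i = 0$; using $L_a(a\,\infw{c}_{T^i(\Delta)}) = a\,L_a(\infw{c}_{T^i(\Delta)})$ we get $a\infw{c}_{T^{i-1}(\Delta)} = L_{y_i}(a\infw{c}_{T^i(\Delta)})$, so the new residual is $a\infw{c}_{T^i(\Delta)}$. If $y_i \neq a$, the residual begins with $a \neq y_i$, so $b_i = 1$; since $L_{y_i}(a\,\infw{c}_{T^i(\Delta)}) = y_i a\,L_{y_i}(\infw{c}_{T^i(\Delta)}) = y_i\cdot a\infw{c}_{T^{i-1}(\Delta)}$, applying $(T\circ L_{y_i})^{-1}$ again returns the residual $a\infw{c}_{T^i(\Delta)}$. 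In both cases the invariant is preserved, and the computation shows $b_i = 0$ exactly when $y_i = a$ and $b_i = 1$ otherwise. The hypothesis that $a$ occurs infinitely often in $\Delta$ is what ensures the residuals stay genuinely infinite and that $a\infw{c}_\Delta$ is an episturmian word directed by $\Delta$, so that the decoding realizes its bona fide desubstitution.

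Finally I would translate the shift sequence into the intercept. Writing $\Delta = x_1^{a_1}x_2^{a_2}\dotsm$, the letters $y_i$ lying in the $k$-th block all equal $x_k$, so within that block every $b_i$ equals $0$ (when $x_k = a$) or every $b_i$ equals $1$ (when $x_k \neq a$). Comparing with the normal form $b_1 b_2 \dotsm = 0^{a_1-c_1}1^{c_1}0^{a_2-c_2}1^{c_2}\dotsm$ that defines the intercept, the block $0^{a_k}$ forces $c_k = 0$ and the block $1^{a_k}$ forces $c_k = a_k$. This is precisely the asserted rule, and the identity produced is exactly \eqref{eq:episturmian_s_adic} for $a\infw{c}_\Delta$, so $c_1 c_2 \dotsm$ is its intercept. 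I expect the only delicate point to be the bookkeeping between the letter-indexed shifts $b_i$ and the block-indexed digits $c_k$ (together with checking that the invariant's base case is handled by the same epistandard identity); the two morphism identities driving the induction are immediate.
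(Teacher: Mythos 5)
Your proof is correct and is essentially the paper's own argument: the paper runs your exact invariant blockwise, showing $T^{c_k} L_{x_k}^{a_k}(a\,\infw{c}_k) = a\,\infw{c}_{k-1}$ for the residuals $\infw{c}_k = \infw{c}_{T^{r_k}(\Delta)}$, and your letter-by-letter steps $T^{b_i} \circ L_{y_i}$ are just the factorization $T^{c_k} L_{x_k}^{a_k} = L_{x_k}^{a_k - c_k} \circ (T \circ L_{x_k})^{c_k}$ of a single block, with the same translation between the $b$-sequence and the digits $c_k$. The one fact you assert without proof --- that each residual $a\,\infw{c}_{T^i(\Delta)}$ is itself episturmian, which is needed so that your decoding is the canonical desubstitution defining the intercept --- is supplied in the paper in a parenthetical: arbitrarily long central words $u_m$ of $\infw{c}_{T^i(\Delta)}$ are followed by $a$, hence $a u_m$ lies in $\Lang{\infw{c}_{T^i(\Delta)}}$ by closure under reversal, and $a\,\infw{c}_{T^i(\Delta)}$ is the limit of the words $a u_m$.
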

  \begin{proof}
    Since $y$ occurs infinitely many times in $\Delta$, arbitrarily long central words $u_k$ are followed by the letter
    $y$. As the language of $\infw{c}_\Delta$ is closed under reversal and $u_k$ are palindromes, we see that
    $y u_k$ is a factor of $\infw{c}_\Delta$ for infinitely many $k$. Since $\infw{c}_\Delta$ is the limit of the
    sequence $(u_k)$, it follows that $y \infw{c}_\Delta$ is an episturmian word with directive word $\Delta$.

    Let $d_1 d_2 \dotsm$ be the intercept of $y \infw{c}_\Delta$. Say $y \neq x_1$. Then
    $y \infw{c}_\Delta = y x_1^{a_1} x_2 \dotsm$, so $y \infw{c}_\Delta = T \circ L_{x_1}(y \infw{c}_{\Delta'})$ where
    $\Delta' = T(\Delta)$. It follows that $d_1 = a_1$. Suppose then that $y = x_1$. Now it is plain that
    $y \infw{c}_\Delta = L_{x_1}(y \infw{c}_{\Delta'})$, so we have $d_1 = 0$. It follows from induction that
    $y \infw{c}_\Delta$ has the claimed intercept.
  \end{proof}

  \section{Rauzy Graphs of Episturmian Words}
  Let $\infw{x}$ be an infinite word. The Rauzy graph $\Gamma(n)$ of order $n$ associated with the language of
  $\infw{x}$ is a directed graph with vertices $\Lang[n]{\infw{x}}$ and edges $\Lang[n+1]{\infw{x}}$. There is an edge
  $e$ from vertex $u$ to vertex $v$ if and only if $e$ has prefix $u$ and suffix $v$. Each word with the language
  $\Lang{\infw{x}}$ corresponds to an infinite path in the graph $\Gamma(n)$ starting from its prefix of length $n$. The
  initial nonrepetitive complexity $\irep{\infw{x}}{n}$ can be determined from $\Gamma(n)$: start from the vertex
  corresponding to the prefix of $\infw{x}$ of length $n$ and follow the path dictated by $\infw{x}$ until a vertex is
  repeated for the first time. In general, this is of no help as the graph $\Gamma(n)$ can be very complicated. However,
  when $\infw{x}$ has low factor complexity, there are only few right special factors of length $n$ and the analysis is
  more likely to succeed. This is indeed the case with episturmian words whose Rauzy graphs have especially nice form.

  An episturmian word $\infw{t}$ with directive word $\Delta$ can be equivalently defined as an infinite word such that
  its language is closed under reversal and it has exactly one right special factor of each length
  \cite[Thm.~5]{2001:episturmian_words_and_some_constructions_of_de}. The reversal of the right special factor must
  thus be left special, so there is exactly one left special factor and exactly one right special factor of each
  length. A moment's thought shows that this means that $\Gamma(n)$ is composed of cycles sharing a common part, called
  the \emph{central path}, like in \autoref{fig:rauzy_graph}. The number of cycles depends on the number of letters
  that eventually appear in the directive word. If $\infw{t}$ is regular with period $d$, then there are exactly $d$
  cycles. Indeed, each central word $u_k$ has as a suffix all shorter central words, so each central word is followed
  by each letter $0$, $1$, $\ldots$, $d-1$ in $\infw{c}_\Delta$. The suffixes of the central words yield right special
  factors for each length, and the claim follows. Notice that we just argued that the central words are right special.
  This means that they are left special because they are palindromes. Therefore the central path of $\Gamma(\abs{u_k})$
  reduces to a single vertex. Notice that the graph $\Gamma(n)$ ``evolves'' to $\Gamma(n+1)$ in a deterministic fashion
  whenever $n$ does not equal the length of a central word: the central path is shortened by one edge and all cycles
  maintain their number of edges. When $n$ equals some $\abs{u_k}$, the evolution or ``bursting of the bispecial
  factor'' is determined by $\Delta$. In the case of episturmian words, determining $\irep{\infw{t}}{n}$ is thus rather
  straightforward: find out the location of the vertex $v$ corresponding to the prefix of $\infw{t}$ of length $n$ and
  determine the length $L$ of the next cycle taken. If $v$ is on the central path, then $\irep{\infw{t}}{n}$ equals
  $L$. Otherwise we need to add to $L$ the number of edges that need to be traversed from $v$ to the vertex of the left
  special factor.

  \begin{figure}
    \centering
    \begin{tikzpicture}[> = stealth]
      \tikzstyle{state}=[draw=black,circle,minimum size=1.8em]
      \tikzstyle{arc}=[line width=1pt,color=black]

      \node[state] (l) at (0,0) {$\ell$};
      \node[state] (2) at (2,0) {};
      \node (3) at (4,0) {$\dotsm$};
      \node[state] (4) at (6,0) {};
      \node[state] (r) at (8,0) {$r$};

      \node[state] (5) at (1.5,2) {};
      \node (6) at (4,2.5) {$\dotsm$};
      \node[state] (7) at (6.5,2) {};

      \node[state] (8) at (0.5,4) {};
      \node (9) at (4,4.5) {$\dotsm$};
      \node[state] (10) at (7.5,4) {};

      \node[state] (11) at (1.5,-2) {};
      \node (12) at (4,-2.5) {$\dotsm$};
      \node[state] (13) at (6.5,-2) {};

      \draw[->,arc] (l) -> (2);
      \draw[->,arc] (2) -> (3);
      \draw[->,arc] (3) -> (4);
      \draw[->,arc] (4) -> (r);

      \draw[->,arc] (r) -> (7);
      \draw[->,arc] (7) -> (6);
      \draw[->,arc] (6) -> (5);
      \draw[->,arc] (5) -> (l);

      \draw[->,arc] (r) -> (10);
      \draw[->,arc] (10) -> (9);
      \draw[->,arc] (9) -> (8);
      \draw[->,arc] (8) -> (l);

      \draw[->,arc] (r) -> (13);
      \draw[->,arc] (13) -> (12);
      \draw[->,arc] (12) -> (11);
      \draw[->,arc] (11) -> (l);
    \end{tikzpicture}
    \caption{The Rauzy graph of an episturmian word. The left special factor corresponds to the vertex $\ell$ and the
    right special to the vertex $r$. The directed path from $\ell$ to $r$ is the central path.}
    \label{fig:rauzy_graph}
  \end{figure}
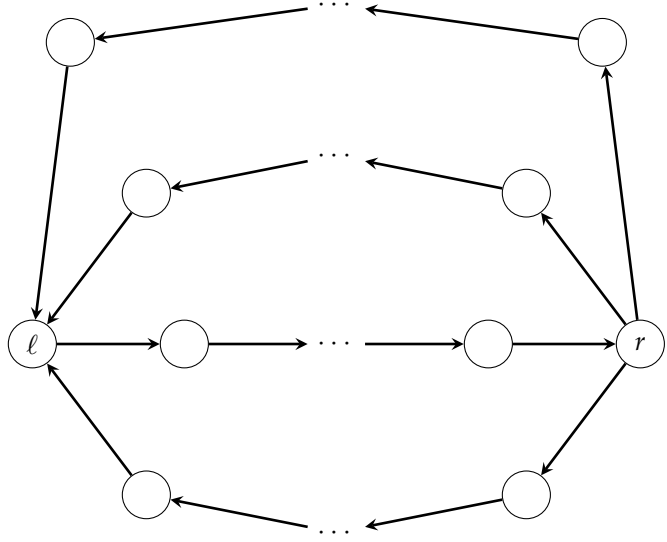

  \section{Initial Nonrepetitive Complexity of Regular Episturmian Words}\label{sec:inrc}
  In this section, we derive a complete description of the initial nonrepetitive complexity of regular episturmian
  words; see \autoref{thm:main}. We specialize the most significant propositions to the case of Sturmian words. Our
  proof method generalizes that of Wojcik who determined the initial nonrepetitive complexity of Sturmian words
  \cite[Sect.~5.3]{diss:caius_wojcik}.

  The majority of the results presented need the assumption that the directive word $\Delta$ is regular. We make the
  convention that this is implicitly assumed in the following discussion, but we make the assumption explicit in the
  statements of lemmas, propositions, etc.

  It is natural to partition the positive integers according to the sequence $(q_k)$, but it is in fact better to do it
  using the central words $u_k$. We set
  \begin{equation*}
    \interval{k} = \{\abs{u_{r_k}} + 1, \abs{u_{r_k}} + 2, \ldots, \abs{u_{r_{k+1}}}\}
  \end{equation*}
  for $k \geq 0$ with the convention $\abs{u_{r_0}} = -1$. Clearly $\N$ is a disjoint union of these intervals. We
  further subdivide each interval $\interval{k}$ into $a_{k+1}$ subintervals by setting
  \begin{equation*}
    \interval{k,\ell} = \{\abs{u_{r_k+\ell}} + 1, \ldots, \abs{u_{r_k+\ell+1}}\}.
  \end{equation*}
  for $\ell = 0, \ldots, a_{k+1}-1$. Notice the following peculiarity: the first subinterval $\interval{k,0}$ has
  $q_{k-1}$ elements while the remaining intervals have $q_k$ elements (when $k > 0$). Indeed by \eqref{eq:u_2}, we
  have $u_{r_k+1} = h_{r_k-1} u_{r_k}$, so $\abs{u_{r_k+1}} - \abs{u_{r_k}} = \abs{h_{r_k-1}} = q_{k-1}$. Similarly
  $\abs{u_{r_k+t}} - \abs{u_{r_k+t-1}} = q_k$ for $t$ such that $2 \leq t \leq a_{k+1}$. The chief reason for defining
  the intervals like this is that when the directive word $\Delta$ is regular with period $d$, we are guaranteed that
  $q_{k+d-1} > \abs{u_{r_{k+1}}}$, that is, $q_{k+d-1}$ exceeds the right endpoint of $\interval{k}$. If we had defined
  the right endpoint of $\interval{k}$ to be $\abs{u_{r_{k+1}+1}}$, which is a priori more natural, this is not true
  when $d = 2$ as indicated by the proof of the following lemma.

  \begin{lemma}\label{lem:right_endpoint_greater}
    Let $\Delta$ be a regular directive word with period $d$. We have $q_{k+d-1} - 1 > \abs{u_{r_{k+1}}}$ when
    $k \geq 0$ and $d \geq 2$. If $d > 2$, then $q_{k+d-1} - 1 > \abs{u_{r_{k+1}+1}}$ for $k \geq 0$.
  \end{lemma}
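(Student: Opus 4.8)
The plan is to pass to lengths and reduce both inequalities to the single shape $q_{k+d-1} - \abs{u_{r_{k+1}}} \geq 2$ and $q_{k+d-1} - \abs{u_{r_{k+1}+1}} \geq 2$; since every quantity involved is a nonnegative integer, the strict inequalities in the statement are equivalent to these. First I would record the two length formulas I will lean on: from \eqref{eq:u_4} one gets $\abs{u_{r_m+1}} = \sum_{i=1}^{m} a_i q_{i-1}$, and from the interval computation carried out just before the lemma, namely $\abs{u_{r_m+1}} - \abs{u_{r_m}} = q_{m-1}$ and $\abs{u_{r_{k+1}+1}} - \abs{u_{r_{k+1}}} = q_k$, I can freely convert between $\abs{u_{r_{k+1}}}$ and $\abs{u_{r_{k+1}+1}}$. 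I will also use that $(q_k)$ is strictly increasing, which is clear from the recurrences \eqref{eq:reg_s_1} and \eqref{eq:reg_s_2}.

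For $k \geq 1$ the index $k+d-1$ is at least $d$, so \eqref{eq:reg_s_2} gives $q_{k+d-1} = \sum_{i=k+1}^{k+d-1} a_i q_{i-1} + q_{k-1}$. Subtracting $\abs{u_{r_{k+1}+1}} = \sum_{i=1}^{k+1} a_i q_{i-1}$, cancelling the common term $a_{k+1} q_k$, and using $\sum_{i=1}^{k} a_i q_{i-1} = \abs{u_{r_k}} + q_{k-1}$, everything collapses to the clean identity
\[
  q_{k+d-1} - \abs{u_{r_{k+1}+1}} = \sum_{i=k+2}^{k+d-1} a_i q_{i-1} - \abs{u_{r_k}}.
\]
When $d \geq 3$ the right-hand sum is nonempty and contains $a_{k+2} q_{k+1} \geq q_{k+1} > q_k$, while \autoref{lem:u_prefix_s} gives $\abs{u_{r_k}} \leq q_k - 1$; hence the right-hand side is at least $q_{k+1} - q_k + 1 \geq 2$, which is the second claim, and reinstating the extra $q_k$ yields the first. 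This settles $d \geq 3$ for $k \geq 1$, and the residual case $k = 0$ is immediate because there $q_{d-1} = \sum_{i=1}^{d-1} a_i q_{i-1} + 1$ by \eqref{eq:reg_s_1}, $\abs{u_{r_1}} = a_1 - 1$ and $\abs{u_{r_1+1}} = a_1$, so the two differences equal $\sum_{i=2}^{d-1} a_i q_{i-1} + 2$ and $\sum_{i=2}^{d-1} a_i q_{i-1} + 1$ respectively, each at least $2$ when $d \geq 3$.

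The genuine obstacle is the case $d = 2$ together with the first claim, where the right-hand sum above is empty and the identity degenerates to $q_{k+1} - \abs{u_{r_{k+1}+1}} = -\abs{u_{r_k}}$; here the off-the-shelf bound $\abs{u_{r_k}} \leq q_k - 1$ from \autoref{lem:u_prefix_s} falls exactly one unit short, since the inequality $q_{k+1} - 1 > \abs{u_{r_{k+1}}}$ is tight with difference $1$. To close this gap I would prove the sharper equality $\abs{u_{r_m}} = q_m - 2$ for all $m \geq 1$ by induction: the base case reads $a_1 - 1 = (a_1 + 1) - 2$, and the inductive step combines the interval recursion $\abs{u_{r_{m+1}}} = \abs{u_{r_m}} + q_{m-1} + (a_{m+1}-1) q_m$ with the Sturmian recurrence $q_{m+1} = a_{m+1} q_m + q_{m-1}$ to give $\abs{u_{r_{m+1}}} = q_{m+1} - 2$. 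This immediately yields $q_{k+1} - 1 = \abs{u_{r_{k+1}}} + 1 > \abs{u_{r_{k+1}}}$ for every $k \geq 0$, and it simultaneously explains why the second claim must be excluded when $d = 2$: there $\abs{u_{r_{k+1}+1}} = \abs{u_{r_{k+1}}} + q_k = q_{k+1} - 2 + q_k \geq q_{k+1} - 1$.

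In short, for $d \geq 3$ the higher-order terms $a_i q_{i-1}$ with $i \geq k+2$ provide ample slack and the proper-prefix estimate of \autoref{lem:u_prefix_s} suffices, whereas for $d = 2$ the only way through is the exact evaluation $\abs{u_{r_m}} = q_m - 2$. I expect that small induction to be the single step requiring real care, all other estimates being routine manipulations of the recurrences for $(q_k)$ and of the length formula \eqref{eq:u_4}.
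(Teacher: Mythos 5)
Your proof is correct, but it takes a different route from the paper's. For $d > 2$ the paper argues by induction on $k$: from \eqref{eq:reg_s_2} it extracts that $s_{k+d-1}$ contains the disjoint pieces $s_{k+d-2}$ and $s_k^{\s{a}}s_{k-1}$, so the hypothesis $q_{k+d-2} > \abs{u_{r_k+1}}$ steps up to $q_{k+d-1} - 1 > \abs{u_{r_{k+1}+1}}$, anchored at the base case $q_{d-1} > \abs{u_{r_1+1}}$; you instead collapse everything into the closed-form identity
\begin{equation*}
  q_{k+d-1} - \abs{u_{r_{k+1}+1}} = \sum_{i=k+2}^{k+d-1} a_i q_{i-1} - \abs{u_{r_k}} \qquad (k \geq 1),
\end{equation*}
and delegate the inductive content to the already-proved \autoref{lem:u_prefix_s} via $\abs{u_{r_k}} \leq q_k - 1$, handling $k = 0$ by direct evaluation from \eqref{eq:reg_s_1}. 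Your version buys an explicit quantitative margin ($\geq q_{k+1} - q_k + 1$), at the cost of the separate $k=0$ bookkeeping the paper's induction avoids. For $d = 2$ the two arguments are essentially the same computation in different clothing: the paper telescopes $q_{k+1} - 1 > \abs{u_{r_{k+1}}}$ down to $\abs{s_1} - 1 > \abs{s_0^{a_1-1}}$, which amounts to the observation that the difference $q_m - \abs{u_{r_m}}$ is invariant under the descent, while you make that invariant explicit as the exact formula $\abs{u_{r_m}} = q_m - 2$ — a fact the paper itself records later (in the proof of the Sturmian specialization of \autoref{prp:inrc_n}). Your exact evaluation has the added virtue of explaining why the second inequality must fail when $d = 2$, since then $\abs{u_{r_{k+1}+1}} = q_{k+1} - 2 + q_k \geq q_{k+1} - 1$, which the paper leaves implicit. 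All steps check out: the length formulas from \eqref{eq:u_4} and \eqref{eq:u_2}, the cancellation of $a_{k+1}q_k$, and the integer reformulation of the strict inequalities as differences $\geq 2$ are all sound.
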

  \begin{proof}
    Let us first prove the latter claim. By \eqref{eq:u_4} and \eqref{eq:reg_s_2}, we have
    $u_{r_{k+1}+1} = s_k^{\s{a}} \dotsm s_0^{\s{a}}$ and
    $\smash[t]{s_{k+d-1} = s_{k+d-2}^{\s{a}} \dotsm s_k^{\s{a}} s_{k-1}}$. When $d > 2$, the word $s_{k+d-1}$ has
    the factors $s_{k-1+d-1}$ and $\smash[t]{s_k^{\s{a}} s_{k-1}}$. We conclude that if $d > 2$ and
    $q_{k-1+d-1} > \abs{u_{r_k+1}}$, then $q_{k+d-1} - 1 > \abs{u_{r_{k+1}+1}}$. It then suffices to check that
    $q_{d-1} > \abs{u_{r_1+1}}$, but this is trivially true as $\smash[t]{u_{r_1+1} = s_0^{\s{a}}}$ and
    $\smash[t]{s_{d-1} = s_{d-2}^{\s{a}} \dotsm s_0^{\s{a}} x_d}$ by \eqref{eq:reg_s_1}.

    Say $d = 2$. Like above, we have $\smash[t]{u_{r_{k+1}} = s_k^{\s{a}-1} s_{k-1}^{\s{a}} \dotsm s_0^{\s{a}}}$ and
    $\smash[t]{s_{k+1} = s_k^{\s{a}} s_{k-1}}$. Consequently $q_{k+1} - 1 > \abs{u_{r_{k+1}}}$ if and only if
    $\smash[t]{\abs{s_k} - 1 > \abs{s_{k-1}^{\s{a}-1} s_{k-2}^{\s{a}} \dotsm s_0^{\s{a}}}}$. By repeating the argument,
    we see that the claim is true if and only if $\smash[t]{\abs{s_1} - 1 > \abs{s_0^{\s{a}-1}}}$. Since
    $s_1 = s_0^{a_1} x_2$, the claim follows.
  \end{proof}

  We set out to figure out $\irep{\infw{t}}{n}$ for a regular episturmian word $\infw{t}$ when $n \in \interval{k}$ for
  $k \geq 0$. In view of \autoref{thm:ostrowski_val_limit}, the aim is to reduce finding this number to the study of
  shifts of $\infw{c}_\Delta$. In fact, if $\infw{t}$ is regular with period $d$ and intercept $c_1 c_2 \dotsm$, then
  in most cases $\irep{\infw{t}}{n}$ is determined by the word $T^{\val{c_1 \dotsm c_{k+d-1}}}(\infw{c}_\Delta)$; see
  \autoref{prp:sufficient_shift} for the complete details. See \autoref{fig:plot} for example plots of the function
  $\irep{\infw{t}}{n}$.

  \begin{figure}
    \centering
    \includegraphics[width=\textwidth]{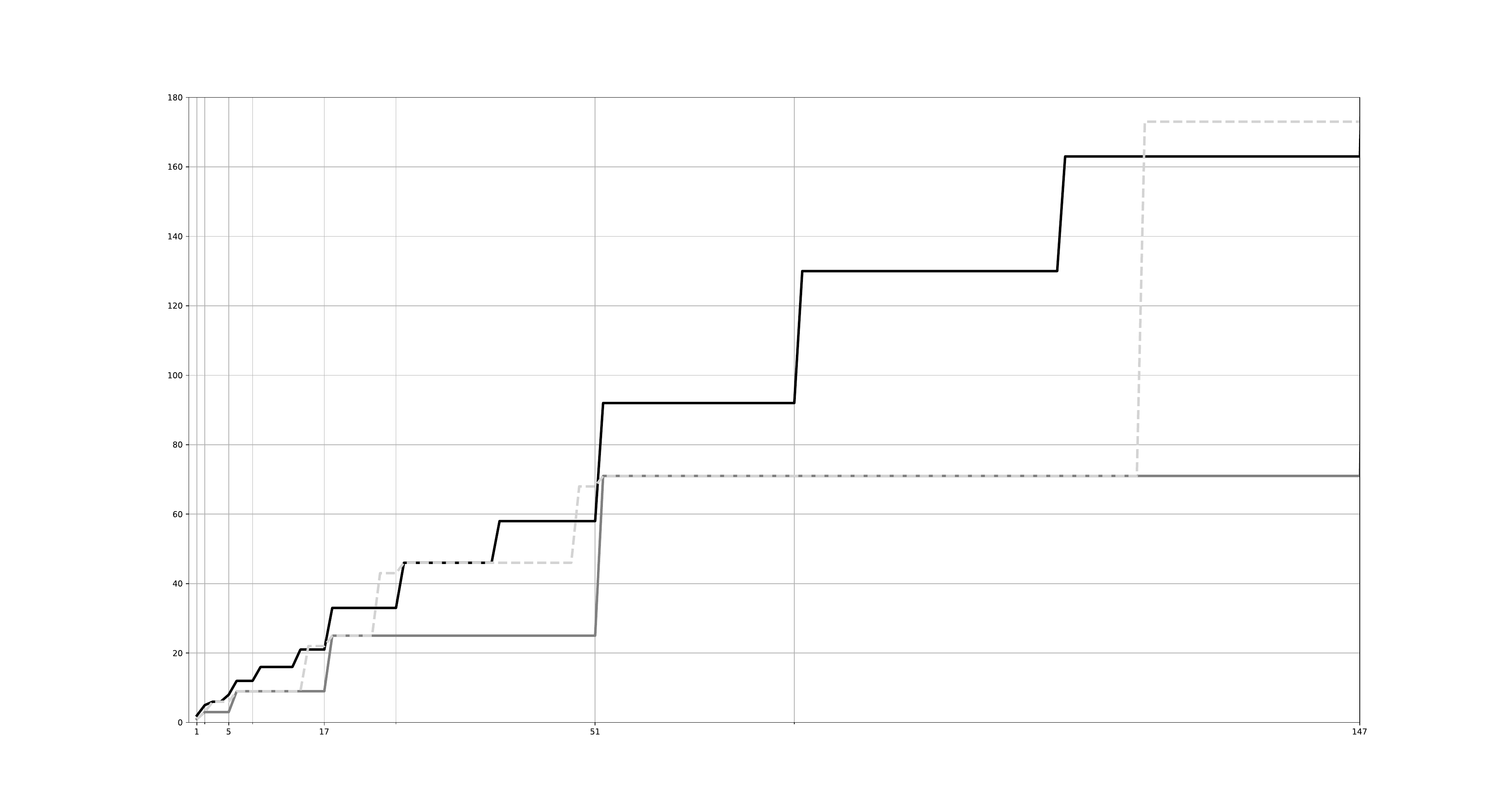}
    \caption{Plots of the initial nonrepetitive complexity of the episturmian words with directive word $(001122)^\omega$ having intercepts $0^\omega$ (dark gray), $(01)^\omega$ (dashed light gray), and $1^\omega$ (black). The major ticks $1$, $5$, $17$, $51$, $147$ on the $x$-axis are the endpoints of the intervals $\interval{k}$, and the minor ticks are the endpoints of the intervals $\interval{k,0}$.}\label{fig:plot}
  \end{figure}

  Let $n \in \interval{k,\ell}$ for $k$ and $\ell$ such that $k \geq 0$ and $0 \leq \ell < a_{k+1}$, and let $\theta_n$
  be the length of the central path of the Rauzy graph $\Gamma(n)$ of $\infw{c}_\Delta$ (the number of edges on the
  central path). The number $\irep{T^m(\infw{c}_\Delta)}{n}$ is determined by the cycle sequence taken in the Rauzy
  graph $\Gamma(n)$ when the word $\infw{c}_\Delta$ is read. We denote by $C_y$ the cycle of
  $\Gamma(\abs{u_{r_k+\ell+1}})$ containing the edge corresponding to the factor $u_{r_k+\ell+1} y$. We denote the
  length of $C_y$ by $\norm{C_y}$. Notice that the graph $\Gamma(n)$ has the same cycle lengths as the graph
  $\Gamma(\abs{u_{r_k+1+\ell}})$.

  Let us next show that $\norm{C_y} = \abs{\mu_{r_k+\ell}(y)}$. Say we start at the vertex $u_{r_k+\ell+1}$ of
  $\Gamma(\abs{u_{r_k+\ell+1}})$, take the cycle $C_y$, and return to the vertex $u_{r_k+\ell+1}$. This
  sequence of vertices corresponds to a factor $w$ of length $\abs{u_{r_k+\ell-1}} + \norm{C_y}$ such that $w$ contains
  exactly two occurrences of $u_{r_k+\ell+1}$, one as a prefix and one as a suffix. If follows from
  \cite[Eq.~2]{2002:episturmian_words_and_episturmian_morphisms} that
  $\smash[b]{u_{r_k+\ell+1} = L_{x_1}(v_1) x_1}$ where $v_1$ is the $(r_k+\ell)$th central word associated with the
  directive word $T(\Delta)$. Hence the word obtained from $w$ by removing its last letter decodes to a word $w_1$ such
  that $w_1$ has exactly two occurrences of $v_1$, one as a prefix and one as a suffix. Moreover, we deduce from the
  form of the morphism $L_{x_1}$ that the prefix $v_1$ of $w_1$ is followed by $y$. This procedure may be repeated
  $r_k+\ell$ times to obtain $w_{r_k+\ell} = y$. The procedure removes the suffix $u_{r_k+\ell+1}$ completely, so it
  must be that $\abs{\mu_{r_k+\ell}(y)} + \abs{u_{r_k+\ell+1}} = \abs{w}$, that is,
  $\norm{C_y} = \abs{\mu_{r_k+\ell}(y)}$.

  Next we partition the interval $\{0, 1, \ldots, q_{k+d-1} - 1\}$ into intervals
  $\lambda_i$, and we further divide these intervals into subintervals $\lambda_{i,j}$ according to the cycle sequence
  as described below. Our aim is to show that the initial nonrepetitive complexity has simple description on each
  $\lambda_{i,j}$. See \autoref{prp:inrc_shift}.

  Let
  \begin{equation*}
    \Delta' = T^{r_{k+1}}(\Delta) = x_{k+2}^{a_{k+2}} \dotsm \quad \text{and} \quad
    \Delta'' = T^{r_k+\ell}(\Delta) = x_{k+1}^{a_{k+1}-\ell} x_{k+2}^{a_{k+2}} \dotsm.
  \end{equation*}
  Preceding the first occurrence of $\smash[t]{x_{k+1}^{a_{k+1}-\ell+1}}$, the standard word $\infw{c}_{\Delta''}$ is
  formed of blocks $\smash[t]{x_{k+1}^{a_{k+1}-\ell} y}$ with $y \neq x_{k+1}$. We say that such a block is of type
  $y$. Notice that the block types are given by the letters of the standard word with directive word $\Delta'$. It is
  straightforward to verify that the number of blocks including the first block of type $x_{k+d}$ equals $K_d$, where
  \begin{equation*}
    K_d = \prod_{i=2}^{d-1} (a_{k+i} + 1).
  \end{equation*}
  In particular, $K_2 = 1$.

  Corresponding to the $i$th block, $1 \leq i \leq K_d$, we define an interval $\lambda_i$ as follows. If $i = 1$, then
  we let $L_i = 0$, and otherwise we let $L_i - 1$ to be the largest element of $\lambda_{i-1}$. We define
  \begin{equation*}
    \lambda_i = \{L_i, \ldots, L_i + \abs{\tau_{k+1}(y_i)} - 1\}
  \end{equation*}
  where $y_i$ is the type of the $i$th block. The number of elements of $\lambda_i$ is simply the length of the
  $\mu_{r_k+\ell}$-image of the block since
  $\smash[t]{\mu_{r_k+\ell}(x_{k+1}^{a_{k+1}-\ell} y_i) = \mu_{r_k+\ell} L_{x_{k+1}}^{a_{k+1}-\ell}(y_i) = \tau_{k+1}(y_i)}$.

  Next we subdivide the interval $\lambda_i$ into four adjacent intervals that respectively have sizes
  $(a_{k+1} - \ell - 1)\norm{C_{x_{k+1}}} + \theta_n + 1$, $\norm{C_{x_{k+1}}} - (\theta_n + 1)$, $\theta_n+1$, and
  $\norm{C_{y_i}} - (\theta_n + 1)$. More formally, we define
  \begin{align*}
    \lambda_{i,1} &= \{L_i, \ldots, L_i + (a_{k+1} - \ell - 1)\norm{C_{x_{k+1}}} + \theta_n\}, \\
    \lambda_{i,2} &= \{L_i + (a_{k+1} - \ell - 1)\norm{C_{x_{k+1}}} + \theta_n + 1, \ldots, L_i + (a_{k+1} - \ell)\norm{C_{x_{k+1}}} - 1\}, \\
    \lambda_{i,3} &= \{L_i + (a_{k+1} - \ell)\norm{C_{x_{k+1}}}, \ldots, L_i + (a_{k+1} - \ell)\norm{C_{x_{k+1}}} + \theta_n\}, \text{ and} \\
    \lambda_{i,4} &= \{L_i + (a_{k+1} - \ell)\norm{C_{x_{k+1}}} + \theta_n + 1, \ldots, L_i + (a_{k+1} - \ell)\norm{C_{x_{k+1}}} + \norm{C_{y_i}} - 1\}.
  \end{align*}

  Let us find out the size of the union of the intervals $\lambda_i$ for $i = 1, \ldots, K_d$. The intervals are
  clearly disjoint and adjacent, so the size equals $\abs{\tau_{k+1}(y)}$ summed over the block types $y$. Thus the
  size of the union equals the length of the $\tau_{k+1}$-image of the prefix of $\infw{c}_{\Delta'}$ having the first
  occurrence of $x_{k+d}$ as a suffix. By \eqref{eq:reg_s_1}, this prefix equals $v_{d-2}$ where $v_{d-2}$ is the
  $(d-2)$th standard word for the directive word $\Delta'$. Then the $\tau_{k+1}$-image of $v_{d-2}$ equals
  $s_{k+d-1}$. Indeed, by definition, we have
  $v_{d-2} = L_{x_{k+2}}^{a_{k+2}} \circ \dotsm \circ L_{x_{k+d-1}}^{a_{k+d-1}}(x_{k+d})$, so
  $\tau_{k+1}(v_{d-2}) = \tau_{k+d-1}(x_{k+d}) = s_{k+d-1}$. It follows that
  \begin{equation*}
    \bigcup_{i=1}^{K_d} \lambda_i = \{0, 1, \ldots, q_{k+d-1}-1\}.
  \end{equation*}

  \begin{example}\label{ex:sturmian_intervals}(Sturmian Case)
    When $\Delta$ is binary and $d = 2$, we have $K_d = 1$, so there is only one block. Now
    $\norm{C_{x_{k+1}}} = \abs{\tau_k(x_{k+1})} = q_k$. The type $y$ of the block is clearly $x_{k+2}$, so
    \begin{equation*}
      \norm{C_y} = \abs{\mu_{r_k+\ell}(x_{k+2})} = \abs{\mu_{r_k}(x_{k+1}^\ell x_{k+2})} = \ell q_k + \abs{\mu_{r_k}(x_k)} = \ell q_k + q_{k-1}.
    \end{equation*}
    By recalling that $L_1 = 0$, we find that the intervals $\lambda_{1,j}$ are as follows:
    \begin{align*}
      \lambda_{1,1} &= \{0, \ldots, (a_{k+1} - \ell - 1)q_k + \theta_n\}, \\
      \lambda_{1,2} &= \{(a_{k+1} - \ell - 1)q_k + \theta_n + 1, \ldots, (a_{k+1} - \ell)q_k - 1\}, \\
      \lambda_{1,3} &= \{(a_{k+1} - \ell)q_k, \ldots, (a_{k+1} - \ell)q_k + \theta_n\}, \text{ and} \\
      \lambda_{1,4} &= \{(a_{k+1} - \ell)q_k + \theta_n + 1, \ldots, q_{k+1} - 1\}.
    \end{align*}
    The union of the intervals equals $\{0, 1, \ldots, q_{k+1}-1\}$.
  \end{example}

  \begin{proposition}\label{prp:inrc_shift}
    Let $\Delta$ be a regular directive word. Let $n$ and $i$ be integers such that $n \in \interval{k,\ell}$ with
    $k \geq 0$ and $0 \leq \ell < a_{k+1}$ and $1 \leq i \leq K_d$. Suppose that the $i$th block has type $y_i$.
    \begin{enumerate}[(i)]
      \item If $m \in \lambda_{i,1}$, then $\irep{T^m(\infw{c}_\Delta)}{n} = q_k$.
      \item If $m \in \lambda_{i,2}$, then $\irep{T^m(\infw{c}_\Delta)}{n} = \abs{\tau_{k+1}(y_i)} + L_i - m$.
      \item If $m \in \lambda_{i,3}$, then $\irep{T^m(\infw{c}_\Delta)}{n} = \ell q_k + \abs{\tau_k(y_i)}$.
      \item If $m \in \lambda_{i,4}$, then $\irep{T^m(\infw{c}_\Delta)}{n} = q_k + L_{i+1} - m$.
    \end{enumerate}
  \end{proposition}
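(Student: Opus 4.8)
The plan is to compute $\irep{T^m(\infw{c}_\Delta)}{n}$ directly from the Rauzy graph $\Gamma(n)$. Since the prefix of $\infw{c}_\Delta$ of length $n$ is left special, reading $\infw{c}_\Delta$ traces an infinite path in $\Gamma(n)$ starting at the left special vertex $\ell$, and the window $\infw{c}_\Delta[m,m+n-1]$ is exactly the vertex $v_m$ reached after $m$ steps; the word $T^m(\infw{c}_\Delta)$ follows the same path from $v_m$ onward, so $\irep{T^m(\infw{c}_\Delta)}{n}$ is the number of steps until this path first revisits a vertex. I would first record the shape of $\Gamma(n)$ for $n \in \interval{k,\ell}$: a central path $\ell = p_0 \to \dotsm \to p_{\theta_n} = r$ together with one branch from $r$ back to $\ell$ for each continuation letter $y$, of length $\norm{C_y} - \theta_n$, where $\norm{C_y} = \abs{\mu_{r_k+\ell}(y)}$ as already computed. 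The key quantitative input is that $\theta_n < \norm{C_y}$ for every continuation letter $y$, so that every branch is nonempty; this holds because $\theta_n = \abs{u_{r_k+\ell+1}} - n$ is at most $\abs{\interval{k,\ell}} - 1$, which equals $q_{k-1}-1$ when $\ell = 0$ and $q_k-1$ otherwise, while $\min_y \norm{C_y}$ equals $q_{k-1}$ and $q_k$ in these two cases respectively (using $\norm{C_{x_{k+1}}} = q_k$ and $\norm{C_y} = \ell q_k + \abs{\tau_k(y)}$ for $y \neq x_{k+1}$).

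Next I would set up the correspondence between the shift $m$ and the position of $v_m$ in $\Gamma(n)$. Reading $\infw{c}_\Delta = \mu_{r_k+\ell}(\infw{c}_{\Delta''})$, each letter $z$ of $\infw{c}_{\Delta''}$ contributes the factor $\mu_{r_k+\ell}(z)$ of length $\norm{C_z}$, and traversing it corresponds to going once around the cycle $C_z$: from $\ell$ along the central path to $r$ (the first $\theta_n$ steps, forced since $\ell$ has out-degree one) and then along the branch $z$ back to $\ell$. Hence the path returns to $\ell$ precisely at the cycle boundaries, and the block $x_{k+1}^{a_{k+1}-\ell} y_i$ corresponds to the cycle pattern $C_{x_{k+1}}^{\,a_{k+1}-\ell} C_{y_i}$, whose total image length is $\abs{\tau_{k+1}(y_i)} = (a_{k+1}-\ell)q_k + \norm{C_{y_i}}$. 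Thus $m \in \lambda_i$ exactly when $v_m$ lies in the portion of the path coming from block $i$, and the local coordinate $\delta = m - L_i$ pins down $v_m$: for $\delta < (a_{k+1}-\ell)q_k$ it lies in one of the copies of $C_{x_{k+1}}$ and otherwise in $C_{y_i}$, with the first $\theta_n+1$ positions of each cycle lying on the central path and the rest on the branch.

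The computation of the first revisit then splits into two elementary cases, which I would isolate as a standalone observation about the graph. If $v_m = p_a$ lies on the central path of a cycle whose branch is $z$, the path runs $p_a \to \dotsm \to r$, around branch $z$ to $\ell$, and back up to $p_a$, so the first repeat is $v_m$ after $(\theta_n - a) + (\norm{C_z} - \theta_n) + a = \norm{C_z}$ steps. If instead $v_m$ lies strictly inside branch $z$ at distance $b$ from $r$, then the path reaches $\ell$, climbs the central path to $r$, and enters the next branch $z'$; when $z' = z$ the first repeat is again $v_m$ after $\norm{C_z}$ steps, while when $z' \neq z$ the branches are disjoint, so the first repeat is $\ell$ after $(\norm{C_z} - \theta_n - b) + \theta_n + (\norm{C_{z'}} - \theta_n) = \norm{C_z} + \norm{C_{z'}} - \theta_n - b$ steps. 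In each case one checks that no earlier vertex is repeated, using that the central path and the pairwise disjoint nonempty branches are each simple. Applying this dictionary to the four subintervals gives the result: on $\lambda_{i,1}$ the next branch is always $C_{x_{k+1}}$ (either $v_m$ is on a central path about to reuse it, or on a non-terminal $C_{x_{k+1}}$-branch followed by another $C_{x_{k+1}}$), yielding $\irep = q_k$; on $\lambda_{i,2}$ we are on the terminal $C_{x_{k+1}}$-branch with $z' = y_i \neq x_{k+1}$, giving the ``different branch'' value, which simplifies to $\abs{\tau_{k+1}(y_i)} + L_i - m$; on $\lambda_{i,3}$ we are on the central path of $C_{y_i}$, giving $\norm{C_{y_i}} = \ell q_k + \abs{\tau_k(y_i)}$; and on $\lambda_{i,4}$ we are on the $C_{y_i}$-branch with $z' = x_{k+1}$, giving $q_k + L_{i+1} - m$.

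The main obstacle I anticipate is not the arithmetic, which is the routine matching just indicated using $\norm{C_{x_{k+1}}} = q_k$, $\norm{C_{y_i}} = \ell q_k + \abs{\tau_k(y_i)}$, and $L_{i+1} - L_i = \abs{\tau_{k+1}(y_i)}$, but rather making the graph model fully rigorous: justifying that reading $\mu_{r_k+\ell}(z)$ in $\Gamma(n)$ traverses exactly the cycle $C_z$ and returns to $\ell$ for every $n \in \interval{k,\ell}$, not only at the bispecial length $\abs{u_{r_k+\ell+1}}$ where the cycles were defined, and verifying the inequality $\theta_n < \min_y \norm{C_y}$ that guarantees every branch is nonempty and hence that the two first-revisit cases above are exhaustive and free of spurious earlier repeats. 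Both points hinge on the peculiar choice of the subintervals $\interval{k,\ell}$ and on \autoref{lem:right_endpoint_greater}.
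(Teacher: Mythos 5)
Your proposal is correct and takes essentially the same route as the paper's proof: both locate the length-$n$ prefix of $T^m(\infw{c}_\Delta)$ within the cycle sequence of $\Gamma(n)$ dictated by the blocks of $\infw{c}_{\Delta''}$, and compute the first revisited vertex separately on each $\lambda_{i,j}$ (same cycle repeated giving $\norm{C_z}$ on $\lambda_{i,1}$ and $\lambda_{i,3}$, return to the left special vertex through a different cycle on $\lambda_{i,2}$ and $\lambda_{i,4}$), with identical arithmetic. Your explicit checks that $\theta_n = \abs{u_{r_k+\ell+1}} - n < \min_y \norm{C_y}$ (so all branches are nonempty) and the standalone two-case first-revisit observation are details the paper leaves implicit, but they do not alter the argument.
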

  \begin{proof}
    We are concerned with the cycles taken in the graph $\Gamma(n)$, which evolves to $\Gamma(u_{r_k+1+\ell})$. Recall
    that the cycle sequence taken in $\Gamma(u_{r_k+1+\ell})$ is determined by the letters of the standard word
    $\infw{c}_{\Delta''}$ with
    $\smash[t]{\Delta'' = T^{r_k+\ell}(\Delta) = x_{k+1}^{a_{k+1}-\ell} x_{k+2}^{a_{k+2}} \dotsm}$ and that the cycle
    lengths are given by the lengths of the $\mu_{r_k+\ell}$-images of these letters. Let $v$ be the prefix of
    $T^m(\infw{c}_\Delta)$ of length $n$.

    \textbf{Case A.} Suppose that $m \in \lambda_{i,1}$. By the discussion preceding \autoref{ex:sturmian_intervals},
    the number $L_i$ equals the length of the $\mu_{r_k+\ell}$-image of the first $i-1$ blocks. Then, from the remark
    at the beginning of this proof, we see that reading off $L_i$ letters from the beginning of $\infw{c}_\Delta$
    amounts to traveling complete cycles in $\Gamma(u_{r_k+1+\ell})$. Since all prefixes of $\infw{c}_\Delta$ are left
    special, we see that the prefix of $T^{L_i}(\infw{c}_\Delta)$ of length $n$ corresponds to the left special vertex
    of $\Gamma(n)$. Assume that $L_i \leq m < L_i + (a_{k+1} - \ell - 1)\norm{C_{x_{k+1}}}$. Since the cycle $C_1$ of
    $\Gamma(n)$ having length $\norm{C_{x_{k+1}}}$ is initially taken $a_{k+1}-\ell$ times (as the word
    $\smash[t]{x_{k+1}^{a_{k+1}-\ell}}$ is a prefix of $\infw{c}_\Delta''$), it follows that $v$ lies on $C_1$ and
    $C_1$ is traversed at least once more after the prefix $v$ of $T^m(\infw{c}_\Delta)$. It follows that
    \begin{equation*}
     \irep{T^m(\infw{c}_\Delta)}{n} = \norm{C_1} = \norm{C_{x_{k+1}}} = \abs{\mu_{r_k+\ell}(x_{k+1})} = q_k.
    \end{equation*}
    Assume then that
    $L_i + (a_{k+1} - \ell - 1)\norm{C_{x_{k+1}}} \leq m \leq L_i + (a_{k+1} - \ell - 1)\norm{C_{x_{k+1}}} + \theta_n$.
    Then $v$ lies on the central path of $\Gamma(n)$ during the $(a_{k+1}-\ell)$th traversal of $C_1$. Hence
    $\irep{T^m(\infw{c}_\Delta)}{n} = q_k$ in this case as well.

    \textbf{Case B.} Suppose that $m \in \lambda_{i,2}$. By the arguments in the latter case of the previous paragraph,
    we see that $v$ lies on the cycle $C_1$ during the $(a_{k+1}-\ell)$th traversal of $C_1$. Moreover, the vertex $v$
    is not on the central path of $\Gamma(n)$ and exactly $\norm{C_1} - (m - (L_i+(a_{k+1}-\ell-1)\norm{C_1}))$ edges
    need to be traversed to return to the left special vertex of $\Gamma(n)$. The $i$th block equals
    $\smash[t]{x_{k+1}^{a_{k+1}-\ell} y_i}$, so the cycle $C_1$ is followed by a cycle $C_y$ having length
    $\abs{\mu_{r_k+\ell}(y_i)}$. Therefore the initial nonrepetitive complexity is determined by the return to the left
    special vertex of $\Gamma(n)$ when traversing the cycle $C_y$, that is, we have
    \begin{align*}
      \irep{T^m(\infw{c}_\Delta)}{n} &= \norm{C_1} + \norm{C_y} - (m - (L_i+(a_{k+1}-\ell-1)\norm{C_1})) \\
                                     &= \abs{\mu_{r_k+\ell}(y_i)} - (m - (L_i+(a_{k+1}-\ell)q_k)) \\
                                     &= \abs{\tau_k(x_{k+1}^{\ell} y_i)} - (m - (L_i+(a_{k+1}-\ell)q_k)) \\
                                     &= \ell q_k + \abs{\tau_k(y_i)} - (m - (L_i+(a_{k+1}-\ell)q_k)) \\
                                     &= a_{k+1}q_k + \abs{\tau_k(y_i)} + L_i - m \\
                                     &= \abs{\tau_{k+1}(y_i)} + L_i - m.
    \end{align*}

    \textbf{Case C.} Suppose that $m \in \lambda_{i,3}$. Now $v$ lies on the central path of $\Gamma(n)$ and the
    next cycle to be traversed is $C_y$. Since $v$ is on the central path, the initial nonrepetitive complexity is
    simply $\norm{C_y}$. The claim follows from the preceding computations.

    \textbf{Case D.} Suppose that $m \in \lambda_{i,4}$. In this case $v$ lies on $C_y$ but not on the central path.
    From the form of the word $\infw{c}_{\Delta''}$, we deduce that the next cycle taken is $C_1$. Exactly
    $L_{i+1} - m$ edges need to be traversed to arrive at the left special factor of $\Gamma(n)$. Therefore
    \begin{equation*}
      \irep{T^m(\infw{c}_\Delta)}{n} = L_{i+1} - m + \norm{C_1} = q_k + L_{i+1} - m. \qedhere
    \end{equation*}
  \end{proof}

  In the Sturmian case, \autoref{prp:inrc_shift} simplifies as follows. This result was obtained in
  \cite[Prop.~5.3.0.1]{diss:caius_wojcik}.

  \begin{proposition}[Sturmian Case]
    Assume that $\Delta$ is a binary directive word. Let $n \in \interval{k,\ell}$ for $k \geq 0$ and
    $0 \leq \ell < a_{k+1}$.
    \begin{enumerate}[(i)]
      \item If $0 \leq m \leq (a_{k+1}-\ell-1)q_k + \theta_n$, then $\irep{T^m(\infw{c}_\Delta)}{n} = q_k$.
      \item If $(a_{k+1}-\ell-1)q_k + \theta_n < m < (a_{k+1} - \ell)q_k$, then $\irep{T^m(\infw{c}_\Delta)}{n} = q_{k+1} - m$.
      \item If $(a_{k+1} - \ell)q_k \leq m \leq (a_{k+1} - \ell)q_k + \theta_n$, then $\irep{T^m(\infw{c}_\Delta)}{n} = \ell q_k + q_{k-1}$.
      \item If $(a_{k+1} - \ell)q_k + \theta_n < m < q_{k+1}$, then $\irep{T^m(\infw{c}_\Delta)}{n} = q_{k+1} + q_k - m$.
    \end{enumerate}
  \end{proposition}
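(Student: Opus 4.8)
The plan is to obtain this proposition as a direct specialization of \autoref{prp:inrc_shift} to the binary case $d = 2$, reading off the relevant intervals from \autoref{ex:sturmian_intervals}. First I would record that when $\Delta$ is binary we have $K_d = K_2 = 1$, so there is a single block, indexed by $i = 1$, and its type is $y_1 = x_{k+2}$. I would then observe that the four subintervals $\lambda_{1,1}, \ldots, \lambda_{1,4}$ already computed in \autoref{ex:sturmian_intervals} coincide exactly with the four ranges of $m$ appearing in cases (i)--(iv): here $L_1 = 0$, and since $\bigcup_i \lambda_i = \{0, \ldots, q_{k+1} - 1\}$ the endpoint convention ($L_{i+1}-1$ is the largest element of $\lambda_i$) forces $L_2 = q_{k+1}$. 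With this dictionary in hand, each case of the claim is the corresponding case of \autoref{prp:inrc_shift} after substituting these values.

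Next I would dispatch the four cases. Cases (i) and (iv) are immediate: \autoref{prp:inrc_shift}(i) already gives $q_k$, and \autoref{prp:inrc_shift}(iv) gives $q_k + L_2 - m$, which becomes $q_k + q_{k+1} - m$ once $L_2 = q_{k+1}$ is inserted. For case (ii), \autoref{prp:inrc_shift}(ii) gives $\abs{\tau_{k+1}(y_1)} + L_1 - m = \abs{\tau_{k+1}(x_{k+2})} - m$, and since $\tau_{k+1}(x_{k+2}) = s_{k+1}$ has length $q_{k+1}$ this is $q_{k+1} - m$, as claimed.

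The only value requiring a short computation is case (iii), where \autoref{prp:inrc_shift}(iii) gives $\ell q_k + \abs{\tau_k(y_1)} = \ell q_k + \abs{\tau_k(x_{k+2})}$. Here I would use that $\Delta$ is binary, so the letters alternate and $x_{k+2} = x_k$; then $\tau_k(x_k) = \tau_{k-1}(L_{x_k}^{a_k}(x_k)) = \tau_{k-1}(x_k) = s_{k-1}$, which has length $q_{k-1}$. This yields the value $\ell q_k + q_{k-1}$, matching the identity $\abs{\mu_{r_k}(x_k)} = q_{k-1}$ already used in \autoref{ex:sturmian_intervals}.

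Since everything reduces to substitution into \autoref{prp:inrc_shift}, I do not expect a genuine obstacle. The main thing to get right is the endpoint bookkeeping: verifying that the ranges of $m$ in (i)--(iv) are exactly $\lambda_{1,1}, \ldots, \lambda_{1,4}$, that $L_2 = q_{k+1}$, and that it is precisely the binary identification $x_{k+2} = x_k$ that collapses $\abs{\tau_k(y_1)}$ to $q_{k-1}$. A misindexed endpoint or a forgotten use of $x_{k+2} = x_k$ is the likeliest place for the argument to slip.
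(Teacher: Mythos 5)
Your proposal is correct and is essentially the paper's own proof: the paper dispatches this proposition precisely as a substitution into \autoref{prp:inrc_shift} using the data of \autoref{ex:sturmian_intervals} ($K_2 = 1$, $L_1 = 0$, $y_1 = x_{k+2}$, $\norm{C_{x_{k+1}}} = q_k$, $\norm{C_{y_1}} = \ell q_k + q_{k-1}$). Your explicit computations $\abs{\tau_{k+1}(x_{k+2})} = q_{k+1}$, $L_2 = q_{k+1}$, and $\abs{\tau_k(x_{k+2})} = \abs{\tau_{k-1}(x_k)} = q_{k-1}$ via $x_{k+2} = x_k$ are exactly the ``short computations'' the paper leaves implicit.
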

  \begin{proof}
    The claim follows by short computations using the information provided in \autoref{ex:sturmian_intervals}.
  \end{proof}

  The following proposition gives the nonrepetitive initial complexity of certain shifts of the regular standard
  episturmian words for the lengths in the interval $\interval{k}$. The statement is quite complicated. We advise the
  reader to read the proof and study the implications \eqref{eq:inrc_n_1}, \eqref{eq:inrc_n_2}, \eqref{eq:inrc_n_3},
  and \eqref{eq:inrc_n_4} rather than spending much time on the statement itself.

  \begin{proposition}\label{prp:inrc_n}
    Let $\Delta$ be a regular directive word and $\Delta' = T^{r_{k+1}}(\Delta)$ for $k \geq 0$. Suppose that
    $n \in \interval{k}$, and let $m$ be an integer such that $0 \leq m < q_{k+d-1}$ and
    $\rep[\Delta]{m} = c_1 \dotsm c_{k+d-1}$ (possibly with trailing zeros). Let $y_i$ be the $i$th letter of
    $\infw{c}_{\Delta'}$ when $i = \val[\Delta']{c_{k+2} \dotsm c_{k+d-1}} + 1$.
    \begin{enumerate}[(i)]
      \item If $0 < c_{k+1} < a_{k+1} - 1$ or $c_{k+1} = a_{k+1} - 1 > 0$ and $c_k = 0$, then we have the following
            implications:
            \begin{itemize}
              \item $\abs{u_{r_k}} < n \leq \abs{u_{r_{k+1}}} - \val[\Delta]{c_1 \dotsm c_{k+1}} \\
                     \Longrightarrow \, \irep{T^m(\infw{c}_\Delta)}{n} = q_k$,
              \item $\abs{u_{r_{k+1}}} - \val[\Delta]{c_1 \dotsm c_{k+1}} < n \leq \abs{u_{r_{k+1}}} - c_{k+1}q_k \\
                     \Longrightarrow \, \irep{T^m(\infw{c}_\Delta)}{n} = \abs{\tau_{k+1}(y_i)} - \val[\Delta]{c_1 \dotsm c_{k+1}}$,
              \item $\abs{u_{r_{k+1}}} - c_{k+1}q_k < n \leq \abs{u_{r_{k+1}}} + q_k - \val[\Delta]{c_1 \dotsm c_{k+1}} \\
                     \Longrightarrow \, \irep{T^m(\infw{c}_\Delta)}{n} = (a_{k+1}-c_{k+1})q_k + \abs{\tau_k(y_i)}$,
              \item $\abs{u_{r_{k+1}}} + q_k - \val[\Delta]{c_1 \dotsm c_{k+1}} < n \leq \abs{u_{r_{k+1}}} \\
                     \Longrightarrow \, \irep{T^m(\infw{c}_\Delta)}{n} = q_k + \abs{\tau_{k+1}(y_i)} - \val[\Delta]{c_1 \dotsm c_{k+1}}$.
            \end{itemize}
      \item If $c_{k+1} = a_{k+1}$, then we have the following implications:
            \begin{itemize}
              \item $\abs{u_{r_k}} < n \leq \abs{u_{r_{k+1}}} + q_k - \val[\Delta]{c_1 \dotsm c_k} \\
                     \Longrightarrow \, \irep{T^m(\infw{c}_\Delta)}{n} = \abs{\tau_k(y_i)}$,
              \item $\abs{u_{r_{k+1}}} + q_k - \val[\Delta]{c_1 \dotsm c_k} < n \leq \abs{u_{r_{k+1}}} \\
                     \Longrightarrow \, \irep{T^m(\infw{c}_\Delta)}{n} = q_k + \abs{\tau_k(y_i)} - \val[\Delta]{c_1 \dotsm c_k}$.
            \end{itemize}
      \item If $c_{k+1} = a_{k+1} - 1 > 0$ and $c_k \neq 0$, then we have the following implications:
            \begin{itemize}
              \item $\abs{u_{r_k}} < n \leq \abs{u_{r_k+1}} \\
                     \Longrightarrow \, \irep{T^m(\infw{c}_\Delta)}{n} = q_k + \abs{\tau_k(y_i)} - \val[\Delta]{c_1 \dotsm c_k}$,
              \item $\abs{u_{r_k+1}} < n \leq \abs{u_{r_k+1}} + q_k - \val[\Delta]{c_1 \dotsm c_k} \\
                     \Longrightarrow \, \irep{T^m(\infw{c}_\Delta)}{n} = q_k + \abs{\tau_k(y_i)}$,
              \item $\abs{u_{r_k+1}} + q_k - \val[\Delta]{c_1 \dotsm c_k} < n \leq \abs{u_{r_{k+1}}} \\
                     \Longrightarrow \, \irep{T^m(\infw{c}_\Delta)}{n} = 2q_k + \abs{\tau_k(y_i)} - \val[\Delta]{c_1 \dotsm c_k}$.
            \end{itemize}
      \item If $c_{k+1} = 0$ and $c_k = 0$, then we have the following implications:
            \begin{itemize}
              \item $\abs{u_{r_k}} < n \leq \abs{u_{r_{k+1}}} - \val[\Delta]{c_1 \dotsm c_{k-1}} \\
                     \Longrightarrow \, \irep{T^m(\infw{c}_\Delta)}{n} = q_k$,
              \item $\abs{u_{r_{k+1}}} - \val[\Delta]{c_1 \dotsm c_{k-1}} < n \leq \abs{u_{r_{k+1}}} \\
                     \Longrightarrow \, \irep{T^m(\infw{c}_\Delta)}{n} = \abs{\tau_{k+1}(y_i)} - \val[\Delta]{c_1 \dotsm c_{k-1}}$.
            \end{itemize}
      \item If $c_{k+1} = a_{k+1} - 1 = 0$ and $c_k \neq 0$, then we have the following implications:
            \begin{itemize}
              \item $\abs{u_{r_k}} < n \leq \abs{u_{r_{k+1}}} \\
                     \Longrightarrow \, \irep{T^m(\infw{c}_\Delta)}{n} = \abs{\tau_{k+1}(y_i)} - \val[\Delta]{c_1 \dotsm c_k}$.
            \end{itemize}
      \item If $a_{k+1} - 1 > c_{k+1} = 0$ and $c_k \neq 0$, then we have the following implications:
            \begin{itemize}
              \item $\abs{u_{r_k}} < n \leq \abs{u_{r_{k+1}}} - \val[\Delta]{c_1 \dotsm c_k} \\
                     \Longrightarrow \, \irep{T^m(\infw{c}_\Delta)}{n} = q_k$.
              \item $\abs{u_{r_{k+1}}} - \val[\Delta]{c_1 \dotsm c_k} < n \leq \abs{u_{r_{k+1}}} \\
                     \Longrightarrow \, \irep{T^m(\infw{c}_\Delta)}{n} = \abs{\tau_{k+1}(y_i)} - \val[\Delta]{c_1 \dotsm c_k}$.
            \end{itemize}
    \end{enumerate}
  \end{proposition}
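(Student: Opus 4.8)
The plan is to read \autoref{prp:inrc_n} as a reformulation of \autoref{prp:inrc_shift}. The latter computes $\irep{T^m(\infw{c}_\Delta)}{n}$ from the single datum of which subinterval $\lambda_{i,j}$ contains the shift $m$, where the partition of $\{0,\dots,q_{k+d-1}-1\}$ into the $\lambda_{i,j}$ depends on $n$ only through the parameters $\ell$ (with $n\in\interval{k,\ell}$) and $\theta_n$. I would instead fix $m$, read off its Ostrowski digits $c_{k+1}$ and $c_k$ (which select the case), and let $n$ sweep through all of $\interval{k}$, tracking how the subinterval containing the \emph{fixed} $m$ changes as $\ell$ and $\theta_n$ vary. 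Each change occurs at a threshold value of $n$; between consecutive thresholds \autoref{prp:inrc_shift} returns one closed-form value, which I then rewrite in terms of $n$ and $c_1,\dots,c_{k+1}$.

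First I would set up the dictionary between the two descriptions. For the block index $i=\val[\Delta']{c_{k+2}\dotsm c_{k+d-1}}+1$, applying \autoref{prp:ostrowski_prefix} to $\Delta'=T^{r_{k+1}}(\Delta)$ together with the identity $\tau_{k+1}(v_m)=s_{k+1+m}$ (the analogue of the relation used in the proof of \autoref{lem:ostrowski_finite}) gives $L_i=\sum_{j=k+2}^{k+d-1}c_j q_{j-1}$, so the position of $m$ inside its block is $m-L_i=\val{c_1\dotsm c_{k+1}}=c_{k+1}q_k+\val{c_1\dotsm c_k}$, with $\val{c_1\dotsm c_k}<q_k$ by \autoref{lem:ostrowski_ub}. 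On the geometric side, \autoref{lem:st_which_u} yields $\abs{u_{r_k+\ell+1}}=(\ell+1)q_k+\abs{t_k}$, and since the central path of $\Gamma(n)$ loses one edge as $n$ grows toward the next central-word length, $\theta_n=\abs{u_{r_k+\ell+1}}-n$ for $n\in\interval{k,\ell}$. These identities turn every inequality defining a $\lambda_{i,j}$ into an inequality on $n$ and $\val{c_1\dotsm c_{k+1}}$.

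Writing $v=\val{c_1\dotsm c_k}$, the four thresholds of \autoref{prp:inrc_shift} cluster around $\ell=a_{k+1}-c_{k+1}$: for smaller $\ell$ the fixed $m$ lies in $\lambda_{i,1}$ (giving $q_k$); at $\ell=a_{k+1}-c_{k+1}-1$ it moves from $\lambda_{i,1}$ to $\lambda_{i,2}$ as $\theta_n$ drops below $v$; at $\ell=a_{k+1}-c_{k+1}$ it sits in $\lambda_{i,3}$ or $\lambda_{i,4}$ according to whether $\theta_n\ge v$; and for larger $\ell$ it stays in $\lambda_{i,4}$. Substituting $m-L_i=c_{k+1}q_k+v$, $L_{i+1}=L_i+\abs{\tau_{k+1}(y_i)}$, and $\abs{\tau_{k+1}(y_i)}=a_{k+1}q_k+\abs{\tau_k(y_i)}$ into the four values of \autoref{prp:inrc_shift} reproduces exactly the generic case~(i): the three interior thresholds come out as $\abs{u_{r_{k+1}}}-\val{c_1\dotsm c_{k+1}}$, $\abs{u_{r_{k+1}}}-c_{k+1}q_k$, and $\abs{u_{r_{k+1}}}+q_k-\val{c_1\dotsm c_{k+1}}$, and the values as $q_k$, $\abs{\tau_{k+1}(y_i)}-\val{c_1\dotsm c_{k+1}}$, $(a_{k+1}-c_{k+1})q_k+\abs{\tau_k(y_i)}$, and $q_k+\abs{\tau_{k+1}(y_i)}-\val{c_1\dotsm c_{k+1}}$. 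I have checked that these match the four bullets of~(i).

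The remaining cases are boundary phenomena, and I expect these to be the main obstacle. Two effects split off (ii)--(vi). First, the leftmost subinterval $\interval{k,0}$ has only $q_{k-1}$ elements rather than $q_k$, so $\theta_n<q_{k-1}$ there; whether the transition $\theta_n\ge v$ can occur at $\ell=0$ then depends on whether $v<q_{k-1}$, that is, essentially on whether $c_k=0$, and this is precisely what separates (i) and (iv) from (iii), (v), and (vi). Second, when $c_{k+1}=a_{k+1}$ there is a carry: since $s_k^{a_{k+1}}$ is a prefix of $s_{k+1}$ by \eqref{eq:reg_s_2}, the $a_{k+1}$ copies of $s_k$ merge into the next standard word, and because the $d$ cycle lengths at this level are $q_k,q_{k-1},\dots,q_{k-d+1}$ one has $\abs{\tau_k(y_i)}\le q_{k-1}$, so $v$ may exceed the offset of $C_{y_i}$ inside block $i$; this shifts the effective block and is what produces the $q_k$-offset thresholds of case~(ii). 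I would handle each of (ii)--(vi) by first locating $a_{k+1}-c_{k+1}$ relative to the admissible range $0\le\ell\le a_{k+1}-1$ (in each of these cases it lands at or beyond an endpoint), then rerunning the regime analysis on the few surviving subintervals, checking at $\ell=0$ and $\ell=a_{k+1}-1$ that the thresholds are where claimed and that the degenerate $\lambda_{i,j}$ --- those constrained by $q_{k-1}$ or annihilated by the carry --- are correctly omitted.
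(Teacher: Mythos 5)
Your overall route is the paper's own: both reduce \autoref{prp:inrc_n} to \autoref{prp:inrc_shift} through the dictionary $m - L_i = \val[\Delta]{c_1 \dotsm c_{k+1}}$ (the paper obtains $L_i = \val[\Delta]{0^{k+1} c_{k+2} \dotsm c_{k+d-1}}$ by a short lexicographic argument on consecutive greedy expansions rather than via \autoref{prp:ostrowski_prefix}, but the content is identical) together with the substitution $n = \abs{u_{r_k+\ell+1}} - \theta_n$, and your regime analysis for the generic case (i) --- the thresholds at $\ell = a_{k+1}-c_{k+1}-1$ and $\ell = a_{k+1}-c_{k+1}$ governed by whether $\theta_n \geq \val[\Delta]{c_1 \dotsm c_k}$ --- is exactly the paper's Cases A--D and comes out correct.

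The gap is in your account of the boundary cases, which you yourself flag as the main obstacle. Your ``carry'' mechanism for case (ii) rests on claims that are false for $d > 2$: the cycle lengths at this level are \emph{not} $q_k, q_{k-1}, \ldots, q_{k-d+1}$, and $\abs{\tau_k(y_i)} \leq q_{k-1}$ fails in general. By \autoref{lem:regular_tau_images} the cycle lengths are the $\abs{\mu_{r_k+\ell}(y)}$; for instance with $d = 3$ one has $\norm{C_{x_{k+2}}} = \ell q_k + \abs{\tau_k(x_{k+2})}$ with $\abs{\tau_k(x_{k+2})} = a_k q_{k-1} + q_{k-2} > q_{k-1}$ (your bounds are correct only when $d = 2$). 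Moreover, no ``shift of the effective block'' ever occurs: even when $c_{k+1} = a_{k+1}$, the point $m$ stays inside $\lambda_i$, because the greedy structure forces $\val[\Delta]{c_1 \dotsm c_k} < \abs{\tau_k(y_i)}$ --- this is precisely what the identification $m \in \lambda_i$ in your dictionary step secures, and it does not follow from the crude bound $\val[\Delta]{c_1 \dotsm c_k} < q_k$, so your sketch is silently relying on it while positing a phenomenon that contradicts it. The actual mechanism, and the paper's, is just the membership bookkeeping you already ran for (i): $m \in \lambda_{i,2}$ forces $c_{k+1} = a_{k+1}-\ell-1$ and $m \in \lambda_{i,3}$ forces $c_{k+1} = a_{k+1}-\ell$, so for $c_{k+1} = a_{k+1}$ the regions $\lambda_{i,1}, \lambda_{i,2}$ are unreachable for every $n \in \interval{k}$, $\lambda_{i,3}$ occurs only at $\ell = 0$ (where $\theta_n < q_{k-1}$), and $\lambda_{i,4}$ covers the remainder; cases (iii), (v), (vi) arise because the Case-A antecedent fails exactly when $\val[\Delta]{c_1 \dotsm c_{k+1}} \geq (a_{k+1}-1)q_k + q_{k-1}$, which by \autoref{lem:ostrowski_ub} happens exactly when $c_{k+1} = a_{k+1}$, or $c_{k+1} = a_{k+1}-1$ and $c_k \neq 0$. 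Your first boundary mechanism ($\val[\Delta]{c_1 \dotsm c_k} < q_{k-1}$ iff $c_k = 0$, tied to the $q_{k-1}$-sized subinterval $\interval{k,0}$) is the correct one and suffices throughout; the carry-based justification, as written, would not survive being made precise for $d > 2$.
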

  \begin{proof}
    Since $0 \leq m < q_{k+d-1}$, we see that $m \in \lambda_i$ for some $i$. The discussion preceding
    \autoref{ex:sturmian_intervals} tells us that the left endpoint $L_i$ of $\lambda_i$ equals the $\tau_{k+1}$-image
    of the prefix of $\infw{c}_{\Delta'}$ of length $i-1$. Let $v_1 = \rep[\Delta']{i-1}$ and $v_2 = \rep[\Delta']{i}$.
    It follows that $0^{k+1} v_1 \leq_{\mathrm{lex}} \rep[\Delta]{m} <_{\mathrm{lex}} 0^{k+1} v_2$ (here
    $<_{\mathrm{lex}}$ is the lexicographic order on $\N$). Since $v_1$ and $v_2$ are the representations of two
    consecutive integers, we see that $\rep[\Delta]{m}$ must end with $v_1$. In other words, we have showed that
    $i = \val[\Delta']{c_{k+2} \dotsm c_{k+d-1}} + 1$. Thus the type of the $i$th block is $y_i$ where $y_i$ is the
    $i$th letter of $\infw{c}_{\Delta'}$.

    Assume that $n \in \interval{k,l}$ for some $\ell$ such that $0 \leq \ell < a_{k+1}$, and write
    \begin{equation}\label{eq:n_cp}
      n = \abs{u_{r_k+\ell+1}} - \theta_n = \abs{u_{r_k+1}} + \ell q_k - \theta_n.
    \end{equation}
    Recall that $\theta_n$ is the length of the central path of $\Gamma(n)$ and that the intervals $\interval{k,\ell}$
    have size $q_k$ except in the case $\ell = 0$ when the size is $q_{k-1}$.

    \textbf{Case A.} Let us first consider the case where $m \in \lambda_{i,1}$. Now
    $L_i = \val[\Delta]{0^{k+1} c_{k+2} \dotsm c_{k+d-1}}$, so $m \in \lambda_{i,1}$ if and only if
    \begin{equation*}
      0 \leq \val[\Delta]{c_1 \dotsm c_{k+1}} \leq (a_{k+1} - \ell - 1)q_k + \theta_n.
    \end{equation*}
    Substituting $\theta_n$ from \eqref{eq:n_cp} to the right inequality yields
    \begin{align*}
      n &\leq (a_{k+1} - \ell - 1)q_k + \abs{u_{r_k+1}} + \ell q_k - \val[\Delta]{c_1 \dotsm c_{k+1}} \\
        &= (a_{k+1} -  1)q_k + \abs{u_{r_k+1}} - \val[\Delta]{c_1 \dotsm c_{k+1}} \\
        &= \abs{u_{r_{k+1}}} - \val[\Delta]{c_1 \dotsm c_{k+1}}
    \end{align*}
    The left inequality $\val[\Delta]{c_1 \dotsm c_{k+1}} \geq 0$ is trivially true, so we conclude using
    \autoref{prp:inrc_shift} that
    \begin{equation}\label{eq:inrc_n_1}
      \abs{u_{r_k}} < n \leq \abs{u_{r_{k+1}}} - \val[\Delta]{c_1 \dotsm c_{k+1}} \, \Longrightarrow \, \irep{T^m(\infw{c}_\Delta)}{n} = q_k.
    \end{equation}
    Now $\abs{u_{r_{k+1}}} = \abs{u_{r_k}} + q_{k-1} + (a_{k+1}-1)q_k$, so the antecedent is false only if
    $\val[\Delta]{c_1 \dotsm c_{k+1}} \geq (a_{k+1}-1)q_k + q_{k-1}$. Using
    \autoref{lem:ostrowski_ub}, we see that this happens exactly when $c_{k+1} = a_{k+1}$ or $c_{k+1} = a_{k+1} - 1$
    and $c_k \neq 0$.

    \textbf{Case B.} Assume that $m \in \lambda_{i,2}$. Like above, this holds if and only if
    \begin{equation*}
      (a_{k+1} - \ell - 1)q_k + \theta_n < \val[\Delta]{c_1 \dotsm c_{k+1}} < (a_{k+1} - \ell)q_k.
    \end{equation*}
    The only option is that $c_{k+1} = a_{k+1}-\ell-1$, which cannot happen if $a_{k+1} = c_{k+1}$. Substituting
    $\theta_n$ from \eqref{eq:n_cp} to the left inequality implies that
    \begin{equation*}
      n > (a_{k+1}-1)q_k + \abs{u_{r_k+1}} - \val[\Delta]{c_1 \dotsm c_{k+1}} = \abs{u_{r_{k+1}}} - \val[\Delta]{c_1 \dotsm c_{k+1}}.
    \end{equation*}
    The right inequality $\val[\Delta]{c_1 \dotsm c_{k+1}} < (a_{k+1} - \ell)q_k$ is trivially true. Since
    $\abs{u_{r_k+1}} + \ell q_k = \abs{u_{r_k+1}} + (a_{k+1} - c_{k+1} - 1)q_k = \abs{u_{r_{k+1}}} - c_{k+1}q_k$, we
    conclude from \autoref{prp:inrc_shift} that
    \begin{align}\label{eq:inrc_n_2}
      &\abs{u_{r_{k+1}}} - \val[\Delta]{c_1 \dotsm c_{k+1}} < n \leq \abs{u_{r_{k+1}}} - c_{k+1}q_k \nonumber \\
      &\Longrightarrow \, \irep{T^m(\infw{c}_\Delta)}{n} = \abs{\tau_{k+1}(y_i)} - \val[\Delta]{c_1 \dotsm c_{k+1}}.
    \end{align}
    Notice indeed that here
    $L_i - m = \val[\Delta]{0^{k+1} c_{k+2} \dotsm c_{k+d-1}} - m = -\val[\Delta]{c_1 \dotsm c_{k+1}}$.

    \textbf{Case C.} Suppose that $m \in \lambda_{i,3}$. This holds if and only if
    \begin{equation*}
      (a_{k+1} - \ell)q_k \leq \val[\Delta]{c_1 \dotsm c_{k+1}} \leq (a_{k+1} - \ell)q_k + \theta_n.
    \end{equation*}
    Since $\theta_n < q_k$, it must be that $c_{k+1} = a_{k+1} - \ell$, and this cannot happen if
    $c_{k+1} = 0$ as $0 \leq \ell < a_{k+1}$. The left inequality is trivial. Utilizing again \eqref{eq:n_cp}, the
    right inequality transforms into
    \begin{equation*}
      n \leq a_{k+1}q_k + \abs{u_{r_k+1}} - \val[\Delta]{c_1 \dotsm c_{k+1}} = \abs{u_{r_{k+1}}} + q_k - \val[\Delta]{c_1 \dotsm c_{k+1}}.
    \end{equation*}
    As $\abs{u_{r_k+\ell}} = \abs{u_{r_k+1}} + (\ell-1)q_k = \abs{u_{r_k+1}} + (a_{k+1} - c_{k+1} - 1)q_k = \abs{u_{r_{k+1}}} - c_{k+1}q_k$,
    we have the following:
    \begin{align}\label{eq:inrc_n_3}
      &\abs{u_{r_k+1}} - c_{k+1}q_k < n \leq \abs{u_{r_{k+1}}} + q_k - \val[\Delta]{c_1 \dotsm c_{k+1}} \nonumber \\
      &\Longrightarrow \, \irep{T^m(\infw{c}_\Delta)}{n} = (a_{k+1}-c_{k+1})q_k + \abs{\tau_k(y_i)}.
    \end{align}

    \textbf{Case D.} Assume finally that $m \in \lambda_{i,4}$. This is true only if
    \begin{equation*}
      (a_{k+1} - \ell)q_k + \theta_n < \val[\Delta]{c_1 \dotsm c_{k+1}} < (a_{k+1} - \ell)q_k + \abs{\tau_{k+1}(y_i)}.
    \end{equation*}
    From the left inequality, we obtain the following inequality:
    \begin{equation*}
      n > a_{k+1}q_k + \abs{u_{r_k+1}} - \val[\Delta]{c_1 \dotsm c_{k+1}} = \abs{u_{r_{k+1}}} + q_k - \val[\Delta]{c_1 \dotsm c_{k+1}}.
    \end{equation*}
    Since $n \in \interval{k,\ell}$, we see that this is possible only when $q_k < \val[\Delta]{c_1 \dotsm c_{k+1}}$,
    that is, when $c_{k+1} \neq 0$. The right inequality holds trivially, so we have
    \begin{align}\label{eq:inrc_n_4}
      &\abs{u_{r_{k+1}}} + q_k - \val[\Delta]{c_1 \dotsm c_{k+1}} < n \leq \abs{u_{r_{k+1}}} \nonumber \\
      &\Longrightarrow \, \irep{T^m(\infw{c}_\Delta)}{n} = q_k + \abs{\tau_{k+1}(y_i)} - \val[\Delta]{c_1 \dotsm c_{k+1}}.
    \end{align}
    Here the facts $L_{i+1} = L_i + \abs{\tau_{k+1}(y_i)}$ and $L_i - m = -\val[\Delta]{c_1 \dotsm c_{k+1}}$ were used.

    Let us then put the above results together. If $c_{k+1}$ satisfies $0 < c_{k+1} < a_{k+1}-1$, then the antecedents
    of \eqref{eq:inrc_n_1}, \eqref{eq:inrc_n_2}, \eqref{eq:inrc_n_3}, and \eqref{eq:inrc_n_4} are all satisfied.
    Clearly the interval $\{\abs{u_{r_k}} + 1, \ldots, \abs{u_{r_{k+1}}}\}$ is partitioned by these four cases
    and the initial nonrepetitive complexity is determined on each partition. Exactly the same happens if
    $c_{k+1} = a_{k+1} - 1$ and $c_k = 0$. This gives (i). Suppose then that $c_{k+1} = a_{k+1}$. Then, as we saw
    above, the antecedents of \eqref{eq:inrc_n_1} and \eqref{eq:inrc_n_2} are not satisfied, so these cases are
    omitted. The left inequality is trivial in the Case C and $n \in \interval{k,0}$, so we may now deduce that 
    $\irep{T^m(\infw{c}_\Delta)}{n} = \abs{\tau_k(y_i)}$ provided that
    $\abs{u_{r_k}} < n \leq \abs{u_{r_{k+1}}} + q_k - \val[\Delta]{c_1 \dotsm c_{k+1}}$. The Case D directly applies,
    and we have (ii). The remaining cases are similar.
  \end{proof}

  The Sturmian case was determined in \cite[Cor.~5.3.0.2]{diss:caius_wojcik}.

  \begin{proposition}[Sturmian Case]
    Let $\Delta$ be a binary directive word. Suppose that $n \in \interval{k}$ for $k \geq 0$, and let $m$ be an
    integer such that $0 \leq m < q_{k+1}$ and $\rep{m} = c_1 \dotsm c_{k+1}$ (possibly with trailing zeros).
    \begin{enumerate}[(i)]
      \item If $0 < c_{k+1} < a_{k+1} - 1$ or $c_{k+1} = a_{k+1} - 1 > 0$ and $c_k = 0$, then we have the following
            implications:
            \begin{itemize}
              \item $q_k - 2 < n \leq q_{k+1} - 2 - \val{c_1 \dotsm c_{k+1}} \\
                     \Longrightarrow \, \irep{T^m(\infw{c}_\Delta)}{n} = q_k$,
              \item $q_{k+1} - 2 - \val{c_1 \dotsm c_{k+1}} < n \leq q_{k+1} - 2 - c_{k+1}q_k \\
                     \Longrightarrow \, \irep{T^m(\infw{c}_\Delta)}{n} = q_{k+1} - \val{c_1 \dotsm c_{k+1}}$,
              \item $q_{k+1} - 2 - c_{k+1}q_k < n \leq q_{k+1} + q_k - 2 - \val{c_1 \dotsm c_{k+1}} \\
                     \Longrightarrow \, \irep{T^m(\infw{c}_\Delta)}{n} = q_{k+1} - c_{k+1}q_k$,
              \item $q_{k+1} + q_k - 2 - \val{c_1 \dotsm c_{k+1}} < n \leq q_{k+1} - 2 \\
                     \Longrightarrow \, \irep{T^m(\infw{c}_\Delta)}{n} = q_{k+1} + q_k - \val{c_1 \dotsm c_{k+1}}$.
            \end{itemize}
      \item If $c_{k+1} = a_{k+1}$, then we have the following implications:
            \begin{itemize}
              \item $q_k - 2 < n \leq q_{k+1} - 2 - \val{c_1 \dotsm c_k} \\
                     \Longrightarrow \, \irep{T^m(\infw{c}_\Delta)}{n} = q_{k-1}$,
              \item $q_{k+1} - 2 - \val{c_1 \dotsm c_k} < n \leq q_{k+1} - 2 \\
                     \Longrightarrow \, \irep{T^m(\infw{c}_\Delta)}{n} = q_k + q_{k-1} - \val{c_1 \dotsm c_k}$.
            \end{itemize}
      \item If $c_{k+1} = a_{k+1} - 1 > 0$ and $c_k \neq 0$, then we have the following implications:
            \begin{itemize}
              \item $q_k - 2 < n \leq q_k + q_{k-1} - 2 \\
                     \Longrightarrow \, \irep{T^m(\infw{c}_\Delta)}{n} = q_k + q_{k-1} - \val{c_1 \dotsm c_k}$,
              \item $q_k + q_{k-1} - 2 < n \leq 2q_k + q_{k-1} - 2 - \val{c_1 \dotsm c_k} \\
                     \Longrightarrow \, \irep{T^m(\infw{c}_\Delta)}{n} = q_k + q_{k-1}$,
              \item $2q_k + q_{k-1} - 2 - \val{c_1 \dotsm c_k} < n \leq q_{k+1} - 2 \\
                     \Longrightarrow \, \irep{T^m(\infw{c}_\Delta)}{n} = 2q_k + q_{k-1} - \val{c_1 \dotsm c_k}$.
            \end{itemize}
      \item If $c_{k+1} = 0$ and $c_k = 0$, then we have the following implications:
            \begin{itemize}
              \item $q_k - 2 < n \leq q_{k+1} - 2 - \val{c_1 \dotsm c_{k-1}} \\
                     \Longrightarrow \, \irep{T^m(\infw{c}_\Delta)}{n} = q_k$,
              \item $q_{k+1} - 2 - \val{c_1 \dotsm c_{k-1}} < n \leq q_{k+1} - 2 \\
                     \Longrightarrow \, \irep{T^m(\infw{c}_\Delta)}{n} = q_{k+1} - \val{c_1 \dotsm c_{k-1}}$.
            \end{itemize}
      \item If $c_{k+1} = a_{k+1} - 1 = 0$ and $c_k \neq 0$, then we have the following implications:
            \begin{itemize}
              \item $q_k - 2 < n \leq q_{k+1} - 2 \\
                     \Longrightarrow \, \irep{T^m(\infw{c}_\Delta)}{n} = q_{k+1} - \val{c_1 \dotsm c_k}$.
            \end{itemize}
      \item If $a_{k+1} - 1 > c_{k+1} = 0$ and $c_k \neq 0$, then we have the following implications:
            \begin{itemize}
              \item $q_k - 2 < n \leq q_{k+1} - 2 - \val{c_1 \dotsm c_k} \\
                     \Longrightarrow \, \irep{T^m(\infw{c}_\Delta)}{n} = q_k$.
              \item $q_{k+1} - 2 - \val{c_1 \dotsm c_k} < n \leq q_{k+1} - 2 \\
                     \Longrightarrow \, \irep{T^m(\infw{c}_\Delta)}{n} = q_{k+1} - \val{c_1 \dotsm c_k}$.
            \end{itemize}
    \end{enumerate}
  \end{proposition}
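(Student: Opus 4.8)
The plan is to obtain this proposition as the $d=2$ specialization of \autoref{prp:inrc_n}, exactly as the preceding Sturmian corollary was obtained from \autoref{prp:inrc_shift}. Thus the entire argument reduces to two bookkeeping tasks: first, determining what the data $y_i$, $\abs{\tau_k(y_i)}$, and $\abs{\tau_{k+1}(y_i)}$ become when $\Delta$ is binary; and second, rewriting the interval endpoints of \autoref{prp:inrc_n}, which are expressed through central-word lengths $\abs{u_\bullet}$, in terms of the continued-fraction denominators $q_k$. Once these translations are in hand, each of the six cases follows by direct substitution.

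First I would record the simplifications forced by $d=2$. Here $K_d=K_2=\prod_{i=2}^{1}(a_{k+i}+1)$ is the empty product, so $K_2=1$, there is a single block, and $i=1$ with $\val[\Delta']{c_{k+2}\dotsm c_{k+d-1}}=\val[\Delta']{\varepsilon}=0$. Hence $y_i$ is the first letter of $\infw{c}_{\Delta'}$ with $\Delta'=T^{r_{k+1}}(\Delta)=x_{k+2}^{a_{k+2}}\dotsm$, that is, $y_i=x_{k+2}$, and since the directive word is binary and alternating we have $x_{k+2}=x_k$. From $s_{k+1}=\tau_{k+1}(x_{k+2})$ I read off $\abs{\tau_{k+1}(y_i)}=q_{k+1}$, and from $L_{x_k}^{a_k}(x_k)=x_k$ together with $s_{k-1}=\tau_{k-1}(x_k)$ I get $\tau_k(y_i)=\tau_k(x_k)=\tau_{k-1}(x_k)=s_{k-1}$, so $\abs{\tau_k(y_i)}=q_{k-1}$.

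Second I would translate the endpoints. By \eqref{eq:u_4} we have $\abs{u_{r_k+1}}=\sum_{j=1}^{k}a_jq_{j-1}$, and a short induction using the binary recursion $q_k=a_kq_{k-1}+q_{k-2}$ (the specialization of \eqref{eq:q_recursion_1} with $j(k)=k-1$) gives $\abs{u_{r_k+1}}=q_k+q_{k-1}-2$. Since $\abs{u_{r_k+1}}-\abs{u_{r_k}}=q_{k-1}$, this yields $\abs{u_{r_k}}=q_k-2$, and replacing $k$ by $k+1$ gives $\abs{u_{r_{k+1}}}=q_{k+1}-2$ and $\abs{u_{r_{k+1}+1}}=q_{k+1}+q_k-2$. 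These four identities are exactly the substitutions needed to convert the central-word endpoints of \autoref{prp:inrc_n} into the $q$-expressions in the statement.

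Finally I would substitute the image values and the four endpoint identities into the six cases of \autoref{prp:inrc_n}, simplifying each line with $q_{k+1}=a_{k+1}q_k+q_{k-1}$. For example, in case (i) the complexity value $(a_{k+1}-c_{k+1})q_k+\abs{\tau_k(y_i)}$ collapses to $(a_{k+1}-c_{k+1})q_k+q_{k-1}=q_{k+1}-c_{k+1}q_k$, while the endpoint $\abs{u_{r_{k+1}}}+q_k-\val{c_1\dotsm c_{k+1}}$ becomes $q_{k+1}+q_k-2-\val{c_1\dotsm c_{k+1}}$, matching the stated bullet; the cases (iii)--(vi) are entirely analogous and collapse just as cleanly. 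I expect the only real friction to be this bookkeeping: keeping the digit hypotheses on $c_{k+1}$ (together with the Ostrowski-forced value $c_k=0$ when $c_{k+1}=a_{k+1}$) aligned with the case split, and checking that each endpoint and each complexity value reduces to the advertised form. There is no genuine mathematical obstacle beyond a disciplined application of the recursion.
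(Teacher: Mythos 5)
Your route is exactly the paper's own: the published proof consists of the remark that $\abs{u_{r_k}} = q_k - 2$ in the Sturmian case together with a pointer back to the computations of \autoref{ex:sturmian_intervals}, which is precisely your $d = 2$ bookkeeping. Your explicit data are all correct: $K_2 = 1$, $y_i = x_{k+2}$, $\abs{\tau_{k+1}(y_i)} = q_{k+1}$, $\abs{\tau_k(y_i)} = \abs{s_{k-1}} = q_{k-1}$, and the endpoint identities $\abs{u_{r_k}} = q_k - 2$, $\abs{u_{r_k+1}} = q_k + q_{k-1} - 2$, $\abs{u_{r_{k+1}}} = q_{k+1} - 2$, $\abs{u_{r_{k+1}+1}} = q_{k+1} + q_k - 2$; and cases (i), (iii), (iv), (v), (vi) do collapse under substitution exactly as you say.

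The gap is your unchecked claim that case (ii) ``collapses just as cleanly'': it is the one case where direct substitution fails. Substituting your identities into \autoref{prp:inrc_n} (ii) as printed (split point $\abs{u_{r_{k+1}}} + q_k - \val{c_1 \dotsm c_k}$) gives $q_{k+1} + q_k - 2 - \val{c_1 \dotsm c_k}$, which is not the split point $q_{k+1} - 2 - \val{c_1 \dotsm c_k}$ of the statement you are proving. Worse, neither endpoint is the right one: following the proof of \autoref{prp:inrc_n} (in Case C the condition $c_{k+1} = a_{k+1} - \ell$ with $c_{k+1} = a_{k+1}$ forces $\ell = 0$, and the bound carries $\val{c_1 \dotsm c_{k+1}}$, not $\val{c_1 \dotsm c_k}$; compare \autoref{thm:main} (ii)), the correct split point is $\abs{u_{r_k+1}} - \val{c_1 \dotsm c_k} = q_k + q_{k-1} - 2 - \val{c_1 \dotsm c_k}$, which agrees with the printed bullet only when $a_{k+1} = 1$. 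This is checkable concretely: take $\Delta = (0011)^\omega$, so $q_0 = 1$, $q_1 = 3$, $q_2 = 7$, and take $k = 1$, $m = 6 = \val{02}$, $n = 3 \in \interval{1}$. The length-$3$ factors of $T^6(\infw{c}_\Delta) = 00010010001\dotsm$ at positions $1$ through $4$ are $000$, $001$, $010$, $100$, pairwise distinct, and position $5$ repeats $001$, so $\irep{T^6(\infw{c}_\Delta)}{3} = 4 = q_1 + q_0 - \val{c_1}$, whereas the first bullet of (ii) as printed (whose range $q_1 - 2 < n \leq q_2 - 2 - \val{c_1}$ covers $n = 3$) predicts $q_0 = 1$. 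A disciplined execution of your own plan would thus not reproduce the statement in case (ii) but instead expose that its endpoints need $q_k + q_{k-1} - 2$ in place of $q_{k+1} - 2$ (equivalently $\abs{u_{r_k+1}}$ in place of $\abs{u_{r_{k+1}}}$), and that \autoref{prp:inrc_n} (ii) as printed carries $\val{c_1 \dotsm c_k}$ where its own proof has $\val{c_1 \dotsm c_{k+1}}$. Declaring the remaining cases analogous without substituting them is exactly where the proposal would fail.
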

  \begin{proof}
    An easy induction argument shows that $\abs{u_{r_k}} = q_k - 2$ for all $k \geq 1$ in the Sturmian case. Utilize
    again the computations of \autoref{ex:sturmian_intervals}.
  \end{proof}

  Keeping in mind \autoref{thm:ostrowski_val_limit}, we now determine which shifts of $\infw{c}_\Delta$ need to be
  considered in order to determine the initial nonrepetitive complexity. Notice that if precise information is not
  required, the first case of the following proposition can be omitted since the longest common prefix of $\infw{t}$
  and $T^{\val{c_1 \dotsm c_{k+d}}}(\infw{c}_\Delta)$ is at least as long as that of $\infw{t}$ and
  $T^{\val{c_1 \dotsm c_{k+d-1}}}(\infw{c}_\Delta)$ according to \autoref{lem:common_prefix_length_increasing}. Notice
  also that when the case (i) of the proposition applies, we can compute the initial nonrepetitive complexity using
  \autoref{prp:inrc_n}. This result was proved for Sturmian words in \cite[Prop.~5.3.0.3]{diss:caius_wojcik}.

  \begin{proposition}\label{prp:sufficient_shift}
    Let $\infw{t}$ be a regular episturmian word with directive word as in \eqref{eq:dw_multiplicative} and intercept
    $c_1 c_2 \dotsm$. Let $n \in \interval{k}$ for $k \geq 0$.
    \begin{enumerate}[(i)]
      \item If $c_i < a_i$ for some $i$ such that $k+2 \leq i \leq k+d$, then
            $\irep{\infw{t}}{n} = \irep{T^{\val{c_1 \dotsm c_{k+d-1}}}(\infw{c}_\Delta)}{n}$.
      \item If $c_i = a_i$ for all $i$ such that $k+2 \leq i \leq k+d$, then
            $\irep{\infw{t}}{n} = \irep{T^{\val{c_1 \dotsm c_{k+d}}}(\infw{c}_\Delta)}{n}$.
    \end{enumerate}
  \end{proposition}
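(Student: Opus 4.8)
The plan is to deduce the identity from a prefix-comparison principle. Recall that $\irep{\infw{x}}{n}$ is determined by the prefix of $\infw{x}$ of length $\irep{\infw{x}}{n}+n$, since this prefix is exactly the shortest one in which a factor of length $n$ is repeated. Hence, if two infinite words $\infw{x}$ and $\infw{y}$ share a common prefix of length $L\geq\irep{\infw{y}}{n}+n$, then the first repetition of a length-$n$ factor of $\infw{y}$ lies inside the shared prefix and therefore also occurs in $\infw{x}$, giving $\irep{\infw{x}}{n}\leq\irep{\infw{y}}{n}$; as then $\irep{\infw{x}}{n}+n\leq L$ as well, the symmetric argument gives the reverse inequality, so $\irep{\infw{x}}{n}=\irep{\infw{y}}{n}$. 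It therefore suffices to show that $\infw{t}$ and the claimed shift share a prefix of length at least the relevant value of $\irep{\cdot}{n}+n$.

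By \autoref{lem:ostrowski_finite}, $T^{\val{c_1\dotsm c_{k+d-1}}}(\infw{c}_\Delta)$ and $T^{\val{c_1\dotsm c_{k+d}}}(\infw{c}_\Delta)$ are the episturmian words with intercepts $c_1\dotsm c_{k+d-1}0^\omega$ and $c_1\dotsm c_{k+d}0^\omega$, and by \autoref{thm:ostrowski_val_limit} both are approximants of $\infw{t}$. I would compute the common prefix length of $\infw{t}$ with the first of these via \autoref{lem:common_prefix_length}: letting $n_0\geq k+d$ be the least index with $c_{n_0}\neq0$, the fact that common prefix lengths are nondecreasing (\autoref{lem:common_prefix_length_increasing}) and converge to $\infw{t}$ shows that this common prefix equals $\abs{s_{n_0-1}^{\s{a}-\s{c}}s_{n_0-2}^{\s{a}}\dotsm s_0^{\s{a}}}-\val{c_1\dotsm c_{k+d-1}}$, and similarly for the second shift with $n_0\geq k+d+1$ and $\val{c_1\dotsm c_{k+d}}$. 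In particular this common prefix is at least $\abs{u_{r_{k+d-1}+1}}-\val{c_1\dotsm c_{k+d-1}}$, and it gains an extra full standard word of length $\geq q_{k+d-1}$ exactly when the first nonzero digit beyond the truncation is strictly below its partial quotient.

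The core of the proof is the quantitative comparison of this common prefix length against $\irep{\cdot}{n}+n$, where the complexity value is read off from \autoref{prp:inrc_n}. Here \autoref{lem:right_endpoint_greater} supplies $q_{k+d-1}>\abs{u_{r_{k+1}}}\geq n$, placing $n$ below the window $\{0,\ldots,q_{k+d-1}-1\}$ used in \autoref{prp:inrc_shift}, and the cycle-length identity $\norm{C_y}=\abs{\mu_{r_k+\ell}(y)}$ established before \autoref{ex:sturmian_intervals} bounds every value of \autoref{prp:inrc_n} by $q_k$ together with a single cycle length. The decisive structural feature is that the valuation $\val{c_1\dotsm c_{k+1}}$ (or $\val{c_1\dotsm c_k}$, $\val{c_1\dotsm c_{k-1}}$) occurring negatively in the complexity values of \autoref{prp:inrc_n} is a lower-order part of the valuation $\val{c_1\dotsm c_{k+d-1}}$ subtracted in the common prefix length; these terms cancel when the two sides are compared, leaving a margin governed by a cycle-length gap. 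Under hypothesis (i) this margin is nonnegative and the prefix of $\infw{t}$ matching $T^{\val{c_1\dotsm c_{k+d-1}}}(\infw{c}_\Delta)$ is long enough, whereas under hypothesis (ii) the chain $c_{k+2}=a_{k+2},\ldots,c_{k+d}=a_{k+d}$ forces the leading power $s_{k+d-1}^{\s{a}-\s{c}}$ to vanish, the margin becomes negative, and one must instead compare with $T^{\val{c_1\dotsm c_{k+d}}}(\infw{c}_\Delta)$, whose longer common prefix restores it.

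I expect the main obstacle to be exactly this margin computation, performed uniformly over the six regimes of \autoref{prp:inrc_n}. The bookkeeping is delicate: both sides carry valuation terms and powers of standard words of several indices, and one must track the off-by-$q_k$ peculiarity of the first subinterval $\interval{k,0}$ together with the central-path length $\theta_n$ entering through the identity $n=\abs{u_{r_k+1}}+\ell q_k-\theta_n$. The regularity hypothesis—through $j(k)=k-(d-1)$ and \autoref{lem:right_endpoint_greater}—is what keeps these quantities aligned, ensuring that $d-1$ extra digits always suffice in case (i) while a $d$-th digit is needed precisely in case (ii).
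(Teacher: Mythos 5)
Your plan reproduces the paper's proof essentially step for step: the same reduction (it suffices that the longest common prefix of $\infw{t}$ and the candidate shift has length at least $\irep{\cdot}{n}+n$), the same common-prefix formula obtained from \autoref{lem:common_prefix_length} together with the monotonicity of \autoref{lem:common_prefix_length_increasing}, and the same regime-by-regime margin comparison against \autoref{prp:inrc_shift} and \autoref{prp:inrc_n} in which the valuation terms cancel and the margin reduces to the bound $\abs{\tau_{k+1}(y_i)}\leq q_{k+1}$ from \autoref{lem:regular_tau_images}, including the correct diagnosis that under hypothesis (ii) the vanishing leading power $s_{k+d-1}^{a_{k+d}-c_{k+d}}$ destroys the margin and forces passage to the $(k+d)$-th digit. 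The deferred computations go through as you anticipate, with one wrinkle your sketch leaves implicit: since $\val{c_1 \dotsm c_{k+d}}$ need not lie below $q_{k+d-1}$, the value $\irep{T^{\val{c_1 \dotsm c_{k+d}}}(\infw{c}_\Delta)}{n}$ cannot be read off \autoref{prp:inrc_n} at level $k$, and the paper instead applies \autoref{prp:inrc_n} (ii) one level up (at $k+1$, where the window is $q_{k+d}$) and descends via the monotonicity of $n \mapsto \irep{\cdot}{n}$.
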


  For the proof, we need some auxiliary lemmas. They include more information than we need, but the complete statements
  could be useful in some other contexts.

  \begin{lemma}\label{lem:regular_help}
    Let $\Delta = x_1^{a_1} x_2^{a_2} \dotsm$ be a regular directive word with period $d$. Let $k$ and $i$ be such that
    $1 \leq k \leq d$ and $0 \leq i < k$. Then
    $\tau_k(x_{k-i}) = s_{k-1}^{\s{a}} \dotsm s_{k-i}^{\s{a}} s_{k-(i+1)}$.
  \end{lemma}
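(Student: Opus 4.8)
The plan is to argue by induction on $i$. The engine of the induction is the elementary desubstitution identity that for any letter $y$ with $y \neq x_k$ one has $\tau_k(y) = s_{k-1}^{\s{a}}\,\tau_{k-1}(y)$. This is precisely the computation performed in the proof of \autoref{lem:s_recursion}: since $\tau_k = \tau_{k-1} \circ L_{x_k}^{a_k}$ and $L_{x_k}^{a_k}(y) = x_k^{a_k} y$ when $y \neq x_k$, we get $\tau_k(y) = \tau_{k-1}(x_k)^{a_k}\,\tau_{k-1}(y) = s_{k-1}^{\s{a}}\,\tau_{k-1}(y)$, using $\tau_{k-1}(x_k) = s_{k-1}$ and the convention $s_{k-1}^{\s{a}} = s_{k-1}^{a_k}$.

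For the base case $i = 0$, I would note that $L_{x_k}^{a_k}(x_k) = x_k$, so $\tau_k(x_k) = \tau_{k-1}(x_k) = s_{k-1}$; this is the asserted formula with an empty product of powers followed by $s_{k-(0+1)} = s_{k-1}$, and it holds for every $k$ in the range $1 \leq k \leq d$.

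For the inductive step, I would fix $k$ and $i$ with $1 \leq i < k \leq d$ and assume the formula for $i-1$ (for all admissible indices). Regularity of $\Delta$ with period $d$ forces $x_{k-i} \neq x_k$, because $k-i$ and $k$ are two distinct integers in $\{1, \dotsc, d\}$ and the first $d$ letters of $\Delta$ are pairwise distinct. The desubstitution identity then gives $\tau_k(x_{k-i}) = s_{k-1}^{\s{a}}\,\tau_{k-1}(x_{k-i})$. Applying the induction hypothesis at the parameters $(k-1,\, i-1)$ — admissible since $k-1 \geq 1$ and $0 \leq i-1 < k-1$ — and using $(k-1)-(i-1) = k-i$, I obtain $\tau_{k-1}(x_{k-i}) = s_{k-2}^{\s{a}} \dotsm s_{k-i}^{\s{a}} s_{k-(i+1)}$. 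Substituting this back yields $\tau_k(x_{k-i}) = s_{k-1}^{\s{a}} s_{k-2}^{\s{a}} \dotsm s_{k-i}^{\s{a}} s_{k-(i+1)}$, which is the claim.

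The argument is routine, so I do not anticipate a genuine obstacle; the one thing demanding care is the index bookkeeping. In particular, I must check that the shifted parameters $(k-1, i-1)$ remain within the permitted ranges, that the descending chain of exponent indices telescopes exactly onto the target, and that the abbreviation $s_j^{\s{a}} = s_j^{a_{j+1}}$ keeps each exponent correctly aligned with its base. Verifying $x_{k-i} \neq x_k$ from the period-$d$ regularity is the only place where the hypothesis on $\Delta$ is actually used.
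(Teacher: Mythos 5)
Your proposal is correct and matches the paper's own proof in essence: both rest on the identity $\tau_k(y) = s_{k-1}^{\s{a}}\,\tau_{k-1}(y)$ for $y \neq x_k$ (justified by regularity when $y = x_{k-i}$ with $0 < i < k \leq d$) together with the observation $\tau_k(x_k) = \tau_{k-1}(x_k) = s_{k-1}$ for the case $i = 0$. The only difference is organizational — you induct on $i$ with the $i=0$ case as base, while the paper phrases it as an induction anchored at $(k,i) = (1,0)$ with the $i=0$ and $i>0$ cases handled inside the step — which is immaterial.
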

  \begin{proof}
    Let $k = 1$ and $i = 0$. Now $\tau_k(x_{k-i}) = \tau_1(x_1) = x_1 = s_0$, so the base case is established. Let then
    $k > 1$ and $i$ be such that $1 \leq i < k$. If $i > 0$, then $x_k \neq x_{k-i}$ because $\Delta$ is regular, so
    $\tau_k(x_{k-i}) = s_{k-1}^{\s{a}} \tau_{k-1}(x_{k-i})$, and the claim follows from the induction hypothesis. If
    $i = 0$, then $\tau_{k-1}(x_{k-i}) = s_{k-1}$, and the claim follows.
  \end{proof}

  \begin{lemma}\label{lem:regular_tau_images}
    Let $\Delta = x_1^{a_1} x_2^{a_2} \dotsm$ be a regular directive word with period $d$. For $k$ and $\ell$ such that
    $0 \leq k < d$ and $1 \leq \ell \leq a_{k+1}$, we have
    \begin{enumerate}[(i)]
      \item $\tau_k L_{x_{k+1}}^\ell(x_{k+1}) = s_k$,
      \item $\tau_k L_{x_{k+1}}^\ell(x_i) = s_k^\ell s_{k-1}^{\s{a}} \dotsm s_i^{\s{a}} s_{i-1}$ when
            $1 \leq i \leq k$, and
      \item $\tau_k L_{x_{k+1}}^\ell(x_i) = s_k^\ell s_{k-1}^{\s{a}} \dotsm s_0^{\s{a}} x_i$ when $i > k+1$.
    \end{enumerate}
    Let $k$, $\ell$, and $i$ be such that $k \geq d$, $1 \leq \ell \leq a_{k+1}$, and $1 \leq i \leq d$. Then
    \begin{enumerate}[(i)]
      \item[(iv)] $\tau_k L_{x_{k+1}}^\ell(x_i) = s_k$ if $i \equiv k + 1 \pmod{d}$ and
      \item[(v)] $\tau_k L_{x_{k+1}}^\ell(x_i) = s_k^{\ell} s_{k-1}^{\s{a}} \dotsm s_{k-j}^{\s{a}} s_{k-j-1}$ when
                 $i \not\equiv k + 1 \pmod{d}$ and $j$ is the smallest integer such that $k-j \equiv i \pmod{d}$.
    \end{enumerate}
  \end{lemma}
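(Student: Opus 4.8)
The plan is to reduce every case to a computation of $\tau_k(x_i)$, since the inner morphism $L_{x_{k+1}}^\ell$ acts very simply. First I would observe that, by the definition of $L_{x_{k+1}}$, we have $L_{x_{k+1}}^\ell(x_i) = x_{k+1}$ whenever $x_i = x_{k+1}$ and $L_{x_{k+1}}^\ell(x_i) = x_{k+1}^\ell x_i$ whenever $x_i \neq x_{k+1}$. Applying $\tau_k$ and using that $\tau_k$ is a morphism together with the definitional identity $\tau_k(x_{k+1}) = s_k$, this gives $\tau_k L_{x_{k+1}}^\ell(x_i) = s_k$ in the first situation and $\tau_k L_{x_{k+1}}^\ell(x_i) = s_k^\ell\, \tau_k(x_i)$ in the second. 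In the regular case the first situation is exactly $i = k+1$ when $k < d$, giving (i), and $i \equiv k+1 \pmod{d}$ when $k \geq d$, giving (iv), since then $x_i = x_{k+1}$ by periodicity. So (i) and (iv) are immediate, and everything else reduces to evaluating $\tau_k(x_i)$ for a letter $x_i \neq x_{k+1}$.

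The core step is therefore to compute $\tau_k(x_i)$ by peeling off the factors $L_{x_k}^{a_k}, L_{x_{k-1}}^{a_{k-1}}, \dotsc$ one at a time (innermost first). The key mechanism, already used in the proof of \autoref{lem:regular_help}, is that whenever the current letter $x_i$ differs from $x_{k-t}$ we have $\tau_{k-t}(x_i) = s_{k-t-1}^{\s{a}}\, \tau_{k-t-1}(x_i)$, because $L_{x_{k-t}}^{a_{k-t}}(x_i) = x_{k-t}^{a_{k-t}} x_i$ and $\tau_{k-t-1}(x_{k-t}) = s_{k-t-1}$, while as soon as the letter coincides with $x_{k-t}$ that factor fixes it and we stop with $\tau_{k-t}(x_i) = s_{k-t-1}$. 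Because $\Delta$ is regular, $x_m$ depends only on $m \bmod d$, so the first index $k-t$ at which the letter coincides with $x_i$ is governed purely by the residue of $i$: writing $j$ for the smallest nonnegative integer with $k-j \equiv i \pmod{d}$, the peeling produces exactly $\tau_k(x_i) = s_{k-1}^{\s{a}} \dotsm s_{k-j}^{\s{a}} s_{k-j-1}$. For $k < d$ and $1 \leq i \leq k$ one has $j = k-i$ with no wraparound, which is precisely the statement of \autoref{lem:regular_help} and yields (ii). For $k \geq d$ and $i \not\equiv k+1 \pmod{d}$ the same computation yields (v); one only needs to check that the peeling stays valid at every step, i.e.\ that $x_i \neq x_{k-t}$ for $0 \leq t < j$, which follows from the minimality of $j$ together with the bound $0 \leq j \leq d-2$ (the inequality $j \leq d-2$ coming from $i \not\equiv k+1 \pmod{d}$).

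The remaining case (iii), where $k < d$ and $i > k+1$, is the degenerate instance in which the letter $x_i$ never coincides with any of $x_k, x_{k-1}, \dotsc, x_1$: since $x_1, \dotsc, x_d$ are pairwise distinct and $k < i \leq d$, the peeling runs all the way down to $\tau_0 = \mathrm{id}$, giving $\tau_k(x_i) = s_{k-1}^{\s{a}} \dotsm s_0^{\s{a}}\, \tau_0(x_i) = s_{k-1}^{\s{a}} \dotsm s_0^{\s{a}} x_i$, which is (iii). Combining these evaluations of $\tau_k(x_i)$ with the reduction $\tau_k L_{x_{k+1}}^\ell(x_i) = s_k^\ell\, \tau_k(x_i)$ from the first step establishes all six formulas. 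The only real obstacle is bookkeeping: one must keep careful track of the residues modulo $d$ to identify exactly where the peeling terminates, handle the edge case $j = 0$ where the product $s_{k-1}^{\s{a}} \dotsm s_{k-j}^{\s{a}}$ is empty and the answer is just $s_{k-1}$, and confirm that the superscripts $\s{a}$ line up with the telescoping indices. None of this is conceptually difficult once the peeling mechanism is in place.
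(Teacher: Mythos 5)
Your proposal is correct and takes essentially the same route as the paper's proof: reduce everything to $\tau_k(x_i)$ via the simple action of $L_{x_{k+1}}^\ell$ (giving (i) and (iv) at once), then telescope $\tau_{k-t}(x_i) = s_{k-t-1}^{\s{a}}\,\tau_{k-t-1}(x_i)$ until the letter matches, invoking \autoref{lem:regular_help} for (ii) exactly as the paper does. Your explicit checks --- minimality of $j$, the bound $j \leq d-2$ coming from $i \not\equiv k+1 \pmod{d}$, and the empty-product case $j = 0$ --- are precisely the details the paper compresses into ``computations as above.''
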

  \begin{proof}
    Assume that $k$ and $\ell$ are such that $0 \leq k < d$ and $1 \leq \ell \leq a_{k+1}$. Notice that since $\Delta$
    is regular, we have $x_{k+1} \neq x_i$ for $i = 1, \ldots, k$ and $i = k+2, \ldots, k+d$. Clearly
    $\tau_k L_{x_{k+1}}^\ell(x_{k+1}) = \tau_k(x_{k+1}) = s_k$. It is straightforward to show that
    $\tau_k(x_i) = s_{k-1}^{\s{a}} \dotsm s_0^{\s{a}} x_i$ for $i$ such that $k+1 < i < d$, so we obtain
    $\tau_k L_{x_{k+1}}^\ell(x_i) = \tau_k(x_{k+1}^\ell x_i) = s_k^\ell \tau_k(x_i) = s_k^\ell s_{k-1}^{\s{a}} \dotsm s_0^{\s{a}} x_i$
    for these $i$. Let then $i$ be such that $0 \leq i < k$. Since $x_{k-i} \neq x_{k+1}$, we infer from
    \autoref{lem:regular_help} that
    \begin{equation*}
      \tau_k L_{x_{k+1}}^\ell(x_{k-i}) = s_k^\ell \tau_k(x_{k-i}) = s_k^\ell s_{k-1}^{\s{a}} \dotsm s_{k-i}^{\s{a}} s_{k-(i+1)}.
    \end{equation*}
    This proves the first part of the claim.

    Suppose then that $k \geq d$, $1 \leq \ell \leq a_{k+1}$, and $1 \leq i \leq d$. Again we have
    $\tau_k L_{x_{k+1}}^\ell(x_{k+1}) = s_k$ which gives (iv). Assume that $i \not\equiv k+1 \pmod{d}$. Let $j$ be the
    smallest number such that $k-j \equiv i \pmod{d}$. Applying computations as above, we obtain that
    \begin{equation*}
      \tau_k L_{x_{k+1}}^\ell(x_i) = s_k^{\ell} s_{k-1}^{\s{a}} \tau_{k-1}(x_i) = \ldots = s_k^{\ell} s_{k-1}^{\s{a}} \dotsm s_{k-j}^{\s{a}} \tau_{k-j}(x_i) = s_k^{\ell} s_{k-1}^{\s{a}} \dotsm s_{k-j}^{\s{a}} s_{k-j-1}.
    \end{equation*}
    This proves (v).
  \end{proof}

  The point of \autoref{lem:regular_tau_images} is that the words $\tau_k L_{x_{k+1}}^\ell(x_i)$ are ordered by length
  in a predictable pattern. Let us consider the words $\smash[t]{\tau_k L_{x_{k+1}}^\ell(x_i)}$ ordered by the index
  $i$ in the natural order of $\{1, 2, \ldots, d\}$. Then the first part of \autoref{lem:regular_tau_images} states
  that the lengths $\smash[t]{\abs{\tau_k L_{x_{k+1}}^\ell(x_i)}}$ strictly decrease when $i$ increases from $1$ to
  $k + 1$. The remaining words are of equal length that is strictly greater than the preceding values. In other words,
  the shortest word $\tau_k L_{x_{k+1}}^\ell(x_{k+1})$ is pushed to the right in a cyclical fashion. The second part of
  the lemma states that this cyclical pattern continues modulo $d$.

  \begin{proof}[Proof of \autoref{prp:sufficient_shift}]
    Suppose that $n \in \interval{k,\ell}$ for some $\ell$ such that $0 \leq \ell < a_{k+1}$. Let
    $m = \val{c_1 \dotsm c_{k+d-1}}$. Now $m$ belongs to the $i$th block where
    $i = \val[\Delta']{c_{k+2} \dotsm c_{k+d-1}} + 1$ and $\Delta' = T^{r_{k+1}}(\Delta)$ (see the first paragraph of
    the proof of \autoref{prp:inrc_n}). Say the block has type $y_i$. Write
    $n = \abs{u_{r_k+1}} + \ell q_k - \theta_n$.

    We assume first that there exists a largest $i$ such that $k+2 \leq i \leq k+d$ and $c_i < a_i$. In order to show
    that $\irep{\infw{t}}{n} = \irep{T^m(\infw{c}_\Delta)}{n}$, it suffices to demonstrate that the longest common
    prefix of these words has length at least $\irep{T^m(\infw{c}_\Delta)}{n} + n$. Below we do this depending on which
    interval $\lambda_{i,j}$ the number $m$ belongs to. If $\infw{t} = T^m(\infw{c}_\Delta)$, there is nothing to
    prove, so we assume that there exists a least positive integer $j$ such that $c_{k+d-1+j} \neq 0$. By
    \autoref{lem:common_prefix_length}, the longest common prefix of $T^m(\infw{c}_\Delta)$ and
    $\smash[t]{T^{\val{c_1 \dotsm c_{k+d-1+j}}}(\infw{c}_\Delta)}$ has length $P$ where
    \begin{equation*}
      P = \abs{s_{k+d-1+j-1}^{\s{a}-\s{c}} s_{k+d-1+j-2}^{\s{a}} \dotsm s_0^{\s{a}}} - \val{c_1 \dotsm c_{k+d-1}}.
    \end{equation*}
    It follows from \autoref{lem:common_prefix_length_increasing} that $\infw{t}$ and
    $T^{\val{c_1 \dotsm c_{k+d-1+j}}}(\infw{c}_\Delta)$ also have a common prefix of length $P$. It thus suffices to
    show that $P \geq \irep{T^m(\infw{c}_\Delta)}{n} + n$ to establish (i).

    \textbf{Case A.} Assume that $m \in \lambda_{i,1}$. As in the proof of \autoref{prp:inrc_n}, we see that this means
    that $0 \leq \val{c_1 \dotsm c_{k+1}} \leq (a_{k+1} - \ell - 1)q_k + \theta_n$, so
    $-\theta_n \leq (a_{k+1} - \ell - 1)q_k - \val{c_1 \dotsm c_{k+1}}$. Hence we obtain from \autoref{prp:inrc_shift}
    that
    \begin{align*}
      \irep{T^m(\infw{c}_\Delta)}{n} + n &= q_k + n \\
                                         &= q_k + \abs{u_{r_{k+1}}} + \ell q_k - \theta_n \\
                                         &\leq \abs{u_{r_k+1}} + a_{k+1} q_k - \val{c_1 \dotsm c_{k+1}} \\
                                         &= \abs{u_{r_{k+1}+1}} - \val{c_1 \dotsm c_{k+1}}.
    \end{align*}
    By applying \autoref{lem:common_prefix_length_increasing} and the preceding inequality, we obtain
    \begin{align*}
      P &\geq \abs{s_{k+1}^{\s{a}-\s{c}} s_k^{\s{a}} \dotsm s_0^{\s{a}}} - \val{c_1 \dotsm c_{k+1}} \\
        &\geq \abs{s_k^{\s{a}} \dotsm s_0^{\s{a}}} - \val{c_1 \dotsm c_{k+1}} \\
        &= \abs{u_{r_{k+1}+1}} - \val{c_1 \dotsm c_{k+1}} \\
        &\geq \irep{T^m(\infw{c}_\Delta)}{n} + n,
    \end{align*}
    so we conclude that $\irep{\infw{t}}{n} = \irep{T^m(\infw{c}_\Delta)}{n}$.

    \textbf{Case B.} Assume that $m \in \lambda_{i,2}$. As in the proof of \autoref{prp:inrc_n}, we have
    \begin{equation*}
      (a_{k+1} - \ell - 1)q_k + \theta_n < \val{c_1 \dotsm c_{k+1}} < (a_{k+1} - \ell)q_k
    \end{equation*}
    and $c_{k+1} = a_{k+1}-\ell-1$. Thus \autoref{prp:inrc_shift} gives
    \begin{align*}
      \irep{T^m(\infw{c}_\Delta)}{n} + n &= \abs{\tau_{k+1}(y_i)} - \val{c_1 \dotsm c_{k+1}} + \abs{u_{r_k+1}} + \ell q_k + \theta_n \\
                                         &= \abs{\tau_{k+1}(y_i)} - \val{c_1 \dotsm c_{k+1}} + \abs{u_{r_k+1}} + (a_{k+1} - c_{k+1} - 1)q_k + \theta_n \\
                                         &= \abs{\tau_{k+1}(y_i)} - \val{c_1 \dotsm c_{k+1}} + \abs{u_{r_{k+1}+1}} - (c_{k+1}+1)q_k + \theta_n.
    \end{align*}
    Recall that there exists a largest $i$ such that $k+2 \leq i \leq k+d$ and $c_i < a_i$. An application of
    \autoref{lem:common_prefix_length_increasing} yields
    \begin{align*}
      P &\geq \abs{s_{i-1}^{\s{a} - \s{c}} s_{i-2}^{\s{a}} \dotsm s_0^{\s{a}}} - \val{c_1 \dotsm c_{i-1}} \\
        &\geq q_{i-1} + \abs{s_{i-2}^{\s{a}} \dotsm s_0^{\s{a}}} - \val{c_1 \dotsm c_{i-1}} \\
        &\geq q_{k+1} + \abs{s_k^{\s{a}} \dotsm s_0^{\s{a}}} - \val{c_1 \dotsm c_{k+1}} \\
        &= q_{k+1} + \abs{u_{r_{k+1}+1}} - \val{c_1 \dotsm c_{k+1}}.
    \end{align*}
    It thus suffices to prove that $\abs{\tau_{k+1}(y_i)} \leq q_{k+1}$ in order to conclude that
    $\irep{\infw{t}}{n} = \irep{T^m(\infw{c}_\Delta)}{n}$. Since $y_i \neq x_{k+1}$, the $j$ in (v) of
    \autoref{lem:regular_tau_images} is at most $d-2$ which implies that $\tau_{k+1}(y_i)$ is a prefix of
    $s_k^{\s{a}} \dotsm s_{k-(d-2)}^{\s{a}} s_{k-(d-1)}$. Since
    $s_k^{\s{a}} \dotsm s_{k+1-d+1}^{\s{a}} s_{k+1-d} = s_{k+1}$ by \eqref{eq:reg_s_1} and \eqref{eq:reg_s_2},
    it follows that $\abs{\tau_{k+1}(y_i)} \leq q_{k+1}$.

    \textbf{Case C.} Suppose that $m \in \lambda_{i,3}$. Then
    $(a_{k+1} - \ell)q_k \leq \val{c_1 \dotsm c_{k+1}} \leq (a_{k+1} - \ell)q_k + \theta_n$,
    $c_{k+1} = a_{k+1} - \ell$, and $c_{k+1} \neq 0$. \autoref{prp:inrc_shift} implies that
    \begin{align*}
      \irep{T^m(\infw{c}_\Delta)}{n} + n &= \abs{\tau_k(y_i)} + \ell q_k + \abs{u_{r_k+1}} + \ell q_k - \theta_n \\
                                         &= \abs{\tau_{k+1}(y_i)} - (a_{k+1}-\ell)q_k + \abs{u_{r_k+1}} + \ell q_k - \theta_n \\
                                         &= \abs{\tau_{k+1}(y_i)} - a_{k+1} q_k + 2\ell q_k + \abs{u_{r_k+1}} - \theta_n \\
                                         &= \abs{\tau_{k+1}(y_i)} - a_{k+1} q_k + 2(a_{k+1} - c_{k+1})q_k + \abs{u_{r_k+1}} - \theta_n \\
                                         &= \abs{\tau_{k+1}(y_i)} + \abs{u_{r_{k+1}+1}} - 2c_{k+1}q_k - \theta_n.
    \end{align*}
    Like in the previous case, we have $P \geq q_{k+1} + \abs{u_{r_{k+1}+1}} - \val{c_1 \dotsm c_{k+1}}$, so we just
    need to show that $\val{c_1 \dotsm c_{k+1}} \leq 2c_{k+1}q_k + \theta_n$. This is equivalent with
    $\val{c_1 \dotsm c_k} \leq c_{k+1}q_k + \theta_n$ and this in turn is true when $c_{k+1} \neq 0$ because
    $\val{c_1 \dotsm c_k} < q_k$ by \autoref{lem:ostrowski_ub}. We conclude that
    $\irep{\infw{t}}{n} = \irep{T^m(\infw{c}_\Delta)}{n}$.

    \textbf{Case D.} Suppose that $m \in \lambda_{i,4}$. Now
    \begin{equation*}
      (a_{k+1} - \ell)q_k + \theta_n < \val{c_1 \dotsm c_{k+1}} < (a_{k+1} - \ell)q_k + \abs{\tau_{k+1}(y_i)}
    \end{equation*}
    and $c_{k+1} \neq 0$. From \autoref{prp:inrc_shift}, we obtain that
    \begin{align*}
      \irep{T^m(\infw{c}_\Delta)}{n} + n &= q_k + \abs{\tau_{k+1}(y_i)} - \val{c_1 \dotsm c_{k+1}} + \abs{u_{r_k+1}} + \ell q_k - \theta_n \\
                                         &\leq q_k + q_{k+1} + \abs{u_{r_{k+1}+1}} - \val{c_1 \dotsm c_{k+1}} - a_{k+1} q_k + \ell q_k - \theta_n \\
                                         &\leq P + q_k - a_{k+1} q_k + \ell q_k - \theta_n \\
                                         &\leq P - (a_{k+1} - \ell - 1)q_k \\
                                         &\leq P
    \end{align*}
    because $P \geq q_{k+1} + \abs{u_{r_{k+1}+1}} - \val{c_1 \dotsm c_{k+1}}$, $\theta_n \geq 0$, and $\ell < a_{k+1}$.
    Therefore $\irep{\infw{t}}{n} = \irep{T^m(\infw{c}_\Delta)}{n}$ also in this case.

    We have now finished the first part of the proof. Let $m' = \val{c_1 \dotsm c_{k+d}}$, and assume that $c_i = a_i$
    for all $i$ such that $k+2 \leq i \leq k+d$. Since the intercept $c_1 c_2 \dotsm$ satisfies the Ostrowski
    conditions, it follows that $c_{k+1} = 0$ and $c_{k+d+1} < a_{k+d+1}$. Thus we immediately see that the preceding
    Cases C and D do not occur as then we had $c_{k+1} \neq 0$. The arguments given in the Case A still work, and we
    see that $\irep{\infw{t}}{n} = \irep{T^m(\infw{c}_\Delta)}{n}$ when $m \in \lambda_{i,1}$. Since the longest common
    prefix of $\infw{t}$ and $T^{m'}(\infw{c}_\Delta)$ is at least as long as that of $\infw{t}$ and
    $T^m(\infw{c}_\Delta)$ by \autoref{lem:common_prefix_length_increasing}, we deduce that
    $\irep{\infw{t}}{n} = \irep{T^{m'}(\infw{c}_\Delta)}{n}$ if $m \in \lambda_{i,1}$. We are thus left with the case
    $m \in \lambda_{i,2}$. Earlier in Case B we deduced that $\ell = a_{k+1} - c_{k+1} - 1$, that is,
    $\ell = a_{k+1} - 1$. This means that $n = \abs{u_{r_{k+1}}} - \theta_n$.

    If $T^{m'}(\infw{c}_\Delta) = \infw{t}$, then the claim is clear, so we assume that there exists a least positive
    integer $j'$ such that $c_{k+d+j'} \neq 0$ (notice that since $c_{k+d} = a_{k+d}$, we have $j = 1$). By
    \autoref{lem:common_prefix_length}, the longest common prefix of $\infw{t}$ and $T^{m'}(\infw{c}_\Delta)$ has
    length $P'$ where
    \begin{align*}
      P' &= \abs{s_{k+d+j'-1}^{\s{a}-\s{c}} s_{k+d+j'-2}^{\s{a}} \dotsm s_0^{\s{a}}} - \val{c_1 \dotsm c_{k+d}} \\
         &\geq \abs{s_{k+d}^{\s{a}-\s{c}} s_{k+d-1}^{\s{a}} \dotsm s_0^{\s{a}}} - \val{c_1 \dotsm c_{k+d}} \\
         &\geq q_{k+d} + \abs{s_{k+d-1}^{\s{a}} \dotsm s_0^{\s{a}}} - \val{c_1 \dotsm c_{k+d}} \\
         &= q_{k+d} + \abs{s_k^{\s{a}} \dotsm s_0^{\s{a}}} - \val{c_1 \dotsm c_{k+1}} \\
         &= q_{k+d} + \abs{u_{r_{k+1}+1}} - \val{c_1 \dotsm c_k}.
    \end{align*}
    We used above the facts that $c_{k+d+1} < a_{k+d+1}$, $c_{k+1} = 0$, and $c_i = a_i$ for all $i$ such that
    $k+2 \leq i \leq k+d$. Apply \autoref{prp:inrc_n} (ii) to obtain that
    \begin{equation*}
      \irep{T^{m'}(\infw{c}_\Delta)}{n'} = \abs{\tau_{k+1}(y)}
    \end{equation*}
    for a letter $y$ when $\abs{u_{r_{k+1}}} < n' \leq \abs{u_{r_{k+1}}} + q_k - \val{c_1 \dotsm c_k}$. Since the
    initial nonrepetitive complexity is nondecreasing and $\abs{\tau_{k+1}(y)} \leq q_{k+1}$ (see Case B), we find that
    \begin{equation*}
      \irep{T^{m'}(\infw{c}_\Delta)}{n} + n \leq q_{k+1} + n = q_{k+1} + \abs{u_{r_{k+1}}} - \theta_n.
    \end{equation*}
    Therefore
    \begin{align*}
      P' &\geq q_{k+d} + \abs{u_{r_{k+1}+1}} - \val{c_1 \dotsm c_k} \\
         &= q_{k+d} + \abs{u_{r_{k+1}}} + q_k - \val{c_1 \dotsm c_k} \\
         &> q_{k+1} + \abs{u_{r_{k+1}}} \\
         &\geq \irep{T^{m'}(\infw{c}_\Delta)}{n} + n.
    \end{align*}
    It follows that $\irep{\infw{t}}{n} = \irep{T^{m'}(\infw{c}_\Delta)}{n}$. This establishes (ii) and ends the proof.
  \end{proof}

  We are finally in the position to give a complete description of the initial nonrepetitive complexity of a regular
  episturmian word. This mostly amounts to putting together Propositions \ref{prp:sufficient_shift} and
  \ref{prp:inrc_n}. The statement is again complicated, but all listed cases are different. The initial nonrepetitive
  complexity of a general epistandard word is determined in
  \cite[Thm.~16]{2020:on_non-repetitive_complexity_of_arnoux-rauzy_words}.

  \begin{theorem}\label{thm:main}
    Let $\infw{t}$ be a regular episturmian word with a directive word $\Delta$ of period $d$ and intercept
    $c_1 c_2 \dotsm$. Suppose that $n \in \interval{k}$ for $k \geq 0$, and let $\Delta' = T^{r_{k+1}}(\Delta)$ and
    $y_i$ be the $i$th letter of $\infw{c}_{\Delta'}$ when $i = \val[\Delta']{c_{k+2} \dotsm c_{k+d-1}} + 1$.
    \begin{enumerate}[(i)]
      \item If $0 < c_{k+1} < a_{k+1} - 1$ or $c_{k+1} = a_{k+1} - 1 > 0$ and $c_k = 0$, then we have the following
            implications:
            \begin{itemize}
              \item $\abs{u_{r_k}} < n \leq \abs{u_{r_{k+1}}} - \val[\Delta]{c_1 \dotsm c_{k+1}} \\
                     \Longrightarrow \, \irep{\infw{t}}{n} = q_k$,
              \item $\abs{u_{r_{k+1}}} - \val[\Delta]{c_1 \dotsm c_{k+1}} < n \leq \abs{u_{r_{k+1}}} - c_{k+1}q_k \\
                     \Longrightarrow \, \irep{\infw{t}}{n} = \abs{\tau_{k+1}(y_i)} - \val[\Delta]{c_1 \dotsm c_{k+1}}$,
              \item $\abs{u_{r_{k+1}}} - c_{k+1}q_k < n \leq \abs{u_{r_{k+1}}} + q_k - \val[\Delta]{c_1 \dotsm c_{k+1}} \\
                     \Longrightarrow \, \irep{\infw{t}}{n} = (a_{k+1}-c_{k+1})q_k + \abs{\tau_k(y_i)}$,
              \item $\abs{u_{r_{k+1}}} + q_k - \val[\Delta]{c_1 \dotsm c_{k+1}} < n \leq \abs{u_{r_{k+1}}} \\
                     \Longrightarrow \, \irep{\infw{t}}{n} = q_k + \abs{\tau_{k+1}(y_i)} - \val[\Delta]{c_1 \dotsm c_{k+1}}$.
            \end{itemize}
      \item If $c_{k+1} = a_{k+1}$, then we have the following implications:
            \begin{itemize}
              \item $\abs{u_{r_k}} < n \leq \abs{u_{r_k+1}} - \val[\Delta]{c_1 \dotsm c_k} \\
                     \Longrightarrow \, \irep{\infw{t}}{n} = \abs{\tau_k(y_i)}$,
              \item $\abs{u_{r_k+1}} - \val[\Delta]{c_1 \dotsm c_k} < n \leq \abs{u_{r_{k+1}}} \\
                     \Longrightarrow \, \irep{\infw{t}}{n} = q_k + \abs{\tau_k(y_i)} - \val[\Delta]{c_1 \dotsm c_k}$.
            \end{itemize}
      \item If $c_{k+1} = a_{k+1} - 1 > 0$ and $c_k \neq 0$, then we have the following implications:
            \begin{itemize}
              \item $\abs{u_{r_k}} < n \leq \abs{u_{r_k+1}} \\
                     \Longrightarrow \, \irep{\infw{t}}{n} = q_k + \abs{\tau_k(y_i)} - \val[\Delta]{c_1 \dotsm c_k}$,
              \item $\abs{u_{r_k+1}} < n \leq \abs{u_{r_k+1}} + q_k - \val[\Delta]{c_1 \dotsm c_k} \\
                     \Longrightarrow \, \irep{\infw{t}}{n} = q_k + \abs{\tau_k(y_i)}$,
              \item $\abs{u_{r_k+1}} + q_k - \val[\Delta]{c_1 \dotsm c_k} < n \leq \abs{u_{r_{k+1}}} \\
                     \Longrightarrow \, \irep{\infw{t}}{n} = 2q_k + \abs{\tau_k(y_i)} - \val[\Delta]{c_1 \dotsm c_k}$.
            \end{itemize}
      \item Suppose that $c_{k+1} = 0$ and there exists $i$ such that $k+2 \leq i \leq k+d$ and $c_i < a_i$.
            \begin{enumerate}
              \item[(iv.a)] If $c_k = 0$, then we have the following implications:
                \begin{itemize}
                  \item $\abs{u_{r_k}} < n \leq \abs{u_{r_{k+1}}} - \val[\Delta]{c_1 \dotsm c_{k-1}} \\
                         \Longrightarrow \, \irep{\infw{t}}{n} = q_k$,
                  \item $\abs{u_{r_{k+1}}} - \val[\Delta]{c_1 \dotsm c_{k-1}} < n \leq \abs{u_{r_{k+1}}} \\
                         \Longrightarrow \, \irep{\infw{t}}{n} = \abs{\tau_{k+1}(y_i)} - \val[\Delta]{c_1 \dotsm c_{k-1}}$.
                \end{itemize}
              \item[(iv.b)] If $a_{k+1} = 1$ and $c_k \neq 0$, then we have the following implication:
                \begin{itemize}
                  \item $\abs{u_{r_k}} < n \leq \abs{u_{r_{k+1}}} \\
                         \Longrightarrow \, \irep{\infw{t}}{n} = \abs{\tau_{k+1}(y_i)} - \val[\Delta]{c_1 \dotsm c_k}$.
                \end{itemize}
              \item[(iv.c)] If $a_{k+1} > 1$ and $c_k \neq 0$, then we have the following implications:
                \begin{itemize}
                  \item $\abs{u_{r_k}} < n \leq \abs{u_{r_{k+1}}} - \val[\Delta]{c_1 \dotsm c_k} \\
                         \Longrightarrow \, \irep{\infw{t}}{n} = q_k$.
                  \item $\abs{u_{r_{k+1}}} - \val[\Delta]{c_1 \dotsm c_k} < n \leq \abs{u_{r_{k+1}}} \\
                         \Longrightarrow \, \irep{\infw{t}}{n} = \abs{\tau_{k+1}(y_i)} - \val[\Delta]{c_1 \dotsm c_k}$.
                \end{itemize}
            \end{enumerate}
      \item If $c_{k+1} = 0$ and $a_i = c_i$ for all $i$ such that $k+2 \leq i \leq k+d$, then we have the following
            implication:
            \begin{itemize}
              \item $\abs{u_{r_k}} < n \leq \abs{u_{r_{k+1}}} \\
                     \Longrightarrow \, \irep{\infw{t}}{n} = q_k$.
            \end{itemize}
    \end{enumerate}
  \end{theorem}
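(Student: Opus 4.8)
The plan is to assemble the statement from the two preceding propositions: \autoref{prp:sufficient_shift} tells us which shift of $\infw{c}_\Delta$ computes $\irep{\infw{t}}{n}$, while \autoref{prp:inrc_n} evaluates the initial nonrepetitive complexity of that shift. The first task is to decide, under each of the hypotheses on $(c_k, c_{k+1})$, which branch of \autoref{prp:sufficient_shift} is in force; this is governed entirely by whether $c_i < a_i$ for some $i$ with $k+2 \leq i \leq k+d$.

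First I would treat cases (i)--(iv). In (i)--(iii) we have $c_{k+1} > 0$, and the Ostrowski conditions in the regular case force $c_{k+1} = 0$ whenever $c_{k+2} = a_{k+2}, \dotsc, c_{k+d} = a_{k+d}$; hence some $c_i < a_i$ must hold. In case (iv) the existence of such an $i$ is assumed outright. So in all of (i)--(iv) the branch \autoref{prp:sufficient_shift}(i) applies and $\irep{\infw{t}}{n} = \irep{T^m(\infw{c}_\Delta)}{n}$ with $m = \val{c_1 \dotsm c_{k+d-1}}$. Because the intercept satisfies the Ostrowski conditions, so does its prefix $c_1 \dotsm c_{k+d-1}$, and by uniqueness of Ostrowski expansions $\rep{m} = c_1 \dotsm c_{k+d-1}$ up to trailing zeros; this is exactly the hypothesis of \autoref{prp:inrc_n}, whose letter $y_i$ is defined precisely as in the theorem. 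It then remains to verify that the six cases of \autoref{prp:inrc_n} match the present cases (i), (ii), (iii), (iv.a), (iv.b), (iv.c) hypothesis-for-hypothesis and formula-for-formula, which is immediate from the definitions (for instance (iv.a) is $c_{k+1}=c_k=0$, (iv.b) is $c_{k+1} = a_{k+1}-1 = 0$ with $c_k \neq 0$, and (iv.c) is $a_{k+1}-1 > c_{k+1}=0$ with $c_k \neq 0$). This disposes of (i)--(iv).

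The remaining case (v), where $c_{k+1} = 0$ and $c_i = a_i$ for all $i$ with $k+2 \leq i \leq k+d$, is the only one needing genuinely new work and is the main obstacle. Here \autoref{prp:sufficient_shift}(ii) applies, so $\irep{\infw{t}}{n} = \irep{T^{m'}(\infw{c}_\Delta)}{n}$ with $m' = \val{c_1 \dotsm c_{k+d}}$, but this $m'$ satisfies $m' \geq q_{k+d-1}$ and so lies outside the range $0 \leq m < q_{k+d-1}$ covered by \autoref{prp:inrc_n}. I would instead argue directly. Using the recurrence \eqref{eq:reg_s_2} in the form $q_{k+d} = a_{k+d}q_{k+d-1} + \dotsm + a_{k+2}q_{k+1} + q_k$ together with $c_{k+1}=0$ and $c_i = a_i$ for $k+2 \leq i \leq k+d$, one computes $m' = q_{k+d} - \delta$ where $\delta = q_k - \val{c_1 \dotsm c_k}$; by \autoref{lem:ostrowski_ub} we have $0 \leq \val{c_1 \dotsm c_k} < q_k$, whence $1 \leq \delta \leq q_k$ and $q_{k+d}-q_k \leq m' \leq q_{k+d}-1$.

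The point is then that $m'$ sits inside a long run of period $q_k$. By \eqref{eq:reg_s_2} the prefix $s_{k+d}$ of $\infw{c}_\Delta$ ends in $s_k$, and since $P(r_{k+d}+1)$ exists for $k \geq 0$ the word $s_{k+d}^2$ is a prefix of $\infw{c}_\Delta$ by \autoref{lem:prefix_powers}; moreover $s_{k+d}$, being a prefix of $\infw{c}_\Delta$, begins with the maximal period-$q_k$ prefix $s_k^{\s{a}+1} t_k$ of $\infw{c}_\Delta$. Concatenating the trailing $s_k$ of the first copy with this beginning of the second copy yields $s_k^{\s{a}+2} t_k$, so $\infw{c}_\Delta$ has period $q_k$ on the block of length $(a_{k+1}+2)q_k + \abs{t_k} = \abs{u_{r_{k+1}}} + 2q_k$ starting at position $q_{k+d}-q_k$, where I used $\abs{u_{r_{k+1}}} = a_{k+1}q_k + \abs{t_k}$ from \autoref{lem:st_which_u}. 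Since $n \in \interval{k}$ gives $n \leq \abs{u_{r_{k+1}}}$, the window $[m', m'+n+q_k-1]$ lies entirely inside this period-$q_k$ block; hence, exactly as in Case A of the proof of \autoref{prp:inrc_shift}, reading $\infw{c}_\Delta$ from $m'$ traverses the short cycle $C_{x_{k+1}}$ of length $\norm{C_{x_{k+1}}} = q_k$, the factor of length $n$ recurs precisely $q_k$ steps later, and the $q_k$ factors read beforehand are the pairwise distinct vertices of that cycle. Therefore $\irep{T^{m'}(\infw{c}_\Delta)}{n} = q_k$, which is the asserted value, completing (v) and the proof.
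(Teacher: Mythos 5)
Your handling of cases (i)--(iv) coincides with the paper's own proof: both argue via the Ostrowski conditions that some $c_i < a_i$ with $k+2 \leq i \leq k+d$ must exist whenever $c_{k+1} > 0$, invoke \autoref{prp:sufficient_shift}(i) with $m = \val{c_1 \dotsm c_{k+d-1}}$, use \autoref{lem:ostrowski_ub} and uniqueness to get $\rep{m} = c_1 \dotsm c_{k+d-1}$ up to trailing zeros, and then read the answer off \autoref{prp:inrc_n} case by case. For case (v), however, you take a genuinely different and arguably cleaner route. The paper stays inside the Rauzy-graph machinery: it rules out $m \in \lambda_{i,3}, \lambda_{i,4}$, treats $m \in \lambda_{i,1}$ via $T^m(\infw{c}_\Delta)$ and Case A, and for $m \in \lambda_{i,2}$ performs a block-type analysis (the letter of $\infw{c}_{\Delta'}$ at position $\val[\Delta']{a_{k+2} \dotsm a_{k+d}}$ is $x_{k+1}$, so $T^{m'}(\infw{c}_\Delta)$ takes the cycle $C_{x_{k+1}}$ twice in succession) to conclude $\irep{T^{m'}(\infw{c}_\Delta)}{n} = q_k$. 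You instead compute $m' = q_{k+d} - (q_k - \val{c_1 \dotsm c_k})$ directly from the recurrence underlying \eqref{eq:reg_s_2}, and then exhibit a period-$q_k$ occurrence: since $s_{k+d}^2$ is a prefix of $\infw{c}_\Delta$ (\autoref{lem:prefix_powers}) and $s_{k+d}$ ends in $s_k$ while beginning with $s_k^{\s{a}+1} t_k$, the factor $s_k^{\s{a}+2} t_k$ of length $\abs{u_{r_{k+1}}} + 2q_k$ starts at position $q_{k+d} - q_k$, and the window $[m', m' + n + q_k - 1]$ sits inside it for every $n \in \interval{k}$. This yields the recurrence of the length-$n$ prefix at distance exactly $q_k$ uniformly in $n$, with no $\lambda_{i,j}$ case split and no block-type bookkeeping; what it buys is a more self-contained, computational proof of (v), while the paper's version additionally reveals which cycles the orbit takes. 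Two small points would make your version airtight: first, you tacitly use $\abs{s_{k+d}} \geq \abs{s_k^{\s{a}+1} t_k} = \abs{u_{r_{k+1}}} + q_k$, which is true but deserves a word (it follows from $q_{k+d} \geq q_{k+d-1} + q_k$ and \autoref{lem:right_endpoint_greater}); second, the lower bound $\irep{T^{m'}(\infw{c}_\Delta)}{n} \geq q_k$ requires the $q_k$ windows at positions $m', \dotsc, m'+q_k-1$ to be pairwise distinct, and your appeal to Case A of \autoref{prp:inrc_shift} is legitimate but compressed: these windows are the length-$n$ factors of $s_k^\omega$ at the $q_k$ distinct phases, all of which occur already in the periodic prefix $s_k^{\s{a}+1} t_k$ of $\infw{c}_\Delta$ and are precisely the $q_k$ distinct vertices of the simple cycle $C_{x_{k+1}}$; a sentence making this identification explicit would close the gap between ``the prefix recurs after $q_k$ steps'' and ``nothing recurs earlier.''
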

  \begin{proof}
    Let $m = \val{c_1 \dotsm c_{k+d-1}}$ and $m' = \val{c_1 \dotsm c_{k+d}}$.

    Suppose that $0 < c_{k+1} < a_{k+1} - 1$ or $c_{k+1} = a_{k+1} - 1 > 0$ and $c_k = 0$. Then the Ostrowski
    conditions guarantee that there exists $i$ such that $k+2 \leq i \leq k+d$ and $c_i < a_i$. Therefore
    \autoref{prp:sufficient_shift} implies that $\irep{\infw{t}}{n} = \irep{T^m(\infw{c}_\Delta)}{n}$. Thus (i) follows
    directly from \autoref{prp:inrc_n} (i). We apply \autoref{prp:inrc_n} similarly when $c_{k+1} = a_{k+1}$ or
    $c_{k+1} = a_{k+1} - 1 > 0$ and $c_k \neq 0$ to obtain (ii) and (iii). When there exists $i$ such that
    $k+2 \leq i \leq k+d$ and $c_i < a_i$, then we obtain (iv) directly from \autoref{prp:inrc_n}.

    Consider the final case (v), that is, assume that $c_i = a_i$ for all $i$ such that $k+2 \leq i \leq k+d$. Then the
    Ostrowski conditions imply that $c_{k+1} = 0$. In the latter part of the proof of \autoref{prp:sufficient_shift},
    we saw that our assumptions implied that $m \notin \lambda_{i,3}, \lambda_{i,4}$. Moreover, we proved that if
    $m \in \lambda_{i,1}$, then $\irep{\infw{t}}{n} = \irep{T^m(\infw{c}_\Delta)}{n}$. If $m \in \lambda_{i,1}$, then
    the Case A of the proof of \autoref{prp:inrc_n} applies, and we have $\irep{T^m(\infw{c}_\Delta)}{n} = q_k$ when
    $\abs{u_{r_k}} < n \leq \abs{u_{r_k+1}} - \val[\Delta]{c_1 \dotsm c_k}$. Assume then that $m \in \lambda_{i,2}$. As
    we saw in the Case B of the proof of \autoref{prp:sufficient_shift}, this means that $n \in \interval{k,\ell}$ with
    $\ell = a_{k+1}-1$. Now $\irep{\infw{t}}{n} = \irep{T^{m'}(\infw{c}_\Delta)}{n}$ by \autoref{prp:sufficient_shift},
    so it suffices to prove that $\irep{T^{m'}(\infw{c}_\Delta)}{n} = q_k$ to obtain (v).

    Recall that the number $\irep{T^m(\infw{c}_\Delta)}{n}$ is determined by the cycle sequence taken in the graph
    $\Gamma(n)$ and that the cycle sequence is determined by the letters of the standard word $\infw{c}_{\Delta''}$
    with $\Delta'' = T^{r_k+\ell}(\Delta) = x_{k+1} x_{k+2}^{a_{k+2}} \dotsm$. Recall also that the prefix of
    $T^{L_i}(\infw{c}_\Delta)$ of length $n$ corresponds to the left special vertex of $\Gamma(n)$ and that $L_i$
    equals the length of the $\mu_{r_k+\ell}$-images of the first $i-1$ blocks. Phrased alternatively, $L_i$ is the
    length of the $\tau_{k+1}$-image of the prefix of $\Delta'$ of length $i-1$ where
    $\Delta' = T^{r_{k+1}}(\Delta) = T(\Delta'')$. Now $\smash[t]{m' - m = c_{k+d} q_{k+d-1}}$, so reading off the
    first $\smash[t]{L_i + m' - m}$ letters of $\infw{c}_\Delta$ corresponds exactly to taking the cycles determined by
    the $\smash[t]{L_{x_{k+1}}^{a_{k+1}-\ell}}$-image of $\infw{c}_{\Delta'}$ of length
    $\val[\Delta']{c_{k+2} \dotsm c_{k+d}}$.

    Let us then find out the letter $y$ of $\infw{c}_{\Delta'}$ at position
    $\val[\Delta']{c_{k+2} \dotsm c_{k+d}} + 1$. By \eqref{eq:reg_s_1}, we have $v_{d-1} = v_{d-2}^{a_{k+d}} \dotsm
    v_0^{a_{k+2}} x_{k+1}$ for standard words $v_t$ with the directive word $\Delta'$. The length of the prefix of
    $v_{d-1}$ obtained by removing the suffix $x_{k+1}$ equals $\val[\Delta']{c_{k+2} \dotsm c_{k+d}}$, so
    $y = x_{k+1}$. The next cycles taken after reading off the first $L_i + m' - m$ letters of $\infw{c}_\Delta$ are
    thus determined by the $\smash[t]{L_{x_{k+1}}^{a_{k+1}-\ell}}$-image of the letter $y$ and the letter immediately
    after it. Since the $\smash[t]{L_{x_{k+1}}^{a_{k+1}-\ell}}$-image of any letter begins with $x_{k+1}$, we see that
    $\smash[t]{T^{L_i + m' - m}(\infw{c}_\Delta)}$ initially takes the cycle $C_{x_{k+1}}$ twice.

    Since $m \in \lambda_{i,2}$, we see as in the Case B of the proof of \autoref{prp:inrc_shift} that the prefix of
    $T^m(\infw{c}_\Delta)$ of length $n$ lies on the cycle $C_{x_{k+1}}$ but not on the central path. In other words,
    when $m - L_i$ edges are traversed on the cycle $C_{x_{k+1}}$, starting from the left special vertex, the whole
    cycle is not traversed. Consider now the word $\smash[t]{T^{L_i + m' - m}(\infw{c}_\Delta})$ that takes the cycle
    $C_{x_{k+1}}$ twice. When $m - L_i$ more letters are read, the current vertex is vertex $v$ on the cycle
    $C_{x_{k+1}}$. When $\norm{C_{x_{k+1}}}$ more letters are read, the vertex $v$ is encountered again. Since
    $L_i + m' - m + m - L_i = m'$, we see that $\irep{T^{m'}(\infw{c}_\Delta)}{n} = \norm{C_{x_{k+1}}}$. Now
    $\norm{C_{x_{k+1}}} = q_k$, and the claim is proved.
  \end{proof}

  The Sturmian case was determined in \cite[Th\'{e}or.~5.3.0.4]{diss:caius_wojcik}.

  \begin{theorem}[Sturmian Case]\label{thm:main_sturmian}
    Let $\infw{t}$ be a Sturmian word with directive word $\Delta$ and intercept $c_1 c_2 \dotsm$. Suppose that
    $n \in \interval{k}$ for $k \geq 0$.
    \begin{enumerate}[(i)]
      \item If $0 < c_{k+1} < a_{k+1} - 1$ or $c_{k+1} = a_{k+1} - 1 > 0$ and $c_k = 0$, then we have the following
            implications:
            \begin{itemize}
              \item $q_k - 2 < n \leq q_{k+1} - 2 - \val{c_1 \dotsm c_{k+1}} \\
                     \Longrightarrow \, \irep{\infw{t}}{n} = q_k$,
              \item $q_{k+1} - 2 - \val{c_1 \dotsm c_{k+1}} < n \leq q_{k+1} - 2 - c_{k+1}q_k \\
                     \Longrightarrow \, \irep{\infw{t}}{n} = q_{k+1} - \val{c_1 \dotsm c_{k+1}}$,
              \item $q_{k+1} - 2 - c_{k+1}q_k < n \leq q_{k+1} + q_k - 2 - \val{c_1 \dotsm c_{k+1}} \\
                     \Longrightarrow \, \irep{\infw{t}}{n} = q_{k+1} - c_{k+1}q_k$,
              \item $q_{k+1} + q_k - 2 - \val{c_1 \dotsm c_{k+1}} < n \leq q_{k+1} - 2 \\
                     \Longrightarrow \, \irep{\infw{t}}{n} = q_{k+1} + q_k - \val{c_1 \dotsm c_{k+1}}$.
            \end{itemize}
      \item If $c_{k+1} = a_{k+1}$, then we have the following implications:
            \begin{itemize}
              \item $q_k - 2 < n \leq q_{k+1} - 2 - \val{c_1 \dotsm c_k} \\
                     \Longrightarrow \, \irep{\infw{t}}{n} = q_{k-1}$,
              \item $q_{k+1} - 2 - \val{c_1 \dotsm c_k} < n \leq q_{k+1} - 2 \\
                     \Longrightarrow \, \irep{\infw{t}}{n} = q_k + q_{k-1} - \val{c_1 \dotsm c_k}$.
            \end{itemize}
      \item If $c_{k+1} = a_{k+1} - 1 > 0$ and $c_k \neq 0$, then we have the following implications:
            \begin{itemize}
              \item $q_k - 2 < n \leq q_k + q_{k-1} - 2 \\
                     \Longrightarrow \, \irep{\infw{t}}{n} = q_k + q_{k-1} - \val{c_1 \dotsm c_k}$,
              \item $q_k + q_{k-1} - 2 < n \leq 2q_k + q_{k-1} - 2 - \val{c_1 \dotsm c_k} \\
                     \Longrightarrow \, \irep{\infw{t}}{n} = q_k + q_{k-1}$,
              \item $2q_k + q_{k-1} - 2 - \val{c_1 \dotsm c_k} < n \leq q_{k+1} - 2 \\
                     \Longrightarrow \, \irep{\infw{t}}{n} = 2q_k + q_{k-1} - \val{c_1 \dotsm c_k}$.
            \end{itemize}
          \item Suppose that $c_{k+1} = 0$ and $c_{k+2} < a_{k+2}$.
            \begin{enumerate}
              \item[(iv.a)] If $c_k = 0$, then we have the following implications:
                \begin{itemize}
                  \item $q_k - 2 < n \leq q_{k+1} - 2 - \val{c_1 \dotsm c_{k-1}} \\
                         \Longrightarrow \, \irep{\infw{t}}{n} = q_k$,
                  \item $q_{k+1} - 2 - \val{c_1 \dotsm c_{k-1}} < n \leq q_{k+1} - 2 \\
                         \Longrightarrow \, \irep{\infw{t}}{n} = q_{k+1} - \val{c_1 \dotsm c_{k-1}}$.
                \end{itemize}
              \item[(iv.b)] If $a_{k+1} - 1 = c_{k+1}$ and $c_k \neq 0$, then we have the following implication:
                \begin{itemize}
                  \item $q_k - 2 < n \leq q_{k+1} - 2 \\
                         \Longrightarrow \, \irep{\infw{t}}{n} = q_{k+1} - \val{c_1 \dotsm c_k}$.
                \end{itemize}
              \item[(iv.c)] If $a_{k+1} - 1 > c_{k+1}$ and $c_k \neq 0$, then we have the following implications:
                \begin{itemize}
                  \item $q_k - 2 < n \leq q_{k+1} - 2 - \val{c_1 \dotsm c_k} \\
                         \Longrightarrow \, \irep{\infw{t}}{n} = q_k$.
                  \item $q_{k+1} - 2 - \val{c_1 \dotsm c_k} < n \leq q_{k+1} - 2 \\
                         \Longrightarrow \, \irep{\infw{t}}{n} = q_{k+1} - \val{c_1 \dotsm c_k}$.
                \end{itemize}
            \end{enumerate}
        \item If $c_{k+1} = 0$ and $a_{k+2} = c_{k+2}$, then we have the following implication:
            \begin{itemize}
              \item $q_k - 2 < n \leq q_{k+1} - 2 \\
                     \Longrightarrow \, \irep{\infw{t}}{n} = q_k$.
            \end{itemize}
    \end{enumerate}
  \end{theorem}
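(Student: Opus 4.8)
The plan is to obtain \autoref{thm:main_sturmian} as the specialization of \autoref{thm:main} to the binary case $d = 2$, where the combinatorial machinery collapses almost completely. Indeed, when $d = 2$ we have $K_2 = 1$, so there is a single block; the digit string $c_{k+2} \dotsm c_{k+d-1}$ is empty, forcing $i = 1$; and the block type is $y_1 = x_{k+2} = x_k$. Consequently the only real work is to rewrite, in terms of the denominators $q_k$, the two families of quantities occurring in \autoref{thm:main}: the interval endpoints $\abs{u_{r_k}}$, $\abs{u_{r_k+1}}$, $\abs{u_{r_{k+1}}}$ that delimit the subcases, and the complexity values $\abs{\tau_k(y_i)}$ and $\abs{\tau_{k+1}(y_i)}$ that appear on the right-hand sides.

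First I would record the endpoint formulas. An easy induction on $k$, using \eqref{eq:u_2} (equivalently \eqref{eq:u_4} together with \eqref{eq:reg_s_2}), yields $\abs{u_{r_k}} = q_k - 2$ for all $k \geq 1$; this is the identity already invoked for the Sturmian specialization of \autoref{prp:inrc_n}. Since $\abs{u_{r_k+1}} - \abs{u_{r_k}} = q_{k-1}$ by \eqref{eq:u_2}, this gives $\abs{u_{r_k+1}} = q_k + q_{k-1} - 2$ and $\abs{u_{r_{k+1}}} = q_{k+1} - 2$. Next I would compute the two $\tau$-image lengths directly. Because $L_{x_k}(x_k) = x_k$, we have $\tau_k(x_k) = \tau_{k-1}(x_k) = s_{k-1}$, so $\abs{\tau_k(y_i)} = q_{k-1}$; and since $x_k = x_{k+2}$ in the binary alphabet, $\tau_{k+1}(x_k) = \tau_{k+1}(x_{k+2}) = s_{k+1}$, so $\abs{\tau_{k+1}(y_i)} = q_{k+1}$. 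Finally, the length recurrence $q_{k+1} = a_{k+1} q_k + q_{k-1}$ collapses the composite values; for example $(a_{k+1} - c_{k+1})q_k + \abs{\tau_k(y_i)} = q_{k+1} - c_{k+1} q_k$ and $q_k + \abs{\tau_{k+1}(y_i)} = q_{k+1} + q_k$.

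With these substitutions in hand I would run through the cases (i)--(v) of \autoref{thm:main} and check that each implication turns into the asserted binary one after replacing every $u$-length and every $\tau$-image length by its value above; no estimate beyond those already assembled in \autoref{prp:inrc_n} and \autoref{prp:sufficient_shift} is required. I expect the main difficulty to be purely organizational rather than conceptual: one must keep the two superficially similar endpoints $\abs{u_{r_k+1}} = q_k + q_{k-1} - 2$ and $\abs{u_{r_{k+1}}} = q_{k+1} - 2$ strictly apart, as they differ by $(a_{k+1} - 1)q_k$ and coincide only when $a_{k+1} = 1$. The distinction is most visible in case (iii), where the subinterval boundaries genuinely involve $\abs{u_{r_k+1}}$; an inadvertent identification of the two endpoints would misplace these boundaries. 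Once the endpoints are tracked correctly and the recurrence for $q_{k+1}$ is applied throughout, every listed case follows by direct comparison.
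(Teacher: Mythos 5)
Your proposal is correct and is precisely the paper's own route: the paper proves this theorem by specializing \autoref{thm:main} to $d = 2$ via the computations of \autoref{ex:sturmian_intervals} together with the identity $\abs{u_{r_k}} = q_k - 2$, and your substitutions ($\abs{u_{r_k+1}} = q_k + q_{k-1} - 2$, $\abs{u_{r_{k+1}}} = q_{k+1} - 2$, $\abs{\tau_k(y_i)} = q_{k-1}$, $\abs{\tau_{k+1}(y_i)} = q_{k+1}$, and $q_{k+1} = a_{k+1}q_k + q_{k-1}$) are exactly that bookkeeping.

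One caveat, which in fact vindicates your warning about keeping $\abs{u_{r_k+1}}$ and $\abs{u_{r_{k+1}}}$ strictly apart --- except that the sensitive spot is case (ii), not case (iii). A faithful substitution into \autoref{thm:main} (ii), whose split point is $\abs{u_{r_k+1}} - \val{c_1 \dotsm c_k}$, yields $q_k + q_{k-1} - 2 - \val{c_1 \dotsm c_k}$, whereas the statement as printed has $q_{k+1} - 2 - \val{c_1 \dotsm c_k}$; these agree only when $a_{k+1} = 1$. The printed form is in fact incorrect for $a_{k+1} \geq 2$: take $\Delta = 0^2 1^2 0^2 1^2 \dotsm$ (so $q_0 = 1$, $q_1 = 3$, $q_2 = 7$), $k = 1$, and intercept $0\,2\,0^\omega$, so that $\infw{t} = T^6(\infw{c}_\Delta) = 00010010001\dotsm$; then the factor $001$ of length $3$ recurs at positions $2$ and $5$, giving $\irep{\infw{t}}{3} = 4 = q_1 + q_0 - \val{c_1}$, while the printed first bullet of (ii) would give $q_0 = 1$ since $3 \leq q_2 - 2 = 5$. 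So your derivation, carried out honestly, produces a corrected case (ii) rather than the printed one --- the same endpoint conflation also appears in the Sturmian corollary following \autoref{prp:inrc_n} --- and your claim that ``every listed case follows by direct comparison'' should be amended to record this discrepancy. All remaining cases, including (iii), where the printed statement does keep $\abs{u_{r_k+1}} = q_k + q_{k-1} - 2$ and $\abs{u_{r_{k+1}}} = q_{k+1} - 2$ properly separated, collapse by direct comparison exactly as you describe.
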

  \begin{proof}
    Utilize again the computations of \autoref{ex:sturmian_intervals}.
  \end{proof}

  \section{Diophantine Exponents}\label{sec:diophantine}
  Recall the definition of the Diophantine exponent $\dio{\infw{x}}$ of an infinite word $\infw{x}$ from
  \autoref{ssec:dio}. It is clear from the definition that $\dio{\infw{x}} \geq 1$ for all infinite words $\infw{x}$.
  It is possible that $\dio{\infw{x}} = \infty$ (as we shall see). It is proved in
  \cite{2009:binary_words_with_a_given_diophantine_exponent} that for each real number $\theta$ such that
  $\theta \geq 1$, there exists an infinite binary word $\infw{x}$ such that $\dio{\infw{x}} = \theta$. Propositions
  4.3 and 6.2 of the recent paper \cite{2020:on_the_computational_complexity_of_algebraic_numbers_the_hartmanis}
  provide lower and upper bounds for Diophantine exponents of infinite words generated by morphisms.

  Diophantine exponents relate to our work through the following result that allows us to compute the Diophantine
  exponent of a regular episturmian word with the help of \autoref{thm:main}.

  \begin{proposition}\label{prp:dio_inrc}\cite[Lemma~10.3]{2019:a_new_complexity_function_repetitions_in_sturmian}
    If $\infw{x}$ is an infinite word, then
    \begin{equation*}
      \dio{\infw{x}} = 1 + \limsup_{n\to\infty} \frac{n}{\irep{\infw{x}}{n}}.
    \end{equation*}
  \end{proposition}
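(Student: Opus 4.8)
The plan is to prove the identity by matching each of the two quantities against a common combinatorial object: a repetition appearing in a prefix of $\infw{x}$. Write $R(n) = \irep{\infw{x}}{n}$. The elementary dictionary is this: a prefix of the form $UV^e$ with $\abs{U} = a$, $\abs{V} = p$, and $\abs{UV^e} = N$ says exactly that $\infw{x}$ has period $p$ on positions $a+1, \dots, N$; equivalently, the factor of length $n = N - a - p$ beginning at position $a+1$ reoccurs $p$ positions later. Conversely, a repeated factor of length $n$, with first occurrence at position $i_0$ and a later occurrence at position $i_1$, produces a periodic prefix $UV^e$ with $\abs{U} = i_0 - 1$, $\abs{V} = p = i_1 - i_0$, and $\abs{UV^e} = i_1 + n - 1$; the decisive bookkeeping is that $\abs{UV} = \abs{U} + \abs{V} = i_1 - 1$ depends only on $i_1$, not on the choice of $i_0$.

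For the lower bound $\dio{\infw{x}} \ge 1 + \limsup_{n} n/R(n)$, I would, for each $n$, take the first repeated factor of length $n$: by definition its second occurrence sits at position $i_1 = R(n)$. The associated prefix $UV^e$ then has $\abs{UV^e} = R(n) + n - 1$ and $\abs{UV} = R(n) - 1$, so its ratio equals
\[
  \frac{R(n) + n - 1}{R(n) - 1} = 1 + \frac{n}{R(n) - 1}.
\]
Choosing $n$ along a subsequence realizing the $\limsup$ and using that $R$ is nondecreasing with $R(n) \to \infty$ (for aperiodic $\infw{x}$; if $\infw{x}$ is ultimately periodic both sides are plainly infinite), these ratios converge to $1 + \limsup_n n/R(n)$ while the witnessing prefixes grow arbitrarily long, which is exactly what the definition of $\dio{\infw{x}}$ requires.

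For the upper bound $\dio{\infw{x}} \le 1 + \limsup_n n/R(n)$, I would start from an arbitrary witness $UV^e$ of ratio strictly above $1$ and set $n = N - a - p \ge 1$. The induced repeated factor of length $n$ has a second occurrence at position $a + p + 1$, so the first repeat cannot occur later, giving $R(n) \le a + p + 1$, i.e.\ $a + p \ge R(n) - 1$. Hence the ratio is $1 + n/(a+p) \le 1 + n/(R(n)-1)$. Suppose, for contradiction, that $\dio{\infw{x}} > 1 + \limsup_n n/R(n)$ and fix $\rho$ strictly between these values; then there are arbitrarily long witnesses of ratio $\ge \rho > 1$. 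Along them $a + p \le n/(\rho - 1)$, whence $N = n + (a+p) \le n\rho/(\rho-1)$, and since $N \to \infty$ we conclude $n \to \infty$. Writing $n/(R(n)-1) = (n/R(n))\cdot(R(n)/(R(n)-1))$, where the second factor tends to $1$, yields $\rho - 1 \le \limsup_n n/R(n)$, contradicting the choice of $\rho$.

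The genuinely delicate point is the upper bound, and specifically the verification that $n \to \infty$ along any sequence of ever-longer witnesses whose ratio is bounded away from $1$: without it one could neither replace $n/(R(n)-1)$ by its $\limsup$ nor discard the $R(n)/(R(n)-1)$ correction. The complementary bounded-$n$ regime is harmless, since there $a + p \to \infty$ forces the ratio down to $1$, so it never contributes to the supremum defining $\dio{\infw{x}}$. Finally, the monotonicity and divergence of $R$, which follow at once from the observation that a length-$n$ repetition is also a length-$(n-1)$ repetition at the same positions, underpin both the subsequence extraction in the lower bound and the $\limsup$ comparison in the upper bound.
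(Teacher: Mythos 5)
Your proof is correct, but it is worth noting that it does not parallel the paper's proof, because the paper has none: its ``proof'' of this proposition is a citation to Bugeaud and Kim's Lemma~10.3 together with the notational dictionary $\irep{\infw{x}}{n} = r(n,\infw{x}) - n$. What you have written is thus a self-contained reconstruction of the cited lemma, via the natural two-way correspondence between a repeated factor of length $n$ with occurrences at $i_0 < i_1$ and a periodic prefix $UV^e$ with $\abs{UV} = i_1 - 1$ and $\abs{UV^e} = i_1 + n - 1$; you correctly isolate the two points where such arguments usually fail, namely that $\abs{UV}$ depends only on the second occurrence, and that in the upper bound one must check $n \to \infty$ along long witnesses of ratio bounded away from $1$ before the $\pm 1$ corrections can be discarded. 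Two small remarks. First, the divergence of $R$ does not quite ``follow at once'' from the monotonicity observation: one also needs the pigeonhole step that if $R$ were bounded, a single pair of positions would carry repetitions of every length, forcing ultimate periodicity; this is standard, and you do separately dispatch the ultimately periodic case, where both sides are plainly infinite, so no gap results. Second, the paper's formal definition of $\irep{\infw{x}}{n}$ (with the condition $i < j < m$) places the second occurrence of the first repeat at position exactly $R(n)$, so the shortest prefix containing a repeated factor of length $n$ has length $R(n) + n - 1$, whereas the paper's informal gloss says $R(n) + n$; your bookkeeping follows the formal definition, and since $R(n) \to \infty$ in the aperiodic case this off-by-one is immaterial to the limit, exactly as your $R(n)/(R(n)-1) \to 1$ step makes explicit. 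What your approach buys over the paper's is a proof readable without consulting the reference and robust to these indexing conventions; what the citation buys the paper is brevity and compatibility with the precise normalization used elsewhere in Bugeaud--Kim.
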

  \begin{proof}
    There are some notational differences. The claim follows directly from
    \cite[Lemma~10.3]{2019:a_new_complexity_function_repetitions_in_sturmian} with the observation that
    $\irep{\infw{x}}{n} = r(n,\infw{x}) - n$ where $r$ is defined as on p. 3282 of
    \cite{2019:a_new_complexity_function_repetitions_in_sturmian}.
  \end{proof}

  By \autoref{prp:dio_inrc} and \autoref{thm:main}, finding $\dio{\infw{t}}$ for a regular episturmian word
  $\infw{t}$ amounts to determining the ratio $n/\irep{\infw{t}}{n}$ on the right endpoint of each of the subintervals
  of $\interval{k}$ described by \autoref{thm:main}.

  Let us then define the closely-related notions of initial critical exponent and index.

  \begin{definition}
    Let $\infw{x}$ be an infinite word. The \emph{initial critical exponent} of $\infw{x}$, denoted by
    $\ice{\infw{x}}$, is the supremum of the rational numbers $\rho$ for which there exist arbitrarily long prefixes
    $V$ of $\infw{x}$ such that $V^\rho$ is a prefix of $\infw{x}$.
  \end{definition}

  A formula for the the initial critical exponent of a Sturmian word is given in
  \cite[Cor.~3.5]{2006:initial_powers_of_sturmian_sequences}. According to our knowledge, no attempt to study the
  initial critical exponents of general episturmian words has been attempted. The methods used to prove
  \autoref{thm:main} could be used for such a study. It is worth mentioning that a slight modification of the proof
  of \cite[Lemma~10.3]{2019:a_new_complexity_function_repetitions_in_sturmian} yields
  \begin{equation*}
    \ice{\infw{x}} = 1 + \limsup_{n\to\infty} \frac{n}{\prep{\infw{x}}{n}}
  \end{equation*}
  for an infinite word $\infw{x}$. Here $\prep{\infw{x}}{n}$ is the \emph{prefix nonrepetitive complexity function} of
  $\infw{x}$ defined by setting
  \begin{equation*}
    \prep{\infw{x}}{n} = \max\{m \colon \text{$\infw{x}[i,i+n-1] \neq \infw{x}[1,n-1]$ for all $2 \leq i < m$}\}.
  \end{equation*}
  Thus the prefix of $\infw{x}$ of length $\prep{\infw{x}}{n} + n$ is the shortest prefix of $\infw{x}$ containing two
  occurrences of the prefix of $\infw{x}$ of length $n$. We have the obvious relation
  $\irep{\infw{x}}{n} \leq \prep{\infw{x}}{n}$ for all $n$.

  \begin{definition}
    Let $\infw{x}$ be an infinite word. The \emph{index} (or critical exponent) of $\infw{x}$, denoted by
    $\ind{\infw{x}}$, is the number
    \begin{equation*}
      \sup\{e \in \mathbb{Q} \cap [1, \infty) : \text{$\infw{x}$ has an $e$-power as a factor}\}.
    \end{equation*}
  \end{definition}

  The indices have been determined for various classes of infinite words. It suffices to say here that a formula for
  the index of Sturmian words was derived independently in
  \cite{2000:special_factors_periodicity_and_an_application_to_sturmian,2001:fractional_powers_in_sturmian_words,2002:the_index_of_sturmian_sequences}
  (see also \cite{2015:characterization_of_repetitions_in_sturmian_words_a_new}). The index of a regular episturmian
  word can be found by applying \cite[Lemma~6.5,Thm.~6.19]{2007:powers_in_a_class_of_A_strict_standard_episturmian}. It
  seems that the discussion in Sections 4.1 and 5.5 of \cite{2002:episturmian_words_and_episturmian_morphisms} is all
  that has been said of the indices of general episturmian words.

  From the definitions, we infer the following inequalities:
  \begin{equation*}
    \ice{\infw{x}} \leq \dio{\infw{x}} \leq \ind{\infw{x}}.
  \end{equation*}
  It is possible that $\ice{\infw{x}} < \dio{\infw{x}}$ (see \cite[p.~18]{2006:initial_powers_of_sturmian_sequences})
  or $\dio{\infw{x}} < \ind{\infw{x}}$ (see \autoref{prp:standard_dio}).

  Let us then show that the Diophantine exponent is shift-invariant. The same is not true for the initial critical
  exponent; see \cite[Prop.~2.1,~Sect.~4.2]{2006:initial_powers_of_sturmian_sequences}.

  \begin{proposition}\label{prp:dio_shift_invariant}
    Let $\infw{x}$ be an infinite word. Then $\dio{T^m(\infw{x})} = \dio{\infw{x}}$ for all $m \geq 0$.
  \end{proposition}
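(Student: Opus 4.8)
The plan is to argue directly from the definition of the Diophantine exponent, proving the two inequalities separately and using throughout the fact that $m$ is fixed while the relevant prefixes grow arbitrarily long, so that adding or deleting $m$ letters at the front only perturbs the defining ratios by $o(1)$. We may assume $m \geq 1$, as $m=0$ is trivial; I would handle general $m$ in one step rather than iterating the case $m=1$, since the transformation of a repetitive prefix is equally explicit for any $m$.

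For the inequality $\dio{T^m(\infw{x})} \geq \dio{\infw{x}}$, I would fix $\rho < \dio{\infw{x}}$ and take arbitrarily long prefixes $W_j = U_j V_j^{e_j}$ of $\infw{x}$ with $\abs{W_j}/\abs{U_j V_j} \geq \rho$ and $N_j := \abs{W_j} \to \infty$. Deleting the first $m$ letters yields the prefix of $T^m(\infw{x})$ of length $N_j - m$. If $m \leq \abs{U_j}$, this prefix is $U_j' V_j^{e_j}$, where $U_j'$ is the length-$(\abs{U_j}-m)$ suffix of $U_j$; writing $D_j := \abs{U_j V_j}$ and using $N_j \geq D_j$ (because $e_j \geq 1$), one checks $\frac{N_j - m}{D_j - m} \geq \frac{N_j}{D_j} \geq \rho$, so the ratio does not decrease. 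If instead $m > \abs{U_j}$, the new prefix is a suffix of $V_j^{e_j}$, hence a fractional power of a conjugate $V_j'$ of $V_j$ with $\abs{V_j'}=\abs{V_j}$, so its associated denominator is still $\abs{V_j}$ and its ratio is $\frac{N_j-m}{\abs{V_j}} \geq \rho - \frac{m}{\abs{V_j}}$. The reverse inequality $\dio{T^m(\infw{x})} \leq \dio{\infw{x}}$ is symmetric: prepending the fixed block $\infw{x}[1,m]$ to a repetitive prefix $U'(V')^{e'}$ of $T^m(\infw{x})$ produces the prefix $(\infw{x}[1,m]U')(V')^{e'}$ of $\infw{x}$, whose ratio $\frac{m+N'}{m+D'}$ satisfies $\frac{m+N'}{m+D'} \geq \rho\cdot\frac{D'+m/\rho}{D'+m}$ when $N'/D' \geq \rho$, and this lower bound tends to $\rho$ as $D' \to \infty$.

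The main obstacle, and the only real content beyond bookkeeping, is controlling the denominators: the correction terms $\frac{m}{\abs{V_j}}$ and $\frac{D'+m/\rho}{D'+m}$ become negligible only when the denominator tends to infinity. The key observation I would isolate is that this is automatic whenever $\dio{\infw{x}} < \infty$. Indeed, if along some subsequence the denominator stayed bounded while the prefix length tended to infinity, then the ratio would tend to $\infty$, giving arbitrarily long prefixes of every ratio and forcing $\dio{\infw{x}} = \infty$. Hence I would split into two cases. When $\dio{\infw{x}} < \infty$, any sequence of prefixes with lengths tending to infinity and ratios bounded below by $\rho$ must have $D_j \to \infty$; in the sub-case $m > \abs{U_j}$ this also forces $\abs{V_j} \to \infty$ since $\abs{U_j} < m$ is bounded, so all correction terms are $o(1)$ and the transformed ratios exceed $\rho - \varepsilon$ for every $\varepsilon > 0$. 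Letting $\rho \uparrow \dio{\infw{x}}$ then gives each inequality. When $\dio{\infw{x}} = \infty$, the very same constructions show the transformed ratios also tend to $\infty$, so both exponents are infinite.

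The one genuinely fiddly technical point is the sub-case where the shift cuts into the repeated part $V^e$. Here I would record a one-line observation that any suffix of length $\ell \geq \abs{V}$ of a fractional power $V^e$ is itself a fractional power $\widetilde{V}^{\,\ell/\abs{V}}$ of a conjugate $\widetilde{V}$ of $V$, so that the period (and hence the denominator in the definition of $\dio{\cdot}$) is preserved under the shift. With this in hand the ratio estimate $\frac{N_j-m}{\abs{V_j}} \geq \rho - \frac{m}{\abs{V_j}}$ is immediate, and the proof reduces to the elementary inequalities above.
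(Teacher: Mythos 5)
Your proof is correct and follows essentially the same route as the paper's: delete (resp.\ prepend) the $m$ initial letters, pass to a conjugate of $V$ when the shift cuts into the repeated part, and note that the ratio $\abs{UV^e}/\abs{UV}$ changes only by $O(m/\abs{UV})$, which vanishes in the limit. Your explicit case split on $m$ versus $\abs{U_j}$ and the dichotomy $\dio{\infw{x}} < \infty$ versus $\dio{\infw{x}} = \infty$ merely spell out details the paper compresses into the assumptions that $(\abs{U_n V_n})$ is increasing and that $\abs{U_n} \to \infty$ or $\abs{V_n} \to \infty$ (which tacitly uses that bounded denominators force ultimate periodicity, where the claim is trivial), so nothing is missing and no new idea is introduced.
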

  \begin{proof}
    Let $(U_n)$, $(V_n)$, and $(e_n)$ be sequences such that $\abs{U_n V_n^{e_n}}/\abs{U_n V_n}$ converges to
    $\dio{\infw{x}}$ and $(\abs{U_n V_n})$ is increasing. Since either $\abs{U_n} \to \infty$ or
    $\abs{V_n} \to \infty$, we may factorize the prefix of $T^m(\infw{x})$ of length $\abs{U_n V_n^{e_n}} - m$ as
    $\smash[t]{U'_n {V'_n}^{e'_n}}$ for a conjugate $V'_n$ of $V_n$ for $n$ large enough. For $n$ large enough, we have
    \begin{equation*}
      \frac{\abs{U'_n {V'_n}^{e'_n}}}{\abs{U'_n V'_n}} = \frac{\abs{U_n V_n^{e_n}} - m}{\abs{U_n V_n} - m} =
      \frac{\abs{U_n V_n^{e_n}}/\abs{U_n V_n} - m/\abs{U_n V_n}}{1 - m/\abs{U_n V_n}} \to \lim_{n\to\infty}
      \frac{\abs{U_n V_n^{e_n}}}{\abs{U_n V_n}} = \dio{\infw{x}}.
    \end{equation*}
    Therefore $\dio{T^m(\infw{x})} \geq \dio{\infw{x}}$. A similar argument establishes that
    $\ice{T^m(\infw{x})} \geq \ice{\infw{x}}$.

    A symmetric argument where we prepend a factor of length $m$ to words $U_n V_n^{e_n}$ such that
    $(\abs{U_n V_n})$ is increasing and $\abs{U_n V_n^{e_n}}/\abs{U_n V_n}$ converges to $\dio{T^m(\infw{x})}$
    establishes that $\dio{\infw{x}} \geq \dio{T^m(\infw{x})}$. This argument does not apply to the initial critical
    exponent unless the prepended factor is compatible with the prefix powers of $T^m(\infw{x})$.
  \end{proof}

  \subsection{Diophantine Exponents of Regular Episturmian Words}
  As the first application of \autoref{thm:main}, we find the Diophantine exponent of a regular epistandard word.

  \begin{proposition}\label{prp:standard_dio}
    Let $\Delta$ be a regular directive word as in \eqref{eq:dw_multiplicative}. Then
    \begin{equation*}
      \dio{\infw{c}_\Delta} = 1 + \limsup_{k\to\infty} (a_{k+1} + \abs{u_{r_{k-(d-1)}}}/q_k) = \ind{\infw{c}_\Delta} - 1.
    \end{equation*}
  \end{proposition}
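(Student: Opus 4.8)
The plan is to prove the two equalities separately: I would read off the middle expression for $\dio{\infw{c}_\Delta}$ from \autoref{thm:main} and \autoref{prp:dio_inrc}, and then match it against the index. First I would specialize \autoref{thm:main} to the epistandard word itself. By \autoref{lem:standard_intercept} the intercept of $\infw{c}_\Delta$ is $0^\omega$, so $c_i = 0$ for all $i$; in particular $c_{k+1} = 0$ and $c_k = 0$, and since every $a_i \geq 1$ there is always an index $i$ with $k+2 \leq i \leq k+d$ and $c_i < a_i$. Hence case (iv.a) of \autoref{thm:main} applies, and because $\val{c_1 \dotsm c_{k-1}} = 0$ its two subintervals collapse to the single implication $\irep{\infw{c}_\Delta}{n} = q_k$ for every $n \in \interval{k}$. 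Thus on each $\interval{k}$ the ratio equals $n/q_k$, which increases in $n$ and is therefore maximized at the right endpoint $n = \abs{u_{r_{k+1}}}$. Passing these maxima through \autoref{prp:dio_inrc} and noting that the intervals $\interval{k}$ exhaust $\N$ gives
\[
  \dio{\infw{c}_\Delta} = 1 + \limsup_{n\to\infty}\frac{n}{\irep{\infw{c}_\Delta}{n}} = 1 + \limsup_{k\to\infty}\frac{\abs{u_{r_{k+1}}}}{q_k}.
\]

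It then remains to evaluate $\abs{u_{r_{k+1}}}/q_k$, which I would do purely by length bookkeeping. By \eqref{eq:u_4} we have $\abs{u_{r_{k+1}+1}} = a_{k+1}q_k + a_kq_{k-1} + \dotsm + a_1q_0$, while \eqref{eq:u_2} together with the computation $h_{r_{k+1}-1} = \tau_k(x_{k+1}) = s_k$ gives $\abs{u_{r_{k+1}}} = \abs{u_{r_{k+1}+1}} - q_k$. For $k \geq d$ the regular recursion \eqref{eq:reg_s_2} yields $a_kq_{k-1} + \dotsm + a_{k-d+2}q_{k-d+1} = q_k - q_{k-d}$, and the same length formula applied with $k$ replaced by $k-d$ gives $a_{k-d+1}q_{k-d} + \dotsm + a_1q_0 = \abs{u_{r_{k-d+1}}} + q_{k-d}$. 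Substituting these two facts telescopes the expression to the clean identity $\abs{u_{r_{k+1}}} = a_{k+1}q_k + \abs{u_{r_{k-d+1}}}$, that is, $\abs{u_{r_{k+1}}}/q_k = a_{k+1} + \abs{u_{r_{k-d+1}}}/q_k$. Since only the tail matters for the $\limsup$, the finitely many indices $k < d$ are irrelevant, and the first equality follows.

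For the second equality I would relate this same quantity to the index by identifying the extremal repetitions. On the prefix, \autoref{lem:prefix_powers} and \autoref{lem:st_which_u} single out the maximal $q_k$-periodic prefix as $s_k^{\s{a}+1}t_k = u_{r_{k+1}+1}$, of fractional exponent $(a_{k+1}+1) + \abs{t_k}/q_k$; reading off $\abs{t_k} = \abs{u_{r_{k+1}+1}} - (a_{k+1}+1)q_k = \abs{u_{r_{k-d+1}}}$ from the identity just derived reproduces $\dio{\infw{c}_\Delta} - 1$ exactly. The critical exponent, however, is realized by the maximal $q_k$-periodic factor occurring \emph{anywhere} in $\infw{c}_\Delta$, which — because $u_{r_{k+1}+1}$ is a bispecial palindrome and $\Lang{\infw{c}_\Delta}$ is closed under reversal — extends the prefix repetition by exactly one further period $s_k$. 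This raises the exponent by $1$, so the largest $q_k$-power occurring as a factor has exponent $2 + a_{k+1} + \abs{u_{r_{k-d+1}}}/q_k$, and taking the appropriate supremum over $k$ (all periods in $\infw{c}_\Delta$ are among the $q_k$) yields $\ind{\infw{c}_\Delta} = \dio{\infw{c}_\Delta} + 1$. Alternatively, one may obtain this index value by translating \cite[Lemma~6.5,~Thm.~6.19]{2007:powers_in_a_class_of_A_strict_standard_episturmian} into the present notation and checking the match.

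The main obstacle is the second equality, specifically the precise statement that the longest $q_k$-periodic factor of $\infw{c}_\Delta$ is exactly one period longer than its longest $q_k$-periodic prefix. Establishing this cleanly requires controlling how the $q_k$-periodicity propagates around an occurrence of the central word $u_{r_{k+1}+1}$ using its palindromicity and the reversal-closure of the language, which is the delicate combinatorial heart of the argument. The length bookkeeping underlying the first equality, by contrast, is routine once \autoref{thm:main} has been specialized to the intercept $0^\omega$.
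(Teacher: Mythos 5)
Your handling of the first equality is correct and is essentially the paper's own argument: the intercept $0^\omega$ (\autoref{lem:standard_intercept}) puts you in case (iv.a) of \autoref{thm:main} with $\val{c_1 \dotsm c_{k-1}} = 0$, so $\irep{\infw{c}_\Delta}{n} = q_k$ on all of $\interval{k}$ and the maximal ratio on $\interval{k}$ is $\abs{u_{r_{k+1}}}/q_k$; your length bookkeeping giving $\abs{u_{r_{k+1}}} = a_{k+1}q_k + \abs{u_{r_{k-d+1}}}$ is just the additive form of the word identity $u_{r_{k+1}} = s_k^{\s{a}} u_{r_{k-d+1}}$ that the paper reads off from \eqref{eq:u_4} and \eqref{eq:reg_s_2}. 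Your identification $s_k^{\s{a}+1} t_k = u_{r_{k+1}+1}$ is also sound (it follows from \autoref{lem:st_which_u} with $i=1$ together with \eqref{eq:u_4}, for the $k$ with $P(r_k+1)$ existing, which is all $k \geq d$ in the regular case).

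The genuine gap is exactly where you flag it, in your primary route to the second equality, and it is twofold. First, the claim that the longest $q_k$-periodic factor extends the longest $q_k$-periodic prefix by \emph{exactly} one period does not follow from palindromicity and reversal closure alone: reversal closure yields a mirrored occurrence of $u_{r_{k+1}+1}$ at which the period breaks on the left rather than the right, but the lower bound needs a single occurrence carrying the full prefix repetition \emph{and} one extra period of left-extension simultaneously, and the upper bound needs that no occurrence extends further; neither is immediate. Second, your parenthetical ``all periods in $\infw{c}_\Delta$ are among the $q_k$'' is false as stated — episturmian words contain powers whose periods are lengths $\abs{\mu_j(y)}$ for arbitrary letters $y$ (e.g., the Tribonacci word contains the period-$3$ square $010010$, and $3$ is not among its $q_k$) — so bounding the exponents of these other repetitions is an essential part of proving $\ind{\infw{c}_\Delta} \leq \dio{\infw{c}_\Delta} + 1$. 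Both points are precisely the content of \cite[Lemma~6.5, Thm.~6.19]{2007:powers_in_a_class_of_A_strict_standard_episturmian}, and your fallback of citing these results is exactly what the paper does; note only that the paper invokes them under bounded partial quotients and settles the unbounded case separately via $\dio{\infw{c}_\Delta} \leq \ind{\infw{c}_\Delta}$ (your prefix powers, of exponent at least $a_{k+1}+1$, cover that case equally well). So with the citation route the proof closes and coincides with the paper's; the self-contained argument you sketch would need the substance of Glen's two results supplied.
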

  \begin{proof}
    The intercept of $\infw{c}_\Delta$ is $0^\omega$. Thus we find from the first implication of \autoref{thm:main}
    (iv.a) that
    \begin{equation*}
      \max_{n\in\interval{k}} \frac{n}{\irep{\infw{c}_\Delta}{n}} = \frac{\abs{u_{r_{k+1}}}}{q_k}
    \end{equation*}
    for $k \geq 0$. Since
    $u_{r_{k+1}} = s_k^{\s{a}-1} s_{k-1}^{\s{a}} \dotsm s_0^{\s{a}} = s_k^{\s{a}} u_{r_{k-d+1}}$ by \eqref{eq:u_4} and
    \eqref{eq:reg_s_2}, it follows from \autoref{prp:dio_inrc} that
    $\dio{\infw{c}_\Delta} = 1 + \limsup_{k\to\infty} (a_{k+1} + \abs{u_{r_{k-(d-1)}}}/q_k)$. If $\infw{c}_\Delta$ has
    unbounded partial quotients, that is, the sequence $(a_k)$ is unbounded, then $\dio{\infw{c}_\Delta} = \infty$.
    Since $\dio{\infw{c}_\Delta} \leq \ind{\infw{c}_\Delta}$, the claim follows. If $\infw{c}_\Delta$ has bounded
    partial quotients, then it follows from
    \cite[Lemma~6.5,Thm.~6.19]{2007:powers_in_a_class_of_A_strict_standard_episturmian} that
    \begin{equation*}
      \ind{\infw{c}_\Delta} = 2 + \limsup_{k\to\infty} (a_{k+1} + \abs{u_{r_{k-(d-1)}}}/q_k) \qedhere
    \end{equation*}
  \end{proof}

  We saw in the preceding proof that $\dio{\infw{c}_\Delta} = \infty$ if and only if $\infw{c}_\Delta$ has unbounded
  partial quotients. This is in fact true for all words in the subshift generated by $\infw{c}_\Delta$ as the next
  theorem states. This theorem was originally proved for Sturmian words in
  \cite[Prop.~11.1]{2011:nombres_reels_de_complexite_sous-lineaire} and an alternative proof was given in
  \cite[Thm.~3.3]{2019:a_new_complexity_function_repetitions_in_sturmian}. Our more general proof uses yet another
  approach.

  \begin{theorem}\label{thm:dio_upq}
    Let $\infw{t}$ be a regular episturmian word. Then $\dio{\infw{t}} < \infty$ if and only if $\infw{t}$ has
    bounded partial quotients.
  \end{theorem}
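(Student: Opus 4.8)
The plan is to combine \autoref{prp:dio_inrc} with the explicit description of $\irep{\infw{t}}{n}$ in \autoref{thm:main}. Since the intervals $\interval{k}$ partition the positive integers and $k\to\infty$ as $n\to\infty$, \autoref{prp:dio_inrc} gives
\[
  \dio{\infw{t}} = 1 + \limsup_{k\to\infty} M_k, \qquad M_k = \max_{n\in\interval{k}} \frac{n}{\irep{\infw{t}}{n}},
\]
so it suffices to show that $(M_k)$ is bounded if and only if $(a_k)$ is bounded. Throughout I would use two size estimates for a regular directive word: by \eqref{eq:u_4} every $n\in\interval{k}$ satisfies $(a_{k+1}-1)q_k\le\abs{u_{r_{k+1}}}<(a_{k+1}+1)q_k$ and $n\le\abs{u_{r_{k+1}}}$; and by the recurrence \eqref{eq:reg_s_2} a global bound $a_i\le A$ forces $q_k\le(A(d-1)+1)q_{k-1}$, so consecutive $q_k$ grow by a bounded factor. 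I also record, from \autoref{lem:regular_tau_images} and $\tau_k(x_{k+1})=s_k$, that the block letters $y_i$ (which lie in $\{x_{k+2},\dots,x_{k+d}\}$, hence differ from $x_{k+1}$) satisfy $q_{k-1}\le\abs{\tau_k(y_i)}<q_k$ and $a_{k+1}q_k+q_{k-1}\le\abs{\tau_{k+1}(y_i)}<(a_{k+1}+1)q_k$, while \autoref{lem:ostrowski_ub} bounds each subtracted value $\val[\Delta]{c_1\dotsm c_j}$ below $q_j$.

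For the direction ``bounded $\Rightarrow$ finite'' I would establish the uniform lower bound $\irep{\infw{t}}{n}\ge q_{k-1}$ for all $n\in\interval{k}$. This is read off case by case from \autoref{thm:main}: substituting the inequalities above into each displayed value (for instance in (i) the value $\abs{\tau_{k+1}(y_i)}-\val[\Delta]{c_1\dotsm c_{k+1}}$ exceeds $(a_{k+1}q_k+q_{k-1})-a_{k+1}q_k=q_{k-1}$, and in (ii) the minimal value $\abs{\tau_k(y_i)}$ is at least $q_{k-1}$) shows that no listed value drops below $q_{k-1}$. Combined with $n\le(a_{k+1}+1)q_k$ and the bounded growth of $q_k/q_{k-1}$ this yields $M_k\le(A+1)(A(d-1)+1)$, hence $\dio{\infw{t}}<\infty$.

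For the converse I would show that unbounded partial quotients force $M_k\to\infty$ along the indices with $a_{k+1}$ large, so that $\limsup_k M_k=\infty$. The mechanism is to locate $n\in\interval{k}$ where $\irep{\infw{t}}{n}$ stays of order $q_k$ while $n$ is of order $a_{k+1}q_k$. When $c_{k+1}=0$ (cases (iv.a),(iv.c),(v)) the first listed subinterval has $\irep{\infw{t}}{n}=q_k$ up to $n\ge\abs{u_{r_{k+1}}}-q_k\ge(a_{k+1}-2)q_k$, giving $M_k\ge a_{k+1}-2$; when $c_{k+1}=a_{k+1}$ (case (ii)) or $c_{k+1}=a_{k+1}-1$ with $c_k\neq0$ (case (iii)), evaluating the last subinterval at $n=\abs{u_{r_{k+1}}}$ gives, using $\abs{\tau_k(y_i)}<q_k$, a complexity below $3q_k$, hence $M_k\ge a_{k+1}/3$. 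The delicate case (i) I would handle by using two subintervals simultaneously. Writing $g=a_{k+1}-c_{k+1}$, the first bullet provides a point with $\irep{\infw{t}}{n}=q_k$ and ratio at least $g-2$, while the last bullet at $n=\abs{u_{r_{k+1}}}$ has $\irep{\infw{t}}{n}=q_k+\abs{\tau_{k+1}(y_i)}-\val[\Delta]{c_1\dotsm c_{k+1}}\le(g+2)q_k$ and hence ratio at least $(a_{k+1}-1)/(g+2)$. Since $(\sqrt{a_{k+1}+3}-2)(\sqrt{a_{k+1}+3}+2)=a_{k+1}-1$, these two lower bounds balance and yield
\[
  M_k \ge \max\Bigl(g-2,\ \frac{a_{k+1}-1}{g+2}\Bigr) \ge \sqrt{a_{k+1}+3}-2,
\]
which tends to infinity with $a_{k+1}$.

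The main obstacle is precisely case (i): a large $a_{k+1}$ paired with an intermediate digit $c_{k+1}$ splits the long repetition of $s_k$ between the part already consumed by the prefix and the part still visible, so neither a single ``$\irep=q_k$'' plateau nor the top of $\interval{k}$ alone exhibits the full power. Balancing the two subintervals to recover growth like $\sqrt{a_{k+1}}$ — and verifying that the relevant subintervals are nonempty, which is exactly where the Ostrowski conditions intervene (e.g.\ $c_k=0$ whenever $c_{k+1}=a_{k+1}-1$ in case (i), keeping $\val[\Delta]{c_1\dotsm c_k}<q_{k-1}$) — is the technical heart of the argument.
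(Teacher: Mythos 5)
Your proposal is correct, and it rests on the same foundation as the paper's proof---\autoref{prp:dio_inrc} combined with the case analysis of \autoref{thm:main}---but both implications are implemented genuinely differently. For the implication that bounded partial quotients give $\dio{\infw{t}}<\infty$, the paper bypasses \autoref{thm:main} entirely: it invokes the known finiteness of the index for episturmian words with bounded partial quotients and uses $\dio{\infw{t}}\leq\ind{\infw{t}}$. Your alternative---the uniform bound $\irep{\infw{t}}{n}\geq q_{k-1}$ for $n\in\interval{k}$, read off from every case of \autoref{thm:main}---is self-contained and checks out: in case (i) one indeed has $\val[\Delta]{c_1\dotsm c_{k+1}}<a_{k+1}q_k$ because $c_{k+1}\leq a_{k+1}-1$ and $\val[\Delta]{c_1\dotsm c_k}<q_k$ by \autoref{lem:ostrowski_ub}, and your estimates $q_{k-1}\leq\abs{\tau_k(y_i)}<q_k$ and $a_{k+1}q_k+q_{k-1}\leq\abs{\tau_{k+1}(y_i)}<(a_{k+1}+1)q_k$ follow from \autoref{lem:regular_tau_images} since block types satisfy $y_i\neq x_{k+1}$; as a bonus this gives the explicit bound $\dio{\infw{t}}\leq 1+(A+1)((d-1)A+1)$, which the paper's citation-based route does not. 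For the converse, your cases (ii) and (iii) match the paper's Cases C and D up to constants, but in the delicate case (i) the paper argues by passing to subsequences, splitting according to whether $a_{k+1}-c_{k+1}$ stays bounded (fourth bullet, ratio of order $a_{k+1}$) or is unbounded (first bullet, ratio $a_{k+1}-1-c_{k+1}$), whereas your balancing of the first and last bullets yields the uniform per-index estimate $\max(g-2,(a_{k+1}-1)/(g+2))\geq\sqrt{a_{k+1}+3}-2$; this avoids the subsequence dichotomy at the price of a square-root rate, which is all that is needed. You also treat $c_{k+1}=0$ explicitly via the $q_k$-plateau of (iv.a), (iv.c), (v), getting $M_k\geq a_{k+1}-2$; the paper's Cases A--D all assume $c_{k+1}>0$ and its closing remark only offers the alternative that $c_k=0$ for all large $k$, so your write-up is actually the more complete one on this configuration. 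One boundary detail to tidy: if $\val[\Delta]{c_1\dotsm c_{k+1}}=q_k$ exactly, then $n=\abs{u_{r_{k+1}}}$ falls in the third bullet of (i) rather than the fourth, but its value $(a_{k+1}-c_{k+1})q_k+\abs{\tau_k(y_i)}\leq(g+1)q_k$ satisfies the same estimate, so the balancing argument is unaffected.
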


  Before the proof, let us derive some helpful inequalities.

  \begin{lemma}
    Let $\infw{t}$ be a regular episturmian word of period $d$ with directive word $\Delta$ and intercept
    $c_1 c_2 \dotsm$. Let $y$ be a letter occurring in $\Delta$ and $Q$ a nonnegative integer. Then
    \begin{align}
      \frac{\abs{u_{r_{k+1}}}}{q_k + \abs{\tau_{k+1}(y)} - \val{c_1 \dotsm c_{k+1}}} \geq \frac{a_{k+1} + q_{k-(d+1)}/q_k}{a_{k+1} - c_{k+1} + 2}, \label{eq:helpful_1} \\
      \frac{\abs{u_{r_{k+1}}} - \val{c_1 \dotsm c_{k+1}}}{q_k} \geq a_{k+1} - c_{k+1} - 1 + q_{k-(d+1)} / q_k, \label{eq:helpful_2} \\
      \frac{\abs{u_{r_{k+1}}}}{Q q_k + \abs{\tau_k(y)} - \val{c_1 \dotsm c_k}} \geq \frac{1}{Q+1}\left( a_{k+1} + q_{k-(d+1)} / q_k \right) \label{eq:helpful_3}
    \end{align}
    for all $k$ large enough.
  \end{lemma}
  \begin{proof}
    From \autoref{lem:ostrowski_ub}, \eqref{eq:reg_q}, \autoref{lem:regular_help}, and \eqref{eq:u_5}, we find
    that
    \begin{align*}
      0 &\leq \val{c_1 \dotsm c_k} < q_k, \\
      q_{k+1} &< (a_{k+1} + 1)q_k, \\
      \abs{\tau_k(y)} &\leq q_k, \text{ and} \\
      a_k q_{k-1} + q_{k-(d+2)} &\leq \abs{u_{r_k}}
    \end{align*}
    for all $k$ large enough. Therefore
    \begin{align*}
      \frac{\abs{u_{r_{k+1}}}}{q_k + \abs{\tau_{k+1}(y)} - \val{c_1 \dotsm c_{k+1}}}
      &\geq \frac{a_{k+1} q_k + q_{k-(d+1)}}{q_k + q_{k+1} - \val{c_1 \dotsm c_k} - c_{k+1} q_k} \\
      &\geq \frac{a_{k+1} q_k + q_{k-(d+1)}}{q_k + (a_{k+1} + 1)q_k - c_{k+1} q_k} \\
      &= \frac{a_{k+1} + q_{k-(d+1)} / q_k}{a_{k+1} - c_{k+1} + 2}
    \end{align*}
    for all $k$ large enough. This proves the first inequality. The second inequality is derived similarly:
    \begin{align*}
      \frac{\abs{u_{r_{k+1}}} - \val{c_1 \dotsm c_{k+1}}}{q_k}
      &\geq \frac{a_{k+1} q_k + q_{k-(d+1)} - \val{c_1 \dotsm c_{k+1}}}{q_k} \\
      &\geq \frac{a_{k+1} q_k + q_{k-(d+1)} - c_{k+1}q_k - q_k}{q_k} \\
      &= a_{k+1} - c_{k+1} - 1 + q_{k-(d+1)} / q_k
    \end{align*}
    for all $k$ large enough. The third inequality is derived analogously by using the inequality
    $\abs{u_{r_{k+1}}} \geq a_{k+1} q_k + q_{k-(d+1)}$ in the numerator and the inequalities
    $\abs{\tau_k(y)} \leq q_k$ and $\val{c_1 \dotsm c_k} \geq 0$ in the numerator.
  \end{proof}

  \begin{proof}[Proof of \autoref{thm:dio_upq}]
    Say $\infw{t}$ has directive word as in \eqref{eq:dw_multiplicative} and intercept $c_1 c_2 \dotsm$. If $\infw{t}$
    has bounded partial quotients, then $\ind{\infw{t}} < \infty$
    \cite[Sect.~4.1]{2002:episturmian_words_and_episturmian_morphisms}, so
    $\dio{\infw{t}} \leq \ind{\infw{t}} < \infty$. We may thus assume that $\infw{t}$ has unbounded partial quotients.
    Thus there exists a sequence $(k_j)$ such that $\smash[b]{a_{k_j+1} \to \infty}$ as $j \to \infty$. Below we impose
    restrictions on $(k_j)$ and show that under each restriction $\dio{\infw{t}} = \infty$. It is straightforward to
    see that (by taking an appropriate subsequence) some $(k_j)$ must satisfy at least one of the restrictions or
    $c_k = 0$ for $k$ large enough. When $c_k = 0$ for all $k$ large enough, the word $\infw{t}$ is a shift of
    $\infw{c}_\Delta$, and Propositions \ref{prp:standard_dio} and \ref{prp:dio_shift_invariant} imply that
    $\dio{\infw{t}} = \infty$.

    \textbf{Case A.} Assume that $0 < c_{k_j+1} < a_{k_j+1} - 1$ for all $j$. Suppose additionally that there exists a
    nonnegative constant $M$ such that $a_{k_j+1} - c_{k_j+1} \leq M$ for all $j$. By the fourth implication of
    \autoref{thm:main} (i), it suffices to show that
    \begin{equation*}
      \frac{\abs{u_{r_{k_j+1}}}}{q_{k_j} + \abs{\tau_{k_j+1}(y_i)} - \val{c_1 \dotsm c_{k_j+1}}} \xrightarrow{j\to\infty} \infty
    \end{equation*}
    in order to conclude that $\dio{\infw{t}} = \infty$ (here $y_i$ is as in \autoref{thm:main}). By
    \eqref{eq:helpful_1}, we have
    \begin{equation*}
       \frac{\abs{u_{r_{k_j+1}}}}{q_{k_j} + \abs{\tau_{k_j+1}(y_i)} - \val{c_1 \dotsm c_{k_j+1}}} \geq \frac{a_{k_j+1} + q_{k_j-(d+1)}/q_{k_j}}{a_{k_j+1} - c_{k_j+1} + 2} \geq \frac{1}{M+2} a_{k_j+1}
    \end{equation*}
    for $j$ large enough. As $a_{k_j+1} \to \infty$ as $j \to \infty$, it follows that $\dio{\infw{t}} = \infty$.
    Assume then that the difference $a_{k_j+1} - c_{k_j+1}$ is unbounded. This time around, we consider the first
    implication of \autoref{thm:main} (i). Notice that we need to ensure that the interval
    $\smash[b]{\{\abs{u_{r_{k_j}}}, \ldots, \abs{u_{r_{k_j+1}}} - \val{c_1 \dotsm c_{k_j+1}}\}}$ is nonempty, but this
    is straightforward to do. From \eqref{eq:helpful_2}, we have
    \begin{equation*}
      \frac{\abs{u_{r_{k_j+1}}} - \val{c_1 \dotsm c_{k_j+1}}}{q_{k_j}} \geq a_{k_j+1} - c_{k_j+1} - 1 + q_{k_j-(d+1)} / q_{k_j} \xrightarrow{j\to\infty} \infty
    \end{equation*}
    because the difference $a_{k_j+1} - c_{k_j+1}$ is unbounded. Hence $\dio{\infw{t}} = \infty$ in this case as well.

    \textbf{Case B.} If $\smash[b]{c_{k_j+1} = a_{k_j+1} - 1 > 0}$ and $\smash[b]{c_{k_j} = 0}$ for all $j$, then the
    difference $a_{k_j+1} - c_{k_j+1}$ is bounded. By repeating the arguments of Case A, we see that
    $\dio{\infw{t}} = \infty$.

    \textbf{Case C.} Suppose that $c_{k_j+1} = a_{k_j+1}$ for all $j$. From \eqref{eq:helpful_3}, we see that
    \begin{equation*}
      \frac{\abs{u_{r_{k_j+1}}}}{q_{k_j} + \abs{\tau_{k_j}(y_i)} - \val{c_1 \dotsm c_k}} \geq \frac12(a_{k_j+1} + q_{k-(d+1)}/q_k)
    \end{equation*}
    so $\dio{\infw{t}} = \infty$ by the last implication of \autoref{thm:main} (ii).

    \textbf{Case D.} Assume that $c_{k_j+1} = a_{k_j+1} - 1 > 0$ and $c_{k_j} \neq 0$ for all $j$. We apply the last
    implication of \autoref{thm:main_sturmian} (iii) and obtain similar to the Case C that
    \begin{equation*}
      \frac{\abs{u_{r_{k_j+1}}}}{2q_{k_j} + \abs{\tau_{k_j}(y_i)} - \val{c_1 \dotsm c_{k_j}}} \geq \frac13(a_{k_j+1} + q_{k-(d+1)}/q_k).
    \end{equation*}
    Hence again $\dio{\infw{t}} = \infty$.
  \end{proof}

  Interestingly a statement analogous to \autoref{thm:dio_upq} is not true for the initial critical exponent. It is
  shown in \cite[Prop.~4.1]{2006:initial_powers_of_sturmian_sequences} that every Sturmian subshift contains a word
  $\infw{t}$ such that $\ice{\infw{t}} \leq 1 + \varphi \approx 2.6180$ where $\varphi$ is the Golden ratio. Even more
  interestingly it is possible that $\ice{\infw{t}} = 2$ for certain Sturmian words $\infw{t}$ with unbounded partial
  quotients \cite[Thm~1.1]{2006:initial_powers_of_sturmian_sequences}. We suspect that it is equally possible that the
  initial critical exponent is finite while the Diophantine exponent is infinite when $d > 2$.

  Next we show that $\dio{\infw{t}} > 2$ for essentially all regular episturmian words. For Sturmian words, this result
  can be inferred from the results of \cite{2006:initial_powers_of_sturmian_sequences} as indicated in the proof of
  \cite[Prop.~4]{2010:on_the_expansion_of_some_exponential_periods_in_an_integer}.

  \begin{theorem}\label{thm:dio_lower_bound}
    Let $\infw{t}$ be a regular episturmian word of period $d$ with directive word as in \eqref{eq:dw_multiplicative}.
    If $d = 2$ or $\limsup_k a_k \geq 3$, then $\dio{\infw{t}} > 2$.
  \end{theorem}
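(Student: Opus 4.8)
The plan is to reduce to bounded partial quotients and then extract, directly from \autoref{thm:main}, a fixed $\epsilon>0$ together with infinitely many lengths $n$ for which $n/\irep{\infw{t}}{n}\ge 1+\epsilon$; by \autoref{prp:dio_inrc} this yields $\dio{\infw{t}}\ge 2+\epsilon$. If $(a_k)$ is unbounded then $\dio{\infw{t}}=\infty$ by \autoref{thm:dio_upq}, so I assume $a_k\le A$ for all $k$. The computation is organized around two observations. On each subinterval of $\interval{k}$ occurring in \autoref{thm:main} the value $\irep{\infw{t}}{n}$ is \emph{constant}, so the ratio $n/\irep{\infw{t}}{n}$ is increasing there and is maximized at the right endpoint; hence on $\interval{k}$ I only compare the finitely many explicit right-endpoint ratios. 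The estimates I will use repeatedly all follow from the recurrences \eqref{eq:q_recursion_1}, \eqref{eq:reg_s_2} and \autoref{lem:ostrowski_ub}: $\val{c_1\dotsm c_j}<q_j$; $\abs{\tau_{k+1}(y_i)}\le q_{k+1}$ (from \autoref{lem:regular_tau_images}, noting $y_i\ne x_{k+1}$); $\abs{u_{r_{k+1}}}=a_{k+1}q_k+\abs{u_{r_{k-d+1}}}$ (from \eqref{eq:u_4} and \eqref{eq:reg_s_2}); and, crucially, for bounded $(a_k)$ the comparisons $0<\gamma\le \abs{u_{r_{k-d+1}}}/q_k$, $q_{k+1}<(a_{k+1}+1)q_k$, and $q_{k+1}-\abs{u_{r_{k+1}}}<q_k$, with $\gamma>0$ depending only on $A$ and $d$. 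These bounded-ratio comparisons fail exactly when the partial quotients are unbounded, matching the reduction.

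For $d=2$ I claim that \emph{every} interval $\interval{k}$ supplies a ratio $\ge 1+\epsilon$, so no further hypothesis is needed; here $\abs{u_{r_k}}=q_k-2$ and the decisive quantity $\abs{u_{r_{k-1}}}/q_k$ is comparable to $q_{k-1}/q_k$, which is bounded below. Going through the cases of \autoref{thm:main_sturmian}: whenever $c_{k+1}\le a_{k+1}-2$, the first implication (with $\irep=q_k$) at its right endpoint gives ratio $\ge (a_{k+1}-c_{k+1})+(\abs{u_{r_{k-1}}}-\val{c_1\dotsm c_k})/q_k\ge 1+\gamma$; the remaining boundary configurations are covered by the last implication of each case, whose right-endpoint ratio has the form $2q_k/(q_k+q_{k-1})$ or $(q_{k+1}-2)/(q_{k+1}-\val{c_1\dotsm c_{k-1}})$, both of which exceed $1$, and a split according to whether $\val{c_1\dotsm c_{k-1}}$ is small or large (comparing the first and last implications) makes the excess uniform. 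Taking $\epsilon$ to be the minimum of these finitely many bounds gives $\limsup_n n/\irep{\infw{t}}{n}\ge 1+\epsilon$, hence $\dio{\infw{t}}>2$.

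For $d\ge 3$ I use $\limsup_k a_k\ge 3$, which in the bounded regime furnishes infinitely many $k$ with $a_{k+1}\ge 3$; for each such $k$ it suffices to find one subinterval of $\interval{k}$ with ratio $\ge 1+\epsilon$, uniformly over the intercept digits that select the case in \autoref{thm:main}. The gap now comes from $a_{k+1}\ge 3$ together with $\abs{\tau_{k+1}(y_i)}\le q_{k+1}$. If $c_{k+1}=0$ (cases (iv), (v)) the first implication has $\irep=q_k$ on a block of length $\ge(a_{k+1}-1)q_k$, so its right-endpoint ratio is $\ge a_{k+1}-1\ge 2$; if $c_{k+1}=a_{k+1}$ (case (ii)) the second implication at $n=\abs{u_{r_{k+1}}}$ has $\irep\le 2q_k$, giving ratio $\ge a_{k+1}/2\ge 3/2$; if $c_{k+1}=a_{k+1}-1$ with $c_k\ne 0$ (case (iii)) the third implication gives $\irep\le 3q_k-q_{k-1}$ and ratio $\ge a_{k+1}/(3-q_{k-1}/q_k)>1+\epsilon$; and in the range $0<c_{k+1}<a_{k+1}-1$ the first implication already yields $\ge 1+\gamma$ exactly as in the Sturmian computation. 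The one genuinely delicate case is $c_{k+1}=a_{k+1}-1$ with $c_k=0$: here I compare the first implication (ratio $1+(\abs{u_{r_{k-d+1}}}-\val{c_1\dotsm c_{k-1}})/q_k$, good when $\val{c_1\dotsm c_{k-1}}$ is small) against the fourth implication (whose denominator $2q_k+\abs{\tau_{k+1}(y_i)}-a_{k+1}q_k-\val{c_1\dotsm c_{k-1}}$ shrinks, and hence whose ratio grows, as $\val{c_1\dotsm c_{k-1}}$ increases), and the bounds $\abs{\tau_{k+1}(y_i)}-a_{k+1}q_k\le q_{k+1}-a_{k+1}q_k<q_k$ and $q_{k+1}-\abs{u_{r_{k+1}}}<q_k$ guarantee that at least one of the two exceeds $1+\gamma/2$.

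The principal obstacle is exactly this: the crude bound $n/\irep{\infw{t}}{n}\ge 1$ holds everywhere, but for adversarial intercepts several right-endpoint ratios degenerate to $1$ (for instance when $\val{c_1\dotsm c_{k-1}}$ or $\abs{\tau_{k+1}(y_i)}$ is extremal), and one must certify in each such configuration that a \emph{complementary} subinterval of the same $\interval{k}$ restores a uniform gap. This is where the structural distinction between $d=2$ and $d\ge 3$ lives: for $d=2$ the term $\abs{u_{r_{k-1}}}/q_k$ is large (comparable to $q_{k-1}/q_k$) and rescues even $a_k\equiv 1$, whereas for $d\ge 3$ the analogous term $\abs{u_{r_{k-d+1}}}/q_k$ may be small, so the slack must instead be supplied by $a_{k+1}\ge 3$; with $a_k\le 2$ the ratios genuinely approach $1$ and $\dio{\infw{t}}$ can equal $2$, so the hypothesis is sharp. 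Finally, although $\dio{\infw{t}}$ is shift-invariant by \autoref{prp:dio_shift_invariant}, it is a $\limsup$, so a favorable ratio obtained for one shift cannot be transferred to another; the uniform case analysis on the given intercept therefore cannot be circumvented.
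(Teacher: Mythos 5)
Your proposal is correct and follows essentially the same route as the paper's proof: reduce to bounded partial quotients via \autoref{thm:dio_upq}, then extract uniform right-endpoint lower bounds for $n/\irep{\infw{t}}{n}$ from the case analysis of \autoref{thm:main} (resp.\ \autoref{thm:main_sturmian}) using \autoref{lem:ostrowski_ub}, \autoref{lem:regular_tau_images}, and \autoref{lem:c_bound}, and conclude with \autoref{prp:dio_inrc}. The only deviations are organizational and both check out: in the delicate case $c_{k+1}=a_{k+1}-1$, $c_k=0$ you split on the size of $\val{c_1 \dotsm c_{k-1}}$ between the first and fourth implications, whereas the paper exploits $a_{k+1}\geq 3$ directly in the fourth implication alone; and for $d=2$ you certify a uniform excess on every interval $\interval{k}$, whereas the paper handles the residual configuration (intercept eventually $0$) by falling back on \autoref{prp:standard_dio} together with \autoref{prp:dio_shift_invariant}.
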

  \begin{proof}
    The claim follows from \autoref{thm:dio_upq} if $\infw{t}$ has unbounded partial quotients. Suppose that $\infw{t}$
    has bounded partial quotients and intercept $c_1 c_2 \dotsm$. Let $C = M + 1$ where $M = \limsup_k a_k$. It follows
    from \eqref{eq:reg_q} that $q_{k+1} / q_k \leq C$ for all $k$ large enough.

    \textbf{Case A.} Assume first that there exists infinitely many $k$ such that $0 < c_{k+1} < a_{k+1} - 1$. By the
    first implication of \autoref{thm:main} (i), we have
    \begin{equation*}
      \dio{\infw{t}} - 1 \geq \limsup_{k\to\infty} \frac{\abs{u_{r_{k+1}}} - \val{c_1 \dots c_{k+1}}}{q_k}
    \end{equation*}
    where we consider the limit superior over an appropriate subsequence like in the proof of \autoref{thm:dio_upq}.
    From \eqref{eq:helpful_2}, we find that
    \begin{equation*}
      \frac{\abs{u_{r_{k+1}}} - \val{c_1 \dots c_{k+1}}}{q_k} \geq a_{k+1} - c_{k+1} - 1 + q_{k-(d+1)} / q_k \geq 1 +
      C^{-(d+1)}.
    \end{equation*}
    The claim follows.

    \textbf{Case B.} Let us assume that $c_{k+1} = a_{k+1} - 1 > 0$ and $c_k = 0$ for infinitely many $k$. Suppose in
    addition that $a_{k+1} \geq 3$. By the final implication of \autoref{thm:main} (i), we have
    \begin{equation*}
      \dio{\infw{t}} - 1 \geq \limsup_{k\to\infty} \frac{\abs{u_{r_{k+1}}}}{q_k + \abs{\tau_{k+1}(y_i)} - \val{c_1 \dotsm c_{k+1}}}.
    \end{equation*}
    By applying \eqref{eq:helpful_1}, we see that
    \begin{align*}
      \frac{\abs{u_{r_{k+1}}}}{q_k + \abs{\tau_{k+1}(y)} - \val{c_1 \dotsm c_{k+1}}}
      &\geq \frac{a_{k+1} + q_{k-(d+1)}/q_k}{a_{k+1} - c_{k+1} + 2} \\
      &= \frac{a_{k+1} + q_{k-(d+1)}/q_k}{3} \\
      &\geq 1 + C^{-(d+1)}/3,
    \end{align*}
    so the claim follows.

    \textbf{Case C.} Assume that $c_{k+1} = a_{k+1} - 1 > 0$ and $c_k \neq 0$ for infinitely many $k$. Assume moreover
    that $a_{k+1} \geq 3$. The last implication of \autoref{thm:main} (iii) gives
    \begin{equation*}
      \dio{\infw{t}} - 1 \geq \limsup_{k\to\infty} \frac{\abs{u_{r_{k+1}}}}{2q_k + \abs{\tau_k(y_i)} - \val{c_1 \dotsm c_k}}.
    \end{equation*}
    From \eqref{eq:helpful_3}, we obtain that
    \begin{equation*}
      \frac{\abs{u_{r_{k+1}}}}{2q_k + \abs{\tau_k(y_i)} - \val{c_1 \dotsm c_k}}
      \geq \frac13 \left( a_{k+1} + q_{k-(d+1)} / q_k \right)
      \geq 1 + C^{-(d+1)}/3,
    \end{equation*}
    so the claim holds in this case as well.

    \textbf{Case D.} Suppose that $c_{k+1} = a_{k+1}$ for infinitely many $k$. Assume moreover that $a_{k+1} \geq 2$.
    By the second implication of \autoref{thm:main} (ii), we have
    \begin{equation*}
      \dio{\infw{t}} - 1 \geq \limsup_{k\to\infty} \frac{\abs{u_{r_{k+1}}}}{q_k + \abs{\tau_k(y_i)} - \val{c_1 \dotsm c_k}}.
    \end{equation*}
    Again, from \eqref{eq:helpful_3}, we see that
    \begin{equation*}
      \frac{\abs{u_{r_{k+1}}}}{q_k + \abs{\tau_k(y_i)} - \val{c_1 \dotsm c_k}}
      \geq \frac12 \left( a_{k+1} + q_{k-(d+1)} / q_k \right)
      \geq 1 + C^{-(d+1)}/2.
    \end{equation*}

    \textbf{Case E.} Suppose that $c_{k+1} = c_k = 0$ for infinitely many $k$. Suppose additionally that
    $a_{k+1} \geq 2$. Then the first implication of \autoref{thm:main} (iv.a) gives
    \begin{equation*}
      \dio{\infw{t}} - 1 \geq \limsup_{k\to\infty} \frac{\abs{u_{r_{k+1}}} - \val{c_1 \dotsm c_{k-1}}}{q_k}.
    \end{equation*}
    Now from \eqref{eq:helpful_2}, we find that
    \begin{equation*}
      \frac{\abs{u_{r_{k+1}}} - \val{c_1 \dotsm c_{k-1}}}{q_k} \geq a_{k+1} - 1 + C^{-(d+1)},
    \end{equation*}
    so the claim follows.
    
    \textbf{Case F.} Assume finally that $c_{k+1} = 0$ and $c_k \neq 0$ for infinitely many $k$. Suppose in addition
    that $a_{k+1} \geq 2$. We deduce from the first implication of \autoref{thm:main} (iv.c) that
    \begin{equation*}
      \dio{\infw{t}} - 1 \geq \limsup_{k\to\infty} \frac{\abs{u_{r_{k+1}}} - \val{c_1 \dotsm c_k}}{q_k}.
    \end{equation*}
    Thus the claim follows by an application of \eqref{eq:helpful_2} as in the Case E.

    The Cases A--F prove the claim when $\limsup_k a_k \geq 3$, so we may consider the special case $d = 2$. Next we
    handle the Cases B--D unconditionally. Consider the Case D first. The Ostrowski conditions imply that $c_k = 0$.
    Thus the first implication of
    \autoref{thm:main_sturmian} (ii) gives
    \begin{align*}
      \frac{q_{k+1} - 2 - \val{c_1 \dotsm c_k}}{q_{k-1}}
      &= \frac{a_{k+1}q_k + q_{k-1} - 2 - \val{c_1 \dotsm c_{k-1}}}{q_{k-1}} \\
      &\geq \frac{a_{k+1}q_k - 2}{q_{k-1}} \\
      &= a_{k+1} a_k + \frac{a_{k+1} q_{k-2} - 2}{q_{k-1}},
    \end{align*}
    so
    \begin{equation*}
      \dio{\infw{t}} - 1 \geq \limsup_{k\to\infty} \left( a_{k+1} a_k + \frac{a_{k+1} q_{k-2} - 2}{q_{k-1}} \right)
      \geq 1 + C^{-1}.
    \end{equation*}
    Suppose then that $c_{k+1} = a_{k+1} - 1 > 0$ and $c_k = 0$. Taking into account what we have already proved, we
    see that we may assume that $c_{k-1} < a_{k-1}$. Now the first implication of \autoref{thm:main_sturmian} (i)
    yields
    \begin{align*}
      \frac{q_{k+1} - 2 - \val{c_1 \dotsm c_{k+1}}}{q_k}
      &= \frac{q_k + q_{k-1} - 2 - \val{c_1 \dotsm c_{k-1}}}{q_k} \\
      &= 1 + \frac{q_{k-1} - 2 - c_{k-1}q_{k-2} - \val{c_1 \dotsm c_{k-2}}}{q_k} \\
      &\geq 1 + \frac{q_{k-1} - 2 - (c_{k-1} + 1)q_{k-2}}{q_k} \\
      &\geq 1 + \frac{q_{k-3} - 2}{q_k},
    \end{align*}
    so $\dio{\infw{t}} - 1 \geq 1 + C^{-3}$. Suppose finally that $c_{k+1} = a_{k+1} - 1 > 0$ and $c_k \neq 0$. We may
    again assume that $c_k < a_k$. From the second implication of \autoref{thm:main_sturmian} (iii), we have
    \begin{align*}
      \frac{2q_k + q_{k-1} - 2 - \val{c_1 \dotsm c_k}}{q_k + q_{k-1}}
      &= 1 + \frac{q_k - 2 - c_k q_{k-1} - \val{c_1 \dotsm c_{k-1}}}{q_k + q_{k-1}} \\
      &\geq 1 + \frac{q_k - (c_k + 1)q_{k-1} - 2}{q_k + q_{k-1}} \\
      &\geq 1 + \frac{q_{k-2} - 2}{q_k + q_{k-1}} \\
      &\geq 1 + \frac{q_{k-2} - 2}{q_{k+1}},
    \end{align*}
    so $\dio{\infw{t}} - 1 \geq 1 + C^{-3}$ in this case as well. As the cases A--D are now handled when $d = 2$, the
    claim is true or $c_k = 0$ for all $k$ large enough. In the latter case the word $\infw{t}$ is a shift of
    $\infw{c}_\Delta$, and the claim follows from Propositions \ref{prp:standard_dio} and
    \ref{prp:dio_shift_invariant}.
  \end{proof}

  For a fixed $d$, the conclusion of \autoref{thm:dio_lower_bound} can be improved. For example, it is shown in
  \cite[Thm.~4.3]{2019:a_new_complexity_function_repetitions_in_sturmian} that
  \begin{equation*}
    \dio{\infw{t}} \geq \frac53 + \frac{4\sqrt{10}}{15} \approx 2.5099
  \end{equation*}
  for a Sturmian word $\infw{t}$. We have not attempted such a study for $d > 2$. However, if $d$ is not fixed, it is
  unlikely that the lower bound $2$ can be improved. The main indication to this is the fact that the Diophantine
  exponent of the standard $d$-bonacci tends to $2$ as $d \to \infty$ (see the discussion after
  \autoref{prp:bonacci_dio}).

  \autoref{thm:dio_lower_bound} leaves open if $\dio{\infw{t}} > 2$ when $\limsup_k a_k \leq 2$ and $d > 2$. We show
  next that this is not necessarily true by providing explicit counterexamples. Before this, we prove the following
  lemmas that narrow down the exceptional directive words and intercepts even further.

  \begin{lemma}\label{lem:narrow}
    Let $\infw{t}$ be an episturmian word of period $d$ with directive word as in \eqref{eq:dw_multiplicative} and
    intercept $c_1 c_2 \dotsm$. If one of the conditions
    \begin{enumerate}[(i)]
      \item $0^d$ occurs infinitely many times in $c_1 c_2 \dotsm$,
      \item there exist infinitely many $k$ such that $c_i = a_i$ for all $i$ such that $k \leq i \leq k + d-2$
    \end{enumerate}
    is satisfied, then $\dio{\infw{t}} > 2$.
  \end{lemma}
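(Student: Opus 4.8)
The plan is to use \autoref{prp:dio_inrc}, which reduces the claim $\dio{\infw{t}} > 2$ to producing an increasing sequence $(n_\ell)$ and a fixed $\delta > 0$ with $n_\ell/\irep{\infw{t}}{n_\ell} \geq 1 + \delta$. I would take each $n_\ell$ to be the right endpoint of one of the subintervals of some $\interval{k}$ listed in \autoref{thm:main}, exactly as in the proof of \autoref{thm:dio_lower_bound}; the two hypotheses only dictate which case of \autoref{thm:main} to invoke and at which scale $k$. Throughout I would lean on the identity $\abs{u_{r_{k+1}}} = a_{k+1}q_k + \abs{u_{r_{k-d+1}}}$ (from \eqref{eq:u_4} and \eqref{eq:reg_s_2}, as in \autoref{prp:standard_dio}), the bound $\val{c_1 \dotsm c_m} < q_m$ of \autoref{lem:ostrowski_ub}, the estimate $q_{k+1} \leq C q_k$ of \autoref{lem:c_bound}, and the inequality $\abs{u_{r_{m+1}}} > q_m$ (immediate from the identity).

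For hypothesis (i), at an occurrence of $0^d$ I would take $k$ at the top of the run, so that $c_{k-d+2} = \dotsm = c_{k+1} = 0$. Then $c_{k+1} = c_k = 0$, placing $\interval{k}$ in case (iv.a) or (v), and the first implication there gives $\irep{\infw{t}}{n} = q_k$ up to $n = \abs{u_{r_{k+1}}} - \val{c_1 \dotsm c_{k-1}}$ (and up to $\abs{u_{r_{k+1}}}$ in case (v)), where $\val{c_1 \dotsm c_{k-1}} = \val{c_1 \dotsm c_{k-d+1}}$ since the intervening digits vanish. The ratio at the endpoint is $a_{k+1} + (\abs{u_{r_{k-d+1}}} - \val{c_1 \dotsm c_{k-1}})/q_k$, which exceeds $1$ once $\abs{u_{r_{k-d+1}}} > \val{c_1 \dotsm c_{k-1}}$. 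I would prove this by descent: writing $\abs{u_{r_{k-d+1}}} - \val{c_1 \dotsm c_{k-1}} = (a_{k-d+1} - c_{k-d+1})q_{k-d} + \bigl(\abs{u_{r_{k-2d+1}}} - \val{c_1 \dotsm c_{k-d}}\bigr)$, the leading coefficient is nonnegative and strictly positive unless $c_{k-d+1}$ is full; since the Ostrowski conditions forbid $d-1$ consecutive full digits, the descent reaches a non-full digit within $d-1$ steps and leaves a surplus of order $q_{k-O(d)}$.

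For hypothesis (ii), the Ostrowski conditions turn the full block $c_k = a_k, \dotsm, c_{k+d-2} = a_{k+d-2}$ into the extra datum $c_{k-1} = 0$, and I would examine $\interval{k-1}$, which is in case (ii) because $c_{(k-1)+1} = a_{(k-1)+1}$. The full block pins down $y_i$: one finds $i = \val[\Delta']{a_{k+1} \dotsm a_{k+d-2}} + 1$ for $\Delta' = T^{r_k}(\Delta)$, which by \autoref{prp:ostrowski_prefix} is the last letter of the length-$q'_{d-2}$ standard word of $\Delta'$, namely $x_{k+d} = x_k$, whence $\abs{\tau_{k-1}(y_i)} = \abs{s_{k-1}} = q_{k-1}$. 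The relevant implication of case (ii) then produces an explicit $\irep{\infw{t}}{n}$ comparable to $q_{k-1}$, and the ratio at $n = \abs{u_{r_k}}$ again reduces to a length comparison between an $\abs{u_{r_{k-d}}}$-type term and a truncated $\val$; I would close it by the same descent, seeded by $c_{k-1} = 0$ and the block, using \autoref{lem:regular_tau_images} to bound the residual $\tau$-images by $q_{k-1}$.

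The hard part will be securing a single $\delta > 0$ valid across all the infinitely many occurrences in the regime where every partial quotient is minimal (the $d$-bonacci-like case $a_k = 1$), since there the surplus is only of order $C^{-O(d)}$ and is genuinely threatened by digits lying below the guaranteed pattern — for instance a full digit immediately beneath a $0^d$-run can push the naive ratio below $1$. Resolving this is where I expect the real effort: one must either select the scale $k$ among the several admissible positions of each occurrence so that the descent terminates with a strict surplus, or recycle the offending full digit into a case-(ii) estimate, and then use \autoref{lem:c_bound} to convert the descent surplus $q_{k-O(d)}/q_k$ into a uniform positive bound. Finally I would verify that the remaining cases of \autoref{thm:main} contribute no admissible configuration escaping these two templates.
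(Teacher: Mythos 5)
Your toolkit is the paper's --- right endpoints of the subintervals of \autoref{thm:main}, \autoref{prp:dio_inrc}, the bound $\val{c_1 \dotsm c_m} < q_m$ from \autoref{lem:ostrowski_ub}, and \autoref{lem:c_bound} to turn surpluses into a uniform $\delta$ --- but both of your window placements are off by one position, the step you defer as ``the real effort'' is the whole proof, and the one tool you do specify for it, the descent, is false. For (i): anchoring the window at the top of the run ($c_{k-d+2} = \dots = c_{k+1} = 0$) leaves $\val{c_1 \dotsm c_{k-1}} = \val{c_1 \dotsm c_{k-d+1}}$ with an unconstrained leading digit, so \autoref{lem:ostrowski_ub} only caps it by $q_{k-d+1}$, which is too much in the minimal-partial-quotient regime. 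Your descent claim (``reaches a non-full digit within $d-1$ steps and leaves a surplus of order $q_{k-O(d)}$'') fails because every full digit forces a zero directly below it, which exactly realigns the recursion: for $d = 2$, $a_i \equiv 1$ and digits $1, 0, 1, 0, \dots$ below the run, the recursion telescopes to $\abs{u_{r_{k-1}}} - \val{c_1 \dotsm c_{k-1}} = (q_{k-1}-2) - (q_{k-1}-1) = -1$, your first-implication endpoint gives ratio $(q_k - 1)/q_k < 1$, and with growing alternating stretches between successive runs the ratios along your sequence tend to $1$, so no uniform $\delta$ emerges (for $d \geq 3$ the recursion does not even keep its shape, since the value loses one digit per step while the central-word index drops $d$ levels). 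The paper's placement puts the run \emph{below} the case-split digit: pick $k$ with $c_{k-d+1} = \dots = c_k = 0$ and split on $c_{k+1}$. Then the subtracted value is $\val{c_1 \dotsm c_{k-d}} < q_{k-d}$, and the single expansion $\abs{s_{k-d-1}^{a_{k-d}} \dotsm s_0^{a_1}} = q_{k-d} + \abs{u_{r_{k-2d+1}}}$ absorbs it in one stroke, leaving a surplus at least $q_{k-2d-1}/q_k \geq C^{-(2d+1)}$ with no induction at all; when $c_{k+1} \neq 0$ the first implications of \autoref{thm:main} (i) and (ii) give the same estimate with numerator $\abs{u_{r_k+1}} - \val{c_1 \dotsm c_k}$.

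For (ii) the same defect recurs: you work in $\interval{k-1}$ through case (ii) of \autoref{thm:main}, but in the regime $a_i \equiv 1$, $d = 3$ your endpoints yield ratios of about $\tfrac12(\zeta_3 - \zeta_3^{-1}) \approx 0.65$ (first implication, with adversarial lower digits) and at most $\tfrac12(\zeta_3 + \zeta_3^{-1})/(2 - \zeta_3^{-1}) \approx 0.82$ (second implication), both below $1$; no descent helps, since the denominators are of order $2q_{k-1}$ while $\abs{u_{r_k}} \approx 1.19\, q_{k-1}$. The paper instead shifts the full block to positions $k+2, \dots, k+d$, harvests the forced zero $c_{k+1} = 0$ from the Ostrowski conditions, and invokes case (v) of \autoref{thm:main}: there $\irep{\infw{t}}{n} = q_k$ on \emph{all} of $\interval{k}$, independently of the lower digits, so nothing is subtracted at the right endpoint and $\abs{u_{r_{k+1}}}/q_k \geq a_{k+1} + q_{k-d-1}/q_k \geq 1 + C^{-(d+1)}$ uniformly. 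In short: correct instruments and a correctly diagnosed danger zone, but the two concrete window choices you commit to provably break down exactly there, and the missing repositioning --- together with case (v), which you never invoke --- is precisely the content of the paper's short proof.
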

  \begin{proof}
    We may assume that $\infw{t}$ has bounded partial quotients and let $C = 1 + \limsup_k a_k$. Suppose that $0^d$
    occurs infinitely many times in $c_1 c_2 \dotsm$. Let $k$ be such that $c_i = 0$ for all $i$ such that
    $k-(d-1) \leq i \leq k$. Suppose that $c_{k+1} = 0$. Then we apply the first implication of \autoref{thm:main}
    (iv.a), \eqref{eq:u_4}, \autoref{lem:ostrowski_ub}, and \eqref{eq:reg_q} as follows:
    \begin{align*}
      \dio{\infw{t}} - 1
      &\geq \limsup_{k\to\infty} \frac{\abs{u_{r_{k+1}}} - \val{c_1 \dotsm c_{k-d}}}{q_k} \\
      &\geq \limsup_{k\to\infty} \frac{q_k + \abs{s_{k-d}^{\s{a}-1} \dotsm s_0^{\s{a}}} - q_{k-d}}{q_k} \\
      &\geq 1 + \limsup_{k\to\infty} \frac{q_{k-d} + \abs{s_{k-2d}^{\s{a}-1} \dotsm s_0^{\s{a}}} - q_{k-d}}{q_k} \\
      &\geq 1 + C^{-(2d+1)}.
    \end{align*}
    Consider then the case $c_{k+1} \neq 0$. By inspecting the first
    implications of \autoref{thm:main} (i) and (ii), we find that
    \begin{equation*}
      \dio{\infw{t}} - 1
      \geq \limsup_{k\to\infty} \frac{\abs{u_{r_k+1}} - \val{c_1 \dotsm c_k}}{q_k},
    \end{equation*}
    so $\dio{\infw{t}} - 1 \geq 1 + C^{-(2d+1)}$ like above.

    Assume then that there exist infinitely many $k$ such that $c_i = a_i$ for all $i$ such that
    $k + 2 \leq i \leq k + d$. We must have $c_{k+1} = 0$ by the Ostrowski conditions. We infer from \autoref{thm:main}
    (v) that
    \begin{equation*}
      \dio{\infw{t}} - 1 \geq \limsup_{k\to\infty} \frac{\abs{u_{r_{k+1}}}}{q_k} \geq \limsup_{k\to\infty} \left( a_{k+1} + \frac{\abs{s_{k-d}^{\s{a}-1} \dotsm s_0^{\s{a}}}}{q_k} \right) \geq 1 + C^{-(d+1)} \qedhere
    \end{equation*}
  \end{proof}

  \begin{lemma}\label{lem:3_ls1}
    Let $\infw{t}$ be a regular episturmian word with directive word as in \eqref{eq:dw_multiplicative} with $d = 3$.
    If $\limsup_k a_k = 1$, then $\dio{\infw{t}} > 2$.
  \end{lemma}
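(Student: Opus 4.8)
The plan is to exploit that $\limsup_k a_k = 1$ forces $a_k = 1$ for all large $k$, so that the length sequence $(q_k)$ eventually obeys the Tribonacci recurrence $q_k = q_{k-1}+q_{k-2}+q_{k-3}$ by \eqref{eq:reg_s_2}, while \eqref{eq:u_4} together with \eqref{eq:reg_s_2} yields the clean identity $\abs{u_{r_{k+1}}} = q_k + \abs{u_{r_{k-2}}}$, whence in particular $\abs{u_{r_{k+1}}} > q_k$ and $\abs{u_{r_{k-2}}} \geq q_{k-3}$. By \autoref{prp:dio_inrc} it suffices to produce infinitely many $n$ with $n/\irep{\infw{t}}{n} \geq 1 + \delta$ for some fixed $\delta > 0$. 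First I would dispose of the degenerate cases using \autoref{lem:narrow}: if $000$ or $11$ occurs infinitely often in the intercept $c_1 c_2 \dotsm$, then $\dio{\infw{t}} > 2$ already, and if $c_k = 0$ for all large $k$ then $\infw{t}$ is a shift of $\infw{c}_\Delta$ and we are done by Propositions \ref{prp:standard_dio} and \ref{prp:dio_shift_invariant}. Hence I may assume the intercept has infinitely many $1$'s and, from some point on, avoids both $000$ and $11$; equivalently its $1$'s are isolated and separated by runs of exactly one or two $0$'s.

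Then I would split into two cases according to the recurrent gap pattern. In Case A, assume the factor $101$ occurs infinitely often; at each such position $k$ (so $c_k = 1$, $c_{k+1}=0$, $c_{k+2}=1$) the hypotheses of \autoref{thm:main}(iv.b) hold, and I evaluate at $n = \abs{u_{r_{k+1}}}$. Using \autoref{lem:regular_tau_images} to compute $\abs{\tau_{k+1}(y_i)} = q_k + q_{k-1}$ (here $y_i = x_k$), the stated formula simplifies, since $c_k = 1$, to $\irep{\infw{t}}{n} = q_k - \val{c_1 \dotsm c_{k-1}}$. Then $n = q_k + \abs{u_{r_{k-2}}} > q_k \geq \irep{\infw{t}}{n}$, so $n/\irep{\infw{t}}{n} - 1 \geq \abs{u_{r_{k-2}}}/q_k \geq q_{k-3}/q_k$, which is bounded away from $0$.

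In Case B the factor $101$ occurs only finitely often, so eventually every $1$ is followed by two $0$'s and the factor $001$ occurs infinitely often. At each position $k$ with $c_{k+1}=1$ (whence $c_{k-1}=c_k=0$ and $c_{k+2}=0$), \autoref{thm:main}(ii) applies with $y_i = x_{k+2}$ and $\abs{\tau_k(y_i)} = q_{k-1}+q_{k-2}$, and I would take $n$ at the right endpoint of its first implication, namely $n = \abs{u_{r_{k+1}}} - \val{c_1 \dotsm c_{k-2}}$. A short computation using $\abs{u_{r_{k+1}}} = q_k + \abs{u_{r_{k-2}}}$ gives $n - \irep{\infw{t}}{n} = \abs{u_{r_{k-2}}} - \val{c_1 \dotsm c_{k-2}} + q_{k-3}$; this is positive because \autoref{lem:ostrowski_ub} bounds $\val{c_1 \dotsm c_{k-2}} < q_{k-2} < 2q_{k-3} \leq \abs{u_{r_{k-2}}} + q_{k-3}$, and in fact $n - \irep{\infw{t}}{n} \geq q_{k-6}$, so again $n/\irep{\infw{t}}{n} - 1 \geq q_{k-6}/q_k$ is bounded away from $0$.

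The main obstacle is not any single estimate but the bookkeeping that funnels every surviving case through the single inequality $\abs{u_{r_{k+1}}} > q_k$: the naive choice of the right endpoint of $\interval{k}$ (the second implication of \autoref{thm:main}(ii), or the $c_{k+2}=0$ subcase of (iv.b)) produces ratios that dip below $1$, so one must select the case of \autoref{thm:main} and the evaluation point according to whether the gap-one pattern $101$ or the gap-two pattern $001$ recurs. Verifying that these two sub-cases are genuinely exhaustive after the \autoref{lem:narrow} reduction, and pinning down the exact $\tau$-image lengths from \autoref{lem:regular_tau_images} so that the $\val{\cdot}$ terms cancel as indicated, is the delicate part; the Tribonacci asymptotics enter only to confirm that the resulting margins $q_{k-3}/q_k$ and $q_{k-6}/q_k$ stay bounded away from $0$.
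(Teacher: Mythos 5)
Your proposal is correct and takes essentially the same route as the paper: the same reduction via \autoref{lem:narrow}, the same two implications of \autoref{thm:main} (the first implication of (ii) applied at an occurrence of $001$ in the intercept, and (iv.b) at an occurrence of $101$, with the exact $\tau$-image lengths $q_{k-1}+q_{k-2}$ and $q_k+q_{k-1}$ from \autoref{lem:regular_tau_images}), and the same $q_{k-j}/q_k$ margins bounded below via \autoref{lem:c_bound}. The only difference is that your dichotomy mirrors the paper's---you split on whether $101$ recurs (else the intercept has suffix $(001)^\omega$), whereas the paper splits on whether $001$ recurs (else suffix $(01)^\omega$)---which changes nothing of substance since both splits are exhaustive over intercepts that are eventually products of the blocks $01$ and $001$.
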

  \begin{proof}
    Say $\limsup_k a_k = 1$ and that $\infw{t}$ has intercept $c_1 c_2 \dotsm$. By \autoref{lem:narrow}, the claim
    follows if the intercept contains $0^3$ infinitely many times or $11$ infinitely many times. Therefore we may
    assume that $c_1 c_2 \dotsm$ is eventually a product of the words $01$ and $001$.

    Assume that $001$ occurs infinitely often in the intercept. Let $k$ be large enough such that we are guaranteed
    that $a_{k+1} = a_k = a_{k-1} = 1$, $c_{k+1} = 1$, and $c_k = c_{k-1} = 0$. By the first implication of
    \autoref{thm:main} (ii), we have
    \begin{equation*}
      \dio{\infw{t}} - 1 \geq \limsup_{k\to\infty} \frac{\abs{u_{r_k+1}} - \val{c_1 \dotsm c_k}}{\abs{\tau_k(y_i)}}.
    \end{equation*}
    By \autoref{lem:regular_help}, we have $\abs{\tau_k(y_i)} \leq q_{k-1} + q_{k-2}$. Therefore
    \begin{align*}
      \frac{\abs{u_{r_k+1}} - \val{c_1 \dotsm c_k}}{\abs{\tau_k(y_i)}}
      &\geq \frac{\abs{s_{k-1} s_{k-2} s_{k-3}^{\s{a}} \dotsm s_0^{\s{a}}} - \val{c_1 \dotsm c_{k-2}}}{q_{k-1} + q_{k-2}} \\
      &\geq \frac{q_{k-1} + \abs{s_{k-3}^{\s{a}} \dotsm s_0^{\s{a}}}}{q_{k-1} + q_{k-2}} \\
      &= \frac{q_{k-1} + q_{k-2} + \abs{s_{k-5}^{\s{a}-1} s_{k-6}^{\s{a}} \dotsm s_0^{\s{a}}}}{q_{k-1} + q_{k-2}} \\
      &\geq 1 + \frac{q_{k-6}}{q_k} \\
      &\geq 1 + C^{-6}
    \end{align*}
    where $C = 1 + \limsup_k a_k = 2$. Therefore $\dio{\infw{t}} > 2$.

    The only case left is the case where $c_1 c_2 \dotsm$ has suffix $(01)^\omega$. Say $k$ is large enough and
    $c_{k+1} = 0$ and $c_{k+2} = c_k = 1$. Since $c_{k+2} = 1$, we have $y_i \neq x_{k+2}$, and so
    $\abs{\tau_{k+1}(y_i)} = q_k + q_{k-1}$. From \autoref{thm:main} (iv.b), we have
    \begin{align*}
      \dio{\infw{t}} - 1
      &\geq \limsup_{k\to\infty} \frac{\abs{u_{r_{k+1}}}}{\abs{\tau_{k+1}(y_i)} - \val{c_1 \dotsm c_k}} \\
      &\geq \limsup_{k\to\infty} \frac{\abs{s_{k-1}^{\s{a}} \dotsm s_0^{\s{a}}}}{q_k + q_{k-1} - q_{k-1}} \\
      &\geq 1 + \limsup_{k\to\infty} \frac{\abs{s_{k-4}^{\s{a}} \dots s_0^{\s{a}}}}{q_k} \\
      &\geq 1 + C^{-4}. \qedhere
    \end{align*}
  \end{proof}

  By \autoref{lem:3_ls1}, the conclusion of \autoref{thm:dio_lower_bound} can fail only if $\limsup_k a_k = 2$ when
  $d = 3$. The next proposition shows that this may happen. For the proof, recall that the Stolz-Ces\`{a}ro Theorem
  states that
  \begin{equation*}
    \lim_{k\to\infty} \frac{a_k}{b_k} = \lim_{k\to\infty} \frac{a_k - a_{k-1}}{b_k - b_{k-1}}
  \end{equation*}
  whenever the right side limit exists and the sequence $(b_k)$ is strictly monotone.

  \begin{proposition}\label{prp:3_ce_dio}
    Let $\infw{t}$ be the episturmian word with directive word $(001122)^\omega$ having intercept $1^\omega$. Then
    \begin{equation*}
      \dio{\infw{t}} = 1 + \frac12(\beta - 1) \approx 1.9156
    \end{equation*}
    where $\beta$, approximately $2.8312$, is the real root of the polynomial $x^3 - 2x^2 - 2x - 1$.
  \end{proposition}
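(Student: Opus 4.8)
The plan is to combine \autoref{thm:main} with \autoref{prp:dio_inrc}, which reduces the problem to computing $\limsup_n n/\irep{\infw{t}}{n}$. Writing $(001122)^\omega$ in the form \eqref{eq:dw_multiplicative} gives period $d = 3$ and $a_k = 2$ for every $k$, while the intercept $1^\omega$ means $c_k = 1$ for all $k$. By \eqref{eq:reg_s_2} the lengths $(q_k)$ then satisfy $q_k = 2q_{k-1} + 2q_{k-2} + q_{k-3}$, whose characteristic polynomial is exactly $x^3 - 2x^2 - 2x - 1$; since $\beta$ is its dominant root we have $q_{k+1}/q_k \to \beta$, and hence $\frac{1}{q_k}\sum_{i=0}^{k-1} q_i \to \sum_{m\geq 1}\beta^{-m} = \frac{1}{\beta-1}$. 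Throughout I abbreviate $\sigma_k = \sum_{i=0}^{k-1} q_i$, so $\sigma_k/q_k \to \sigma := 1/(\beta-1)$, and I write $\gamma = 1/\beta$, noting $2 + 2\gamma + \gamma^2 = (2\beta^2+2\beta+1)/\beta^2 = \beta$ by the characteristic relation $\beta^3 = 2\beta^2+2\beta+1$.

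Because $c_{k+1} = 1 = a_{k+1} - 1 > 0$ and $c_k = 1 \neq 0$ for all $k$, case (iii) of \autoref{thm:main} applies on every interval $\interval{k}$, splitting it into three subintervals. First I would assemble the required lengths in closed form. From \eqref{eq:u_4} with all $a_i = 2$ one gets $\abs{u_{r_k+1}} = 2\sigma_k$ and $\abs{u_{r_{k+1}}} = q_k + 2\sigma_k$, while $\val{c_1 \dotsm c_k} = \sum_{i=1}^{k} q_{i-1} = \sigma_k$ directly from the definition of $\val{\cdot}$. It remains to compute $\abs{\tau_k(y_i)}$: here $i = \val[\Delta']{c_{k+2}} + 1 = 2$, so $y_i$ is the second letter of $\infw{c}_{\Delta'}$, which equals $x_{k+2}$. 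A short desubstitution, using $x_{k+2} \neq x_k$ and $x_{k+2} = x_{k-1}$ by regularity, gives $\tau_k(x_{k+2}) = s_{k-1}^{a_k}\tau_{k-1}(x_{k-1}) = s_{k-1}^2 s_{k-2}$, hence $\abs{\tau_k(y_i)} = 2q_{k-1} + q_{k-2}$. This $\tau_k$-computation is the subtle ingredient, as the index $i$ and the letter $y_i$ must be tracked carefully.

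With these substitutions the three values of $\irep{\infw{t}}{n}$ from case (iii) are constant on their subintervals, so $n/\irep{\infw{t}}{n}$ is maximal at each right endpoint. Evaluating at the endpoints $2\sigma_k$, $\sigma_k + q_k$, and $2\sigma_k + q_k$, dividing through by $q_k$, and letting $k \to \infty$ produces the three limits $\frac{2\sigma}{1+2\gamma+\gamma^2-\sigma}$, $\frac{\sigma+1}{1+2\gamma+\gamma^2}$, and $\frac{2\sigma+1}{2+2\gamma+\gamma^2-\sigma}$. The main task is to show the third is the largest and to simplify it. Using $2+2\gamma+\gamma^2 = \beta$ (and $1+2\gamma+\gamma^2 = \beta-1$), the third limit becomes $\frac{2/(\beta-1)+1}{\beta - 1/(\beta-1)} = \frac{\beta+1}{\beta^2-\beta-1}$, and cross-multiplying with the claimed value $\frac{\beta-1}{2}$ reduces, via $\beta^3 = 2\beta^2+2\beta+1$, to the identity $2(\beta+1) = \beta^3 - 2\beta^2 + 1$, which holds. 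Numerically the first two limits are about $0.850$ and $0.844$, both below $\frac{\beta-1}{2}\approx 0.9156$, confirming the third dominates.

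The remaining work is bookkeeping: verifying the three subintervals are nonempty and cover $\interval{k}$, which follows since $\val{1^k} = \sigma_k < q_k$ by \autoref{lem:ostrowski_ub} (the word $1^k$ trivially satisfies the Ostrowski conditions as each $c_i = 1 < 2 = a_i$); checking the strict inequalities that single out the third endpoint; and observing that the per-$k$ maxima converge, so the $\limsup$ over all $n$ equals this common limit. \autoref{prp:dio_inrc} then yields $\dio{\infw{t}} = 1 + \frac12(\beta-1)$, as claimed.
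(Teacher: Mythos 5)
Your proposal is correct and follows essentially the same route as the paper: apply \autoref{thm:main}~(iii) on each $\interval{k}$ with $y_i = x_{k+2}$ and $\abs{\tau_k(y_i)} = 2q_{k-1}+q_{k-2}$, evaluate the three ratios at the right endpoints, and pass to the limit via $q_{k+\ell}/q_k \to \beta^\ell$, with the third ratio dominating and simplifying to $\tfrac12(\beta-1)$. The only (cosmetic) difference is that you express $\abs{u_{r_k+1}}$ and $\val{1^k}$ through $\sigma_k = \sum_{i<k} q_i$ and the geometric limit $\sigma_k/q_k \to 1/(\beta-1)$, whereas the paper proves the explicit closed forms \eqref{eq:3_u} and \eqref{eq:3_val} by induction; the resulting limits agree exactly ($\approx 0.8499$, $0.8443$, $0.9156$).
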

  \begin{proof}
    Suppose that $k$ is such that $k \geq 4$. It follows from \autoref{thm:main} (iii) that $\dio{\infw{t}} - 1$ equals
    the largest of the limits of the ratios
    \begin{equation*}
      \frac{\abs{u_{r_k+1}}}{q_k + \abs{\tau_k(y_i)} - \val{c_1 \dotsm c_k}},
      \frac{\abs{u_{r_k+1}} + q_k - \val{c_1 \dotsm c_k}}{q_k + \abs{\tau_k(y_i)}},
      \frac{\abs{u_{r_{k+1}}}}{2q_k + \abs{\tau_k(y_i)} - \val{c_1 \dotsm c_k}}
    \end{equation*}
    as $k \to \infty$. Here $\Delta' = T^{r_{k+1}}(\Delta) = T^{2(k+1)}(\Delta)$, so the words $\Delta$ and $\Delta'$
    are isomorphic. Therefore the Ostrowski numeration systems associated with $\Delta$ and $\Delta'$ are the same.
    It follows that $\val[\Delta']{c_{k+2} \dotsm c_{k+d-1}} = \val[\Delta]{1} = 1$ meaning that $y_i$ is the second
    letter of the word $\infw{c}_{\Delta'}$, that is, $y_i = x_{k+2}$. Thus $\tau_k(y_i) = s_{k-1}^2 s_{k-2}$. The
    numbers $q_i$ satisfy the linear recurrence $q_i = 2q_{i-1} + 2q_{i-2} + q_{i-3}$, so it follows from the theory of
    linear recurrences that $\smash[t]{q_{k+\ell}/q_k \to \beta^\ell}$ as $k \to \infty$.

    Let us compute the last limit. From \eqref{eq:u_2}, we find that
    $\abs{u_{r_{k+1}}} - \abs{u_{r_k}} = q_k + q_{k-1}$. In addition, we have
    $2q_k + \abs{\tau_k(y_i)} = q_{k+1}$ and $\val{1^k} - \val{1^{k-1}} = q_{k-1}$. Therefore the Stolz-Ces\`{a}ro
    Theorem implies that
    \begin{align*}
      \lim_{k\to\infty} \frac{\abs{u_{r_{k+1}}}}{2q_k + \abs{\tau_k(y_i)} - \val{c_1 \dotsm c_k}}
      &= \lim_{k\to\infty} \frac{q_k + q_{k-1}}{q_{k+1} - q_k - q_{k-1}} \\
      &= \lim_{k\to\infty} \frac{q_k + q_{k-1}}{q_k + q_{k-1} + q_{k-2}} \\
      &= \frac{\beta^2 + \beta}{\beta^2 + \beta + 1} \\
      &= \frac12(\beta - 1) \\
      &\approx 0.9156.
    \end{align*}
    Therefore $\dio{\infw{t}}$ is at least as large as claimed. It thus suffices to show that the other two limits do
    not exceed this value. For the first ratio, we find like above that
    \begin{equation*}
      \lim_{k\to\infty} \frac{\abs{u_{r_k+1}}}{q_k + \abs{\tau_k(y_i)} - \val{c_1 \dotsm c_k}}
      = \lim_{k\to\infty} \frac{2q_{k-1}}{2q_{k-1} + q_{k-2}}
      = \frac{2\beta}{2\beta + 1}
      \approx 0.8499
    \end{equation*}
    and, for the second ratio, we have
    \begin{equation*}
      \lim_{k\to\infty} \frac{\abs{u_{r_k+1}} + q_k - \val{c_1 \dotsm c_k}}{q_k + \abs{\tau_k(y_i)}}
      = \lim_{k\to\infty} \frac{q_k}{3q_{k-1} + q_{k-2}}
      = \frac{\beta}{3\beta + 1}
      \approx 0.8443. \qedhere
    \end{equation*}
  \end{proof}

  Notice that \autoref{prp:3_ce_dio} shows that the Diophantine exponent can be less than that of the corresponding
  standard word. The next proposition demonstrates that \autoref{lem:3_ls1} does not generalize to $d > 3$. Therefore
  the assumptions of \autoref{thm:dio_lower_bound} are necessary.

  \begin{proposition}\label{prp:4-bonacci_dio}
    Let $\infw{t}$ be the episturmian word with directive word $(0123)^\omega$ having intercept $(001)^\omega$,
    $(010)^\omega$, or $(100)^\omega$. Then
    \begin{equation*}
      \dio{\infw{t}} = 1 + \frac{1}{27}(-7 \zeta_4^3 + 15 \zeta_4^2 + 13 \zeta_4 - 4) \approx 1.9873
    \end{equation*}
    where $\zeta_4$, approximately $1.9276$, is the positive real root of the polynomial $x^4 - x^3 - x^2 - x - 1$.
  \end{proposition}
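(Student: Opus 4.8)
The plan is to mirror the proof of \autoref{prp:3_ce_dio}, now with $d = 4$ and every partial quotient equal to $1$. Since $a_k = 1$ for all $k$, we have $r_k = k$ and $\abs{u_{r_k+1}} = \abs{u_{r_{k+1}}} = \abs{u_{k+1}}$, each interval $\interval{k}$ coincides with $\interval{k,0}$ and has $q_{k-1}$ elements, and by \eqref{eq:q_recursion_2} and \eqref{eq:q_recursion_1} the lengths $q_k$ obey the $4$-bonacci recurrence $q_k = q_{k-1} + q_{k-2} + q_{k-3} + q_{k-4}$, whose dominant root is $\zeta_4$; hence $q_{k+\ell}/q_k \to \zeta_4^\ell$ as $k \to \infty$. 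The three intercepts are the cyclic rotations of the period-$3$ block $001$, and the data that govern \autoref{thm:main} at an index $k$---the applicable case, the value $\val{c_1 \dotsm c_k}$, and the block type $y_i$---depend only on $k$ modulo $3$ (through the local intercept pattern) and modulo $4$ (through the position in $\Delta$). As $k \to \infty$, every residue modulo $12$ occurs for each of the three intercepts, so the three limits superior of $n/\irep{\infw{t}}{n}$ coincide; it therefore suffices to treat $(001)^\omega$, for which $c_j = 1$ exactly when $3 \mid j$.

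First I would record, by routine inductions as in \eqref{eq:3_u} and \eqref{eq:3_val}, closed forms for the two quantities entering \autoref{thm:main}: the right endpoint $\abs{u_{r_{k+1}}} = \abs{u_{k+1}} = \sum_{i=0}^{k-1} q_i$ of $\interval{k}$ (from \eqref{eq:u_4} with all $a_i = 1$) and the value $\val{c_1 \dotsm c_k} = \sum_{3j \le k} q_{3j-1}$; both reduce, via the recurrence, to a fixed linear combination of $q_k, q_{k-1}, q_{k-2}, q_{k-3}$ plus a constant. Next I would classify the phase: since the intercept has period $3$, one checks that the applicable case of \autoref{thm:main} depends only on $k \bmod 3$, with $k \equiv 0$ under (iv.b), $k \equiv 1$ under (iv.a), and $k \equiv 2$ under (ii). Cases (i), (iii), (iv.c), and (v) cannot occur: the first three require a partial quotient exceeding $1$, and (v) would need three consecutive $c_i$ equal to $1$, which the intercept never contains. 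In each phase I would compute the block type $y_i$ from $i = \val[\Delta']{c_{k+2} c_{k+3}} + 1$ with $\Delta' = T^{k+1}(\Delta)$, read off the corresponding letter of $\infw{c}_{\Delta'}$, and evaluate the lengths $\abs{\tau_k(y_i)}$ and $\abs{\tau_{k+1}(y_i)}$ using \autoref{lem:regular_tau_images}; each is again a short sum of $q$'s.

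By \autoref{prp:dio_inrc} we have $\dio{\infw{t}} - 1 = \limsup_{n\to\infty} n/\irep{\infw{t}}{n}$, and this limit superior is attained at the right endpoints of the subintervals listed in \autoref{thm:main}. I would therefore write out the finitely many endpoint ratios produced by cases (ii), (iv.a), and (iv.b)---for instance $\abs{u_{k+1}} / (\abs{\tau_{k+1}(y_i)} - \val{c_1 \dotsm c_k})$ in case (iv.b)---substitute the closed forms from the previous step, and pass to the limit using $q_{k+\ell}/q_k \to \zeta_4^\ell$, turning each ratio into a rational function of $\zeta_4$. Taking the maximum over all endpoints and all three phases, and reducing modulo the minimal polynomial $x^4 - x^3 - x^2 - x - 1$, should collapse the dominant ratio to $\tfrac{1}{27}(-7\zeta_4^3 + 15\zeta_4^2 + 13\zeta_4 - 4)$, yielding the stated value of $\dio{\infw{t}}$.

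The main obstacle is the bookkeeping in the two middle steps: correctly pinning down $y_i$ and the corresponding $\tau$-image lengths in each of the twelve phases, and then verifying that a single endpoint ratio dominates all competitors in the limit. As in \autoref{prp:3_ce_dio}, I would exhibit the maximizing endpoint explicitly and check that every other candidate limit is strictly smaller, which amounts to comparing rational functions of $\zeta_4$ with the help of its minimal polynomial.
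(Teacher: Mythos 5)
Your proposal follows essentially the same route as the paper's proof: it invokes \autoref{prp:dio_inrc} and \autoref{thm:main}, correctly identifies that only cases (ii), (iv.a), (iv.b) occur (matching the paper's split by $c_{k+1}, c_k$, which for $(001)^\omega$ is exactly your classification by $k \bmod 3$), uses the same closed forms for $\abs{u_{r_{k+1}}}$ and $\val{c_1 \dotsm c_k}$ (your telescoped sums are equivalent to the paper's explicit fractions), determines $y_i$ and the $\tau$-image lengths via \autoref{lem:regular_tau_images}, and compares the endpoint ratios in the limit $q_{k+\ell}/q_k \to \zeta_4^\ell$. The only cosmetic difference is that you justify the equality of the three intercepts by a residue argument where the paper simply notes the computations are identical, so this is the paper's proof in all essentials.
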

  \begin{proof}
    Assume that the intercept $c_1 c_2 \dotsm$ equals $(001)^\omega$. Suppose that $k$ is such that $k \geq 3$ and
    $c_{k+1} = 1$. It follows from \autoref{thm:main} (ii) that $\dio{\infw{t}} - 1$ is it least as large as the limits
    of
    \begin{equation*}
      \frac{\abs{u_{r_{k+1}}} - \val{c_1 \dotsm c_k}}{\abs{\tau_k(y_i)}} \quad \text{and} \quad \frac{\abs{u_{r_{k+1}}}}{q_k + \abs{\tau_k(y_i)} - \val{c_1 \dotsm c_k}}
    \end{equation*}
    as $k \to \infty$ (along appropriate subsequences). Now $c_{k+2} \dotsm c_{k+d-2} = 00$, so $y_i$ is the first letter of the epistandard word with
    intercept $x_{k+2} x_{k+3} \dotsm$, that is, $y_i = x_{k+2}$ (notice that the word $\Delta'$ of \autoref{thm:main}
    is isomorphic to the directive word $(0123)^\omega$, so the numeration systems associated to both directive words
    are the same). It follows that $\tau_k(y_i) = s_{k-1} s_{k-2} s_{k-3}$. The numbers $q_i$ satisfy the linear
    recurrence $q_i = q_{i-1} + q_{i-2} + q_{i-3} + q_{i-4}$, so it follows from the theory of linear recurrences that
    $q_{k+\ell}/q_k \to \zeta_4^\ell$ as $k \to \infty$.

    From \eqref{eq:u_4}, we find that $\abs{u_{r_{k+1}}} - \abs{u_{r_{k-2}}} = q_k - q_{k-4}$. In addition,
    $\val{c_1 \dotsm c_k} - \val{c_1 \dotsm c_{k-3}} = q_{k-3}$. Using the Stolz-Ces\`{a}ro Theorem, the limit of the
    first ratio is found as follows:
    \begin{align*}
      \lim_{k\to\infty} \frac{\abs{u_{r_{k+1}}} - \val{c_1 \dotsm c_k}}{\abs{\tau_k(y_i)}}
      &= \lim_{k\to\infty} \frac{\abs{u_{r_{k+1}}} - \abs{u_{r_{k-2}}} - (\val{c_1 \dotsm c_k} - \val{c_1 \dotsm c_{k-3}})}{\abs{\tau_k(y_i)} - \abs{\tau_{k-3}(y_i)}} \\
      &= \lim_{k\to\infty} \frac{q_{k-1} + q_{k-2}}{q_{k-1} + q_{k-2} + q_{k-7}} \\
      &= \frac{\zeta_4^6 + \zeta_4^5}{\zeta_4^6 + \zeta_4^5 + 1} \\
      &= \frac{1}{27}(-7 \zeta_4^3 + 15 \zeta_4^2 + 13 \zeta_4 - 4) \\
      &\approx 0.9873.
    \end{align*}
    Therefore $\dio{\infw{t}}$ is at least as large as claimed. It thus suffices to show that the analogous limits do
    not exceed this value in the other cases. For the latter ratio, we find that
    \begin{align*}
      \lim_{k\to\infty} \frac{\abs{u_{r_{k+1}}}}{q_k + \abs{\tau_k(y_i)} - \val{c_1 \dotsm c_k}}
      = \lim_{k\to\infty} \frac{q_k - q_{k-4}}{q_k + q_{k-1} + q_{k-2} - 2q_{k-3} + q_{k-7}}
      \approx 0.6107.
    \end{align*}

    Assume then that $c_{k+1} = 0$ and $c_k = 1$. By \autoref{thm:main} (iv.b), we need to find the limit of
    \begin{equation*}
      \frac{\abs{u_{r_{k+1}}}}{\abs{\tau_{k+1}(y_i)} - \val{c_1 \dotsm c_k}}.
    \end{equation*}
    Now $c_{k+2} \dotsm c_{k+d-1} = 01$ and $\val{c_{k+2} \dotsm c_{k+d-1}} = 2$, so again $y_i = x_{k+2}$ and
    $\abs{\tau_{k+1}(y_i)} = q_{k+1}$. Moreover, we have $\val{c_1 \dotsm c_k} - \val{c_1 \dotsm c_{k-2}} = q_{k-1}$.
    Hence
    \begin{equation*}
      \lim_{k\to\infty} \frac{\abs{u_{r_{k+1}}}}{\abs{\tau_{k+1}(y_i)} - \val{c_1 \dotsm c_k}}
      = \lim_{k\to\infty} \frac{q_k - q_{k-4}}{q_{k+1} - q_{k-2} - q_{k-1}}
      \approx 0.8139.
    \end{equation*}

    Suppose finally that $c_{k+1} = c_k = 0$. By \autoref{thm:main} (iv.a), we need to consider the ratios
    \begin{equation*}
      \frac{\abs{u_{r_{k+1}}} - \val{c_1 \dotsm c_{k-1}}}{q_k} \quad \text{and} \quad
      \frac{\abs{u_{r_{k+1}}}}{\abs{\tau_{k+1}(y_i)} - \val{c_1 \dotsm c_{k-1}}}.
    \end{equation*}
    This time around $c_{k+2} \dotsm c_{k+d-1} = 10$, $\val{c_{k+2} \dotsm c_{k+d-1}} = 1$, $y_i = x_{k+3}$,
    $\tau_{k+1}(y_i) = s_k s_{k-1} s_{k-2}$, and $\val{c_1 \dotsm c_k} - \val{c_1 \dotsm c_{k-2}} = q_{k-2}$.
    Proceeding as above, it is straightforward to show that the limits are approximately $0.7653$ and $0.7309$. This
    proves the claim. It is straightforward to check that the intercepts $(010)^\omega$ and $(100)^\omega$ lead to
    exactly the same result.
  \end{proof}

  The intercepts $(001)^\omega$, $(010)^\omega$, and $(100)^\omega$ are not the only interesting ones for words in the
  $4$-bonacci subshift. It can be computed (like in the proof of \autoref{prp:4-bonacci_dio}) that the words with
  intercepts $(01)^\omega$ or $(10)^\omega$ have Diophantine exponent $2$. The word with intercept $(011)^\omega$ and
  the words with conjugate intercepts $(110)^\omega$ and $(101)^\omega$ have Diophantine exponent $1 + \zeta_4^2 -
  \zeta_4 \approx 2.7879$. The intercepts $(0001)^\omega$ and $(0011)^\omega$ and their conjugate intercepts give
  Diophantine exponent $\dio{\infw{c}_{(0123)^\omega}} \approx 2.0781$. We have not attempted to compute Diophantine
  exponents for aperiodic intercepts. Computer experiments suggest that the Diophantine exponent of the word in the
  $5$-bonacci subshift with intercept $(001)^\omega$ is approximately $1.9148$ and approximately $1.8535$ for the
  intercept $(01)^\omega$. We believe that similar results are obtained for $d$-bonacci words for $d > 4$ and for words
  that satisfy $\limsup_k a_k = 1$.

  Since $\ice{\infw{x}} \leq \dio{\infw{x}}$ for any infinite word $\infw{x}$, \autoref{prp:3_ce_dio} has the following
  remarkable consequence.

  \begin{corollary}
    There exists an episturmian word over a $3$-letter alphabet having only finitely many square prefixes.
  \end{corollary}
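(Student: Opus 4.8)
The plan is to deduce the corollary directly from \autoref{prp:3_ce_dio} via the inequality $\ice{\infw{x}} \leq \dio{\infw{x}}$ recorded earlier in this section. The witnessing word will be exactly the one appearing in that proposition: let $\infw{t}$ be the episturmian word with directive word $(001122)^\omega$ and intercept $1^\omega$. This word lives over the $3$-letter alphabet $\{0,1,2\}$, so it is an admissible candidate for the claim, and all that remains is to verify that it has only finitely many square prefixes.

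First I would invoke \autoref{prp:3_ce_dio}, which gives $\dio{\infw{t}} = 1 + \tfrac12(\beta-1) \approx 1.9156$, a value strictly less than $2$. Combining this with $\ice{\infw{t}} \leq \dio{\infw{t}}$ immediately yields $\ice{\infw{t}} < 2$. No additional computation is needed here: the hard analytic work has already been done inside \autoref{prp:3_ce_dio}, and this corollary simply reads off a consequence.

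The crucial step is to translate the bound $\ice{\infw{t}} < 2$ into a statement about square prefixes. A square prefix of $\infw{t}$ is a nonempty prefix of the form $WW$; equivalently, $W$ is itself a prefix of $\infw{t}$ and $W^2 = W^{2}$ is a prefix of $\infw{t}$. Hence, if $\infw{t}$ possessed infinitely many square prefixes, their first halves $W$ would constitute prefixes of $\infw{t}$ of unbounded length for each of which the integer power $W^2$ is again a prefix. By the very definition of $\ice{\infw{t}}$ as the supremum of those $\rho$ realized by arbitrarily long prefixes $V$ with $V^\rho$ a prefix, the value $\rho = 2$ would then be realized by arbitrarily long prefixes, forcing $\ice{\infw{t}} \geq 2$. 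This contradicts $\ice{\infw{t}} < 2$, so $\infw{t}$ admits only finitely many square prefixes, proving the corollary.

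The argument is very short, and the only point that genuinely requires care—and thus the main (modest) obstacle—is the exact matching between the purely combinatorial notion of a square prefix and the analytic definition of $\ice{\infw{t}}$. Specifically, I would make sure that ``infinitely many square prefixes'' does indeed produce arbitrarily long prefixes $W$ realizing the exponent $\rho = 2$ (rather than merely approaching it), so that the supremum defining $\ice{\infw{t}}$ is genuinely $\geq 2$; once this translation is pinned down, the contradiction with \autoref{prp:3_ce_dio} closes the proof.
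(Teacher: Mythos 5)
Your proposal is correct and is exactly the paper's argument: the paper deduces the corollary in one line from \autoref{prp:3_ce_dio} via the inequality $\ice{\infw{x}} \leq \dio{\infw{x}}$, using the same witness word with directive word $(001122)^\omega$ and intercept $1^\omega$. Your extra care in checking that infinitely many square prefixes yield arbitrarily long prefixes $W$ with $W^2$ a prefix (hence $\ice{\infw{t}} \geq 2$) is a sound elaboration of what the paper leaves implicit.
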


  This is indeed unexpected since every Sturmian word and every regular epistandard word has arbitrarily long square
  prefixes \cite[Lemma~6.5]{2007:powers_in_a_class_of_A_strict_standard_episturmian}. We expect that a regular
  episturmian word has infinitely many square prefixes when $\limsup_{k\to\infty} a_k \geq 3$, but we have not
  attempted to prove this. We also expect that every infinite word in the Tribonacci subshift has arbitrarily long
  square prefixes.

  \subsection{Diophantine Exponents of $d$-bonacci Words}
  Here we prove more explicit results on the Diophantine exponents of $d$-bonacci words.

  \begin{definition}
    The \emph{$d$-bonacci constant} $\zeta_d$ is the positive real root of the polynomial
    $x^d - x^{d-1} - \dotsm - x - 1$.
  \end{definition}

  \begin{proposition}\label{prp:bonacci_dio}
    For the $d$-bonacci word $\infw{f}_d$ with directive word $(01 \dotsm (d-1))^\omega$ and intercept $0^\omega$, we
    have $\dio{\infw{f}_d} = \ice{\infw{f}_d} = \ind{\infw{f}_d} - 1 = 1 + 1/(\zeta_d - 1)$.
  \end{proposition}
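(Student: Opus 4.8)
The plan is to specialize Proposition~\ref{prp:standard_dio} to the $d$-bonacci word and then identify the resulting limit with $1/(\zeta_d-1)$ by exploiting the minimal polynomial of $\zeta_d$. Since $\Delta=(01\dotsm(d-1))^\omega$ has $a_k=1$ for all $k$, we have $r_k=k$, and \eqref{eq:reg_s_2} gives the linear recurrence $q_k=q_{k-1}+\dotsm+q_{k-d}$ for $k\geq d$, whose characteristic polynomial is exactly $x^d-x^{d-1}-\dotsm-1$. Thus $\zeta_d$ is its dominant (Pisot) root, so the theory of linear recurrences yields $q_{k+\ell}/q_k\to\zeta_d^\ell$ and $q_k=A\zeta_d^k(1+o(1))$ with $A>0$. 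Because $\infw{f}_d=\infw{c}_\Delta$ has bounded partial quotients, Proposition~\ref{prp:standard_dio} already supplies
\[
  \dio{\infw{f}_d}=\ind{\infw{f}_d}-1=1+\limsup_{k\to\infty}\frac{\abs{u_{r_{k+1}}}}{q_k},
\]
so the first and third equalities are immediate once this limit is computed.

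To evaluate it, I would first record the closed form $\abs{u_{k+1}}=\sum_{i=0}^{k-1}q_i$, which follows from \eqref{eq:u_2}, namely $u_{k+1}=h_{k-1}u_k$, together with $\abs{h_{k-1}}=q_{k-1}$ and $\abs{u_1}=0$. Then
\[
  \frac{\abs{u_{k+1}}}{q_k}=\sum_{j=1}^{k}\frac{q_{k-j}}{q_k}
  \xrightarrow{k\to\infty}\sum_{j=1}^{\infty}\zeta_d^{-j}=\frac{1}{\zeta_d-1},
\]
where the interchange of limit and summation is justified by a uniform geometric bound $q_{k-j}/q_k\leq B\zeta_d^{-j}$ coming from the asymptotics of $(q_k)$. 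Hence $\dio{\infw{f}_d}=\ind{\infw{f}_d}-1=1+1/(\zeta_d-1)$. Equivalently, one may feed the form $1+\abs{u_{k-d+1}}/q_k$ that appears literally in Proposition~\ref{prp:standard_dio} through the identity $\zeta_d^{d+1}=2\zeta_d^d-1$, obtained by multiplying $\zeta_d^d=\zeta_d^{d-1}+\dotsm+1$ by $\zeta_d$ and subtracting.

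It remains to prove $\ice{\infw{f}_d}=\dio{\infw{f}_d}$. As $\ice{\infw{x}}\leq\dio{\infw{x}}$ always, I only need the reverse inequality, and for this I would exhibit explicit prefix powers. For $k\geq d$ the integer $P(r_k+1)$ exists, so Lemma~\ref{lem:prefix_powers} together with the definition of $t_k$ shows that $s_k^{a_{k+1}+1}t_k=s_k^2t_k$ is a prefix of $\infw{f}_d$ with period $q_k$; by Lemma~\ref{lem:st_which_u}, $s_kt_k=u_{r_k+1}=u_{k+1}$, so this prefix has length $\abs{u_{k+1}}+q_k$. Taking $V=s_k$ (the prefix of length $q_k$), this says $V^{e_k}$ is a prefix of $\infw{f}_d$ with $e_k=1+\abs{u_{k+1}}/q_k$ and $\abs{V}=q_k\to\infty$. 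Since $e_k\to1+1/(\zeta_d-1)$, every rational $\rho<1+1/(\zeta_d-1)$ satisfies $e_k>\rho$ for infinitely many $k$, producing arbitrarily long prefixes $V$ with $V^\rho$ a prefix; therefore $\ice{\infw{f}_d}\geq1+1/(\zeta_d-1)=\dio{\infw{f}_d}$, closing the chain.

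The main obstacle I anticipate is this last step: in general $\ice$ can be strictly smaller than $\dio$, so the equality is special to the standard word and hinges on the fact that the repetitions realizing the Diophantine exponent are genuine prefix powers—the word having period $q_k$ starting at its very beginning. The algebraic computation of the limit is routine once the recurrence and the closed form $\abs{u_{k+1}}=\sum_{i<k}q_i$ are in hand.
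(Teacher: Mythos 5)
Your proof is correct, and its skeleton matches the paper's: both specialize \autoref{prp:standard_dio} to $\Delta=(01\dotsm(d-1))^\omega$ and evaluate the resulting limit superior via the recurrence $q_k=q_{k-1}+\dotsm+q_{k-d}$ and the dominant-root asymptotics $q_{k+\ell}/q_k\to\zeta_d^\ell$. You diverge in two places, both defensibly. First, where the paper imports the closed form $\abs{u_{r_k+1}}=\frac{1}{d-1}\sum_{i=0}^{d-1}(d-i)q_{k-i-1}-\frac{d}{d-1}$ from \cite[Lemma~19]{2020:on_non-repetitive_complexity_of_arnoux-rauzy_words} and then verifies $(\zeta_d-1)\sum_{i=0}^{d-1}(d-i)\zeta_d^{-(i+1)}=d-1$ via the minimal polynomial, you telescope \eqref{eq:u_2} into $\abs{u_{k+1}}=\sum_{i=0}^{k-1}q_i$ and sum the geometric series $\sum_{j\geq 1}\zeta_d^{-j}=1/(\zeta_d-1)$; this is more elementary and self-contained, and your uniform bound $q_{k-j}/q_k\leq B\zeta_d^{-j}$ (valid since $\zeta_d$ is a Pisot root, so $q_k\sim A\zeta_d^k$ with $A>0$) correctly justifies exchanging limit and sum. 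Second, and more substantively, the paper dismisses the equality $\ice{\infw{f}_d}=\dio{\infw{f}_d}$ by asserting that \autoref{prp:standard_dio} ``implies the first and second equalities,'' even though that proposition says nothing about the initial critical exponent; what is implicitly needed is that the repetitions realizing the exponent for the \emph{standard} word are genuine prefix powers. Your explicit construction supplies exactly this: for $k\geq d$ the word $s_k^{2}t_k$ is a prefix of period $q_k$ by \autoref{lem:prefix_powers}, of length $q_k+\abs{u_{k+1}}$ by \autoref{lem:st_which_u} (since $s_kt_k=u_{r_k+1}=u_{k+1}$ when $a_{k+1}=1$), yielding prefix powers $s_k^{e_k}$ with $e_k=1+\abs{u_{k+1}}/q_k\to 1+1/(\zeta_d-1)$, whence $\ice{\infw{f}_d}\geq 1+1/(\zeta_d-1)$, and the reverse inequality is the general fact $\ice{\infw{x}}\leq\dio{\infw{x}}$. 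On this point your write-up is actually more complete than the paper's, and your closing remark correctly identifies why the equality is special to the standard word.
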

  \begin{proof}
    We deduce from \autoref{thm:main} (iv.a) and the Stolz-Ces\`{a}ro Theorem that
    \begin{equation*}
      \dio{\infw{f}_d} - 1
      = \lim_{k\to\infty} \frac{\abs{u_{r_k+1}}}{q_k}
      = \lim_{k\to\infty} \frac{q_k}{q_k - q_{k-1}}
      = \frac{1}{\zeta_d - 1}. \qedhere
    \end{equation*}
  \end{proof}

  In particular, we have $\dio{\infw{f}_2} \approx 2.6180$, $\dio{\infw{f}_3} \approx 2.1915$, and
  $\dio{\infw{f}_4} \approx 2.0781$. Moreover, we have $\dio{\infw{f}_d} \to 2$ as $d \to \infty$ since it is
  well-known that $\zeta_d \to 2$ as $d \to \infty$ (see, e.g.,
  \cite{2014:three_series_for_the_generalized_golden_mean}).

  \begin{lemma}\label{lem:bonacci_conditions}
    Let $\infw{t}$ be a word in the $d$-bonacci subshift with intercept $c_1 c_2 \dotsm$ that is not in the
    shift orbit of $\infw{f}_d$. If $0^d$ occurs infinitely many times in $c_1 c_2 \dotsm$, then
    \begin{equation*}
      \dio{\infw{t}} \geq 2 + \frac{1}{\zeta_d^d - \zeta_d} > \dio{\infw{f}_d}.
    \end{equation*}
    If $0^\ell$ occurs in $c_1 c_2 \dotsm$ for arbitrarily large $\ell$, then
    \begin{equation*}
      \dio{\infw{t}} \geq 1 + \frac{\zeta_d^d}{(\zeta_d - 1)(\zeta_d^d - 1)}.
    \end{equation*}
  \end{lemma}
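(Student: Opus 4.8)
The plan is to invoke \autoref{prp:dio_inrc}, which reduces everything to exhibiting, along a suitable sequence of lengths $n$, good lower bounds for $n/\irep{\infw{t}}{n}$. Throughout I would use that every word in the $d$-bonacci subshift has directive word $(01\dotsm(d-1))^\omega$, so all partial quotients equal $1$; hence the intercept $c_1 c_2 \dotsm$ is a $\{0,1\}$-sequence, and $\infw{t}$ fails to lie in the shift orbit of $\infw{f}_d$ precisely when this sequence is not eventually $0$, i.e.\ it has infinitely many $1$'s and every maximal run of $0$'s is finite and followed by a $1$. Consequently, under hypothesis~(i) there are infinitely many indices $k$ with $c_{k-d+1} = \dotsm = c_k = 0$ and $c_{k+1} = 1$, and under hypothesis~(ii) such indices occur with the preceding run of zeros of arbitrarily large length.

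The key point is to place $k$ at the \emph{end} of a maximal block of zeros. There $c_{k+1} = 1 = a_{k+1}$, so \autoref{thm:main}~(ii) applies. Choosing $n = \abs{u_{r_k+1}} - \val{c_1 \dotsm c_k}$, the right endpoint of the first subinterval (which is nonempty since $\abs{u_{r_k+1}} - \abs{u_{r_k}} = q_{k-1}$ exceeds $\val{c_1\dotsm c_k}$ by \eqref{eq:u_2}), the first implication of~(ii) gives $\irep{\infw{t}}{n} = \abs{\tau_k(y_i)}$, whence
\[
  \frac{n}{\irep{\infw{t}}{n}} = \frac{\abs{u_{r_k+1}} - \val{c_1\dotsm c_k}}{\abs{\tau_k(y_i)}}.
\]
Two estimates then control this quotient. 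First, since the block type $y_i$ differs from $x_{k+1}$, \autoref{lem:regular_tau_images} (with all exponents $1$) gives $\abs{\tau_k(y_i)} \leq q_{k-1} + \dotsm + q_{k-d+1} = q_k - q_{k-d}$. Second, as $c_{k-\ell+1} = \dotsm = c_k = 0$ where $\ell$ is the length of the zero run, we have $\val{c_1\dotsm c_k} = \val{c_1\dotsm c_{k-\ell}} < q_{k-\ell}$. Combining, $n/\irep{\infw{t}}{n} \geq (\abs{u_{r_k+1}} - q_{k-\ell})/(q_k - q_{k-d})$.

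I would then pass to the limit using $\abs{u_{r_k+1}}/q_k \to 1/(\zeta_d - 1)$, established in the proof of \autoref{prp:bonacci_dio}, and $q_{k-j}/q_k \to \zeta_d^{-j}$, which follows from the theory of linear recurrences since $(q_k)$ obeys the $d$-bonacci recurrence. For~(i) one takes $\ell = d$, so
\[
  \dio{\infw{t}} - 1 \;\geq\; \frac{\tfrac{1}{\zeta_d-1} - \zeta_d^{-d}}{1 - \zeta_d^{-d}};
\]
for~(ii) the runs may be made arbitrarily long, so $q_{k-\ell}/q_k \to 0$ and the limit becomes $\bigl(1/(\zeta_d-1)\bigr)/(1 - \zeta_d^{-d}) = \zeta_d^d/\bigl((\zeta_d-1)(\zeta_d^d-1)\bigr)$, which is the bound of~(ii). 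The remaining step is algebraic: the geometric-sum identity $\zeta_d^d(\zeta_d - 1) = \zeta_d^d - 1$, equivalently $\zeta_d^{d+1} - 2\zeta_d^d + 1 = 0$, shows the first displayed limit equals $1 + 1/(\zeta_d^d - \zeta_d)$, giving~(i); the same identity with $\zeta_d < 2$ yields the strict inequality $2 + 1/(\zeta_d^d - \zeta_d) > \dio{\infw{f}_d} = 1 + 1/(\zeta_d-1)$.

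The main obstacle is conceptual rather than computational. One must realize that the sharp bound comes from evaluating $\irep{\infw{t}}{n}$ at the end of a maximal zero block, where case~(ii) of \autoref{thm:main} puts the \emph{short} length $\abs{\tau_k(y_i)} \leq q_k - q_{k-d}$ in the denominator; by contrast the proof of \autoref{lem:narrow} sits $k$ inside the block and uses case~(iv.a) with denominator $q_k$, which gives only the crude $1 + C^{-(2d+1)}$. Once this choice is made, the sole delicate verification is that the defining relation of $\zeta_d$ collapses the two limits to the stated closed forms; the nonemptiness of the subinterval, the bound on $\abs{\tau_k(y_i)}$, and the vanishing of $\val{c_1\dotsm c_k}/q_k$ along long runs are all routine.
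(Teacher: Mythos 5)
Your proposal is correct and follows the paper's own proof essentially verbatim: both place $k$ at the end of a run of $\ell$ zeros with $c_{k+1} = 1 = a_{k+1}$, apply the first implication of \autoref{thm:main}~(ii) at the right endpoint $n = \abs{u_{r_k+1}} - \val{c_1 \dotsm c_k}$, bound $\abs{\tau_k(y_i)} \leq q_{k-1} + \dotsm + q_{k-(d-1)} = q_k - q_{k-d}$ via \autoref{lem:regular_tau_images} and $\val{c_1 \dotsm c_k} < q_{k-\ell}$ via \autoref{lem:ostrowski_ub}, then let $k \to \infty$ using $q_{k-j}/q_k \to \zeta_d^{-j}$, $\abs{u_{r_k+1}}/q_k \to 1/(\zeta_d - 1)$, and the identity $(\zeta_d - 1)\zeta_d^d = \zeta_d^d - 1$. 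Your only additions — the explicit nonemptiness check for the subinterval and the worked-out algebraic simplification to $1 + 1/(\zeta_d^d - \zeta_d)$ — are details the paper leaves implicit, and both are verified correctly.
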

  \begin{proof}
    Since $\infw{t}$ is not in the shift orbit of $\infw{f}_d$, the intercept does not end with $0^\omega$. Therefore
    it contains $0^\ell 1$ infinitely many times for some $\ell$. Let $k$ be such that $c_{k+1} = 1$ and
    $c_k = \ldots = c_{k-(\ell-1)} = 0$. Using \autoref{lem:regular_tau_images}, we see that
    \begin{align}\label{eq:foo_1}
      \frac{\abs{u_{r_k+1}} - \val{c_1 \dotsm c_k}}{\abs{\tau_k(y_i)}}
      &\geq \frac{\abs{s_{k-1} \dotsm s_0} - \val{c_1 \dotsm c_{k-\ell}}}{\abs{s_{k-1} \dotsm s_{k-(d-1)}}} \nonumber \\
      &= \frac{\abs{s_{k-1} \dotsm s_0} - \val{c_1 \dotsm c_{k-\ell}}}{q_k - q_{k-d}}.
    \end{align}
    Now
    $-\val{c_1 \dotsm c_{k-\ell}}/(q_k - q_{k-d}) \geq -q_{k-\ell}/(q_k - q_{k-d}) \xrightarrow{k\to\infty} -\zeta_d^{-\ell}/(1 - \zeta_d^{-d})$,
    so if $\ell$ can be taken arbitrarily large, we have from \autoref{thm:main} (ii) that
    \begin{equation*}
      \dio{\infw{t}} - 1 
      \geq \lim_{k\to\infty} \frac{\abs{s_{k-1} \dotsm s_0}}{q_k - q_{k-d}}
      = \lim_{k\to\infty} \frac{q_k + \abs{u_{r_{k-d+1}}}}{q_k - q_{k-d}}
      = \frac{\dio{\infw{f}_d} - 1}{1 - \zeta_d^{-d}}
    \end{equation*}
    where the last equality follows from the proof of \autoref{prp:standard_dio}. The latter claim now follows from
    \autoref{prp:bonacci_dio} and simplification. If $\ell \geq d$, then we deduce from \eqref{eq:foo_1} that
    \begin{equation*}
      \frac{\abs{u_{r_k+1}} - \val{c_1 \dotsm c_k}}{\abs{\tau_k(y_i)}}
      \geq \frac{\abs{s_{k-1} \dotsm s_{k-(d-1)}} + \abs{s_{k-(d+1)} \dotsm s_0}}{q_k - q_{k-d}}
      = 1 + \frac{\abs{u_{r_{k-d+1}}}}{q_k - q_{k-d}},
    \end{equation*}
    so, like above,
    \begin{equation*}
      \dio{\infw{t}} - 1 
      \geq 1 + \frac{\dio{\infw{f}_d} - 2}{1 - \zeta_d^{-d}} = 1 + \frac{1}{\zeta_d^d - \zeta_d}.
    \end{equation*}
    It is easy to check that this lower bound is larger than $\dio{\infw{f}_d}$. Thus the first claim is proved.
  \end{proof}

  We believe that there always exists a word in the subshift of the $d$-bonacci word such that
  $\dio{\infw{t}} = 2 + 1/(\zeta_d^d - \zeta_d)$, but we have not attempted to prove this rigorously.

  \begin{corollary}
    Let $\infw{t}$ be a word in the Fibonacci subshift with intercept $c_1 c_2 \dotsm$. If $\infw{t}$ is in the shift
    orbit of the standard word $\infw{c}_\Delta$, then $\dio{\infw{t}} = \ice{\infw{t}} = 1 + \zeta_2$. Otherwise
    $\dio{\infw{t}}, \ice{\infw{t}} \in [3, 2+\zeta_2]$. Moreover, if $c_1 c_2 \dotsm$ contains $10^\ell 1$ for
    arbitrarily large $\ell$, then $\dio{\infw{t}} = \ice{\infw{t}} = 2+\zeta_2 = \ind{\infw{t}}$.
  \end{corollary}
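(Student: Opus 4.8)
The proof rests on the golden-ratio identities $\zeta_2^2 = \zeta_2 + 1$, $\zeta_2 - 1 = \zeta_2^{-1}$, and $\zeta_2^2 - 1 = \zeta_2$, all immediate from $\zeta_2$ being the positive root of $x^2 - x - 1$. First I would fix the values attached to the standard word. Specializing \autoref{prp:bonacci_dio} to $d = 2$ gives $\dio{\infw{c}_\Delta} = \ice{\infw{c}_\Delta} = \ind{\infw{c}_\Delta} - 1 = 1 + 1/(\zeta_2 - 1)$, and since $1/(\zeta_2 - 1) = \zeta_2$ this equals $1 + \zeta_2$; in particular $\ind{\infw{c}_\Delta} = 2 + \zeta_2$. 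Because the index depends only on the set of factors, $\ind{\infw{t}} = 2 + \zeta_2$ for every $\infw{t}$ in the Fibonacci subshift, which already yields the universal upper bound $\ice{\infw{t}} \le \dio{\infw{t}} \le \ind{\infw{t}} = 2 + \zeta_2$. For $\infw{t}$ shift-equivalent to $\infw{c}_\Delta$, \autoref{prp:dio_shift_invariant} gives $\dio{\infw{t}} = \dio{\infw{c}_\Delta} = 1 + \zeta_2$, and $\ice{\infw{c}_\Delta} = 1 + \zeta_2$ for the standard word itself is the content of \autoref{prp:bonacci_dio}.

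Second, I would translate the hypothesis ``not in the shift orbit'' into a condition on the intercept $c_1 c_2 \dotsm$. This intercept is $11$-free by the Ostrowski (Zeckendorf) conditions, and the words shift-equivalent to $\infw{c}_\Delta$ are precisely those whose intercept is eventually $0^\omega$ (the forward shifts $T^m(\infw{c}_\Delta)$, by \autoref{lem:ostrowski_finite}) together with the two backward translates with intercepts $(01)^\omega$ and $(10)^\omega$ produced by \autoref{lem:left_shift_intercept}; equivalently, the intercept is eventually $0^\omega$, $(01)^\omega$, or $(10)^\omega$. Hence if $\infw{t}$ is not in the orbit, its $11$-free intercept is neither eventually $0^\omega$ nor eventually alternating, so it contains $00$ infinitely often; having infinitely many $1$s as well, it contains $001$ infinitely often. \autoref{lem:bonacci_conditions} (first claim) with $d = 2$ then gives $\dio{\infw{t}} \ge 2 + 1/(\zeta_2^2 - \zeta_2) = 3$, using $\zeta_2^2 - \zeta_2 = 1$. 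Together with the upper bound this proves $\dio{\infw{t}} \in [3,\, 2 + \zeta_2]$; note $1 + \zeta_2 < 3$, so the two regimes are genuinely disjoint.

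Third, if $c_1 c_2 \dotsm$ contains $10^\ell 1$ for arbitrarily large $\ell$, then $0^\ell$ occurs for arbitrarily large $\ell$, and since the gaps between consecutive $1$s are then unbounded the word is not in the orbit; thus \autoref{lem:bonacci_conditions} (second claim) applies and yields $\dio{\infw{t}} \ge 1 + \zeta_2^2/\bigl((\zeta_2 - 1)(\zeta_2^2 - 1)\bigr)$. Since $(\zeta_2 - 1)(\zeta_2^2 - 1) = \zeta_2^{-1}\cdot\zeta_2 = 1$ and $\zeta_2^2 = \zeta_2 + 1$, this lower bound is $2 + \zeta_2$, and combined with $\dio{\infw{t}} \le \ind{\infw{t}} = 2 + \zeta_2$ the bounds pinch to give $\dio{\infw{t}} = 2 + \zeta_2 = \ind{\infw{t}}$.

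It remains to transfer the lower bounds to the initial critical exponent, and this is where the main difficulty lies. The upper bounds are automatic from $\ice{\infw{t}} \le \dio{\infw{t}}$. For the lower bounds I would rerun the estimates above through the prefix nonrepetitive complexity $\prep{\infw{t}}{n}$ and the identity $\ice{\infw{t}} = 1 + \limsup_{n\to\infty} n/\prep{\infw{t}}{n}$ recorded after \autoref{prp:dio_inrc}, replacing each repeated factor used in the proof of \autoref{lem:bonacci_conditions} by a repeated prefix. The obstacle is precisely that $\ice$ is not shift invariant, so \autoref{prp:dio_shift_invariant} is unavailable and one cannot simply move the recurring pattern to the front of $\infw{t}$; instead one must verify that the square (respectively high power) produced by \autoref{thm:main_sturmian}(ii) when $00$ (respectively $10^\ell 1$) recurs actually occurs as a prefix power of $\infw{t}$. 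For the standard word this is \autoref{prp:bonacci_dio}; in the remaining cases it should follow from the explicit description of the prefixes of $\infw{t}$ in terms of the low-order digits of the intercept furnished by \autoref{thm:ostrowski_val_limit}, along the lines of the initial-power computation of \cite[Cor.~3.5]{2006:initial_powers_of_sturmian_sequences}. Establishing this prefix realization is the crux; granting it, one obtains $\ice{\infw{t}} \ge 3$ in the non-orbit case and $\ice{\infw{t}} = 2 + \zeta_2$ when $10^\ell 1$ recurs with $\ell \to \infty$, which completes the proof.
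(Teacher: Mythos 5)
Your handling of the Diophantine-exponent claims is correct and is essentially the paper's own argument: the paper likewise characterizes the shift orbit via Lemmas \ref{lem:left_shift_intercept} and \ref{lem:ostrowski_odometer}, invokes \autoref{lem:bonacci_conditions} with $d=2$ for the bounds $3$ and $2+\zeta_2$, and uses \autoref{prp:bonacci_dio} together with shift invariance (\autoref{prp:dio_shift_invariant}) for the orbit case, with the same golden-ratio arithmetic. If anything, your orbit-first dichotomy is slightly cleaner than the paper's phrasing (``$0^2$ occurs infinitely often or not''), since intercepts ending in $0^\omega$ also contain $00$ infinitely often yet lie in the orbit, so the orbit exclusion in \autoref{lem:bonacci_conditions} must be invoked anyway. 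One cosmetic imprecision: the backward part of the orbit is infinite, not just the two words $0\infw{c}_\Delta$ and $1\infw{c}_\Delta$; what is true (and what you state as the ``equivalently'') is that all backward translates have intercepts with tail $(01)^\omega$ or $(10)^\omega$, which is the form the argument needs.

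Where your proposal genuinely falls short is the initial critical exponent, and the gap is the one you flag yourself: the ``prefix realization'' step is a plan, not a proof. The paper does not execute that program either --- it disposes of every $\ice{\infw{t}}$ claim in the statement with a single citation, saying they were established in the proof of \cite[Prop.~4.3]{2006:initial_powers_of_sturmian_sequences}. Your proposed substitute, namely rerunning the estimates through $\prep{\infw{t}}{n}$ and the identity $\ice{\infw{x}} = 1 + \limsup_{n\to\infty} n/\prep{\infw{x}}{n}$, is precisely the analysis the paper declares out of scope: in its Open Problems section it notes that the $\prep{\infw{x}}{n}$/initial-critical-exponent analogue of \autoref{thm:main} is harder than the $\irep{\infw{x}}{n}$ theory because one must detect returns to the \emph{initial} vertex of the Rauzy graph rather than to some vertex, and it poses this as an open problem. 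So the step you defer cannot be treated as routine; carried out, your sketch would amount to reproving part of the Berth\'e--Holton--Zamboni result, and the honest completion of the proof as the paper intends it is simply that citation. Note also that the orbit case needs this too: $\ice$ is not shift-invariant, so \autoref{prp:bonacci_dio} gives $\ice{\infw{c}_\Delta} = 1+\zeta_2$ only for the standard word itself, and the claim $\ice{\infw{t}} = 1+\zeta_2$ for the other orbit points again rests on the cited proposition rather than on anything proved in this paper.
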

  \begin{proof}
    If $0^2$ occurs infinitely many times in $c_1 c_2 \dotsm$, then \autoref{lem:bonacci_conditions} implies that
    $\dio{\infw{t}} \geq 3$. If $0^2$ does not occur in $c_1 c_2 \dotsm$ infinitely many times, then the intercept ends
    with $1^\omega$ or $(01)^\omega$. The former case is impossible since the intercept satisfies the Ostrowski
    conditions by \autoref{lem:regular_ostrowski_conditions}. It follows from \autoref{lem:left_shift_intercept} that
    \begin{equation*}
      \infw{t} = T^{c_1} L_{x_1}^{a_1} \circ \dotsm \circ T^{c_k} L_{x_k}^{a_k}(y \infw{c}_{\Delta'}) = T^{c_1} L_{x_1}^{a_1}
      \circ \dotsm \circ T^{c_k} L_{x_k}^{a_k}(y) \tau_k(\infw{c}_{\Delta'})
    \end{equation*}
    for some integer $k$, letter $y$, and $\Delta' = T^k(\Delta)$. Clearly $\tau_k(\infw{c}_{\Delta'})$ is an
    episturmian word with directive word $(01)^\omega$ and intercept $0^\omega$, so
    $\tau_k(\infw{c}_{\Delta'}) = \infw{f}_2$. Therefore $\infw{t} = w \infw{f}_2$ for some finite word $w$. It follows
    from Propositions \ref{prp:dio_shift_invariant} and \ref{prp:bonacci_dio} that
    $\dio{\infw{t}} = \dio{\infw{f}_2} = 1 + \zeta_2$. The last claim follows from \autoref{lem:bonacci_conditions}
    because $1 + \zeta_2^2/((\zeta_2 - 1)(\zeta_2^2 - 1)) = \ind{\infw{f}_2} = 2 + \zeta_2$. The claim for
    $\ice{\infw{t}}$ was established in the proof of \cite[Prop.~4.3]{2006:initial_powers_of_sturmian_sequences}.
  \end{proof}

  We can prove a result like the preceding corollary for the Tribonacci subshift, but this result does not anymore
  generalize to $d > 3$ as indicated by \autoref{prp:4-bonacci_dio}.

  \begin{proposition}\label{prp:tribonacci_dio}
    If $\infw{t}$ is a word in the Tribonacci subshift, then
    $\dio{\infw{f}_3} \leq \dio{\infw{t}} \leq \ind{\infw{f}_3}$.
  \end{proposition}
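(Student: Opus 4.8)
The plan is to treat the two inequalities separately, the upper bound being essentially immediate and the lower bound requiring the bulk of the work through \autoref{thm:main}. Throughout I write $\zeta=\zeta_3$ for the Tribonacci constant, so that $q_k=q_{k-1}+q_{k-2}+q_{k-3}$, $q_{k+\ell}/q_k\to\zeta^\ell$, and, by \autoref{prp:bonacci_dio} together with the proof of \autoref{prp:standard_dio}, $\dio{\infw{f}_3}-1=\lim_{k\to\infty}\abs{u_{r_{k+1}}}/q_k=1/(\zeta-1)$ while $\ind{\infw{f}_3}=2+1/(\zeta-1)$.

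For the upper bound, note that $\infw{t}$ lies in the Tribonacci subshift, so $\Lang{\infw{t}}=\Lang{\infw{f}_3}$; since the index depends only on the set of factors, $\ind{\infw{t}}=\ind{\infw{f}_3}$. The general inequality $\dio{\infw{x}}\le\ind{\infw{x}}$ recorded before \autoref{prp:dio_shift_invariant} then yields $\dio{\infw{t}}\le\ind{\infw{f}_3}$ at once. For the lower bound I would show $\dio{\infw{t}}-1=\limsup_n n/\irep{\infw{t}}{n}\ge 1/(\zeta-1)$ (using \autoref{prp:dio_inrc}) by dealing with the intercept $c_1c_2\dotsm$, which is a binary word avoiding $111$ because all $a_k=1$ and $d=3$. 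First I dispose of two extremes. If there are only finitely many $1$'s, then the intercept is eventually $0^\omega$, so $\infw{t}$ is a shift of $\infw{f}_3$ and $\dio{\infw{t}}=\dio{\infw{f}_3}$ by \autoref{prp:dio_shift_invariant}; this is exactly the ``in the shift orbit'' situation. Otherwise there are infinitely many $1$'s, so $\infw{t}$ is not in the shift orbit, and if $0^3$ occurs infinitely often \autoref{lem:bonacci_conditions} already gives $\dio{\infw{t}}\ge 2+1/(\zeta^3-\zeta)>\dio{\infw{f}_3}$.

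In the remaining case the maximal runs of $0$'s are eventually bounded by $2$, and the $1$-runs are bounded by $2$ since $111$ is forbidden; hence the tail of the intercept is a concatenation of the blocks $01,001,011,0011$. Because $\abs{u_{r_{k+1}}}/q_k\to 1/(\zeta-1)$, it suffices to produce, for infinitely many $k$, a subinterval of $\interval{k}$ on which $\irep{\infw{t}}{n}\le q_k$, for then $\max_{n\in\interval{k}}n/\irep{\infw{t}}{n}\ge\abs{u_{r_{k+1}}}/q_k$. Two clean families do this. (1) Whenever $11$ occurs, say at positions $k+2,k+3$ with $c_{k+1}=0$, case (v) of \autoref{thm:main} gives $\irep{\infw{t}}{n}=q_k$ on all of $\interval{k}$. (2) Whenever the pattern $101$ occurs (that is $c_k=1$, $c_{k+1}=0$, $c_{k+2}=1$), case (iv.b) applies and, using \autoref{lem:regular_tau_images} to get $\abs{\tau_{k+1}(y_i)}=q_k+q_{k-1}$, one finds $\irep{\infw{t}}{n}=q_k+q_{k-1}-\val{c_1\dotsm c_k}\le q_k$ since $c_k=1$ forces $\val{c_1\dotsm c_k}\ge q_{k-1}$. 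Now every $11$ in the tail yields a family-(1) position, and every $01$-block is preceded by the final $1$ of the previous block and so yields a family-(2) position; thus if either $11$ or an $01$-block recurs infinitely often we are done.

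If neither recurs infinitely often, then the tail consists only of $001$-blocks, i.e.\ is $(001)^\omega$, and here the two clean families fail. In this last case I would use case (ii) of \autoref{thm:main} at the positions $c_{k+1}=1$, $c_k=c_{k-1}=0$, where $\irep{\infw{t}}{n}=\abs{\tau_k(y_i)}=q_{k-1}+q_{k-2}$ and $\val{c_1\dotsm c_k}=\sum_{m\ge1}q_{k-3m}$, and evaluate the resulting ratio; using $\val{c_1\dotsm c_k}/q_k\to 1/(\zeta^3-1)$ one obtains the limit $1/(\zeta-1)$ exactly. This computation is the main obstacle: the bound is \emph{tight} (the value $1/(\zeta-1)$ is attained rather than exceeded), so the crude $1+C^{-j}$-type estimates of \autoref{lem:3_ls1} are useless and the exact geometric-series limits must be computed; one must also verify that the three families genuinely exhaust all tails, which is precisely where the argument is special to $d=3$ and, as \autoref{prp:4-bonacci_dio} shows, breaks down for $d\ge 4$. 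Combining the cases gives $\limsup_n n/\irep{\infw{t}}{n}\ge 1/(\zeta-1)$, hence $\dio{\infw{t}}\ge\dio{\infw{f}_3}$.
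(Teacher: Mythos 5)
Your proposal is correct, and its lower-bound argument follows a genuinely different decomposition from the paper's. The paper never invokes case (v) of \autoref{thm:main} in this proof: instead it runs a finer pattern analysis, splitting occurrences of $11$ into $0011$ (handled via case (ii), yielding $\dio{\infw{t}} \geq 2+\zeta_3^{-2}$) and $1101$ (via (iv.b), yielding $\geq 2.6206$), then treating $100101$ via (iv.b), and finally reducing the residual tails to the suffixes $(10)^\omega$ and $(100)^\omega$, the latter requiring explicit closed-form identities for $\val{(100)^\ell}$, $\val{(010)^\ell}$, $\val{(001)^\ell}$ and $\abs{u_{k+1}} = \tfrac12(q_{k+1}+q_{k-1}-3)$, checked over all three conjugate intercepts. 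Your route is leaner: every late occurrence of $11$ is preceded by $0$ (Ostrowski), so case (v) fires and gives $\irep{\infw{t}}{n}=q_k$ on all of $\interval{k}$ in one stroke; your family (2) observation that $c_k=1$ forces $\val{c_1\dotsm c_k}\geq q_{k-1}$, hence $\irep{\infw{t}}{n}\leq q_k$ under (iv.b), is a clean shortcut the paper does not use; and your residual case is the single tail $(001)^\omega$, evaluated by the geometric-series limit $\val{c_1\dotsm c_k}/q_k\to 1/(\zeta_3^3-1)$ rather than closed forms. I verified your claimed exact value: with $\abs{u_{k+1}}/q_k\to 1/(\zeta_3-1)$ one gets
\begin{equation*}
  \frac{1/(\zeta_3-1)-1/(\zeta_3^3-1)}{\zeta_3^{-1}+\zeta_3^{-2}} = \frac{\zeta_3^3}{\zeta_3^3-1} = \frac{1}{\zeta_3-1},
\end{equation*}
the last equality being equivalent to $\zeta_3^4=2\zeta_3^3-1$, which follows from $\zeta_3^3=\zeta_3^2+\zeta_3+1$; so the bound is tight exactly as you anticipated. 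What the paper's longer analysis buys is strictly better constants in several subcases; what yours buys is brevity and a single exact computation. Two small points you should make explicit in a write-up: in family (2), case (iv.b) requires $c_i<a_i$ for some $i\in\{k+2,k+3\}$, so the pattern $101$ must not be followed by another $1$ — in your block decomposition this is automatic since every block starts with $0$, and if a $1$ did follow, family (1) would apply at a shifted index instead; and in the $(001)^\omega$ case you should note that the first subinterval of case (ii) is nonempty for large $k$, which holds since $\val{c_1\dotsm c_k}/q_{k-1}\to \zeta_3/(\zeta_3^3-1)<1$.
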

  \begin{proof}
    Say $\infw{t}$ has intercept $c_1 c_2 \dotsm$. If the intercept has suffix $0^\omega$, then
    $\dio{\infw{t}} = \dio{\infw{f}_3}$, so we may assume that $1$ occurs infinitely often in $c_1 c_2 \dotsm$. By
    \autoref{lem:bonacci_conditions}, the claim is clear if the intercept contains $0^3$ infinitely many times.

    Suppose that $0011$ occurs infinitely often in $c_1 c_2 \dotsm$. Then there exists infinitely many $k$ such that
    $c_{k+2} = c_{k+1} = 1$ and $c_k = c_{k-1} = 0$. Then the first implication of \autoref{thm:main} (ii) yields
    \begin{equation*}
      \dio{\infw{t}} - 1 \geq \limsup_{k\to\infty} \frac{\abs{u_{r_k+1}} - \val{c_1 \dotsm c_k}}{\abs{\tau_k(y_i)}}.
    \end{equation*}
    Since $c_{k+2} = 1$, we have $y_i \neq x_{k+2}$, so $\abs{\tau_k(y_i)} = q_{k-1}$ by \autoref{lem:regular_help}.
    Therefore
    \begin{equation*}
      \frac{\abs{u_{r_k+1}} - \val{c_1 \dotsm c_k}}{\abs{\tau_k(y_i)}} = \frac{\abs{u_{r_k+1}} - \val{c_1 \dotsm c_{k-2}}}{q_{k-1}}
      \geq 1 + \frac{q_{k-3}}{q_{k-1}}
      \xrightarrow{k\to\infty} 1 + \zeta_3^{-2}.
    \end{equation*}
    Hence $\dio{\infw{t}} \geq 2 + \zeta_3^{-2} > \dio{\infw{f}_3}$. We may thus assume that eventually all occurrences
    of $11$ in $c_1 c_2 \dotsm$ are preceded by $10$. Assume that $1101$ occurs infinitely many times in
    $c_1 c_2 \dotsm$. Letting $k + 1$ to correspond to the third letter of this pattern, we have from
    \autoref{thm:main} (iv.b) that
    \begin{equation*}
      \dio{\infw{t}} - 1 \geq \limsup_{k\to\infty} \frac{\abs{u_{r_{k+1}}}}{\abs{\tau_{k+1}(y_i)} - \val{c_1 \dotsm c_k}}.
    \end{equation*}
    Again $c_{k+2} = 1$, so $\abs{\tau_{k+1}(y_i)} = q_k + q_{k-1}$. Using the fact that $1101$ is preceded by $10$, we
    have
    \begingroup
    \allowdisplaybreaks
    \begin{align*}
      \frac{\abs{u_{r_{k+1}}}}{\abs{\tau_{k+1}(y_i)} - \val{c_1 \dotsm c_k}}
      &= \frac{\abs{u_{r_{k+1}}}}{q_k + q_{k-1} - q_{k-1} - q_{k-2} - q_{k-4} - \val{c_1 \dotsm c_{k-4}}} \\
      &\geq \frac{\abs{u_{r_{k+1}}}}{q_k - q_{k-2} - q_{k-4}} \\
      &\geq \frac{q_k}{q_k - q_{k-2} - q_{k-4}} \\
      &\xrightarrow{k\to\infty} \frac{1}{1 - \zeta_3^{-2} - \zeta_3^{-4}} \\
      &\geq 1.6206.
    \end{align*}
    \endgroup
    Hence $\dio{\infw{t}} \geq 2.6206 > \dio{\infw{f}_3}$. We may thus assume that $11$ is eventually always followed
    by $00$. Suppose then that $100101$ occurs infinitely many times in the intercept. Letting $k + 1$ to correspond to
    the fifth letter of this pattern, we have from \autoref{thm:main} (iv.b) that
    \begin{equation*}
      \dio{\infw{t}} - 1 \geq \limsup_{k\to\infty} \frac{\abs{u_{r_{k+1}}}}{\abs{\tau_{k+1}(y_i)} - \val{c_1 \dotsm c_k}}.
    \end{equation*}
    Now $\abs{\tau_{k+1}(y_i)} = q_k + q_{k-1}$ and so
    \begin{align*}
      \frac{\abs{u_{r_{k+1}}}}{\abs{\tau_{k+1}(y_i)} - \val{c_1 \dotsm c_k}}
      \geq \frac{\abs{u_{r_{k+1}}}}{q_k + q_{k-1} - q_{k-1} - q_{k-4}}
      \geq \frac{q_k + q_{k-4}}{q_k - q_{k-4}}
      \xrightarrow{k\to\infty} \dio{\infw{f}_3} - 1,
    \end{align*}
    meaning that $\dio{\infw{t}} \geq \dio{\infw{f}_3}$. We may thus assume that $100$ is eventually always followed by
    $100$. We may now prove that the intercept contains only finitely many occurrences of $11$. Indeed, the preceding
    assumptions imply that late enough occurrences of $11$ must be followed by $00$ and each late enough occurrence of
    $100$ must be followed by $100$. Thus $c_1 c_2 \dotsm$ has suffix $(100)^\omega$ and $11$ occurs only finitely many
    times.

    We have now argued that the claim holds or the intercept is eventually a product of the words $10$ and $100$.
    Moreover, the pattern $100101$ does not eventually occur, so the intercept has suffix $(10)^\omega$ or
    $(100)^\omega$. Say it has suffix $(10)^\omega$. Let $k$ be large enough such that $c_{k+1} = c_{k-1} = 0$ and
    $c_{k+2} = c_k = c_{k-2} = 1$. Again $\abs{\tau_{k+1}(y_i)} = q_k + q_{k-1}$, and
    \begin{align*}
      \frac{\abs{u_{r_{k+1}}}}{\abs{\tau_{k+1}(y_i)} - \val{c_1 \dotsm c_k}}
      &= \frac{\abs{u_{r_{k+1}}}}{q_k + q_{k-1} - q_{k-1} - q_{k-3} - \val{c_1 \dotsm c_{k-3}}} \\
      &\geq \frac{\abs{s_{k-1} \dotsm s_0}}{q_{k-1} + q_{k-2}} \\
      &\geq \frac{q_{k-1} + q_{k-2} + q_{k-3}}{q_{k-1} + q_{k-2}} \\
      &\xrightarrow{k\to\infty} \dio{\infw{f}_3} - 1,
    \end{align*}
    so $\dio{\infw{t}} \geq \dio{\infw{f}_3}$ by \autoref{thm:main} (iv.b).

    We can now assume that the intercept has suffix $(100)^\omega$. Since the Diophantine exponent is shift-invariant,
    we may focus on the intercept $(100)^\omega$ and its conjugate intercepts $(010)^\omega$, and $(001)^\omega$.

    Let us first consider the intercept $(100)^\omega$. Let $k$ be such that $c_{k+1} = 1$. Here
    $\abs{\tau_k(y_i)} = q_{k-1} + q_{k-2}$, $\abs{u_{r_k+1}} - \abs{u_{r_{k-3}+1}} = q_k$, and
    $\val{c_1 \dotsm c_k} - \val{c_1 \dotsm c_{k-3}} = q_{k-3}$, so we find with the help of the Stolz-Ces\`{a}ro
    Theorem that (along an appropriate subsequence)
    \begin{equation*}
      \lim_{k\to\infty} \frac{\abs{u_{r_k+1}} - \val{c_1 \dotsm c_k}}{\abs{\tau_k(y_i)}}
      = \lim_{k\to\infty} \frac{q_{k-1} + q_{k-2}}{q_{k-1} + q_{k-3}}
      = \frac{\zeta_3^2 + \zeta_3}{\zeta_3^2 + 1}
      = \frac{1}{\zeta_3 - 1},
    \end{equation*}
    so $\dio{\infw{t}} \geq \dio{\infw{f}_3}$ by \autoref{thm:main} (ii). In fact, it can be showed that
    $\dio{\infw{t}} = \dio{\infw{f}_3}$. The remaining cases lead to the same result as is straightforward to show.
  \end{proof}

  \section{Irrationality Exponents}\label{sec:ie}
  The results of the previous section on Diophantine exponents allows us to obtain novel results on irrationality
  exponents of numbers whose fractional parts are regular episturmian.

  \begin{definition}
    The \emph{irrationality exponent} $\mu(\xi)$ of a real number $\xi$ is the supremum of the real numbers $\rho$ such
    that the inequality
    \begin{equation*}
      \Abs{\xi - \frac{p}{q}} < \frac{1}{q^\rho}
    \end{equation*}
    has infinitely many rational solutions $p/q$. If $\mu(\xi) = \infty$, then we say that $\xi$ is a \emph{Liouville
    number}.
  \end{definition}

  Recall that $\mu(\xi) \geq 2$ whenever $\xi$ is irrational and that $\mu(\xi) = 2$ for almost all real numbers $\xi$
  with respect to the Lebesgue measure. Roth's theorem states that $\mu(\xi) = 2$ when $\xi$ is algebraic.

  \begin{definition}
    Let $b$ be an integer such that $b \geq 2$ and $\infw{x}$ be an infinite word over the alphabet
    $\{0, 1, \ldots, b-1\}$ such that $\infw{x} = x_1 x_2 \dotsm$. Then $\xi_{\infw{x},b}$ is the real number
    $\sum_{k \geq 1} x_k/b^k$.
  \end{definition}

  The following result is the key result linking combinatorial properties of $\infw{x}$ to the Diophantine properties
  of $\xi_{\infw{x},b}$. For its proof, see, e.g.,
  \cite[Sect.~3]{2010:on_the_expansion_of_some_exponential_periods_in_an_integer} and
  \cite[Thm.~2.1]{2011:nombres_reels_de_complexite_sous-lineaire}. The function $p$ of the statement is the factor
  complexity function that counts the number of factors of length $n$, that is,
  $p(\infw{x}, n) = \abs{\Lang[n]{\infw{x}}}$ for all $n$.

  \begin{proposition}\label{prp:ie_bounds}
    Let $b$ be an integer such that $b \geq 2$ and $\infw{x}$ be an aperiodic infinite word. Then
    $\mu(\xi_{\infw{x},b}) \geq \dio{\infw{x}}$. If, moreover, there exists an integer $K$ such that
    $p(\infw{x}, n) \leq Kn$ for all $n$ large enough, then $\mu(\xi_{\infw{x},b}) \leq (2K+1)^3 (\dio{\infw{x}} + 1)$.
  \end{proposition}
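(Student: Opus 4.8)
The plan is to treat the two assertions independently, the inequality $\mu(\xi_{\infw{x},b}) \ge \dio{\infw{x}}$ being elementary and the upper bound being the genuinely delicate part. Both rest on the dictionary between rational approximations of $\xi_{\infw{x},b}$ and eventually periodic prefixes of the expansion $\infw{x}$.

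For the lower bound I would argue directly from the definition of the Diophantine exponent. Fix any real $\rho < \dio{\infw{x}}$; then there are arbitrarily long prefixes of $\infw{x}$ of the form $UV^e$ with $e \ge 1$ and $\abs{UV^e}/\abs{UV} \ge \rho$. To such a prefix I associate the rational number $\alpha = \xi_{UV^\omega, b}$ whose base-$b$ expansion is the ultimately periodic word $UV^\omega$; writing $\alpha = p/q$ one has $q \mid b^{\abs{U}}(b^{\abs{V}}-1)$, so $q < b^{\abs{UV}}$. Since $e \ge 1$, the word $V^e$ is a prefix of $V^\omega$, so $\infw{x}$ and $UV^\omega$ share the prefix $UV^e$; hence the two expansions first disagree beyond position $\abs{UV^e}$ and
\[
  \Abs{\xi_{\infw{x},b} - \frac{p}{q}} \le b^{-\abs{UV^e}} \le b^{-\rho\abs{UV}} < q^{-\rho}.
\]
As the chosen prefixes may be taken arbitrarily long, the denominators $q$ tend to infinity and yield infinitely many distinct solutions $p/q$ of $\abs{\xi_{\infw{x},b} - p/q} < q^{-\rho}$, distinctness following from the irrationality of $\xi_{\infw{x},b}$ (which holds because $\infw{x}$ is aperiodic). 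Thus $\mu(\xi_{\infw{x},b}) \ge \rho$, and letting $\rho \uparrow \dio{\infw{x}}$ gives the first inequality.

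For the upper bound I would run the converse implication. I take a convergent $p/q$ of $\xi_{\infw{x},b}$ realizing up to $\varepsilon$ the $\limsup$ defining $\mu(\xi_{\infw{x},b})$, so that $\abs{\xi_{\infw{x},b} - p/q} < q^{-\nu}$ with $\nu$ close to $\mu(\xi_{\infw{x},b})$. The base-$b$ expansion of $p/q$ is ultimately periodic with pre-period of length $O(\log q)$ and some period length $t$, and the smallness of $\abs{\xi_{\infw{x},b} - p/q}$ forces $\infw{x}$ to agree with this expansion on a prefix of length $N \asymp \nu\log_b q$. Thus $\infw{x}$ begins with a word that is periodic with period $t$ over a stretch of length about $N$, and the goal is to convert this near-periodicity into a genuine repetition $U'V'^{e'}$ in the sense of $\dio{\infw{x}}$, of controlled ratio, so as to read off $\nu \le (2K+1)^3(\dio{\infw{x}}+1)$; I would follow the bookkeeping of \cite{2010:on_the_expansion_of_some_exponential_periods_in_an_integer} and \cite{2011:nombres_reels_de_complexite_sous-lineaire}.

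The hard part is precisely this second direction, and the crux is controlling the period $t$: a priori $t$ may be as large as $q$, in which case the observed agreement of length $N \asymp \nu\log_b q \ll q$ displays less than one full period and exposes no usable repetition. This is exactly where the hypothesis $p(\infw{x},n) \le Kn$ intervenes: the scarcity of factors prevents $\infw{x}$ from matching a pattern of large period over a long prefix unless that pattern is itself nearly periodic with a much shorter period, so that a genuine repetition whose base word has length $O(\log q)$ (rather than $t$) can be extracted, at the cost of a factor polynomial in $K$. Making this quantitative—bounding at each stage the number of admissible special factors in terms of $2K+1$ and iterating the estimate, which is what I expect produces the exponent $3$ in the constant $(2K+1)^3$—is the combinatorial core, and assembling the extracted repetition into the Diophantine-exponent framework then yields the stated bound.
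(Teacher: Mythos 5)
Your first half is complete and correct: the passage from a prefix $UV^e$ with $\abs{UV^e}/\abs{UV} \geq \rho$ to the rational $\xi_{UV^\omega,b} = p/q$ with $q \leq b^{\abs{U}}(b^{\abs{V}}-1) < b^{\abs{UV}}$ and $\abs{\xi_{\infw{x},b} - p/q} \leq b^{-\abs{UV^e}} < q^{-\rho}$ is exactly the standard argument, and your appeal to aperiodicity (hence irrationality of $\xi_{\infw{x},b}$) to guarantee infinitely many distinct approximants closes it properly. For calibration, note that the paper does not prove this proposition at all: it quotes it from Adamczewski (Sect.~3 of the exponential-periods paper) and Adamczewski--Bugeaud (Thm.~2.1 of the sublinear-complexity paper), so there is no in-paper argument to compare against; judged as a standalone proof, the problem lies entirely in your treatment of the upper bound.

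That half is a plan rather than a proof, and the central mechanism you propose does not work. You correctly isolate the crux---the period $t$ of the base-$b$ expansion of $p/q$ may be of size comparable to $q$, while the forced agreement has length only $N \asymp \nu \log_b q \ll t$---but your proposed rescue, that $p(\infw{x},n) \leq Kn$ ``prevents $\infw{x}$ from matching a pattern of large period unless that pattern is itself nearly periodic with a much shorter period,'' is false. If $N < t$, the visible portion of the periodic word is less than one full period, so it is just an arbitrary word of length $N$ equal to the prefix of $\infw{x}$ itself; the agreement imposes no periodicity constraint whatsoever, and no complexity hypothesis on $\infw{x}$ can manufacture one. Concretely, take $q$ prime with $b$ of multiplicative order close to $q$ modulo $q$: then $t$ is of order $q$ while $\log_b q$ is tiny, so a genuinely good approximation (large $\nu$) exhibits no extractable repetition at all. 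This is why the proofs in the sources the paper cites run in the opposite direction: they never analyze a single good approximant, but instead use the pigeonhole principle with $p(\infw{x},n) \leq Kn$ to find, for every $n$, two occurrences of some factor of length $n$ inside the prefix of length about $(2K+1)n$, thereby producing a whole sequence of rational approximants $(p_j/q_j)$ built from the word itself, with denominators growing with controlled gaps; the elementary transfer inequality $\abs{p/q - p_j/q_j} \geq 1/(q q_j)$ for $p/q \neq p_j/q_j$ then bounds $\mu(\xi_{\infw{x},b})$ against this sequence. The Diophantine exponent enters as an upper bound on how long the agreements of the constructed approximants can persist---this is the origin of the factor $\dio{\infw{x}}+1$---while the powers of $2K+1$ track the gap ratios $\log q_{j+1}/\log q_j$. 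Without this change of direction your outline cannot be completed, so the second inequality remains unproved in your proposal.
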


  Showing that $\mu(\xi_{\infw{x},b}) > 2$ implies that $\xi_{\infw{x},b}$ is transcendental by Roth's theorem, so
  computing Diophantine exponents of infinite words can be used to show transcendence results. While the following
  results allow us to conclude the transcendence of certain numbers, these facts are not new. One of the main results
  of \cite{2007:on_the_complexity_of_algebraic_numbers_I_expansions} states that if $\infw{x}$ is aperiodic and has
  sublinear factor complexity, then $\xi_{\infw{x},b}$ is transcendental. Since episturmian words have sublinear factor
  complexity \cite[Thm.~7]{2001:episturmian_words_and_some_constructions_of_de}, it follows that $\xi_{\infw{t},b}$ is
  transcendental for an \emph{arbitrary} aperiodic episturmian word $\infw{t}$.

  \autoref{thm:dio_lower_bound} together with \autoref{prp:ie_bounds} directly implies the following result. The
  result was obtained in \cite{2010:on_the_expansion_of_some_exponential_periods_in_an_integer} when $d = 2$ based on
  the results of \cite{2006:initial_powers_of_sturmian_sequences}.

  \begin{theorem}\label{thm:ie_bound}
    Let $\infw{t}$ be a regular episturmian word of period $d$ with directive word as in \eqref{eq:dw_multiplicative}.
    If $d = 2$ or $\limsup_k a_k \geq 3$, then $\mu(\xi_{\infw{t},b}) > 2$.
  \end{theorem}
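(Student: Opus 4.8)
The plan is to obtain the statement immediately by chaining \autoref{thm:dio_lower_bound} with the lower bound of \autoref{prp:ie_bounds}; the substantive work has already been carried out in establishing the former, so this proof is essentially a two-line deduction. First I would record that $\infw{t}$ is aperiodic, since this is the standing hypothesis of \autoref{prp:ie_bounds}. Because $\infw{t}$ is regular with period $d \geq 2$, its directive word satisfies $x_1 x_2 \dotsm = (0 1 \dotsm (d-1))^\omega$ and is therefore never eventually constant; hence by \cite[Thm.~3]{2001:episturmian_words_and_some_constructions_of_de} (already invoked in the preliminaries) the word $\infw{t}$ is not ultimately periodic. This is the only hypothesis of \autoref{prp:ie_bounds} that needs checking before it can be applied.

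Next I would invoke \autoref{thm:dio_lower_bound} under the stated assumption that $d = 2$ or $\limsup_k a_k \geq 3$, which yields $\dio{\infw{t}} > 2$. Feeding this into the first assertion of \autoref{prp:ie_bounds}, namely $\mu(\xi_{\infw{t},b}) \geq \dio{\infw{t}}$, gives at once
\[
  \mu(\xi_{\infw{t},b}) \geq \dio{\infw{t}} > 2,
\]
which is the desired conclusion.

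There is no genuine obstacle to surmount here: all of the difficulty is concentrated in \autoref{thm:dio_lower_bound}, whose proof is a lengthy case analysis driven by \autoref{thm:main}. The only two points worth flagging in the write-up are that we use exclusively the \emph{lower} bound half of \autoref{prp:ie_bounds} — the upper estimate $(2K+1)^3(\dio{\infw{t}}+1)$, which would additionally require the sublinear factor complexity of episturmian words, plays no role for this strict inequality — and that the strictness of $\dio{\infw{t}} > 2$ transfers directly to $\mu(\xi_{\infw{t},b})$ precisely because the inequality $\mu \geq \dio$ need not itself be strict.
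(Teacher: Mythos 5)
Your proposal is correct and matches the paper's own argument, which states the theorem as a direct consequence of \autoref{thm:dio_lower_bound} combined with the lower bound $\mu(\xi_{\infw{t},b}) \geq \dio{\infw{t}}$ of \autoref{prp:ie_bounds}. Your additional check that $\infw{t}$ is aperiodic (since a regular directive word is never eventually constant) is a sensible explicit verification of the hypothesis that the paper leaves implicit.
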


  Hence we have identified a new class of transcendental numbers whose irrationality exponents are greater than $2$.
  This class is uncountable because the set of directive words in the statement is uncountable and an episturmian word
  has a unique directive word.

  The next theorem identifies a new uncountable class of Liouville numbers. The case $d = 2$ was first established by
  Komatsu \cite{1996:a_certain_power_series_and_the_inhomogeneous_continued}.

  \begin{theorem}\label{thm:ie_liouville}
    Let $\infw{t}$ be a regular episturmian word. Then $\xi_{\infw{t},b}$ is a Liouville number if and only if
    $\infw{t}$ has unbounded partial quotients.
  \end{theorem}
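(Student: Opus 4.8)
The plan is to combine \autoref{thm:dio_upq}, which characterizes exactly when the Diophantine exponent is finite, with the two-sided link between the Diophantine exponent and the irrationality exponent recorded in \autoref{prp:ie_bounds}. Recall that $\xi_{\infw{t},b}$ is a Liouville number precisely when $\mu(\xi_{\infw{t},b}) = \infty$, so the whole statement amounts to showing that $\mu(\xi_{\infw{t},b}) = \infty$ if and only if $\infw{t}$ has unbounded partial quotients.

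First I would treat the direction asserting that unbounded partial quotients force a Liouville number. Suppose $\infw{t}$ has unbounded partial quotients. Then \autoref{thm:dio_upq} gives $\dio{\infw{t}} = \infty$. Since $\infw{t}$ is aperiodic (as assumed throughout the paper), the lower bound of \autoref{prp:ie_bounds} applies and yields $\mu(\xi_{\infw{t},b}) \geq \dio{\infw{t}} = \infty$. Hence $\xi_{\infw{t},b}$ is a Liouville number.

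Conversely, suppose $\infw{t}$ has bounded partial quotients. Then \autoref{thm:dio_upq} gives $\dio{\infw{t}} < \infty$. The extra ingredient needed here is that episturmian words have at most linear factor complexity: there is an integer $K$ with $p(\infw{t}, n) \leq Kn$ for all $n$ large enough, which follows from the known complexity of episturmian words \cite[Thm.~7]{2001:episturmian_words_and_some_constructions_of_de}. The upper bound of \autoref{prp:ie_bounds} then gives $\mu(\xi_{\infw{t},b}) \leq (2K+1)^3(\dio{\infw{t}} + 1) < \infty$, so $\xi_{\infw{t},b}$ is not a Liouville number.

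The substantive mathematical work is entirely contained in \autoref{thm:dio_upq}; once that characterization is in hand, the theorem is obtained by assembling it with \autoref{prp:ie_bounds}. I therefore expect no serious obstacle here: the only point requiring a moment's care is checking that the hypothesis of the upper-bound half of \autoref{prp:ie_bounds}, namely linear factor complexity, is satisfied, and this is immediate from the standard complexity bound for episturmian words. The argument is thus essentially a bookkeeping combination of two already-established results.
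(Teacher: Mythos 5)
Your proposal is correct and matches the paper's own proof essentially verbatim: both directions combine \autoref{thm:dio_upq} with the lower and upper bounds of \autoref{prp:ie_bounds}, invoking the linear factor complexity of episturmian words (the paper notes the exact count $(d-1)n+1$ from the same reference you cite) to apply the upper bound. No gaps; nothing further needed.
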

  \begin{proof}
    If $\infw{t}$ has unbounded partial quotients, then $\dio{\infw{t}} = \infty$ by \autoref{thm:dio_upq} and
    $\xi_{\infw{t},b}$ is a Liouville number by \autoref{prp:ie_bounds}. If $\infw{t}$ has bounded partial quotients,
    then $\dio{\infw{t}} < \infty$ by \autoref{thm:dio_upq}, and \autoref{prp:ie_bounds} yields a finite upper bound
    for $\mu(\xi_{\infw{t},b})$ since a regular episturmian word over $d$ letters has $(d-1)n + 1$ factors of length
    $n$ for all $n$ \cite[Thm.~7]{2001:episturmian_words_and_some_constructions_of_de}.
  \end{proof}

  Adamczewski and Cassaigne established in \cite{2006:diophantine_properties_of_real_numbers_generated} that
  $\xi_{\infw{x},b}$ is not a Liouville number if $\infw{x}$ is a $k$-automatic word. \autoref{thm:ie_liouville}
  establishes the following weak analogue of this result: if $\infw{t}$ is a regular epistandard word (or its shift)
  and the sequence $(q_k)$ satisfies a linear recurrence, then $\xi_{\infw{x},b}$ is not a Liouville number. Indeed,
  when the sequence $(q_k)$ satisfies a linear recurrence, the corresponding Ostrowski numeration system can be viewed
  as a positional numeration system and the corresponding standard word is automatic with respect to this numeration
  system. For more on Ostrowski-automatic words, see the recent paper
  \cite{2021:ostrowski_automatic_sequences_theory_and_applications}. For a general reference on these topics, see,
  e.g., \cite[Ch.~2]{2014:formal_languages_automata_and_numeration_systems_2}.

  In \cite{2019:a_new_complexity_function_repetitions_in_sturmian}, Bugeaud and Kim prove the following remarkable
  result.

  \begin{proposition}\label{prp:sturmian_ie_dio}\cite[Thm.~4.5]{2019:a_new_complexity_function_repetitions_in_sturmian}
    If $\infw{t}$ is a Sturmian word, then $\mu(\xi_{\infw{t},b}) = \dio{\infw{t}}$.
  \end{proposition}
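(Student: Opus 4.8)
The inequality $\mu(\xi_{\infw{t},b}) \geq \dio{\infw{t}}$ is already available from the first assertion of \autoref{prp:ie_bounds}, so the entire content of the proposition is the reverse inequality $\mu(\xi_{\infw{t},b}) \leq \dio{\infw{t}}$. The crude upper bound in \autoref{prp:ie_bounds} only gives $\mu(\xi_{\infw{t},b}) \leq (2K+1)^3(\dio{\infw{t}}+1)$, which is far from equality; the task is to show that for Sturmian words there is no loss at all. The plan is to compute $\mu(\xi_{\infw{t},b})$ directly from the regular continued fraction expansion of $\xi_{\infw{t},b}$ and to match the result with the combinatorial quantity $\dio{\infw{t}}$.

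Write $\xi = \xi_{\infw{t},b}$ and let $(P_n/Q_n)$ denote the sequence of continued fraction convergents of $\xi$. I would use the classical identity $\mu(\xi) = 1 + \limsup_{n\to\infty} (\log Q_{n+1})/(\log Q_n)$, so that bounding $\mu(\xi)$ reduces to controlling the growth of the denominators $Q_n$. Every prefix of $\infw{t}$ of the form $U V^e$ with large ratio $\abs{UV^e}/\abs{UV}$ produces an explicit rational approximation of $\xi$, namely the number whose base-$b$ expansion is the eventually periodic word $U V^\omega$; its denominator is comparable to $b^{\abs{UV}}$ and it agrees with $\xi$ in its first $\floor{\abs{UV^e}}$ digits, which yields $\abs{\xi - p/q} < q^{-\abs{UV^e}/\abs{UV} + o(1)}$. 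This is exactly the mechanism behind the lower bound. For the upper bound one must run this argument in reverse: show that every convergent $P_n/Q_n$ of $\xi$ arises, up to bounded error in the exponents, from such a repetition, so that the growth of $\log Q_{n+1}/\log Q_n$ is governed by the very repetitions that define $\dio{\infw{t}}$.

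Here the rigidity of Sturmian words is decisive. Their repetition structure is hierarchical and completely controlled by the continued fraction $[0;a_1,a_2,\dots]$ of the slope: the relevant periods are precisely the lengths $q_k$ of the standard words $s_k$, and the maximal repetitions at each scale are listed in \autoref{thm:main_sturmian}. I would show that the best rational approximations of $\xi$ with denominator of size roughly $b^{q_k}$ are exactly the periodic continuations built from the standard prefix $s_k$, and that no rational with an intermediate denominator does essentially better. Combining this with the identity $\dio{\infw{t}} = 1 + \limsup_{n\to\infty} n/\irep{\infw{t}}{n}$ from \autoref{prp:dio_inrc} identifies the two $\limsup$'s and gives $\mu(\xi) \leq \dio{\infw{t}}$.

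The main obstacle is precisely this last ``no extra good approximations'' step: ruling out rationals $p/q$ that approximate $\xi$ better than any repetition-induced approximation predicts. The danger is a rational whose base-$b$ expansion has a long period matching no standard word, or whose pre-period (the $b$-adic part of $q$) conspires with the digits of $\xi$. For binary Sturmian words the fact that $\infw{t}$ has exactly $n+1$ factors of length $n$ forces every sufficiently long near-repetition in $\infw{t}$ to be an honest power of a standard word up to a bounded border, which excludes these possibilities; this is the feature special to the case $d=2$. Indeed, the present paper already notes (in the introduction and in \autoref{sec:ie}) that $\mu(\xi_{\infw{t},b})$ can strictly exceed $\dio{\infw{t}}$ for episturmian words over three letters, so any proof of equality must genuinely exploit a property of Sturmian words---such as $p(\infw{t},n)=n+1$---that general episturmian words lack.
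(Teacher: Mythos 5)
The first thing to note is that the paper offers no proof of this proposition at all: it is imported verbatim, with citation, from Bugeaud and Kim, so your attempt can only be measured against their argument. Your general orientation (the lower bound via \autoref{prp:ie_bounds}, repetition-induced rational approximations, matching limsups through \autoref{prp:dio_inrc}) is consistent with that proof, but there is a genuine gap at exactly the step you flag as the main obstacle, and the reduction you propose for it would fail as stated. You propose to show that \emph{every} convergent $P_n/Q_n$ of $\xi_{\infw{t},b}$ ``arises from'' a prefix repetition $UV^e$, so that $\mu(\xi) = 1 + \limsup_n (\log Q_{n+1})/(\log Q_n)$ can be read off the repetitions. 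But the digit-matching mechanism only runs in one direction: a repetition $UV^e$ yields the rational with expansion $UV^\omega$, whose denominator divides $b^{\abs{U}}(b^{\abs{V}}-1)$; conversely, an arbitrary good approximation $p/q$ has an eventually periodic base-$b$ expansion whose preperiod and period are governed by the $b$-adic valuation of $q$ and the multiplicative order of $b$ modulo the $b$-coprime part of $q$, and that order can be exponentially larger than $\log_b q$. So from a good approximation one cannot extract a repetition with $\abs{UV}$ comparable to $\log_b q$, and the convergent identity is of no use without independent control of all convergents --- which is essentially the theorem itself. Note also that the continued fraction theory the paper alludes to is that of the Sturmian \emph{slope} $[0;a_1,a_2,\ldots]$, which controls the scales $q_k$ of the repetitions, not the continued fraction expansion of $\xi$ itself, which is what your plan manipulates.

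The way this gap is actually closed (in Bugeaud--Kim, following the standard Adamczewski--Rivoal-type method) is not by classifying convergents but by a density-plus-triangle-inequality argument: from the complete description of initial repetitions (their analogue of \autoref{thm:main_sturmian}) one constructs an explicit sequence of approximations $p_k/q_k$ with $q_k$ of the special shape $b^{r}(b^{s}-1)$, verifies in the bounded-partial-quotient case that $\log q_{k+1}/\log q_k$ remains bounded (the unbounded case is vacuous, since then $\dio{\infw{t}} = \infty \leq \mu(\xi_{\infw{t},b})$ already follows from \autoref{thm:dio_upq} and \autoref{prp:ie_bounds}), and then uses the triangle inequality to show that any rational outside this sequence cannot approximate $\xi$ essentially better than the sequence predicts; taking limsups yields $\mu(\xi_{\infw{t},b}) \leq \dio{\infw{t}}$. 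Your appeal to $p(\infw{t},n) = n+1$ correctly locates why $d = 2$ is special --- and the paper's \autoref{prp:counter_example} confirms that some such input is indispensable --- but by itself it does not supply this quantitative exclusion step. As written, the proposal identifies the obstacle without an argument that overcomes it.
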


  This result states that the irrationality exponent of $\xi_{\infw{t},b}$ can be read off its base-$b$ expansion for a
  Sturmian word $\infw{t}$. For most real numbers, this is not possible
  \cite[p.~3287]{2019:a_new_complexity_function_repetitions_in_sturmian}. Since many properties of Sturmian words
  transfer to all episturmian words, it is natural to wonder if this result can be generalized. As a consequence to
  \autoref{prp:4-bonacci_dio}, we obtain a negative answer to this question.

  \begin{proposition}\label{prp:counter_example}
    There exists an episturmian word $\infw{t}$ over a $3$-letter alphabet such that
    $\mu(\xi_{\infw{t},b}) > \dio{\infw{t}}$.
  \end{proposition}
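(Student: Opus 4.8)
The plan is to produce an explicit witness and then combine a strictly sub-$2$ Diophantine exponent with the universal lower bound on irrationality exponents. Concretely, I would take $\infw{t}$ to be the episturmian word over the $3$-letter alphabet $\{0,1,2\}$ with directive word $(001122)^\omega$ and intercept $1^\omega$, which is exactly the word analyzed in \autoref{prp:3_ce_dio}. That proposition gives $\dio{\infw{t}} = 1 + \tfrac12(\beta - 1) \approx 1.9156$, where $\beta \approx 2.8312$ is the real root of $x^3 - 2x^2 - 2x - 1$; the essential feature is simply that $\dio{\infw{t}} < 2$.

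Next I would observe that $\infw{t}$ is aperiodic, since its directive word $(001122)^\omega$ is not eventually constant. Consequently, for any base $b \geq 3$ the digit sequence of $\xi_{\infw{t},b}$ is not eventually periodic, so $\xi_{\infw{t},b}$ is irrational. Invoking the standard fact recalled just before \autoref{prp:ie_bounds} that $\mu(\xi) \geq 2$ for every irrational $\xi$, I obtain $\mu(\xi_{\infw{t},b}) \geq 2$ for all such $b$.

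Combining the two inequalities yields $\mu(\xi_{\infw{t},b}) \geq 2 > \dio{\infw{t}}$, which is precisely the claimed strict inequality. There is essentially no technical obstacle here: the only point requiring a moment's care is to note that this does not contradict \autoref{prp:ie_bounds}, which supplies only the lower bound $\mu(\xi_{\infw{t},b}) \geq \dio{\infw{t}}$ and is simply not an equality in this regime. The contrast with the Sturmian situation of \autoref{prp:sturmian_ie_dio} is explained by exactly this phenomenon: a regular episturmian word on three letters can satisfy $\dio{\infw{t}} < 2$, whereas every Sturmian word has $\dio{\infw{t}} > 2$ by the case $d = 2$ of \autoref{thm:dio_lower_bound}, so the Bugeaud--Kim equality $\mu = \dio$ cannot persist once the Diophantine exponent drops below the trivial irrationality bound.
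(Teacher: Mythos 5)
Your proposal is correct and is essentially identical to the paper's own proof: both use the witness with directive word $(001122)^\omega$ and intercept $1^\omega$, invoke \autoref{prp:3_ce_dio} to get $\dio{\infw{t}} < 2$, and conclude from aperiodicity that $\xi_{\infw{t},b}$ is irrational, hence $\mu(\xi_{\infw{t},b}) \geq 2 > \dio{\infw{t}}$. Your added remarks (the implicit restriction $b \geq 3$ so that $\infw{t}$ is a valid base-$b$ digit sequence, and the explanation of why this does not contradict \autoref{prp:ie_bounds} or \autoref{prp:sturmian_ie_dio}) are accurate but not needed beyond what the paper records.
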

  \begin{proof}
    Let $\infw{t}$ be the episturmian word with directive word $(001122)^\omega$ and intercept $1^\omega$. Since
    $\infw{t}$ is aperiodic, the number $\xi_{\infw{t},b}$ is irrational, and hence $\mu(\xi_{\infw{t},b}) \geq 2$. On
    the other hand, we showed in \autoref{prp:3_ce_dio} that $\dio{\infw{t}} < 2$. The claim follows.
  \end{proof}

  This leaves open if the conclusion of \autoref{prp:sturmian_ie_dio} is true for regular episturmian words satisfying
  the assumptions of \autoref{thm:ie_bound}. We see no reason to believe this as the proof of
  \autoref{prp:sturmian_ie_dio} relies heavily on the theory of continued fractions and no such theory is available for
  general episturmian words.

  Bugeaud and Kim discuss on p. 3288 of \cite{2019:a_new_complexity_function_repetitions_in_sturmian} the base-$b$
  expansions of the numbers of the form $\log(1+1/a)$. Their absolute lower bound for the Diophantine exponents of
  Sturmian words implies that the fractional part of the base-$b$ expansion of $\log(1+1/a)$ cannot be a Sturmian word
  for $a \geq 34$. This result is a direct consequence of \cite[Cor.~1]{1980:legendre_polynomials_and_irrationality}
  which provides an upper bound for the irrationality exponent of $\log(1+1/a)$ which tends to $2$ as $a \to \infty$.
  Applying this corollary to \autoref{prp:tribonacci_dio} yields the following.

  \begin{corollary}
    For every integer $b$ such that $b \geq 2$ and every integer $a \geq 23347$, the fractional part of the base-$b$
    expansion of $\log(1+1/a)$ is not isomorphic to any word in the Tribonacci subshift.
  \end{corollary}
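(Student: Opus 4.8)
The plan is to argue by contradiction, combining the lower bound on the Diophantine exponent of words in the Tribonacci subshift with the classical upper bound on the irrationality exponent of $\log(1+1/a)$. Suppose that for some $b \geq 2$ and some $a \geq 23347$ the fractional part $\infw{x}$ of the base-$b$ expansion of $\log(1+1/a)$ is isomorphic to a word $\infw{t}$ in the Tribonacci subshift, say $\infw{x} = \sigma(\infw{t})$ for a bijective renaming $\sigma$ of the letters. The definition of the Diophantine exponent refers only to prefixes of the shape $UV^e$, and this property is preserved under any letter bijection, so $\dio{\infw{x}} = \dio{\infw{t}}$. Since $\infw{t}$ lies in the Tribonacci subshift, \autoref{prp:tribonacci_dio} gives $\dio{\infw{t}} \geq \dio{\infw{f}_3}$, and \autoref{prp:bonacci_dio} evaluates $\dio{\infw{f}_3} = 1 + 1/(\zeta_3 - 1) \approx 2.1915$. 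As the directive word $(012)^\omega$ is not eventually constant, every word in the subshift is aperiodic, hence so is $\infw{x}$, and \autoref{prp:ie_bounds} yields $\mu(\xi_{\infw{x},b}) \geq \dio{\infw{x}} \geq 1 + 1/(\zeta_3-1)$.

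Next I would identify $\xi_{\infw{x},b}$ with $\log(1+1/a)$ itself. Because $0 < \log(1+1/a) \leq \log 2 < 1$, this number has integer part $0$, so its base-$b$ expansion is precisely $0.\infw{x}$ and therefore $\xi_{\infw{x},b} = \log(1+1/a)$. Combining this with the previous paragraph gives $\mu(\log(1+1/a)) \geq 1 + 1/(\zeta_3-1) \approx 2.1915$.

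The final step is to contradict this using \cite[Cor.~1]{1980:legendre_polynomials_and_irrationality}, which provides an explicit upper bound for $\mu(\log(1+1/a))$ that decreases to $2$ as $a \to \infty$. The threshold $a \geq 23347$ is exactly the range in which this explicit bound falls strictly below $1 + 1/(\zeta_3 - 1)$; for such $a$ one gets $\mu(\log(1+1/a)) < 1 + 1/(\zeta_3-1)$, contradicting the inequality derived above. Hence no such isomorphism can exist, which is the assertion.

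The main obstacle is this last quantitative step: one must insert the explicit formula of \cite[Cor.~1]{1980:legendre_polynomials_and_irrationality} and verify that it drops below the Tribonacci value $1 + 1/(\zeta_3 - 1)$ precisely when $a \geq 23347$. This is the same mechanism Bugeaud and Kim used for Sturmian words, where the larger lower bound $5/3 + 4\sqrt{10}/15 \approx 2.5099$ forced only $a \geq 34$; the smaller Tribonacci bound $\approx 2.1915$ pushes the cutoff up to $23347$, and confirming this exact threshold is the only numerical computation the argument requires.
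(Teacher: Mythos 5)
Your proposal is correct and matches the paper's (implicit) argument exactly: the paper derives the corollary by combining the lower bound $\dio{\infw{t}} \geq \dio{\infw{f}_3} = 1 + 1/(\zeta_3-1) \approx 2.1915$ from \autoref{prp:tribonacci_dio} with $\mu(\xi_{\infw{t},b}) \geq \dio{\infw{t}}$ from \autoref{prp:ie_bounds} and the Alladi--Robinson upper bound for $\mu(\log(1+1/a))$, which drops below the Tribonacci value precisely for $a \geq 23347$. Your additional remarks on invariance of the Diophantine exponent under letter renaming, aperiodicity, and the identification $\xi_{\infw{x},b} = \log(1+1/a)$ are correct details the paper leaves tacit.
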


  Results like this can be generated by finding results on the irrationality exponents of specific numbers. Since the
  irrationality exponent of $e$ is $2$ (see, e.g., the proof of
  \cite[Cor.~2]{2010:on_the_expansion_of_some_exponential_periods_in_an_integer}), we obtain the following corollary.
  While this result is very minor, it seems that, besides the results of
  \cite{2010:on_the_expansion_of_some_exponential_periods_in_an_integer}, very little is known about the base-$b$
  expansions of $e$.

  \begin{corollary}
    Let $\infw{t}$ be a regular episturmian word of period $d$ with directive word as in \eqref{eq:dw_multiplicative}
    such that $d = 2$ or $\limsup_k a_k \geq 3$. For every integer $b$ such that $b \geq 2$, the fractional part of the
    base-$b$ expansion of $e$ is not isomorphic to $\infw{t}$.
  \end{corollary}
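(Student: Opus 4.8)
The plan is to argue by contradiction, combining the lower bound on the Diophantine exponent furnished by \autoref{thm:ie_bound} (equivalently, by \autoref{thm:dio_lower_bound} together with \autoref{prp:ie_bounds}) with the classical fact that $e$ has irrationality exponent $2$. Suppose, for contradiction, that for some $b \geq 2$ the fractional part $\infw{x} = x_1 x_2 \dotsm$ of the base-$b$ expansion of $e$ is isomorphic to $\infw{t}$, meaning $\infw{x} = \sigma(\infw{t})$ for some injection $\sigma$ of the alphabet of $\infw{t}$ into $\{0, 1, \ldots, b-1\}$ extended letterwise to infinite words.

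First I would observe that applying a letter-renaming $\sigma$ to an episturmian word with directive word $\Delta = x_1^{a_1} x_2^{a_2} \dotsm$ produces an episturmian word with directive word $\sigma(x_1)^{a_1} \sigma(x_2)^{a_2} \dotsm$; in particular $\infw{x}$ is again a regular episturmian word of period $d$ with the same sequence of partial quotients $(a_k)$, and it is aperiodic because $\infw{t}$ is. Hence $\infw{x}$ satisfies the hypotheses of \autoref{thm:ie_bound} (namely $d = 2$ or $\limsup_k a_k \geq 3$), so $\mu(\xi_{\infw{x},b}) > 2$. Equivalently, one may note that the Diophantine exponent is invariant under letter-renaming, since such a renaming preserves every factorization $UV^e$ and the corresponding length ratios; this gives $\dio{\infw{x}} = \dio{\infw{t}} > 2$ by \autoref{thm:dio_lower_bound}, after which \autoref{prp:ie_bounds} again yields $\mu(\xi_{\infw{x},b}) \geq \dio{\infw{x}} > 2$.

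Next I would identify $\xi_{\infw{x},b}$ with the fractional part of $e$. By construction $\xi_{\infw{x},b} = \sum_{k \geq 1} x_k/b^k$ is exactly the fractional part of $e$, that is $\xi_{\infw{x},b} = e - 2$ since $2 < e < 3$. The irrationality exponent is invariant under subtracting an integer, as any rational approximation $p/q$ to $e$ gives the approximation $(p - 2q)/q$ to $e - 2$ with the same denominator and the same error; hence $\mu(\xi_{\infw{x},b}) = \mu(e - 2) = \mu(e) = 2$, where $\mu(e) = 2$ is classical (see, e.g., the proof of \cite[Cor.~2]{2010:on_the_expansion_of_some_exponential_periods_in_an_integer}). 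Combining with the previous paragraph gives $2 = \mu(\xi_{\infw{x},b}) > 2$, a contradiction. Therefore no such $\infw{x}$ exists, and the fractional part of the base-$b$ expansion of $e$ is not isomorphic to $\infw{t}$ for any $b \geq 2$.

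The only genuinely delicate point is verifying that the hypotheses transfer correctly under the isomorphism, namely that a letter-renaming of a regular episturmian word is again regular of the same period with the same partial quotients, so that \autoref{thm:ie_bound} (or equivalently the isomorphism-invariance of $\dio{\infw{x}}$) applies to $\infw{x}$; everything else is immediate from the cited results. I would also remark that the case $b < d$ is in fact trivial, since then $\infw{x}$ uses fewer than $d$ letters and cannot be isomorphic to the $d$-letter word $\infw{t}$, though this observation is already subsumed by the uniform argument above.
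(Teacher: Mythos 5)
Your proposal is correct and matches the paper's (implicit) argument exactly: the corollary is stated as an immediate consequence of \autoref{thm:ie_bound} combined with the classical fact that $\mu(e)=2$, which is precisely your contradiction. Your explicit verification that letter-renaming preserves the episturmian structure, the partial quotients, and the Diophantine exponent is a detail the paper leaves tacit, but it is the same proof.
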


  Obviously we conjecture that the conclusion is true when $d > 2$. We obtain the same conclusion if we replace the
  number $e$ by a badly approximable number. This observation is rather interesting since there exist badly
  approximable numbers of sublinear complexity; see
  \cite[Sect.~8.5]{2012:Distribution_modulo_one_and_diophantine_approximation}.

  \section{Open Problems}
  We have characterized the initial nonrepetitive complexity of regular episturmian words in \autoref{thm:main}. It is
  unclear how this result can be generalized to all episturmian words.

  \begin{problem}
    Characterize the initial nonrepetitive complexity of a general episturmian word.
  \end{problem}

  Such a characterization would allow to determine the Diophantine exponent of a general episturmian word, but it is
  possible that this problem can be attacked in some other way.

  \begin{problem}
    Determine the Diophantine exponent of an episturmian word. Determine conditions that ensure that the Diophantine
    exponent of an episturmian word is strictly greater than $2$.
  \end{problem}

  The Rauzy graph approach should work for studying the prefix nonrepetitive complexity function and the initial
  critical exponent. However, we find this more difficult than characterizing initial nonrepetitive complexity because
  we need to detect returning to the initial vertex in the Rauzy graph, not a return to some vertex.

  \begin{problem}
    Adjust the methods of \autoref{sec:inrc} for the study of prefix nonrepetitive complexity function of regular
    episturmian words.
  \end{problem}

  We believe \autoref{thm:dio_upq} generalizes to all aperiodic episturmian words.

  \begin{conjecture}
    Let $\infw{t}$ be an aperiodic episturmian word. Then $\dio{\infw{t}} < \infty$ if and only if $\infw{t}$ has
    bounded partial quotients.
  \end{conjecture}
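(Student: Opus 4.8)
The plan is to follow the architecture of the proof of \autoref{thm:dio_upq}, isolating precisely the two places where regularity entered and replacing them. The easy direction needs nothing new: if $\infw{t}$ has bounded partial quotients, then $\ind{\infw{t}} < \infty$ by \cite[Sect.~4.1]{2002:episturmian_words_and_episturmian_morphisms}, and since $\dio{\infw{t}} \leq \ind{\infw{t}}$ we get $\dio{\infw{t}} < \infty$; this argument never used that $\Delta$ is regular. The content of the conjecture is therefore the converse, that unbounded partial quotients force $\dio{\infw{t}} = \infty$, and here the regular proof leaned on the explicit formula \autoref{thm:main}, which is unavailable in general.

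For the hard direction I would abandon \autoref{thm:main} entirely and instead manufacture repetitions directly from the common-prefix machinery of Section~3, all of which is already stated for an arbitrary directive word as in \eqref{eq:dw_multiplicative}. Write the intercept as $c_1 c_2 \dotsm$ and fix a subsequence $(k_j)$ with $a_{k_j+1} \to \infty$. For each $k$, let $n(k)$ be the least $n \geq 1$ with $c_{k+n} \neq 0$ and put $p_k = \val{0^k c_{k+1} \dotsm c_{k+n(k)}} = c_{k+n(k)} q_{k+n(k)-1}$, the shift difference between $T^{\val{c_1 \dotsm c_k}}(\infw{c}_\Delta)$ and $T^{\val{c_1 \dotsm c_{k+n(k)}}}(\infw{c}_\Delta)$. \autoref{lem:common_prefix_length} computes their longest common prefix length to be exactly $\eta_k = \abs{s_k^{\s{a}-\s{c}} s_{k-1}^{\s{a}} \dotsm s_0^{\s{a}}} - \val{c_1 \dotsm c_k}$, and by \autoref{thm:ostrowski_val_limit} together with \autoref{lem:common_prefix_length_increasing} this word is also the common prefix of $\infw{t}$ and $T^{\val{c_1 \dotsm c_k}}(\infw{c}_\Delta)$. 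Hence the prefix of $\infw{t}$ of length $\eta_k + p_k$ has period $p_k$. Taking $U = \varepsilon$ and $V$ the period block of length $p_k$ in the definition of the Diophantine exponent yields $\dio{\infw{t}} \geq \limsup_{k} \left(1 + \eta_k/p_k\right)$, a bound consistent with \autoref{prp:dio_inrc}.

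The engine driving this ratio to infinity is the power $s_k^{\s{a}}$ appearing in $u_{r_{k+1}+1} = s_k^{\s{a}} s_{k-1}^{\s{a}} \dotsm s_0^{\s{a}}$ by \eqref{eq:u_4}. When $c_{k_j+1}$ is small and nonzero we have $n(k_j) = 1$ and $p_{k_j} = c_{k_j+1} q_{k_j}$, while $\eta_{k_j}$ is of order $(a_{k_j+1} - c_{k_j+1}) q_{k_j}$; since $\val{c_1 \dotsm c_{k_j}} < q_{k_j}$ by \autoref{lem:ostrowski_ub}, we obtain $\eta_{k_j}/p_{k_j} \gtrsim (a_{k_j+1} - c_{k_j+1})/c_{k_j+1} \to \infty$. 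When $c_{k_j+1} = 0$ the same estimate applies at the next nonzero digit, and the subcase where the intercept is eventually $0^\omega$ is handled separately via \autoref{lem:ostrowski_finite}, \autoref{prp:dio_shift_invariant}, and the standard-word statement $\dio{\infw{c}_\Delta} = \infty$ (the general analogue of \autoref{prp:standard_dio}, which one would establish directly from \eqref{eq:u_4} rather than through \autoref{thm:main}). This single estimate subsumes Cases~A, E, and~F of the regular argument.

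The remaining regime, and the step I expect to be the main obstacle, is when $c_{k_j+1}$ is close to $a_{k_j+1}$, so that $a_{k_j+1} - c_{k_j+1}$ stays bounded while $c_{k_j+1} \to \infty$; then $p_{k_j} = c_{k_j+1} q_{k_j}$ is large, $\eta_{k_j}$ is bounded, and the level-$k_j$ repetition is useless. One must instead locate the long power one level up, inside $s_{k_j+1}$, exactly as Cases~B, C, and~D did through \autoref{thm:main}~(ii)--(iii). In the regular case the Ostrowski conditions combined with the clean recursion \eqref{eq:reg_s_2}, the uniform bound $q_{k+1} \leq C q_k$ of \autoref{lem:c_bound}, and the predictable length ordering of \autoref{lem:regular_tau_images} pinned down the surviving power. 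For general $\Delta$ the tail of $s_{k+1}$ is governed by the irregular function $j(k)$, so these ingredients must be replaced by estimates valid without periodicity: a general growth bound relating $q_{k+1}$ to $q_{j(k)-1}$, and a length-ordering statement for the words $\tau_k L_{x_{k+1}}^\ell(x_i)$ that does not assume the indices cycle modulo $d$. Supplying these is the real work; once in place, the direct $UV^e$ construction above gives $\dio{\infw{t}} = \infty$, and \autoref{prp:ie_bounds} then extends \autoref{thm:ie_liouville} to all aperiodic episturmian words.
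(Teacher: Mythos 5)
The paper offers no proof to compare against: this statement is posed as a \emph{conjecture} in the Open Problems section, so your argument has to stand entirely on its own — and it does not yet. The decisive flaw is structural, not just the two missing lemmas you flag. Every repetition your construction produces has $U = \varepsilon$: you show a \emph{prefix} of $\infw{t}$ of the form $V^e$ (and, incidentally, only the prefix of length $\eta_k$ is guaranteed to have period $p_k$, not the prefix of length $\eta_k + p_k$, since $\infw{t}$ agrees with $T^{\val{c_1 \dotsm c_k}}(\infw{c}_\Delta)$ only up to length $\eta_k$; this costs you the ``$1+$'' in your ratio, which is harmless for an $\infty$ bound). Consequently what your engine actually bounds from below is $\ice{\infw{t}}$, not $\dio{\infw{t}}$. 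But the paper explicitly records, citing \cite[Prop.~4.1, Thm.~1.1]{2006:initial_powers_of_sturmian_sequences}, that the analogue of \autoref{thm:dio_upq} \emph{fails} for the initial critical exponent: there are Sturmian words with unbounded partial quotients and $\ice{\infw{t}} = 2$, indeed every Sturmian subshift contains a word with $\ice{\infw{t}} \leq 1 + \varphi$. So in precisely the regime where you stall — $c_{k_j+1}$ comparable to $a_{k_j+1}$, or intercepts engineered so that every large partial quotient is met by a nearly full digit — no prefix-power construction can reach $\infty$, as a matter of principle. The long powers survive only in the \emph{interior} of the word, at positions of size comparable to $\val{c_1 \dotsm c_{k+1}}$, and extracting them is exactly what the $\irep{\infw{t}}{n}$ analysis does through nonempty $U$ in \autoref{prp:dio_inrc}.

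This also means your closing claim — that once a general growth bound and a general length-ordering for the words $\tau_k L_{x_{k+1}}^\ell(x_i)$ are supplied, ``the direct $UV^e$ construction above gives $\dio{\infw{t}} = \infty$'' — is not correct as stated: those two ingredients would feed an interior-repetition analysis, i.e., general-directive-word versions of \autoref{prp:inrc_shift}, \autoref{prp:inrc_n}, and \autoref{prp:sufficient_shift}, which is the Rauzy-graph machinery the paper itself says becomes very complicated without regularity and is the reason the statement is left open. A smaller instance of the same oversight: your reduction of the case $c_{k_j+1} = 0$ to ``the next nonzero digit'' is not automatic, since that digit may itself be full (e.g., $c_{k_j+2} = a_{k_j+2}$ with $a_{k_j+2}$ large), making $p_{k_j} \geq q_{k_j+1} \geq a_{k_j+1} q_{k_j}$ and the ratio $\eta_{k_j}/p_{k_j}$ bounded — the regular-case proof handles this through \autoref{thm:main}~(iv)--(v), again with interior occurrences. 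Your favorable case ($c_{k_j+1}$ small and nonzero, subsuming Cases~A, E, F) and the easy direction are sound, and the reduction to shifts of $\infw{c}_\Delta$ via \autoref{thm:ostrowski_val_limit} and \autoref{lem:common_prefix_length} is the right skeleton; but the hard regime needs a genuinely new mechanism for producing $UV^e$ with $U \neq \varepsilon$ and $\abs{U}$ controlled, and that is the open content of the conjecture.
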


  Let $X_d$ be the set of regular episturmian words with period $d$. Define
  \begin{align*}
    \mathcal{B}(d) &= \{\dio{\infw{t}} : \infw{t} \in X_d\} \quad \text{and} \\
    \mathcal{K}(d) &= \{\dio{\infw{t}} : \text{$\infw{t} \in X_d$ and $\textstyle \limsup_k \, a_k \geq 3$ or $d = 2$}\}.
  \end{align*}
  It is proved in \cite[Thm.~4.3]{2019:a_new_complexity_function_repetitions_in_sturmian} that
  $\mathcal{B}(2) \geq \frac53 + \frac{4\sqrt{10}}{15}$ and that this lower bound is optimal. This is a remarkable
  result, and we propose the following problems and questions.

  \begin{problem}
    Reprove that $\mathcal{B}(2) \geq \frac53 + \frac{4\sqrt{10}}{15}$ using \autoref{thm:main}. Find and prove an
    optimal lower bound for $\mathcal{K}(d)$ when $d > 2$.
  \end{problem}

  \begin{question}
    What is the least element of $\mathcal{K}(d)$ and what is the least accumulation point of $\mathcal{K}(d)$? Is it
    true that $\inf \mathcal{B}(d) = 1$ when $d > 2$?
  \end{question}

  \autoref{thm:dio_lower_bound} was not enough to conclude that $\mu(\xi_{\infw{t},b}) > 2$ for all regular episturmian
  words $\infw{t}$. It would be interesting to know if this is true.

  \begin{question}
    Is it true that $\mu(\xi_{\infw{t},b}) > 2$ for all regular episturmian words $\infw{t}$? What about all aperiodic
    episturmian words?
  \end{question}

  Our results provide lower bounds for $\mu(\xi_{\infw{t},b})$ when $\infw{t}$ is a regular episturmian word, and
  \autoref{prp:counter_example} suggests that the lower bounds are strict. Hence we propose the following problem.

  \begin{problem}
    Find a better upper bound than that of \autoref{prp:ie_bounds} for the irrationality exponent of a regular
    episturmian word with bounded partial quotients.
  \end{problem}

  Finally we propose the following problem related to the discussion after \autoref{thm:ie_liouville}. This is an
  analogue of a conjecture of Shallit \cite{1999:number_theory_and_formal_languages} that was settled in
  \cite{2006:diophantine_properties_of_real_numbers_generated}.

  \begin{problem}
    Develop rigorously the notion of an Ostrowski-automatic word in the setting $d > 2$ and settle the conjecture that
    if $\infw{t}$ is an Ostrowski-automatic word, then $\xi_{\infw{t},b}$ is not a Liouville number.
  \end{problem}

  \section*{Acknowledgments}
  We thank the reviewers for pointing out several problems in the original manuscript. Due to your help, these issues
  have been identified and fixed. We are especially thankful for \autoref{ex:greedy_fail} which indicates that not all
  Ostrowski expansions equal the corresponding greedy expansions.

  \printbibliography[heading=bibintoc]
  
\end{document}